\documentclass[twoside,runningheads]{llncs}
\makeatletter
\RequirePackage[bookmarks,unicode,colorlinks=true]{hyperref}%
   \def\@citecolor{blue}%
   \def\@urlcolor{blue}%
   \def\@linkcolor{blue}%

\def\orcidID#1{\smash{\href{http://orcid.org/#1}{\protect\raisebox{-1.25pt}{\protect\includegraphics{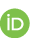}}}}}
\makeatother
\RequirePackage[T1]{fontenc}
\usepackage{xcolor,colortbl}

\usepackage{amsthm}
\usepackage{amssymb}
\usepackage{amssymb}
\usepackage{booktabs}
\usepackage{tabularx}
\usepackage{verbatim}
\usepackage{framed}
\usepackage{tikz}
\usetikzlibrary{shapes,snakes}
\usepackage{pgfplots}
\usepackage{alltt}
\usetikzlibrary{arrows}
\usetikzlibrary{calc}
\usetikzlibrary{fit}
\usetikzlibrary{positioning,shadows}
\usetikzlibrary{automata}
\usetikzlibrary{shapes,arrows}
\usetikzlibrary{decorations.text}
\usetikzlibrary{decorations.markings}
\usetikzlibrary{trees,snakes}
\usepackage{graphicx}
\usepackage{paralist}
\usepackage{proof}
\usepackage{lscape}
\usepackage{textcomp}
\usepackage{hyperref}
\usepackage[bbsets,Dfprime]{math}
\usepackage[prefixflatinterpret,bracketmodalinterpret,modernsign,substopindex,shortmquant,mquantifiertype,mconnectiveformal,bracketinterpret,fixformat,setfixinterpret,modifopindex,seqarrow,seqoptional,sidenotecalculus,abbrseqcontext,shortterms,nosigmaterms,novarterms]{logic}
\usepackage[pretest,nocommandblocks]{progreg}
\usepackage[bracketinterpret,prefixflatinterpret,bracketmodalinterpret,fixformat,differentialdL]{dL}

\let\citet\cite
\newif\iflongversion
\longversiontrue

\usepackage{booktabs} 

\irlabel{brandomI|{$[{:}{*}]$}}
\irlabel{bchoiceI|{$[\cup]$}}
\irlabel{btestI|{$[?]$}}
\irlabel{broll|{$[*]$r}}
\irlabel{bunroll|{$[*]${E}}}
\irlabel{bloopI|{$[*]$I}}
\irlabel{bsolve|{$[']$}}

\irlabel{di|DI}
\irlabel{dc|DC}
\irlabel{dw|DW}
\irlabel{dg|DG}
\irlabel{dv|DV}
\irlabel{dsolve|{$\langle'\rangle$}}
\irlabel{dloopI|{$\langle*\rangle$}I}
\irlabel{dstop|{$\langle*\rangle$}S}
\irlabel{dgo|{$\langle*\rangle$}G}
\irlabel{dchoiceIL|{$\langle\cup\rangle$}IL}
\irlabel{dchoiceIR|{$\langle\cup\rangle$}IR}
\irlabel{dtestI|{$\langle?\rangle$}I}
\irlabel{drandomI|{$\langle{:}*\rangle$}I}
\irlabel{seqI|{$\lstrike{;}\rstrike$}I}
\irlabel{seqE|{$\lstrike{;}\rstrike$}E}
\irlabel{asgnI|{$\lstrike{{:}{=}}\rstrike$}I}
\irlabel{asgnE|{$\lstrike{{:}{=}}\rstrike$}E}
\irlabel{drcase|{$\langle*\rangle${C}}}
\irlabel{hyp|hyp}
\irlabel{mon|M}

\newcommand{\stepsto}{\allowbreak\mapsto\allowbreak}
\newcommand{\bebecomes}{\mathrel{::=}}
\newcommand{\alternative}{~|~}

\newcommand{\sren}[3]{\urename[{#1}]{#2}{#3}}
\newcommand{\ssub}[3]{\subst[{#1}]{#2}{#3}}
\newcommand{\eren}[3]{\urename[{#1}]{#2}{#3}}
\newcommand{\earen}[2]{\urename[{#1}]{\boundvars{#2}}{\vec{y}}}

\usepackage{prettyref}
\newcommand{\rref}[2][]{\prettyref{#2}}
\newrefformat{ch}{Chapter\,\ref{#1}}
\newrefformat{sec}{Section\,\ref{#1}}
\newrefformat{app}{Appendix\,\ref{#1}}
\newrefformat{ass}{Assumption\,\ref{#1}}
\newrefformat{def}{Definition\,\ref{#1}}
\newrefformat{thm}{Theorem\,\ref{#1}}
\newrefformat{prop}{Proposition\,\ref{#1}}
\newrefformat{lem}{Lemma\,\ref{#1}}
\newrefformat{rem}{Remark\,\ref{#1}}
\newrefformat{cor}{Corollary\,\ref{#1}}
\newrefformat{ex}{Example\,\ref{#1}}
\newrefformat{tab}{Table\,\ref{#1}}
\newrefformat{fig}{Figure\,\ref{#1}}
\newrefformat{case}{case\,\ref{#1}}
\newrefformat{fn}{\ref{#1}}

\newcommand{\allrat}{\mathbb{Q}}
\newcommand{\allbool}{\mathbb{B}}

\newcommand{\allstate}{\sty}

\renewcommand*{\freevars}[1]{\mathop{\text{FV}}(#1)}
\renewcommand*{\boundvars}[1]{\mathop{\text{BV}}(#1)}
\renewcommand*{\mustboundvars}[1]{\mathop{\text{MBV}}(#1)}

\newcommand{\logname}{\text{\upshape\textsf{dH{\kern-0.05em}L}}\xspace}

\newcommand{\lequiv}{\leftrightarrow}

\newcommand{\om}{\omega}
\newcommand{\tom}{\tilde{\omega}}

\newcommand{\DL}{\textsf{DL}\xspace}
\newcommand{\GL}{\textsf{GL}\xspace}
\newcommand{\CGL}{\textsf{CGL}\xspace}

\newcommand{\tint}[2]{\lenvelope{#1}\renvelope{#2}}
\newcommand{\fint}[1]{\lenvelope{#1}\renvelope}
\newcommand{\fintR}[1]{\fint{#1}}

\newcommand{\rzNil}{\text{\rm\tt()}}
\newcommand{\rzCons}[2]{(#1,#2)}
\newcommand{\rzBLam}[2]{\lambda #1:\allstate.~#2}
\newcommand{\rzHOLam}[3]{\Lambda #1:\rzfor{#2}.~#3}
\newcommand{\rzFOLam}[3]{\Lambda #1:#2.~#3}
\newcommand{\rzApp}[2]{#1\,#2}
\newcommand{\rzIf}[3]{\text{\tt{if}}\:(#1)\:#2\:\text{\tt{else}}\:#3}

\newcommand{\myaa}{a}
\newcommand{\ab}{b}
\newcommand{\ac}{c}
\newcommand{\ad}{d}
\newcommand{\taa}{\tilde{\aa}}
\newcommand{\tab}{\tilde{\ab}}

\newcommand{\rzInd}[4]{\textsf{ind}(#1,~{#2}.{#3},~{#2}.{#4}{})}
\newcommand{\rzCoind}[4]{\textsf{gen}(#1,~{#2}.{#3},~{#2}.{#4})}

\newcommand{\allRz}{\mathcal{R}\mathbf{z}}
\newcommand{\rzfor}[1]{#1\,\allRz}

\newcommand*{\strategyforR}[2][]{{#2}\Langle{#1}\Rangle}
\newcommand*{\istrat}[3][]{\strategyforR[#1]{#2}^{#3}}

\newcommand*{\dstrategyforR}[2][]{{#2}\lenvelopewide{#1}\renvelopewide}
\newcommand*{\idstrat}[3][]{\dstrategyforR[#1]{#2}^{#3}}
\newcommand{\cint}[1]{\fint{\bigwedge #1}}
\newcommand{\cintR}[1]{\fintR{\bigwedge #1}}
\newcommand{\seq}[2]{#1 \vdash #2}

\newcommand{\proves}[3]{#1\allowbreak\vdash #2 \allowbreak \mathop{:}\, #3}

\newcommand{\edOpencons}{\langle}
\newcommand{\edClosecons}{\rangle}
\newcommand{\edSepcons}{,}
\newcommand{\edcons}[2]{\edOpencons{#1}\edSepcons{#2}\edClosecons}
\newcommand{\ebOpencons}{[}
\newcommand{\ebSepcons}{,}
\newcommand{\ebClosecons}{]}
\newcommand{\ebcons}[2]{\ebOpencons#1\ebSepcons#2\ebClosecons}
\newcommand{\eCons}[2]{\lstrike{#1,#2}\rstrike}

\newcommand{\pmodality}[2]{\llensehat{#1}\rlensehat{#2}}
\newcommand{\econs}[2]{\eCons{#1}{#2}}
\newcommand{\einjL}[1]{\ell \cdot #1}
\newcommand{\einjR}[1]{r \cdot #1}
\newcommand{\kwcase}{\textsf{case}}
\newcommand{\ecaseHead}[1]{\langle\kwcase\textrm{ }#1{\text{\textrm{ of }}}}
\newcommand{\ecaseLeft}[2]{#1\Rightarrow~#2}
\newcommand{\ecaseRight}[2]{~|~#1\Rightarrow~#2}
\newcommand{\ecaseEnd}{\rangle}
\newcommand{\ecasegen}[5]{\ecaseHead{#1}\allowbreak\ecaseLeft{#2}{#3}\allowbreak\ecaseRight{#4}{#5}\ecaseEnd}
\newcommand{\ecase}[3]{\ecasegen{#1}{\ell}{#2}{r}{#3}}
\newcommand{\edcase}[3]{\ecase{#1}{#2}{#3}}
\newcommand{\ercase}[3]{\langle\textsf{case}_*\ #1\text{ of }\pvs\Rightarrow~#2~|~\pvg\Rightarrow#3\rangle}
\newcommand{\eCase}[3]{\lstrike\kwcase\textsf{ }#1\textsf{ of }\allowbreak\ell\Rightarrow~#2~\allowbreak|~r\Rightarrow~#3\rstrike} 
\newcommand{\edinjL}[1]{\langle\ell \cdot #1\rangle}
\newcommand{\edinjR}[1]{\langle r \cdot #1\rangle}
\newcommand{\eInjL}[1]{\lstrike\ell \cdot #1\rstrike}
\newcommand{\eInjR}[1]{\lstrike{r \cdot #1}\rstrike}
\newcommand{\kwrep}{\textsf{rep}}
\newcommand{\erep}[4]{{#1}\text{ }\kwrep\text{ }#3.~{#2}\textsf{ in }{#4}}
\newcommand{\eapp}[2]{#1\ #2}

\newcommand{\elam}[3]{\lambda #1:#2. \:\, #3}
\newcommand{\eplam}[2]{\elam{\pvx}{#1}{#2}}
\newcommand{\etlam}[2]{\elam{x}{#1}{#2}}
\newcommand{\ebseq}[1]{[\iota~#1]}
\newcommand{\edseq}[1]{\langle\iota~#1\rangle}
\newcommand{\ebOpenseq}{[\iota~}
\newcommand{\ebCloseseq}{]}
\newcommand{\edOpenseq}{\langle\iota~}
\newcommand{\edCloseseq}{\rangle}

\newcommand{\eSeq}[1]{\lstrike\iota~#1\rstrike}

\newcommand{\ebOpenswap}{{[}\textsf{yield }}
\newcommand{\ebCloseswap}{{]}}
\newcommand{\edOpenswap}{\langle\textsf{yield }}
\newcommand{\edCloseswap}{\rangle}
\newcommand{\ebswap}[1]{\ebOpenswap#1\ebCloseswap}
\newcommand{\edswap}[1]{\edOpenswap#1\edCloseswap}
\newcommand{\eSwap}[1]{\lstrike\textsf{yield }#1\rstrike}

\newcommand{\emonInfix}[1]{{\circ_{#1}}}
\newcommand{\emon}[3]{{#1} \emonInfix{#3} {#2}}
\newcommand{\eQE}[2]{\textsf{FO}[#1](#2)}
\newcommand{\esplit}[3]{\textsf{split }{[#1,#2]}~#3}
\newcommand{\kwstop}{\textsf{stop}}
\newcommand{\kwgo}{\textsf{go}}
\newcommand{\estop}[1]{\langle\kwstop\ #1\rangle}
\newcommand{\ego}[1]{\langle\kwgo\ #1\rangle}
\newcommand{\kwroll}{\textsf{roll}}
\newcommand{\kwunroll}{\textsf{unroll}}
\newcommand{\ebroll}[1]{[\kwroll\ {#1}]}
\newcommand{\ebunroll}[1]{[\kwunroll\ #1]}

\newcommand{\eghost}[4]{\textsf{Ghost}[#1=#2](#3.~#4)}
\newcommand{\eAsgn}[4]{\lstrike\humod{#2}{\eren{f}{#2}{#1}}\text{ in }#3.~#4\rstrike}
\newcommand{\eAsgneq}[4]{\eAsgn{#1}{#2}{#3}{#4}}
\newcommand{\etconsgen}[5]{\langle{\eren{#4}{#1}{#2}}~{{:}{*}}~#3.~#5\rangle}
\newcommand{\etcons}[2]{\etconsgen{x}{y}{\pvx}{#1}{#2}}
\newcommand{\ietcons}[3]{\langle{\eren{#2}{#1}{~}}~{{:}{*}}~#3\rangle}

\newcommand{\ebasgneq}[4]{[\humod{#2}{\eren{f}{#2}{#1}}\text{ in }#3.~#4]}
\newcommand{\edasgneq}[4]{\langle\humod{#2}{\eren{f}{#2}{#1}}\text{ in }#3.~#4\rangle}
\newcommand{\edasgn}[4]{\edasgneq{#1}{#2}{#3}{#4}}
\newcommand{\ebasgn}[4]{\ebasgneq{#1}{#2}{#3}{#4}}
\newcommand{\iebasgn}[2]{[\humod{#1}{\eren{f}{#1}{~}}\text{ in }~#2]}
\newcommand{\iedasgn}[2]{\langle\humod{#1}{\eren{f}{#1}{~}}\text{ in }~#2\rangle}

\newcommand{\eunpack}[2]{\textsf{unpack}(#1,\pvx y.~#2)}

\newcommand{\efpgen}[5]{\textit{FP}(#1, #2.~#3, #4.~#5)}
\newcommand{\efp}[3]{\efpgen{#1}{\pvs}{#2}{\pvg}{#3}}

\newcommand{\met}{\ensuremath{\mathcal{M}}}

\newcommand{\metz}{{\boldsymbol{0}}}

\newcommand{\metgr}{\boldsymbol{\succ}}

\newcommand{\conv}{\varphi\xspace}
\newcommand{\Gemp}{\cdot}
\newcommand{\issimp}[1]{{#1}\ \textsf{\upshape simp}}
\newcommand{\eforHead}[4]{\textsf{for}(#1:\conv(\met)={#2};#3;{#4})} 
\newcommand{\eforBody}[1]{\{#1\}}
\newcommand{\eforgen}[5]{\eforHead{#1}{#2}{#3}{#4}\eforBody{#5}}
\newcommand{\efor}[3]{\eforgen{\pvx}{#1}{\pvy.\,{#2}}{#3}{\alpha}}

\newcommand{\esub}[3]{[{#3}/{#2}]{#1}}
\newcommand{\tsub}[3]{\subst[#1]{#2}{#3}}
\newcommand{\isnorm}[1]{#1\text{ \textsf{normal}}}

\newcommand{\mycase}{\textbf{Case}\xspace}

\newcommand{\sapp}[2]{#1(#2)}
\newcommand{\sadj}[2]{#1^*(#2)}
\newcommand{\spp}[1]{\sapp{\sigma}{#1}}
\newcommand{\adj}[1]{\sadj{\sigma}{#1}}
\newcommand{\drv}{\mathcal{D}}
\newcommand{\nzvar}{\textit{nz}}
\newcommand{\convvar}{\textit{cnv}}

\newcommand{\pstep}{\textsf{step}}
\newcommand{\modvar}{\textit{mod}}
\newcommand{\monvar}{\mathit{mid}}
\newcommand{\rangevar}{\mathit{Range}}
\newcommand{\testvar}{\mathit{test}}
\newcommand{\elem}[2]{\textsf{Dec}[#1](#2)}
\newcommand{\spc}{\hspace{0.15in}}
\newcommand{\kwmod}{\textsf{mod}}
\newcommand{\emod}[2]{#1~\kwmod~#2}

\newcommand{\ediv}[2]{#1 / #2}

\newcommand{\nim}{\textsc{Nim}}
\newcommand{\cake}{\textsc{CC}}

\newcommand{\churchkleene}{\omega_{\text{CK}}}

\newcommand{\apL}[1]{#1_{\langle{0}\rangle}}
\newcommand{\apR}[1]{#1_{\langle{1}\rangle}}
\newcommand{\dpL}[1]{#1_{[0]}}
\newcommand{\dpR}[1]{#1_{[1]}}

\newcommand{\pvx}{p}
\newcommand{\pvy}{q}
\newcommand{\pvz}{t}
\newcommand{\pvl}{\ell}
\newcommand{\pvr}{r}
\newcommand{\pvrr}{rr}
\newcommand{\pvs}{s}
\newcommand{\pvg}{g}
\newcommand{\sdual}[1]{\pdual{#1}}

\newcommand{\va}{a}
\newcommand{\vb}{b}
\newcommand{\vca}{\overline{a}}
\newcommand{\vcb}{\overline{b}}

\newcommand{\btt}{\text{\rm\tt{tt}}}
\newcommand{\bff}{\text{\rm\tt{ff}}}
\newcommand{\stt}{\top}
\newcommand{\sff}{\bot}

\newcommand{\squarequotes}[2][0]{{%
  {\vphantom{#2}}^{\ulcorner}\kern-\scriptspace
  \mspace{-#1mu}%
  {{}#2}^{\urcorner}%
}}

\newcommand{\sty}{\m{\mathfrak{S}}\xspace}

\newcommand{\rzFst}[1]{\pi_L(#1)}
\newcommand{\rzSnd}[1]{\pi_R(#1)}

\newcommand{\eprojL}[1]{\pi_1{#1}}
\newcommand{\eprojR}[1]{\pi_2{#1}}
\newcommand{\eProjL}[1]{\lstrike\pi_1{#1}\rstrike}
\newcommand{\eProjR}[1]{\lstrike\pi_2{#1}\rstrike}
\newcommand{\edprojL}[1]{\langle\pi_1{#1}\rangle}
\newcommand{\edprojR}[1]{\langle\pi_2{#1}\rangle}
\newcommand{\ebprojL}[1]{[\pi_1{#1}]}
\newcommand{\ebprojR}[1]{[\pi_2{#1}]}

\makeatletter
\let\save@mathaccent\mathaccent
\newcommand*\if@single[3]{%
  \setbox0\hbox{${\mathaccent"0362{#1}}^H$}%
  \setbox2\hbox{${\mathaccent"0362{\kern0pt#1}}^H$}%
  \ifdim\ht0=\ht2 #3\else #2\fi
  }
\newcommand*\rel@kern[1]{\kern#1\dimexpr\macc@kerna}
\newcommand*\widebar[1]{\@ifnextchar^{{\wide@bar{#1}{0}}}{\wide@bar{#1}{1}}}
\newcommand*\wide@bar[2]{\if@single{#1}{\wide@bar@{#1}{#2}{1}}{\wide@bar@{#1}{#2}{2}}}
\newcommand*\wide@bar@[3]{%
  \begingroup
  \def\mathaccent##1##2{%
    \let\mathaccent\save@mathaccent
    \if#32 \let\macc@nucleus\first@char \fi
    \setbox\z@\hbox{$\macc@style{\macc@nucleus}_{}$}%
    \setbox\tw@\hbox{$\macc@style{\macc@nucleus}{}_{}$}%
    \dimen@\wd\tw@
    \advance\dimen@-\wd\z@
    \divide\dimen@ 3
    \@tempdima\wd\tw@
    \advance\@tempdima-\scriptspace
    \divide\@tempdima 10
    \advance\dimen@-\@tempdima
    \ifdim\dimen@>\z@ \dimen@0pt\fi
    \rel@kern{0.6}\kern-\dimen@
    \if#31
      \overline{\rel@kern{-0.6}\kern\dimen@\macc@nucleus\rel@kern{0.4}\kern\dimen@}%
      \advance\dimen@0.4\dimexpr\macc@kerna
      \let\final@kern#2%
      \ifdim\dimen@<\z@ \let\final@kern1\fi
      \if\final@kern1 \kern-\dimen@\fi
    \else
      \overline{\rel@kern{-0.6}\kern\dimen@#1}%
    \fi
  }%
  \macc@depth\@ne
  \let\math@bgroup\@empty \let\math@egroup\macc@set@skewchar
  \mathsurround\z@ \frozen@everymath{\mathgroup\macc@group\relax}%
  \macc@set@skewchar\relax
  \let\mathaccentV\macc@nested@a
  \if#31
    \macc@nested@a\relax111{#1}%
  \else
    \def\gobble@till@marker##1\endmarker{}%
    \futurelet\first@char\gobble@till@marker#1\endmarker
    \ifcat\noexpand\first@char A\else
      \def\first@char{}%
    \fi
    \macc@nested@a\relax111{\first@char}%
  \fi
  \endgroup
}
\makeatother


\begin{document}

\title{Constructive Game Logic
\thanks{
This research was sponsored by the AFOSR under grant number FA9550-16-1-0288.
The authors were also funded by the NDSEG Fellowship and Alexander von Humboldt Foundation, respectively.}
}

\author{
Brandon Bohrer\inst{1}\orcidID{0000-0001-5201-9895} \and
Andr\'{e} Platzer\inst{1,2}\orcidID{0000-0001-7238-5710}
}
\authorrunning{B.\ Bohrer and A.\ Platzer}
\institute{Computer Science Department, Carnegie Mellon University, Pittsburgh, USA
\email{\{bbohrer,aplatzer\}@cs.cmu.edu}
\and Fakult\"at f\"ur Informatik, Technische Universit\"at M\"unchen, M\"unchen, Germany
}

\maketitle
\begin{abstract}
Game Logic is an excellent setting to study proofs-about-programs via the interpretation of those proofs as programs, because constructive proofs for games correspond to effective winning strategies to follow in response to the opponent's actions.
We thus develop \emph{Constructive Game Logic}, which extends Parikh's Game Logic (GL) with constructivity and with first-order programs \emph{\`a la} Pratt's first-order dynamic logic (DL).
Our major contributions include:
\begin{inparaenum}
\item a novel realizability semantics capturing the adversarial dynamics of games,
\item a natural deduction calculus and operational semantics describing the computational meaning of strategies via proof-terms, and
\item theoretical results including soundness of the proof calculus w.r.t.\ realizability semantics, progress and preservation of the operational semantics of proofs, and Existential Properties on support of the extraction of computational artifacts from game proofs.
\end{inparaenum}
Together, these results provide the most general account of a Curry-Howard interpretation for any program logic to date, and the first at all for Game Logic.
\end{abstract}

\keywords{Game Logic, Constructive Logic, Natural Deduction, Proof Terms}

\section{Introduction}
Two of the most essential tools in theory of programming languages are \emph{program logics}, such as Hoare calculi~\cite{DBLP:journals/cacm/Hoare69} and dynamic logics~\cite{DBLP:conf/focs/Pratt76}, and the \emph{Curry-Howard correspondence}~\cite{curry1967combinatory,howard1980formulae}, wherein propositions correspond to types, proofs to functional programs, and proof term normalization to program evaluation.
Their intersection, the Curry-Howard interpretation of program logics, has received surprisingly little study.
We undertake such a study in the setting of Game Logic (\GL) \cite{DBLP:conf/focs/Parikh83}, because this leads to novel insights, because the Curry-Howard correspondence can be explained particularly intuitively for games, and because our first-order \GL is a superset of common logics such as first-order Dynamic Logic (\DL).

Constructivity and program verification have met before: Higher-order constructive logics~\cite{DBLP:journals/iandc/CoquandH88} obey the Curry-Howard correspondence and are used to develop verified functional programs.
Program logics are also often embedded in constructive proof assistants such as Coq~\cite{COQ}, inheriting constructivity from their metalogic.
Both are excellent ways to develop verified software, but we study something else.

We study the computational content of a program logic \emph{itself}.
Every fundamental concept of computation is expected to manifest in all three of logic, type systems, and category theory~\cite{Trinity}.
Because dynamics logics (\DL's) such as $\GL$ have shown that program execution is a first-class construct in modal logic, the theorist has an imperative to explore the underlying notion of computation by developing a constructive \GL with a Curry-Howard interpretation.

The computational content of a proof is especially clear in \GL, which generalizes \DL to programmatic models of zero-sum, perfect-information games between two players, traditionally named Angel and Demon.
Both normal-play and mis\`ere-play games can be modeled in \GL.
In classical \GL, the diamond modality $\ddiamond{\alpha}{\phi}$ and box modality $\dbox{\alpha}{\phi}$ say that Angel and Demon respectively have a strategy to ensure $\phi$ is true at the end of $\alpha$, which is a model of a game.
The difference between classical \GL and \CGL is that classical \GL allows proofs that exclude the middle, which correspond to strategies which branch on undecidable conditions.
\CGL proofs can branch only on decidable properties, thus they correspond to strategies which are \emph{effective} and can be executed by computer.
Effective strategies are crucial because they enable the synthesis of code that implements a strategy.
Strategy synthesis is itself crucial because even simple games can have complicated strategies, and synthesis provides assurance that the implementation correctly solves the game.
A \GL strategy resolves the choices inherent in a game: a diamond strategy specifies every move made by the Angel player, while a box strategy specifies the moves the Demon player will make.

In developing \emph{Constructive Game Logic} (\CGL), adding constructivity is a deep change.
We provide a natural deduction calculus for \CGL equipped with proof terms and an operational semantics on the proofs, demonstrating the meaning of strategies as functional programs and of winning strategies as functional programs that are guaranteed to achieve their objective no matter what counter-strategy the opponent follows.
While the proof calculus of a constructive logic is often taken as ground truth, we go a step further and develop a realizability semantics for \CGL as programs performing winning strategies for game proofs, then prove the calculus sound against it.
We adopt realizability semantics in contrast to the winning-region semantics of classical \GL because it enables us to prove that \CGL satisfies novel properties (\rref{sec:op-theory}).
The proof of our Strategy Property (\rref{thm:stratprop}) constitutes an (on-paper) algorithm that computes a player's (effective) strategy from a proof that they can win a game.
This is the key test of constructivity for \CGL, which would not be possible in classical \GL.
We show that \CGL proofs have \emph{two} computational interpretations: the operational semantics interpret an arbitrary proof (strategy) as a functional program which reduces to a normal-form proof (strategy),
while realizability semantics interpret Angel strategies as programs which defeat arbitrary Demonic opponents.

While \CGL has ample theoretical motivation, the practical motivations from synthesis are also strong.
A notable line of work on \dGL extends first-order \GL to hybrid games to verify safety-critical adversarial cyber-physical systems \cite{DBLP:journals/tocl/Platzer15}.
We have designed \CGL to extend smoothly to hybrid games, where synthesis provides the correctness demanded by safety-critical systems and the synthesis of correct monitors of the external world~\cite{DBLP:journals/fmsd/MitschP16}.


\section{Related Work}

This work is at the intersection of game logic and constructive modal logics.
Individually, they have a rich literature, but little work has been done at their intersection.
Of these, we are the first for \GL and the first with a proofs-as-programs interpretation for a full first-order program logic.

\paragraph{Games in Logic.}
Parikh's propositional \GL~\cite{DBLP:conf/focs/Parikh83} was followed by coalitional \GL~\cite{DBLP:journals/logcom/Pauly02}.
A first-order version of \GL is the basis of differential game logic \dGL~\cite{DBLP:journals/tocl/Platzer15} for hybrid games.
\GL's are unique in their clear delegation of strategy to the \emph{proof} language rather than the \emph{model} language, crucially allowing succinct game specifications with sophisticated winning strategies.
Succinct specifications are important: specifications are \emph{trusted} because proving the \emph{wrong theorem} would not ensure correctness.
Relatives without this separation include Strategy Logic~\cite{DBLP:conf/concur/ChatterjeeHP07}, Alternating-Time Temporal Logic (ATL)~\cite{DBLP:journals/jacm/AlurHK02}, CATL~\cite{DBLP:conf/atal/HoekJW05}, Ghosh's SDGL~\cite{ghosh2008strategies}, Ramanujam's structured strategies~\cite{DBLP:conf/kr/RamanujamS08}, Dynamic-epistemic logics~\cite{DBLP:series/lncs/Benthem15,DBLP:journals/games/BenthemPR11,van2001games}, evidence logics~\cite{DBLP:journals/sLogica/BenthemP11}, and  Angelic Hoare logic~\cite{DBLP:journals/corr/Mamouras16}.

\paragraph{Constructive Modal Logics.}
A major contribution of \CGL is our constructive semantics for games, not to be confused with game semantics~\cite{DBLP:journals/iandc/AbramskyJM00}, which are used to give programs semantics \emph{in terms of} games.
We draw on work in semantics for constructive modal logics, of which two main approaches are intuitionistic Kripke semantics and realizability semantics.

An overview of Intuitionistic Kripke semantics is given by Wijesekera~\citet{DBLP:journals/apal/Wijesekera90}.
Intuitionistic Kripke semantics are parameterized over worlds, but in contrast to classical Kripke semantics, possible worlds represent what is currently \emph{known} of the state.
Worlds are preordered by $w_1 \geq w_2$ when $w_1$ contains at least the knowledge in $w_2$.
Kripke semantics were used in Constructive Concurrent \DL~\cite{DBLP:journals/apal/WijesekeraN05}, where both the world and knowledge of it change during execution.
A key advantage of realizability semantics~\cite{DBLP:journals/mscs/Oosten02,lipton1992constructive} is their explicit interpretation of constructivity as computability by giving a \emph{realizer}, a program which witnesses a fact.
Our semantics combine elements of both: Strategies are represented by realizers, while the game state is a Kripke world.
Constructive set theory~\cite{DBLP:journals/jsyml/AczelG06} aids in understanding which set operations are permissible in constructive semantics.

Modal semantics have also exploited mathematical structures such as:
\begin{inparaenum}[i)]
\item Neighborhood models~\cite{DBLP:conf/lori/BenthemBE17}, topological models for spatial logics~\cite{DBLP:reference/spatial/BenthemB07},  and temporal logics of dynamical systems~\cite{DBLP:journals/lmcs/Fernandez-Duque18}.
\item Categorical~\cite{DBLP:conf/csl/AlechinaMPR01}, sheaf~\cite{Hilken_afirst}, and pre-sheaf~\cite{DBLP:journals/aml/Ghilardi89} models.
\item Coalgebraic semantics for classical Propositional Dynamic Logic (PDL)~\cite{DBLP:journals/corr/abs-1109-3685}.
\end{inparaenum}
While games are known to exhibit algebraic structure~\cite{DBLP:journals/sLogica/Goranko03}, such laws are not essential to this work.
Our semantics are also notable for the seamless interaction between a constructive Angel and a classical Demon.

\CGL is first-order, so we must address the constructivity of operations that inspect game state.
We consider rational numbers so that equality is decidable, but our work should generalize to constructive reals~\cite{bishop1967foundations,bridges2007techniques}.

Intuitionistic modalities also appear in dynamic-epistemic logic (DEL)~\cite{DBLP:journals/logcom/FrittellaGKPS16}, but that work is interested primarily in proof-theoretic semantics while we employ realizability semantics to stay firmly rooted in computation.
Intutionistic Kripke semantics have also been applied to multimodal System K with iteration~\cite{DBLP:journals/fuin/Celani01}, a weak fragment of PDL.

\paragraph{Constructivity and Dynamic Logic.}
With \CGL, we bring to fruition several past efforts to develop constructive dynamic logics.
Prior work on PDL~\cite{degen2006towards} sought an Existential Property for Propositional Dynamic Logic (PDL), but they questioned the practicality of their own implication introduction rule, whose side condition is non-syntactic.
One of our results is a first-order Existential Property, which Degen cited as an open problem beyond the methods of their day~\cite{degen2006towards}.
To our knowledge, only one approach~\citet{kamide2010strong} considers Curry-Howard or functional proof terms for a program logic.
While their work is a notable precursor to ours, their logic is a weak fragment of PDL without tests, monotonicity, or unbounded iteration, while we support not only PDL but the much more powerful first-order \GL.
Lastly, we are preceded by Constructive Concurrent Dynamic Logic, \cite{DBLP:journals/apal/WijesekeraN05} which gives a Kripke semantics for Concurrent Dynamic Logic \cite{DBLP:journals/jacm/Peleg87}, a proper fragment of \GL.
Their work focuses on an epistemic interpretation of constructivity, algebraic laws, and tableaux.
We differ in our use of realizability semantics and natural deduction, which were essential to developing a Curry-Howard interpretation for \CGL.
In summary, we are justified in claiming to have the first Curry-Howard interpretation with proof terms and Existential Properties for an \emph{expressive} program logic, the first constructive game logic, and the only with first-order proof terms.

While constructive natural deduction calculi map most directly to functional programs, proof terms can be generated for any proof calculus, including a well-known interpretation of classical logic as continuation-passing style~\cite{DBLP:conf/popl/Griffin90}.
Proof terms have been developed~\cite{DBLP:conf/cpp/FultonP16} for a Hilbert calculus for \dL, a dynamic logic (\DL) for hybrid systems.
Their work focuses on a provably correct interchange format for classical \dL proofs, not constructive logics.

\section{Syntax}
We define the language of \CGL, consisting of terms, games, and formulas.
The simplest terms are \emph{program variables} $x, y \in \allvars$ where $\allvars$ is the set of variable identifiers.
Globally-scoped mutable program variables contain the state of the game, also called the \emph{position} in game-theoretic terminology.
All variables and terms are rational-valued ($\allrat$); we also write $\allbool$ for the set of Boolean values $\{0,1\}$ for false and true respectively.
\begin{definition}[Terms]
A \emph{term} $f, g$ is a rational-valued  computable function over the game state.
We give a nonexhaustive grammar of terms, specifically those used in our examples:
\[f,g ~\bebecomes~  \cdots \alternative q \alternative x \alternative f + g \alternative f \cdot g \alternative \ediv{f}{g} \alternative \emod{f}{g}\]
where $q \in \mathbb{Q}$ is a rational literal, $x$ a program variable, $f + g$ a sum, $f \cdot g$ a product.
Division-with-remainder is intended for use with integers, but we generalize the standard notion to support rational arguments.
Quotient $\ediv{f}{g}$ is integer even when $f$ and $g$ are non-integer, and thus leaves a rational remainder $\emod{f}{g}$.
Divisors $g$ are assumed to be nonzero.

\label{def:terms}
 \end{definition}
A game in \CGL is played between a constructive player named Angel and a classical player named Demon.
Our usage of the names Angel and Demon differs subtly from traditional \GL usage for technical reasons.
Our Angel and Demon are asymmetric: Angel is ``our'' player, who must play constructively, while the ``opponent'' Demon is allowed to play classically because our opponent need not be a computer.
At any time some player is \emph{active}, meaning their strategy resolves all decisions, and the opposite player is called \emph{dormant}.
Classical \GL identifies Angel with active and Demon with dormant; the notions are distinct in \CGL.

\begin{definition}[Games]
The set of \emph{games} $\alpha,\beta$ is defined recursively as such:
\[\alpha,\beta ~\bebecomes~ \ptest{\phi} \alternative \humod{x}{f} \alternative \prandom{x} \alternative  \pchoice{\alpha}{\beta} \alternative \alpha;\beta \alternative \prepeat{\alpha} \alternative \pdual{\alpha}\]
\end{definition}
In the \emph{test game} $\ptest{\phi},$ the active player wins if they can exhibit a constructive proof that formula $\phi$ currently holds.
If they do not exhibit a proof, the dormant player wins by default and we informally say the active player ``broke the rules''. 
In deterministic assignment games $\humod{x}{f},$ neither player makes a choice, but the program variable $x$ takes on the value of a term $f$.
In nondeterministic assignment games $\prandom{x}$, the active player picks a value for $x : \allrat$.
In the choice game $\alpha \cup \beta,$ the active player chooses whether to play game $\alpha$ or game $\beta$.
In the sequential composition game $\alpha;\beta$, game $\alpha$ is played first, then $\beta$ from the resulting state.
In the repetition game $\prepeat{\alpha},$ the active player chooses after each repetition of $\alpha$ whether to continue playing, but loses if they repeat $\alpha$ infinitely.
Notably, the exact number of repetitions can depend on the dormant player's moves, so the active player need not know, let alone announce, the exact number of iterations in advance.
In the dual game $\pdual{\alpha},$ the active player becomes dormant and vice-versa, then $\alpha$ is played.
We parenthesize games with braces $\{ \alpha \}$ when necessary.
Sequential and nondeterministic composition both associate to the right, i.e., $\alpha \cup \beta \cup \gamma \equiv \{\alpha \cup \{\beta \cup \gamma\}\}$.
This does not affect their semantics as both operators are associative, but aids in reading proof terms.

\begin{definition}[\CGL Formulas]
\label{def:cgl-formula}
The set of \CGL \emph{formulas} $\phi$ (also $\psi, \rho$) is given recursively by the grammar:
\[ \phi ~\bebecomes~ \ddiamond{\alpha}{\phi} \alternative \dbox{\alpha}{\phi} \alternative f \sim g\]
\end{definition}
The defining constructs in \CGL (and \GL) are the modalities $\ddiamond{\alpha}{\phi}$ and $\dbox{\alpha}{\phi}$.
These mean that the active or dormant Angel (i.e., constructive) player has a constructive strategy to play $\alpha$ and achieve postcondition $\phi$.
This paper does not develop the modalities for active and dormant Demon (i.e., classical) players because by definition those cannot be synthesized to executable code.
We assume the presence of interpreted comparison predicates ${\sim} \in \{\leq, <, =, \neq, >, \geq\}$.

The standard connectives of first-order constructive logic can be derived from games and comparisons.
Verum ($\btt$) is defined $1 > 0$ and falsum ($\bff$) is $0 > 1$.
Conjunction $\phi \land \psi$ is defined $\ddiamond{\ptest{\phi}}{\psi},$
disjunction $\phi \lor  \psi$ is defined $\ddiamond{\ptest{\phi} \cup \ptest{\psi}}{\btt},$
implication $\phi \limply \psi$ is defined $\dbox{\ptest{\phi}}{\psi}$,
universal quantification $\lforall{x}{\phi}$ is defined $\dbox{\prandom{x}}{\phi},$ and
existential quantification $\lexists{x}{\phi}$ is defined $\ddiamond{\prandom{x}}{\phi}$.
As usual in logic, equivalence $\phi \lequiv \psi$ can also be defined $(\phi \limply \psi) \land (\psi \limply \phi)$.
As usual in constructive logics, negation $\neg \phi$ is defined $\phi \limply \bff$, and inequality is defined by $f \neq g \equiv \neg(f = g)$.
We will use the derived constructs freely but present semantics and proof rules only for the core constructs to minimize duplication.
Indeed, it will aid in understanding of the proof term language to keep the definitions above in mind, because the proof terms for many first-order programs follow those from first-order constructive logic.

For convenience, we also write derived operators where the dormant player is given control of a single choice before returning control to the active player.
The \emph{dormant choice} $\dchoice{\alpha}{\beta},$ defined $\pdual{\{{\pchoice{\pdual{\alpha}}{\pdual{\beta}}}\}},$ says the dormant player chooses which branch to take, but the active player is in control of the subgames.
We write $\eren{\phi}{x}{y}$ (likewise for $\alpha$ and $f$) for the \emph{renaming} of $x$ for $y$ and vice versa in formula $\phi$, and write $\tsub{\phi}{x}{f}$ for the \emph{substitution} of term $f$ for program variable $x$ in $\phi$, if the substitution is admissible (\rref{def:lem-admit} in \rref{sec:soundness}).

\subsection{Example Games}
\label{sec:example-games}
We demonstrate the meaning and usage of the \CGL constructs via examples, culminating in the two classic games of Nim and cake-cutting.

\paragraph{Nondeterministic Programs.}
Every (possibly nondeterministic) program is also a one-player game.
For example, the program $\humod{n}{0};\prepeat{\{\humod{n}{n+1}\}}$ can nondeterministically sets $n$ to any natural number because Angel has a choice whether to continue after every repetition of the loop, but is not allowed to continue forever.
Conversely, games are like programs where the environment (Demon) is adversarial, and the program (Angel) strategically resolves nondeterminism to overcome the environment.

\paragraph{Demonic Counter.}
Angel's choices often must be \emph{reactive} to Demon's choices.
Consider the game $\humod{c}{10};\prepeat{\{\humod{c}{c-1} \cap \humod{c}{c-2}\}};\ptest{0 \leq c \leq 2}$ where Demon repeatedly decreases $c$ by 1 or 2, and Angel chooses when to stop.
Angel only wins because she can pass the test $0 \leq c \leq 2,$ which she can do by simply repeating the loop until $0 \leq c \leq 2$ holds.
If Angel had to decide the loop duration in advance, Demon could force a rules violation by ``guessing'' the duration and changing his choices of $\humod{c}{c-1}$ vs.\ $\humod{c}{c-2}$.

\paragraph{Coin Toss.}
\newcommand{\guessvar}{\textsf{guess}}
\newcommand{\coinvar}{\textsf{coin}}
Games are perfect-information and do not possess randomness in the probabilistic sense, only (possibilistic) nondeterminism.
This standard limitation is shown by attempting to express a coin-guessing game:
\[\{\humod{\coinvar}{0} \cap \humod{\coinvar}{1}\};\{\humod{\guessvar}{0} \cup \humod{\guessvar}{1}\};\ptest{\guessvar = \coinvar}\]
The Demon player sets the value of a tossed coin, but does so adversarially, not randomly, since strategies in \CGL are \emph{pure} strategies.
The Angel player has perfect knowledge of $\coinvar$  and can set $\guessvar$ equivalently, thus easily passing the test $\guessvar = \coinvar,$ unlike a real coin toss.
Partial information games are interesting future work that could be implemented by limiting the variables visible in a strategy.

\paragraph{Nim.}
Nim  is the standard introductory example of a discrete, 2-player, zero-sum, perfect-information game.
We consider mis\`ere play (last player loses) for a version of Nim that is also known as the \emph{subtraction game}.
The constant $\nim$ defines the game Nim.
\begin{align*}
\nim =\Big\{\big\{&\{\humod{c}{c-1} \cup \humod{c}{c-2} \cup \humod{c}{c-3}\};
                    \ptest{c > 0}\big\};\\
             \big\{&\{\humod{c}{c-1} \cup \humod{c}{c-2} \cup \humod{c}{c-3}\};
                    \ptest{c > 0}\big\}\pdual{}\Big\}^*
\end{align*}
The game state consists of a single counter $c$ containing a natural number, which each player chooses ($\cup$) to reduce by 1, 2, or 3 ($\humod{c}{c-k}$).
The counter is non-negative, and the game repeats as long as Angel wishes, until some player empties the counter, at which point that player is declared the loser ($\ptest{c > 0}$).
\begin{proposition}[Dormant winning region]
\label{prop:nim-demon-wins}
Suppose $c \equiv 1 ~(\textup{mod}~ 4),$
Then the dormant player has a strategy to ensure $c \equiv 1 ~(\textup{mod}~ 4)$ as an invariant.
That is, the following \CGL formula is valid (true in every state): 
\[c > 0 \limply \emod{c}{4} = 1 \limply
\dbox{\prepeat{\nim}}{\,\emod{c}{4} = 1}
\]
\end{proposition}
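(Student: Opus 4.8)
The plan is to exhibit a \CGL natural-deduction derivation (equivalently, a proof term) witnessing validity. Reading $\phi \limply \psi$ as $\dbox{\ptest{\phi}}{\psi}$, I would first apply the introduction rule for $\dbox{\ptest{\cdot}}{\cdot}$ (i.e.\ implication introduction) twice to move $c > 0$ and $\emod{c}{4} = 1$ into the proof context, reducing the goal to $\dbox{\prepeat{\nim}}{\emod{c}{4} = 1}$ under those hypotheses. Note that $\nim$ is itself a repetition $\prepeat{\{B ; \pdual{B}\}}$ with $B \equiv \{\humod{c}{c-1} \cup \humod{c}{c-2} \cup \humod{c}{c-3}\} ; \ptest{c > 0}$, so the target contains two nested loops.

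Next I would apply the box loop-induction rule ($[*]$I), once for the outer $\prepeat{}$ and once for the repetition inside $\nim$, both times with the invariant $J \equiv \emod{c}{4} = 1$ (one may conjoin $c > 0$ to mirror the hypotheses, though it is not needed). This discharges the easy obligations — the hypotheses entail $J$, and $J$ entails the postcondition $\emod{c}{4} = 1$ — and leaves the inductive step $\seq{J}{\dbox{\{B ; \pdual{B}\}}{J}}$, i.e.\ that one round preserves $J$. Unfolding the sequential composition, this is $\dbox{B}{\dbox{\pdual{B}}{J}}$. Because we are the \emph{dormant} constructive Angel of the outer box, inside $\nim$ it is Demon who is active; so Demon plays $B$ first, and the duality in $\pdual{B}$ then passes the move to Angel.

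For the inductive step I would proceed mechanically: the box choice rule splits on Demon's move $k_1 \in \{1,2,3\}$ (as the opponent under a box, Demon may pick any branch, so all three are handled symmetrically); the box assignment rule performs the substitution $c \mapsto c - k_1$; and the box test rule for $\ptest{c>0}$ introduces the hypothesis that the updated $c$ is positive, leaving the goal $\dbox{\pdual{B}}{J}$. By duality this box over $\pdual{B}$ becomes the diamond $\ddiamond{B}{J}$, in which Angel now chooses: via the diamond choice rule I select the compensating move $k_2 = 4 - k_1 \in \{1,2,3\}$, apply diamond assignment ($c \mapsto c - k_2$), and apply the diamond test rule, which demands a proof that the updated $c$ is positive together with a proof of $J$. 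Both are decidable arithmetic facts for the first-order reasoning rule to discharge: $\emod{(c-4)}{4} = \emod{c}{4} = 1$ re-establishes the invariant (since $k_1 + k_2 = 4$), and positivity of $c-4$ follows because the hypothesis ``$c - k_1 > 0$'' together with $\emod{c}{4} = 1$ forces the value of $c$ at round entry to be at least $5$ — the only positive residue-$1$ value below $5$ is $1$, and from $c = 1$ Demon could not have passed $\ptest{c>0}$ after subtracting. In particular the corner case $c = 1$ needs no special treatment: there Demon's test hypothesis is already contradictory and the branch closes immediately.

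The main obstacle is not the arithmetic, which is routine residue-class bookkeeping, but the careful tracking of the active/dormant roles through $\pdual{B}$ and the loops: one must verify that under $\dbox{\prepeat{\nim}}{\cdot}$ it is Demon who controls both repetitions and moves first within each round (hence the universal, all-branches treatment of his choice), and that duality correctly flips the box over $\pdual{B}$ to a diamond so that Angel may choose $4 - k_1$. A related subtlety is that, because the outer modality is a box, Angel (dormant) carries no termination obligation — Demon, if he repeats forever, loses — so a plain invariant suffices and no well-founded variant is required, unlike what $\ddiamond{\prepeat{\nim}}{\cdot}$ would demand.
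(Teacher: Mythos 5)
Your proof is correct and rests on the same core idea as the paper's: the loop invariant $\emod{c}{4}=1$, the mirroring response $k_2=4-k_1$, and the observation that positivity of Angel's move is recovered from Demon's own test $\ptest{c>0}$ (which, together with the invariant, forces $c\geq 5$ at round entry, so $c-4>0$). Where you differ is in how the inductive step is organized. The paper's proof term $\textsf{dNim}$ uses one application of \irref{bloopI} with invariant $c>0\land\emod{c}{4}=1$ and splits each round with the monotonicity rule \irref{mon}: the first half shows that every Demon move establishes an intermediate assertion $\textsf{Mid}\equiv c\geq 0\land(\emod{c}{4}=2\lor\emod{c}{4}=3\lor\emod{c}{4}=0)$, and the second half case-analyzes that constructive disjunction to pick Angel's reply---deliberately so, to illustrate the technique of Demon ``announcing'' his move through a disjunctive interface, at the cost of $3+3$ leaf obligations. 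You instead push the box rules left-to-right through the whole round, so in each of Demon's three branches the identity of $k_1$ is statically known and Angel's injection of $4-k_1$ is hard-coded, yielding only three end-to-end leaves with no \irref{mon} and no intermediate assertion. Both derivations are valid and extract essentially the same mirroring strategy; yours is leaner, the paper's is more modular. Two further minor points in your favor: you treat $\prepeat{\nim}$ honestly as a doubly nested star and apply \irref{bloopI} twice with the same invariant (the paper's term elides the inner repetition), and you correctly note that the conjunct $c>0$ can be dropped from the invariant because Demon's test hypothesis supplies all the positivity the argument needs.
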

This implies the dormant player wins the game because the active player violates the rules once $c=1$ and no move is valid.
We now state the winning region for an active player.
\begin{proposition}[Active winning region]
\label{prop:nim-angel-wins}
Suppose $c \in \{0,2,3\} ~(\textup{mod}~ 4)$ initially, and the active player controls the loop duration.
Then the active player can achieve $c \in \{2,3,4\}$:
\[c > 0 \limply \emod{c}{4} \in \{0,2,3\} \limply \ddiamond{\prepeat{\nim}}{\,c \in \{2,3,4\}}\]
\end{proposition}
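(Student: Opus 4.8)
The plan is to exhibit Angel's effective strategy for the subtraction game --- always move so as to hand Demon a counter with $\emod{c}{4} = 1$ --- and discharge it with the Angelic loop-convergence rule. The two leading implications are, by definition, box-test modalities $\dbox{\ptest{\cdot}}{\cdot}$, so implication introduction reduces the goal to deriving, from the assumptions $c > 0$ and $\emod{c}{4} \in \{0,2,3\}$, the formula $\ddiamond{\prepeat{\nim}}{\,c \in \{2,3,4\}}$. Observe first that $\emod{c}{4} \in \{0,2,3\}$ forces $c$ to be a natural number: the quotient $\ediv{c}{4}$ is an integer by definition, so if the remainder is an integer too then so is $c$. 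This integrality is what makes the termination measure below well-founded, and since the case split on $\emod{c}{4}$ branches only on decidable comparisons, the strategy we build is effective as \CGL demands.

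I would take the loop invariant to be $\linv \equiv (c > 0 \land \emod{c}{4} \in \{0,2,3\})$ and the termination metric to be the value of $c$ itself, which ranges over $\mathbb{N}$. Note that $\nim$ already \emph{is} an iteration $\prepeat{\gamma}$, where $\gamma \equiv \{\{\humod{c}{c-1} \cup \humod{c}{c-2} \cup \humod{c}{c-3}\};\ptest{c>0}\};\pdual{\{\{\humod{c}{c-1} \cup \humod{c}{c-2} \cup \humod{c}{c-3}\};\ptest{c>0}\}}$ is one full round, so the extra outer $\prepeat{\cdot}$ in $\prepeat{\nim}$ grants Angel nothing beyond $\prepeat{\gamma}$. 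Applying the Angelic loop-convergence rule to $\prepeat{\nim}$ --- a well-founded induction on the metric $c$, with invariant $\linv$ --- leaves, assuming $\linv$, two obligations: (i) if $c \in \{2,3,4\}$, stop the loop and the postcondition holds at once; (ii) otherwise $\linv$ together with $c \notin \{2,3,4\}$ forces $c \geq 6$, and we must reach, after one round (one inner ``go'' step followed by an inner ``stop''), a state again satisfying $\linv$ but with strictly smaller $c$.

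For obligation (ii) I unfold the round $\gamma$. As the active player, Angel resolves the first choice by the move $\humod{c}{c-k}$ with $k$ picked so that $\emod{c}{4} = 1$ afterwards, i.e.\ $k = 3$ if $\emod{c}{4}=0$, $k=1$ if $\emod{c}{4}=2$, and $k=2$ if $\emod{c}{4}=3$. In every case $c \geq 6$ gives $c - k \geq 5 > 0$, so Angel passes her own test $\ptest{c>0}$, leaving $\emod{c}{4}=1$ with $c \geq 5$. Control then passes to Demon through the $\pdual{(\cdot)}$, turning $\ddiamond{\pdual{\cdots}}{\phi}$ into the box $\dbox{\cdots}{\phi}$, so we must cover \emph{every} Demon move $\humod{c}{c-j}$, $j \in \{1,2,3\}$. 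Since $c \geq 5$, we have $c - j \geq 2 > 0$, so Demon cannot break the rules, and the new value $c' = c - k - j$ satisfies $c' \geq 2 > 0$ with $\emod{c'}{4} \in \{0,2,3\}$ (it is $0$, $3$, or $2$, since $\emod{c}{4}=1$ before his move), so $\linv$ is restored, while $c' \leq c - 2 < c$, so the metric strictly decreased. The convergence rule's continuation --- equivalently, the induction hypothesis at the smaller metric value --- then delivers $\ddiamond{\prepeat{\nim}}{\,c \in \{2,3,4\}}$; should Demon land in $\{2,3,4\}$ itself, obligation (i) of the continuation closes the branch immediately.

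The routine parts are the modular bookkeeping in obligation (ii) and the integrality remark. I expect the main obstacle to be the game-semantic accounting across the $\pdual{(\cdot)}$: one must confirm that the now-active Demon is genuinely forced --- he cannot escape by breaking the rules, since $c \geq 5$ before his move --- and that \emph{each} of his three moves re-establishes $\linv$ with a strictly smaller metric, so that a single instance of the convergence rule discharges the whole game rather than a separate argument per Demon strategy; a lesser point is checking that the spurious outer $\prepeat{\cdot}$ around the already-starred $\nim$ is harmless. Finally, although the stated postcondition is only $c \in \{2,3,4\}$ rather than an outright win, from any such position Angel can next move to make $\emod{c}{4}=1$ and so force Demon's eventual rule violation, which is why this checkpoint faithfully records the active player's decisive advantage.
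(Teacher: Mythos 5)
Your proposal is correct and takes essentially the same route as the paper's proof: introduce the two implications, then apply the Angelic convergence rule \irref{dloopI} with invariant $c > 0 \land \emod{c}{4} \in \{0,2,3\}$, Angel's strategy of case-splitting on the (decidable) residue and subtracting so as to hand Demon $\emod{c}{4}=1$, and a check that all three Demon replies restore the invariant with a strictly smaller well-founded metric. The only cosmetic differences are your choice of metric ($c$ itself, with the stop condition $c\in\{2,3,4\}$) versus the paper's $\ediv{c}{4}$ with terminal value $0$ (so the paper stops at $c\in\{2,3\}$ and, from $c=4$, instead wins one round later via Demon's forced rule violation).
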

At that point, the active player will win in one move by setting $c=1$ which forces the dormant player to set $c=0$ and fail the test $\ptest{c > 0}$.

\paragraph{Cake-cutting.}
Another classic 2-player game, from the study of equitable division, is the cake-cutting problem~\cite{DBLP:journals/sLogica/PaulyP03a}:
The active player cuts the cake in two, then the (initially-)dormant player gets first choice of a piece.
This is an optimal protocol for splitting the cake in the sense that the active player is incentivized to split the cake evenly, else the dormant player could take the larger piece.
Cake-cutting is also a simple use case for fractional numbers.
The constant $\cake$  defines the cake-cutting game.
Here $x$ is the relative size (from $0$ to $1$) of the first piece, $y$ is the size of the second piece, $a$ is the size of the active player's piece, and $d$ is the size of dormant player's piece.
\begin{align*}
\cake =\,&\prandom{x}; \ptest{(0 \leq x \leq 1)}; \humod{y}{1-x};\\
            &\{\dchoice{\humod{a}{x};\humod{d}{y}}{\humod{a}{y};\humod{d}{x}}\}
\end{align*}
The game is played only once.
The active player picks the division of the cake, which must be a fraction $0 \leq x \leq 1$.
The dormant player then picks which slice goes to whom.

The active player has a tight strategy to achieve a $0.5$ cake share, as stated in \rref{prop:cake-angel-wins}.
\begin{proposition}[Active winning region]
\label{prop:cake-angel-wins}
The following formula is valid:
\[\ddiamond{\cake}{\,a \geq 0.5}\]
\end{proposition}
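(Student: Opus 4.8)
The plan is to give an explicit \CGL derivation of $\ddiamond{\cake}{\,a \geq 0.5}$; validity then follows from soundness of the \CGL calculus w.r.t.\ the realizability semantics. The one creative step is Angel's opening move: she cuts the cake exactly in half, i.e.\ she resolves the nondeterministic assignment $\prandom{x}$ with the witness $x := 1/2$ (equivalently $0.5$). After that, every remaining choice in $\cake$ is either deterministic or Demon's, and the postcondition collapses to a true arithmetic fact.

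Concretely, I would repeatedly apply the diamond rule for sequential composition to split $\cake$ into its successive phases $\prandom{x}$, $\ptest{(0 \leq x \leq 1)}$, $\humod{y}{1-x}$, and the dormant choice. For $\prandom{x}$, apply the diamond nondeterministic-assignment rule with the witness $1/2$. For the test, apply the diamond-test rule; its obligation $0 \leq 1/2 \leq 1$ is discharged by decidable rational arithmetic. For the deterministic assignment $\humod{y}{1-x}$, apply the diamond-assignment rule, which records $y = 1 - 1/2 = 1/2$ in the state.

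The only mildly delicate step is the dormant choice. Writing $\gamma_1 \equiv \humod{a}{x};\humod{d}{y}$ and $\gamma_2 \equiv \humod{a}{y};\humod{d}{x}$, it unfolds by definition to $\pdual{\{\pchoice{\pdual{\gamma_1}}{\pdual{\gamma_2}}\}}$. Here I would either apply the derived dormant-choice rule or unfold this definition and reason through the dual, so that the goal splits \emph{conjunctively}: Angel must establish $a \geq 0.5$ after $\gamma_1$ \emph{and} after $\gamma_2$ --- which is precisely the fact that Demon, not Angel, picks the slice. Executing $\gamma_1$ yields $a = x = 1/2$; executing $\gamma_2$ yields $a = y = 1/2$. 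Either way the remaining goal is $1/2 \geq 0.5$, which closes by arithmetic, finishing the derivation.

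The ``hard part'' is really just bookkeeping: getting the dualization of the dormant choice right and observing that the extracted Angel strategy is trivially effective, being the constant strategy that plays $x := 1/2$ and then idles. No loop invariant, monotonicity rule, or nontrivial reasoning is needed; the subtlety of cake-cutting lives in the game design, not in verifying this winning region --- which is exactly why choosing $x = 1/2$ turns the target inequality into an equality.
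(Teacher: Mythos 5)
Your proposal is correct and matches the paper's own proof (the proof term $\textsf{aCake}$ in the appendix): Angel witnesses $\prandom{x}$ with $0.5$, passes the test, computes $y = 0.5$, and the dormant choice dualizes into a conjunction over Demon's two options, each closing by $0.5 \geq 0.5$. The only cosmetic difference is that the paper routes the final step through the monotonicity rule with the intermediate assertion $x = 0.5 \land y = 0.5$ rather than pushing the assignments through left-to-right, but the strategy and arithmetic content are identical.
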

The dormant player also has a computable strategy to achieve exactly $0.5$ share of the cake (\rref{prop:cake-demon-wins}).
Division is fair because each player has a strategy to get their fair 0.5 share.
\begin{proposition}[Dormant winning region]
\label{prop:cake-demon-wins}
The following formula is valid:
\[\dbox{\cake}{\,d \geq 0.5}\]
\end{proposition}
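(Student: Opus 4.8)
The plan is to build the derivation by exhibiting the (dormant-Angel) winning strategy it encodes. Unfolding $\cake$ and the dormant-choice abbreviation $\dchoice{\cdot}{\cdot} \equiv \pdual{\{\pdual{\cdot}\cup\pdual{\cdot}\}}$, the box game is a right-nested sequential composition in which Demon --- the active player under a box --- first picks a value for $x$ via $\prandom{x}$, then must establish $0 \leq x \leq 1$ at the test $\ptest{(0 \leq x \leq 1)}$ (failing which dormant Angel wins by default), the update $\humod{y}{1-x}$ runs deterministically, and finally the leading dual of $\dchoice{\cdot}{\cdot}$ hands the remaining binary choice to the dormant player, i.e.\ to Angel. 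Angel's strategy is to keep the larger slice for $d$: when $x \geq 0.5$ choose the branch $\humod{a}{y};\humod{d}{x}$ (so $d \mathrel{:=} x \geq 0.5$), and when $x \leq 0.5$ choose $\humod{a}{x};\humod{d}{y}$ (so $d \mathrel{:=} y = 1-x \geq 0.5$); on the overlap $x = 0.5$ either choice works.

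Concretely, I would proceed top-down: peel off the sequential compositions with the box sequential-composition rule; bind an arbitrary rational $x$ with the box nondeterministic-assignment ($\forall$-introduction) rule; apply the box test rule, which both discharges the rule-violation case and adds $0 \leq x \leq 1$ to the context; apply the assignment rule for $\humod{y}{1-x}$, recording $y = 1-x$; unfold the leading dual of the dormant choice so Angel becomes active on $\pdual{\alpha}\cup\pdual{\beta}$ (dualization trades $\dbox{\cdot}{\cdot}$ for $\ddiamond{\cdot}{\cdot}$ while keeping Angel's strategy constructive); then case-split on the decidable disjunction $x \leq 0.5 \lor x > 0.5$ --- justified by decidability of rational comparisons and supplied by the first-order quantifier-elimination rule over $\mathbb{Q}$ --- and in each case take the matching $\cup$ branch, run its two deterministic assignments (the inner dual harmlessly makes Angel dormant again on a choiceless game), and close the resulting arithmetic leaf ($x \geq 0.5 \vdash x \geq 0.5$, resp.\ $0 \leq x \leq 1,\ x \leq 0.5,\ y = 1-x \vdash y \geq 0.5$) by quantifier elimination.

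The derivation is short, so the work is bookkeeping rather than ingenuity, and I do not expect a genuine obstacle: the proposition is the mirror image of the active case \rref{prop:cake-angel-wins}, with Angel and Demon swapping the roles of ``who picks $x$'' and ``who picks the slice.'' The one place to be careful is keeping the active/dormant roles straight through the two layers of dual in $\dchoice{\cdot}{\cdot}$ --- the outer dual makes Angel active so that she, not Demon, resolves the choice, while the inner duals merely return her to dormant on subgames that have no choices. As an alternative to the syntactic proof, one could verify the same strategy directly against the realizability semantics: give the realizer that reads Demon's chosen $x$ off the state, returns the branch selector obtained by comparing $x$ with $0.5$ together with the trivial continuation realizers, and check that $d \geq 0.5$ holds in the resulting state in both cases.
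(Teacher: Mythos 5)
Your proposal is correct and matches the paper's own proof term $\textsf{dCake}$ (\rref{app:example-proofs}) essentially step for step: bind Demon's $x$ with \irref{brandomI}, assume the test $0\leq x\leq 1$ with \irref{btestI}, record $y=1-x$ via \irref{asgnI}, dualize so Angel resolves the choice, split on $x\leq 0.5\lor x>0.5$ by \irref{metsplit}, and close each branch by arithmetic. The only cosmetic difference is that the paper threads the midcondition $x+y=1$ through a \irref{mon} step rather than carrying it directly in the context.
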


\paragraph{Computability and Numeric Types.}
Perfect fair division is only achieved for $a,d \in \allrat$ because rational equality is decidable.
Trichotomy $(a < 0.5 \lor a = 0.5 \lor a > 0.5)$ is a tautology, so the dormant player's strategy can inspect the active player's choice of $a$.
Notably, we intend to support constructive reals in future work, for which exact equality is not decidable and trichotomy is not an axiom.
Future work on real-valued \CGL will need to employ approximate comparison techniques as is typical for constructive reals~\cite{bishop1967foundations,bridges2007techniques,DBLP:series/txtcs/Weihrauch00}.
The examples in this section have been proven%
\iflongversion%
(\rref{app:example-proofs})
\else%
~\cite{br2020constructive}
\fi
using the calculus defined in \rref{sec:proof-calculus}.

\section{Semantics}
\label{sec:rel-sem}
We now develop the semantics of \CGL.
In contrast to classical \GL, whose semantics are well-understood~\cite{DBLP:conf/focs/Parikh83}, the major semantic challenge for \CGL is capturing the competition between a \emph{constructive} Angel and \emph{classical} Demon.
We base our approach on realizability semantics~\cite{DBLP:journals/mscs/Oosten02,lipton1992constructive}, because this approach makes the relationship between constructive proofs and programs particularly clear, and generating programs from \CGL proofs is one of our motivations.

Unlike previous applications of realizability, games feature two agents, and one could imagine a semantics with two realizers, one for each of Angel and Demon.
However, we choose to use only one realizer, for Angel, which captures the fact that only Angel is restricted to a computable strategy, not Demon.
Moreover, a single realizer makes it clear that Angel cannot inspect Demon's strategy, only the game state, and also simplifies notations and proofs.
Because Angel is computable but Demon is classical, our semantics has the flavor both of realizability semantics and of a traditional Kripke semantics for programs.

The semantic functions employ \emph{game states} $\om \in \allstate$ where we write $\allstate$ for the set of all states.
We additionally write $\stt, \sff \in \allstate$ (not to be confused with formulas $\btt$ and $\bff$) for the pseudo-states $\stt$ and $\sff$ indicating that Angel or Demon respectively has won the game early by forcing the other to fail a test.
Each $\om \in \allstate$ maps each $x \in \allvars$ to a value $\om(x) \in \allrat$.
We write $\ssub{\om}{x}{v}$ for the state that agrees with $\om$ except that $x$ is assigned value $v$ where $v \in \allrat$.
\begin{definition}[Arithmetic term semantics]
\label{def:term-sem}
A term $f$ is a computable function of the state, so the interpretation $\tint{f}{\om}$ of term $f$ in state $\om$ is $f(\om)$.
\end{definition}

\subsection{Realizers}
To define the semantics of games, we first define realizers, the programs which implement strategies.
The language of realizers is a higher-order lambda calculus where variables can range over game states, numbers, or realizers which realize a give proposition $\phi$.
Gameplay proceeds in continuation-passing style: invoking a realizer returns another realizer which performs any further moves.
We describe the typing constraints for realizers informally, and say $\myaa$ is a $\ddiamond{\alpha}\phi$-realizer ($\myaa \in \rzfor{\ddiamond{\alpha}{\phi}}$) if it provides strategic decisions exactly when $\ddiamond{\alpha}{\phi}$ demands them.

\begin{definition}[Realizers]
The syntax of realizers $\myaa,\ab,\ac \in \allRz$ (where $\allRz$ is the set of all realizers) is defined coinductively:
\begin{align*}
\myaa,\ab,\ac &\bebecomes x \alternative \rzNil \alternative \rzCons{\myaa}{\ab}\alternative \rzFst{\myaa} \alternative \rzSnd{\myaa}
  \alternative (\rzBLam{\om}{\myaa(\om)}) \alternative (\rzFOLam{x}{\allrat}{\myaa})  \\
            &\alternative (\rzHOLam{x}{\phi}{\myaa}) \alternative \rzApp{\myaa}{v} \alternative \rzApp{\myaa}{\ab} \alternative \rzApp{\myaa}{\om}
\alternative \rzIf{f(\om)}{\myaa}{\ab}
\end{align*}
\end{definition}
\noindent where $x$ is a program (or realizer) variable and $f$ is a term over the state $\om$.
The Roman $\myaa,\ab,\ac$ should not be confused with the Greek $\alpha,\beta,\gamma$ which range over games.
Realizers have access to the game state $\om$, expressed by lambda realizers $(\rzBLam{\om}{\myaa(\om)})$ which, when applied in a state $\nu,$ compute the realizer $\myaa$ with $\nu$ substituted for $\om$.
State lambdas $\lambda$ are distinguished from propositional and first-order lambdas $\Lambda$.
The unit realizer $\rzNil$ makes no choices and is understood as a unit tuple.
Units $\rzNil$ realize $f \sim g$ because \emph{rational} comparisons, in contrast to real comparisons, are decidable.
Conditional strategic decisions are realized by $\rzIf{f(\om)}{\myaa}{\ab}$ for computable function $f : \allstate \to \allbool,$ and execute $\myaa$ if $f$ returns truth, else $\ab$.
Realizer $(\rzBLam{\om}{f(\om)})$ is a $\ddiamond{\alpha\cup\beta}{\phi}$-realizer if $f(\om) \in (\{0\} \times \rzfor{\ddiamond{\alpha}{\phi}}) \cup (\{1\} \times \rzfor{\ddiamond{\beta}{\phi}})$ for all $\omega$.
The first component determines which branch is taken, while the second component is a continuation which must be able to play the corresponding branch.
Realizer $(\rzBLam{\om}{f(\om)})$ can also be a $\ddiamond{\prandom{x}}{\phi}$-realizer, which requires $f(\om) \in \mathbb{Q} \times (\rzfor{\phi})$ for all $\omega$.
The first component determines the value of $x$ while the second component demonstrates the postcondition $\phi$.
The pair realizer $\rzCons{\myaa}{\ab}$ realizes both Angelic tests $\ddiamond{\ptest{\phi}}{\phi}$ and dormant choices $\dbox{\alpha\cup\beta}{\phi}$.
It is identified with a pair of realizers: $\rzCons{\myaa}{\ab} \in \allRz \times \allRz$.

\renewcommand{\rzInd}[2]{\textsf{ind}({#1}.~#2)}
A dormant realizer waits and remembers the active Demon's moves, because they typically inform Angel's strategy once Angel resumes action.
The first-order realizer $(\rzFOLam{x}{\allrat}{\ab})$ is a $\dbox{\prandom{x}}{\phi}$-realizer when  $\tsub{\ab}{x}{v}$ is a $\phi$-realizer for every $v \in \allrat$; Demon tells Angel the desired value of $x$, which informs Angel's continuation $\ab$.
The higher-order realizer $(\rzHOLam{x}{\phi}{\ab})$ realizes $\dbox{\ptest{\phi}}{\psi}$ when $\tsub{\ab}{x}{\ac}$ realizes $\psi$ for every $\phi$-realizer $\ac$.
Demon announces the realizer for $\phi$ which Angel's continuation $\ab$ may inspect.
Tuples are inspected with projections $\rzFst{\myaa}$ and $\rzSnd{\myaa}$.
A lambda is inspected by applying arguments $\rzApp{\myaa}{\om}$ for state-lambdas, $\rzApp{\myaa}{v}$ for first-order, and $\rzApp{\myaa}{\ab}$ for higher-order.
Realizers for sequential compositions $\ddiamond{\alpha;\beta}{\phi}$ (likewise $\dbox{\alpha;\beta}{\phi}$) are $\ddiamond{\alpha}{\ddiamond{\beta}{\phi}}$-realizers: first $\alpha$ is played, and in every case the continuation must play $\beta$ before showing $\phi$.
Realizers for repetitions $\prepeat{\alpha}$ are  streams containing $\alpha$-realizers, possibly infinite by virtue of coinductive syntax.
Active loop realizer $\rzInd{x}{\myaa}$ is the least fixed point of the equation $\ab = \esub{\myaa}{x}{\ab},$ i.e., $x$ is a recursive call which must be invoked only in accordance with some well-order.
We realize dormant loops with $\rzCoind{\myaa}{x}{\ab}{\ac},$ coinductively generated from initial value $\myaa,$ update $\ab$, and post-step $\ac$ with variable $x$ for current generator value.

Active loops must terminate, so $\ddiamond{\prepeat{\alpha}}{\phi}$-realizers are constructed inductively using any well-order on states.
Dormant loops must be played as long as the opponent wishes, so $\dbox{\prepeat{\alpha}}{\phi}$-realizers are constructed coinductively, with the invariant that $\phi$ has a realizer at every iteration.

\subsection{Formula and Game Semantics}
A state $\om$ paired with a realizer $\myaa$ that continues the game is called a \emph{possibility}.
A \emph{region} (written $X,Y,Z$) is a set of possibilities.
We write $\fintR{\phi} \subseteq \rzfor{\phi} \times \allstate$ for the region which realizes formula $\phi$.
A formula $\phi$ is \emph{valid} iff some $\myaa$ uniformly realizes every state, i.e.,  $\{\myaa\} \times \allstate \subseteq \fintR{\phi}$.
A sequent $\seq{\Gamma}{\phi}$ is \emph{valid} iff the formula $\bigwedge \Gamma \limply \phi$ is valid, where $\bigwedge \Gamma$ is the conjunction of all assumptions in $\Gamma$.

The game semantics are region-oriented, i.e., they process possibilities in bulk, though Angel commits to a strategy from the start.
The region $\strategyforR[\alpha]{X} : \powerset{\allRz \times \allstate}$ is the union of all end regions of game $\alpha$ which arise when active Angel commits to an element of $X,$ then Demon plays adversarially.
In $\dstrategyforR[\alpha]{X} : \powerset{\allRz \times \allstate}$ Angel is the \emph{dormant} player, but it is still Angel who commits to an element of $X$ and Demon who plays adversarially.
Recall that pseudo-states $\stt$ and $\sff$ represent early wins by each Angel and Demon, respectively.
The definitions below implicitly assume $\sff, \stt \notin X,$ they extend to the case $\sff \in X$ (likewise $\stt \in X$) using the equations
$\dstrategyforR[\alpha]{(X \cup \{\sff\})} = \dstrategyforR[\alpha]{X} \cup\{\sff\}$ and
$\strategyforR[\alpha]{(X \cup \{\sff\})} = \strategyforR[\alpha]{X} \cup\{\sff\}$.
That is, if Demon has already won by forcing an Angel violation initially, any remaining game can be skipped with an immediate Demon victory, and vice-versa.
The game semantics exploit the \emph{Angelic} projections $\apL{Z}, \apR{Z}$ and \emph{Demonic} projections $\dpL{Z}, \dpR{Z}$, which represent binary decisions made by a constructive Angel and a classical Demon, respectively.
The Angelic projections, which are defined $\apL{Z} = \{(\rzSnd{\myaa}, \om)~|~\rzFst{\myaa}(\om)=0, (\myaa,\om) \in Z\}$ and $\apR{Z} = \{(\rzSnd{\myaa}, \om)~|~\rzFst{\myaa}(\om)=1, (\myaa,\om) \in Z\}$, filter by which branch Angel chooses with $\rzFst{\myaa}(\om) \in \mathbb{B},$ then project the remaining strategy $\rzSnd{\myaa}$.
The Demonic projections, which are defined $\dpL{Z} \equiv \{(\rzFst{\myaa},\om)~|~(\myaa,\om) \in Z\}$ and $\dpR{Z} \equiv \{(\rzSnd{\myaa},\om)~|~(\myaa,\om) \in Z\},$ contain the same states as $Z,$ but project the realizer to tell Angel which branch Demon took.

\begin{definition}[Formula semantics]
$\fintR{\phi} \subseteq \allRz \times \allstate$ is defined as:
\begin{align*}
(\rzNil,\om) \in  \fintR{f \sim g}                 &\text{ iff } \tint{f}{\om} \sim \tint{g}{\om}\\
(\myaa,\om) \in  \fintR{\ddiamond{\alpha}{\phi}}       &\text{ iff } \strategyforR[\alpha]{\{(\myaa,\om)\}} \subseteq (\fintR{\phi} \cup \{\stt\})\\
(\myaa,\om) \in  \fintR{\dbox{\alpha}{\phi}}              &\text{ iff } \dstrategyforR[\alpha]{\{(\myaa,\om)\}} \subseteq (\fintR{\phi} \cup \{\stt\})
\end{align*}
\end{definition}
Comparisons $f \sim g$ defer to the term semantics, so the interesting cases are the game modalities.
Both $\dbox{\alpha}{\phi}$ and $\ddiamond{\alpha}{\phi}$ ask whether Angel wins $\alpha$ by following the given strategy, and differ only in whether Demon vs.\ Angel is the active player, thus in both cases \emph{every} Demonic choice must satisfy Angel's goal, and early Demon wins are counted as Angel losses.
\begin{definition}[Angel game forward semantics]
We inductively define the region $\strategyforR[\alpha]{X} : \powerset{\allRz \times \allstate}$ in which $\alpha$ can end when active Angel plays $X$:
\begin{align*}
\strategyforR[\ptest{\phi}]{X}          &= \{(\rzSnd{\myaa},\om)~|~ (\rzFst{\myaa},\om) \in \fintR{\phi}\text{ for some }(\myaa,\om) \in X~\}&&\\
                                                  &\phantom{\,=\,} \cup \{\sff~|~(\rzFst{\myaa},\om) \notin \fintR{\phi}\text{ for all }(\myaa,\om) \in X~\}&&\\
\strategyforR[\humod{x}{f}]{X}         &= \{(\myaa,\ssub{\om}{x}{\tint{f}{\om}})~|~(\myaa,\om) \in X\}&&\text{}\\
\strategyforR[\prandom{x}]{X}          &= \{(\rzSnd{\myaa},\ssub{\om}{x}{\rzFst{\myaa}(\om)})~|~(\myaa,\om) \in X\}&&\text{}\\
\strategyforR[\alpha;\beta]{X}         &= \strategyforR[\beta]{(\strategyforR[\alpha]{X})} &&\text{}\\
\strategyforR[\alpha\cup\beta]{X}      &= \strategyforR[\alpha]{\apL{X}} \cup \strategyforR[\beta]{\apR{X}}&&\text{}\\
\strategyforR[\prepeat{\alpha}]{X}     &= \bigcap\{\apL{Z}\subseteq \allRz \times \allstate~|~ X \cup (\strategyforR[\alpha]{\apR{Z}}) \subseteq Z\}
&&\\
\strategyforR[\pdual{\alpha}]{X}       &= \dstrategyforR[\alpha]{X}&&\text{}
\end{align*}
\end{definition}
\begin{definition}[Demon game forward semantics]
We inductively define the region $\dstrategyforR[\alpha]{X} : \powerset{\allRz \times \allstate}$ in which $\alpha$ can end when dormant Angel plays $X$:
\begin{align*}
\dstrategyforR[\ptest{\phi}]{X}     &= \{(\rzApp{\myaa}{\ab},\om)~|~(\myaa,\om) \in X, (\ab,\om) \in \fintR{\phi}, \text{ some }\ab \in \allRz\}&&\\
                                                &\phantom{\,=\,} \cup \{\top~|~ (\myaa,\om) \in X,\text{ but no } (\ab,\om) \in \fintR{\phi}\}&&\\
\dstrategyforR[\humod{x}{f}]{X}     &= \{(\myaa,\ssub{\om}{x}{\tint{f}{\om}})~|~(\myaa,\om) \in X\} &&\text{}\\
\dstrategyforR[\prandom{x}]{X}      &= \{(\rzApp{\myaa}{r},\ssub{\om}{x}{r})~|~r \in \allrat\}&&\\
\dstrategyforR[\alpha;\beta]{X}     &= \dstrategyforR[\beta]{(\dstrategyforR[\alpha]{X})} &&\text{}\\
\dstrategyforR[\alpha\cup\beta]{X}  &= \dstrategyforR[\alpha]{\dpL{X}} \cup \dstrategyforR[\beta]{\dpR{X}} &&\text{}\\
\dstrategyforR[\prepeat{\alpha}]{X} &= \bigcap\{\dpL{Z}\subseteq \allRz \times \allstate~|~ X \cup (\dstrategyforR[\alpha]{\dpR{Z}}) \subseteq Z\}&&\\
\dstrategyforR[\pdual{\alpha}]{X}   &= \strategyforR[\alpha]{X} &&\text{}
\end{align*}
\end{definition}

Angelic tests $\ptest{\phi}$ end in the current state $\om$ with remaining realizer $\rzSnd{\myaa}$ if Angel can realize $\phi$ with $\rzFst{\myaa}$, else end in $\sff$.
Angelic deterministic assignments consume no realizer and simply update the state, then end.
Angelic nondeterministic assignments $\prandom{x}$ ask the realizer $\rzFst{\myaa}$ to compute a new value for $x$ from the current state.
Angelic compositions $\alpha;\beta$ first play $\alpha,$ then $\beta$ from the resulting state using the resulting continuation.
Angelic choice games $\alpha \cup \beta$ use the Angelic projections to decide which branch is taken according to $\rzFst{\myaa}$.
The realizer $\rzSnd{\myaa}$ may be reused between $\alpha$ and $\beta,$ since $\rzSnd{\myaa}$ could just invoke $\rzFst{\myaa}$ if it must decide which branch has been taken.
This definition of Angelic choice (corresponding to constructive disjunction) captures the reality that realizers in \CGL, in contrast with most constructive logics, are entitled to observe a game state, but they must do so in computable fashion.

\paragraph{Repetition Semantics.}
In any \GL, the challenge in defining the semantics of repetition games $\prepeat{\alpha}$ is that the number of iterations, while finite, can depend on both players' actions and is thus not known in advance, while the \DL-like semantics of $\prepeat{\alpha}$ as the finite reflexive, transitive closure of $\alpha$ gives an advance-notice semantics.
Classical \GL provides the no-advance-notice semantics as a fixed point~\cite{DBLP:conf/focs/Parikh83}, and we adopt the fixed point semantics as well.
The Angelic choice whether to stop ($\apL{Z}$) or iterate the loop ($\apR{Z}$) is analogous to the case for $\alpha \cup \beta$.

\paragraph{Duality Semantics.}
\label{sec:dual-sem}
To play the dual game $\pdual{\alpha},$ the active and dormant players switch roles, then play $\alpha$.
In \emph{classical} \GL, this characterization of duality is interchangeable with the definition of $\pdual{\alpha}$ as the game that Angel wins exactly when it is impossible for Angel to lose.
The characterizations are \emph{not} interchangeable in \CGL because the Determinacy Axiom (all games have winners) of \GL is not valid in \CGL:
\begin{remark}[Indeterminacy]
Classically equivalent determinacy axiom schemata
\(\lnot\ddiamond{\alpha}{\lnot\phi} \limply \dbox{\alpha}{\phi}\) and \(\ddiamond{\alpha}{\lnot\phi} \lor \dbox{\alpha}{\phi}\) of classical \GL are not valid in \CGL, because they imply double negation elimination.
\end{remark}
\begin{remark}[Classical duality]
In classical \GL, Angelic dual games are characterized by the axiom schema $\ddiamond{\pdual{\alpha}}{\phi} \lequiv \neg \ddiamond{\alpha}{\neg \phi},$ which is not valid in in \CGL.
It is classically interdefinable with $\ddiamond{\pdual{\alpha}} \lequiv \dbox{\alpha}{\phi}$.
\label{rem:cdd}
\end{remark}
\noindent The determinacy axiom is not valid in \CGL, so we take $\ddiamond{\pdual{\alpha}} \lequiv \dbox{\alpha}{\phi}$ as primary.

\subsection{Demonic Semantics}
Demon wins a Demonic test by presenting a realizer $\ab$ as evidence that the precondition holds.
If he cannot present a realizer (i.e., because none exists), then the game ends in $\top$ so Angel wins by default.
Else Angel's higher-order realizer $\myaa$ consumes the evidence of the pre-condition, i.e., Angelic strategies are entitled to depend (computably) on \emph{how} Demon demonstrated the precondition.
Angel can check that Demon passed the test by executing $\ab$.
The Demonic repetition game $\prepeat{\alpha}$ is defined as a fixed-point~\cite{DBLP:journals/tocl/Platzer15} with Demonic projections.
Computationally, a winning invariant for the repetition is the witness of its winnability.

The remaining cases are innocuous by comparison.
Demonic deterministic assignments $\humod{x}{f}$ deterministically store the value of $f$ in $x,$ just as Angelic assignments do.
In demonic nondeterministic assignment $\prandom{x},$ Demon chooses to set $x$ to \emph{any} value.
When Demon plays the choice game $\pchoice{\alpha}{\beta},$ Demon chooses classically between $\alpha$ and $\beta$.
The dual game $\pdual{\alpha}$ is played by Demon becoming dormant and Angel become active in $\alpha$.

\paragraph{Semantics Examples.}
The realizability semantics of games are subtle on a first read, so we provide examples of realizers.
In these examples, the state argument $\omega$ is implicit, and we refer to $\om(x)$ simply as $x$ for brevity.

Recall that $\dbox{\ptest{\phi}}{\psi}$ and $\phi \limply \psi$ are equivalent.
For any $\phi,$ the identity function $(\rzHOLam{x}{\phi}{x})$ is a $\phi \limply \phi$-realizer: for every $\phi$-realizer $x$ which Demon presents, Angel can present the same $x$ as evidence of $\phi$.
This confirms expected behavior per propositional constructive logic: the identity function is the proof of self-implication.

In example formula $\ddiamond{\pdual{\prandom{x}};\{\humod{x}{x} \cup \humod{x}{-x}\}}{x \geq 0},$ Demon gets to set $x,$ then Angel decides whether to negate $x$ in order to make it nonnegative.
It is realized by $\rzFOLam{x}{\allrat}{\rzCons{(\rzIf{x<0}{1}{0})}{\rzNil}}$: Demon announces the value of $x,$ then Angel's strategy is to check the sign of $x$, taking the right branch when $x$ is negative.
Each branch contains a deterministic assignment which consumes no realizer, then the postcondition $x \geq 0$ has trivial realizer $\rzNil$.

Consider the formula $\ddiamond{\prepeat{\{\humod{x}{x+1}\}}}{x > y},$ where  Angel's winning strategy is to repeat the loop until $x > y$, which will occur as $x$ increases.
The realizer is $\rzInd{w}{\rzCons{\rzIf{x > y}{\rzCons{0}{\rzNil}}{\rzCons{1}{w}}}{\rzNil}}$, which says that Angel stops the loop if $x > y$ and proves the postcondition with a trivial strategy.
Else Angel continues the loop, whose body consumes no realizer, and supplies the inductive call $w$ to continue the strategy inductively.

Consider the formula $\dbox{\ptest{x>0};\prepeat{\{\humod{x}{x+1}\}}}{\lexists{y}{(y \leq x \land y > 0)}}$ for a subtle example.
Our strategy for Angel is to record the initial value of $x$ in $y,$ then maintain a proof that $y \leq  x$ as $x$ increases.
This strategy is represented by $\rzHOLam{w}{(x>0)}{
\rzCoind{\rzCons{x}{\rzCons{\rzNil}{w}}}
        {z}
        {\rzCons{\rzFst{z}}{\rzCons{\rzNil}{\rzSnd{\rzSnd{z}}}}}
        {z}}$.
That is, initially Demon announces a proof $w$ of $x>0$.
Angel specifies the initial element of the realizer stream by witnessing $\lexists{y}{(y \leq x \land y > 0)}$ with $\ac_0 = \rzCons{x}{\rzCons{\rzNil}{w}},$ where the first component instantiates $y=x,$
the trivial second component indicates that $y \leq y$ trivially, and the third component reuses $w$ as a proof of $y > 0$.
Demon can choose to repeat the loop arbitrarily.
When Demon demands the $k$'th repetition, $z$ is bound to $\ac_{k-1}$ to compute $\ac_k = \rzCons{\rzFst{z}}{\rzCons{\rzNil}{\rzSnd{\rzSnd{z}}}},$ which plays the next iteration.
That is, at each iteration Angel witnesses $\lexists{y}{(y \leq x \land y > 0)}$ by assigning the same value (stored in $\rzFst{z}$) to $y,$ reproving $y \leq x$ with $\rzNil,$ then reusing the proof (stored in $\rzSnd{\rzSnd{z}})$ that $y > 0$.

\section{Proof Calculus}
\label{sec:proof-calculus}
Having settled on the meaning of a game in~\rref{sec:rel-sem}, we proceed to develop a calculus for proving \CGL formulas syntactically.
The goal is twofold: the practical motivation, as always, is that when verifying a concrete example, the realizability semantics provide a notion of ground truth, but are impractical for proving large formulas.
The theoretical motivation is that we wish to expose the computational interpretation of the modalities $\ddiamond{\alpha}{\phi}$ and $\dbox{\alpha}{\phi}$ as the types of the players' respective winning strategies for game $\alpha$ that has $\phi$ as its goal condition.
Since \CGL is constructive, such a strategy  constructs a proof of the postcondition $\phi$.

To study the computational nature of proofs, we write proof terms explicitly: the main proof judgement $\proves{\Gamma}{M}{\phi}$ says proof term $M$ is a proof of $\phi$ in context $\Gamma$, or equivalently a proof of sequent $(\Gamma \vdash \phi)$.
We write $M,N,O$ (sometimes $A, B,C$) for arbitrary proof terms, and $\pvx,\pvy,\pvl,\pvr,\pvs,\pvg$ for \emph{proof variables}, that is variables that range over proof terms of a given proposition.
In contrast to the assignable \emph{program variables}, the proof variables are given their meaning by substitution and are scoped locally, not globally.
We adapt propositional proof terms such as pairing, disjoint union, and lambda-abstraction to our context of game logic.
To support first-order games, we include first-order proof terms and new terms for features: dual, assignment, and repetition games.

We now develop the calculus by starting with standard constructs and working toward the novel constructs of \CGL.
The assumptions $\pvx$ in $\Gamma$ are named, so that they may appear as variable proof-terms $\pvx$.
We write $\eren{\Gamma}{x}{y}$ and $\eren{M}{x}{y}$ for the renaming of program variable $x$ to $y$ and vice versa in context $\Gamma$ or proof term $M,$ respectively.
Proof rules for state-modifying constructs explicit perform renamings, which both ensures they are applicable as often as possible and also ensures that references to proof variables support an intuitive notion of lexical scope.
Likewise  $\tsub{\Gamma}{x}{f}$ and $\tsub{M}{x}{f}$ are the substitutions of term $f$ for program variable $x$.
We use distinct notation to substitution proof terms for proof variables while avoiding capture: $\esub{M}{\pvx}{N}$ substitutes proof term $N$ for proof variable $\pvx$ in proof term $M$.
Some proof terms such as pairs prove both a diamond formula and a box formula.
We write $\edcons{M}{N}$ and $\ebcons{M}{N}$ respectively to distinguish the terms or $\eCons{M}{N}$ to treat them uniformly.
Likewise we abbreviate $\dmodality{\alpha}{\phi}$ when the same rule works for both diamond and box modalities, using $\pmodality{\alpha}{\phi}$ to denote its dual modality.
The proof terms $\edasgn{y}{x}{\pvx}{M}$ and $\ebasgn{y}{x}{\pvx}{M}$ introduce an auxiliary ghost variable $y$ for the old value of $x$, which improves completeness without requiring manual ghost steps.

\begin{figure}
\centering
\begin{calculuscollections}{\columnwidth}
\begin{calculus}
\cinferenceRule[dchoiceE|{$\langle\cup\rangle${E}}]{}
{
\linferenceRule[formula]
{\proves{\Gamma}{A}{\ddiamond{\alpha\cup\beta}{\phi}}
        &\proves{\Gamma,\pvl:\ddiamond{\alpha}{\phi}}{B}{\psi}
        &\proves{\Gamma,\pvr:\ddiamond{\beta}{\phi}}{C}{\psi}}
{\proves{\Gamma}{\edcase{A}{B}{C}}{\psi}}
}{}
\end{calculus}
\end{calculuscollections}

\begin{calculuscollections}{0.4\columnwidth}
\begin{calculus}
\cinferenceRule[dchoiceIL|{$\langle\cup\rangle${I1}}]{}
{\linferenceRule[formula]
  {\proves{\Gamma}{M}{\ddiamond{\alpha}{\phi}}}
  {\proves{\Gamma}{\edinjL{M}}{\ddiamond{\alpha\cup\beta}{\phi}}}
}{}
\cinferenceRule[dchoiceIR|{$\langle\cup\rangle${I2}}]{}
{\linferenceRule[formula]
  {\proves{\Gamma}{M}{\ddiamond{\beta}{\phi}}}
  {\proves{\Gamma}{\edinjR{M}}{\ddiamond{\alpha\cup\beta}{\phi}}}
}{}
\cinferenceRule[bchoiceI|{$[\cup]${I}}]{}
{\linferenceRule[formula]
  {\proves{\Gamma}{M}{\dbox{\alpha}{\phi}} & \proves{\Gamma}{N}{\dbox{\beta}{\phi}}}
  {\proves{\Gamma}{\ebcons{M}{N}}{\dbox{\alpha\cup\beta}{\phi}}}
}{}
\cinferenceRule[dtestI|{$\langle?\rangle$}{I}]{}
{\linferenceRule[formula]
  {\proves{\Gamma}{M}{\phi} & \proves{\Gamma}{N}{\psi}}
  {\proves{\Gamma}{\edcons{M}{N}}{\ddiamond{\ptest{\phi}}{\psi}}}
}{}
\cinferenceRule[btestI|{$[?]$}{I}]{}
{\linferenceRule[formula]
  {\proves{\Gamma,\pvx:\phi}{M}{\psi}}
  {\proves{\Gamma}{(\eplam{\phi}{M})}{\dbox{\ptest{\phi}}{\psi}}}
}{}
\cinferenceRule[btestE|{$[?]$}{E}]{}
{\linferenceRule[formula]
  {\proves{\Gamma}{M}{\dbox{\ptest{\phi}}{\psi}} & \proves{\Gamma}{N}{\phi}}
  {\proves{\Gamma}{(\eapp{M}{N})}{\psi}}
}{}
\end{calculus}
\end{calculuscollections}%
\qquad%
\begin{calculuscollections}{0.4\columnwidth}
\begin{calculus}
\cinferenceRule[bchoiceEL|{$[\cup]${E1}}]{}
{\linferenceRule[formula]
  {\proves{\Gamma}{M}{\dbox{\alpha\cup\beta}{\phi}}}
  {\proves{\Gamma}{\ebprojL{M}}{\dbox{\alpha}{\phi}}}
}{}
\cinferenceRule[bchoiceER|{$[\cup]${E2}}]{}
{\linferenceRule[formula]
  {\proves{\Gamma}{M}{\dbox{\alpha\cup\beta}{\phi}}}
  {\proves{\Gamma}{\ebprojR{M}}{\dbox{\beta}{\phi}}}
}{}
\cinferenceRule[hyp|{hyp}]{}
{\linferenceRule[formula]
  {}
  {\proves{\Gamma,\pvx:\phi}{\pvx}{\phi}}
}{}
\cinferenceRule[dtestEL|{$\langle?\rangle$}{E1}]{}
{\linferenceRule[formula]
  {\proves{\Gamma}{M}{\ddiamond{\ptest{\phi}}{\psi}}}
  {\proves{\Gamma}{\edprojL{M}}{\phi}}
}{}
\cinferenceRule[dtestER|{$\langle?\rangle$}{E2}]{}
{\linferenceRule[formula]
  {\proves{\Gamma}{M}{\ddiamond{\ptest{\phi}}{\psi}}}
  {\proves{\Gamma}{\edprojR{M}}{\psi}}
}{}
\end{calculus}
\end{calculuscollections}
\caption{\CGL proof calculus: Propositional rules}
\label{fig:cgl-rules-prop}
\end{figure}
The propositional proof rules of \CGL are in \rref{fig:cgl-rules-prop}.
Formula $\dbox{\ptest{\phi}}{\psi}$ is constructive implication, so rule \irref{btestE} with proof term $\eapp{M}{N}$ eliminates $M$ by supplying an $N$ that proves the test condition.
Lambda terms $(\eplam{\phi}{M})$ are introduced by rule \irref{btestI} by extending the context $\Gamma$.
While this rule is standard, it is worth emphasizing that here $\pvx$ is a \emph{proof variable} for which a proof term (like $N$ in \irref{btestE}) may be substituted, and that the \emph{game state} is untouched by \irref{btestI}.
Constructive disjunction (between the branches $\ddiamond{\alpha}{\phi}$ and $\ddiamond{\beta}{\phi})$ is the choice $\ddiamond{\alpha\cup\beta}{\phi}$.
The introduction rules for injections are \irref{dchoiceIL} and \irref{dchoiceIR}, and case-analysis is performed with rule \irref{dchoiceE}, with two branches that prove a common consequence from each disjunct.
The cases $\ddiamond{\ptest{\phi}}{\psi}$ and $\dbox{\alpha\cup\beta}{\phi}$ are conjunctive.
Conjunctions are introduced by \irref{dtestI} and \irref{bchoiceI} as pairs, and eliminated by \irref{dtestEL}, \irref{dtestER}, \irref{bchoiceEL}, and \irref{bchoiceER} as projections.
Lastly, rule \irref{hyp} says formulas in the context hold by assumption.

We now begin considering non-propositional rules, starting with the simplest ones.
\begin{figure}
  \centering
\begin{calculuscollections}{\columnwidth}
\begin{calculus}
\cinferenceRule[drcase|{$\langle*\rangle${C}}]{}
{
\linferenceRule[formula]
{\proves{\Gamma}{A}{\ddiamond{\prepeat{\alpha}}{\phi}} & \proves{\Gamma,\pvs:\phi}{B}{\psi} & \proves{\Gamma,\pvg:\ddiamond{\alpha}{\ddiamond{\prepeat{\alpha}}{\phi}}}{C}{\psi}}
{\proves{\Gamma}{\ercase{A}{B}{C}}{\psi}}
}{}
\cinferenceRule[mon|{M}]{}
{\linferenceRule[formula]
  {\proves{\Gamma}{M}{\ddiamond{\alpha}{\phi}} & \proves{\earen{\Gamma}{\alpha},\pvx:\phi}{N}{\psi}}
  {\proves{\Gamma}{\emon{M}{N}{\pvx}}{\ddiamond{\alpha}{\psi}}}
}{}
\end{calculus}
\end{calculuscollections}

\begin{calculuscollections}{3in}
\begin{calculus}
\cinferenceRule[bunroll|{$[*]${E}}]{}
{\linferenceRule[formula]
  {\proves{\Gamma}{M}{\dbox{\prepeat{\alpha}}{\phi}}}
  {\proves{\Gamma}{\ebunroll{M}}{\phi \land \dbox{\alpha}{\dbox{\prepeat{\alpha}}{\phi}}}}
}{}
\cinferenceRule[dstop|{$\langle*\rangle$S}]{}
{
\linferenceRule[formula]
{\proves{\Gamma}{M}{\phi}}  
{\proves{\Gamma}{\estop{M}}{\ddiamond{\prepeat{\alpha}}{\phi}}}
}{}
\cinferenceRule[dgo|{$\langle*\rangle$G}]{}
{
\linferenceRule[formula]
{\proves{\Gamma}{M}{\phi \lor \ddiamond{\alpha}{\ddiamond{\prepeat{\alpha}}{\phi}}}}
{\proves{\Gamma}{\ego{M}}{\ddiamond{\prepeat{\alpha}}{\phi}}}
}{}
\end{calculus}\hspace{0.3in}%
\begin{calculus}
\cinferenceRule[dualI|{$\lstrike{}^d\rstrike$}I]{}
{\linferenceRule[formula]
  {\proves{\Gamma}{M}{\pmodality{\alpha}{\phi}}}
  {\proves{\Gamma}{\eSwap{M}}{\dmodality{\pdual{\alpha}}{\phi}}}
}{}
\cinferenceRule[broll|{$[*]${R}}]{}
{\linferenceRule[formula]
  {\proves{\Gamma}{M}{\phi \land \dbox{\alpha}{\dbox{\prepeat{\alpha}}{\phi}}}}
  {\proves{\Gamma}{\ebroll{M}}{\dbox{\prepeat{\alpha}}{\phi}}}
}{}
\cinferenceRule[seqI|{$\lstrike{;}\rstrike$}I]{}
{\linferenceRule[formula]
  {\proves{\Gamma}{M}{\dmodality{\alpha}{\dmodality{\beta}{\phi}}}}
  {\proves{\Gamma}{\eSeq{M}}{\dmodality{\alpha;\beta}{\phi}}}
}{}
\end{calculus}
\end{calculuscollections}
\caption{\CGL proof calculus: Some non-propositional rules}
\label{fig:cgl-compos}
\end{figure}
The majority of the rules  in \rref{fig:cgl-compos}, while thoroughly useful in proofs, are computationally trivial.
The repetition rules (\irref{bunroll},\irref{broll}) fold and unfold the notion of repetition as iteration.
The rolling and unrolling terms are named in analogy to the \emph{iso-recursive} treatment of recursive types~\cite{DBLP:conf/tldi/VanderwaartDPCHC03}, where an explicit operation is used to expand and collapse the recursive definition of a type.

Rules \irref{drcase},\irref{dstop},\irref{dgo} are the destructor and injectors for $\ddiamond{\prepeat{\alpha}}{\phi}$,  which are similar to those for $\ddiamond{\alpha\cup\beta}{\phi}$.
The duality rules (\irref{dualI}) say the dual game is proved by proving the game where roles are reversed.
The sequencing rules (\irref{seqI}) say a sequential game is played by playing the first game with the goal of reaching a state where the second game is winnable.

Among these rules, monotonicity \irref{mon} is especially computationally rich.
The notation $\earen{\Gamma}{\alpha}$ says that in the second premiss, the assumptions in $\Gamma$ have all bound variables of $\alpha$ (written $\boundvars{\alpha}$) renamed to fresh variables $\vec{y}$ for completeness.
In practice, $\Gamma$ usually contains some assumptions on variables that are not bound, which we wish to access without writing them explicitly in $\phi$.
Rule \irref{mon} is used to execute programs right-to-left, giving shorter, more efficient proofs.
It can also be used to derive the Hoare-logical sequential composition rule, which is frequently used to reduce the number of case splits.
Note that like every \GL, \CGL is subnormal, so the modal modus ponens axiom K and G\"{o}del generalization (or necessitation) rule G are not sound, and \irref{mon} takes over much of the role they usually serve.
On the surface, \irref{mon} simply says games are monotonic: a game's goal proposition may freely be replaced with a weaker one.
From a computational perspective, \rref{sec:operational} will show  that rule \irref{mon} can be (lazily) eliminated.
Moreover, \irref{mon} is an \emph{admissible} rule, one whose instances can all be derived from existing rules.
When proofs are written right-to-left with \irref{mon}, the normalization relation translates them to left-to-right normal proofs.
Note also that in checking $\emon{M}{N}{\pvx},$ the context $\Gamma$ has the bound variables $\alpha$ renamed freshly to some $\vec{y}$ within $N,$ as required to maintain soundness across execution of $\alpha$.

Next, we consider \emph{first-order} rules, i.e., those which deal with first-order programs that modify \emph{program} variables.
The first-order rules are given in \rref{fig:cgl-first-order}.
In \irref{drandomE}, $\freevars{\psi}$ are the \emph{free variables} of $\psi,$ the variables which can influence its meaning.
\begin{figure}
  \centering
\begin{calculuscollections}{\columnwidth}
\begin{calculus}
\cinferenceRule[asgnI|{$\lstrike{:=}\rstrike$}I]{}
{\linferenceRule[formula]
{\proves{\eren{\Gamma}{x}{y},\pvx:(x=\eren{f}{x}{y})}{M}{\phi}}
{\proves{\Gamma}{\eAsgneq{y}{x}{\pvx}{M}}{\dmodality{\humod{x}{f}}{\phi}}}
}{$y$ fresh}
\cinferenceRule[drandomI|{$\langle{:}*\rangle${I}}]{}
{
\linferenceRule[formula]
{\proves{\eren{\Gamma}{x}{y},\pvx:(x=\eren{f}{x}{y})}{M}{\phi}}
{\proves{\Gamma}{\etcons{f}{M}}{\ddiamond{\prandom{x}}{\phi}}}
}{$y,\pvx$ fresh, $f$ comp.}
\cinferenceRule[drandomE|{$\langle{:}*\rangle${E}}]{}
{
\linferenceRule[formula]
{\proves{\Gamma}{M}{\ddiamond{\prandom{x}}{\phi}} & \proves{\eren{\Gamma}{x}{y},\pvx:\phi}{N}{\psi}}
{\proves{\Gamma}{\eunpack{M}{N}}{\psi}}
}{$y$ fresh, $x \notin \freevars{\psi}$}
\cinferenceRule[brandomI|{$[{:}{*}]${I}}]{}
{
\linferenceRule[formula]
{\proves{\eren{\Gamma}{x}{y}}{M}{\phi}}
{\proves{\Gamma}{(\etlam{\allrat}{M})}{\dbox{\prandom{x}}{\phi}}}
}{y\text{ fresh}}
\cinferenceRule[brandomE|{$[{:}{*}]${E}}]{}
{
      \linferenceRule[formula]
        {\proves{\Gamma}{M}{\dbox{\prandom{x}}{\phi}}}
        {\proves{\Gamma}{(\eapp{M}{f})}{\tsub{\phi}{x}{f}}}
}{$\tsub{\phi}{x}{f}$ admiss.}
\end{calculus}
\end{calculuscollections}
\caption{\CGL proof calculus: first-order games}
\label{fig:cgl-first-order}
\end{figure}
Nondeterministic assignment provides quantification over rational-valued \emph{program} variables.
Rule \irref{brandomI} is universal, with proof term ${(\etlam{\allrat}{M})}$.
While this notation is suggestive, the difference vs.\ the function proof term $(\eplam{\phi}{M})$ is essential: the proof term $M$ is checked (resp.\ evaluated) in a state where the program variable $x$ has changed from its initial value.
For soundness, \irref{brandomI} renames $x$ to fresh program variable $y$ throughout context $\Gamma,$ written $\eren{\Gamma}{x}{y}$.
This means that $M$ can freely refer to all facts of the full context, but they now refer to the state as it was before $x$ received a new value.
Elimination \irref{brandomE} then allows instantiating $x$ to a term $f$.
Existential quantification is introduced by \irref{drandomI} whose proof term $\etcons{f}{M}$ is like a dependent pair plus bound renaming of $x$ to $y$.
The witness $f$ is an arbitrary computable term, as always.
We write $\ietcons{x}{f}{M}$ for short when $y$ is not referenced in $M$.
It is eliminated in \irref{drandomE} by unpacking the pair, with side condition $x \notin \freevars{\psi}$ for soundness.
The assignment rules \irref{asgnI} do not quantify, per se, but always update $x$ to the value of the term $f,$ and in doing so introduce an assumption that $x$ and $f$ (suitably renamed) are now equal.
In \irref{drandomI}  and \irref{asgnI}, program variable $y$ is fresh.

\begin{figure}
\begin{calculuscollections}{\columnwidth}
\begin{calculus}
\cinferenceRule[dloopI|{$\langle*\rangle${I}}]{}
{\linferenceRule[formula]
{\deduce{\proves{\pvx:\conv,\pvy:\met_0 = \met \metgr \metz}{B}{\ddiamond{\alpha}{(\conv\land \met_0 \metgr \met)}}}
        {\quad\ \,\proves{\Gamma}{A}{\conv}}
 & \proves{\pvx:\conv,\pvy:\met = \metz}{C}{\phi}}
{\proves{\Gamma}{\efor{A}{B}{C}}{\ddiamond{\prepeat{\alpha}}{\phi}}}
}{$\met_0$ fresh}
\end{calculus}
\\
\begin{calculus}
\cinferenceRule[bloopI|{$[*]$I}]{}
{\linferenceRule[formula]
  {\deduce{\proves{\pvx:J}{N}{\dbox{\alpha}{J}}}
    {\,\proves{\Gamma}{M}{J}}
  & \proves{\pvx:J}{O}{\phi}}
  {\proves{\Gamma}{(\erep{M}{N}{\pvx:J}{O})}{\dbox{\prepeat{\alpha}}{\phi}}}
}{}
\cinferenceRule[metsplit|split]{}
{\proves{\Gamma}{(\esplit{f}{g})}{f \leq g \lor f > g}}
{}
\end{calculus}%
\hfill%
\begin{calculus}
\cinferenceRule[dloopE|FP]{}
{
\linferenceRule[formula]
{\deduce{\proves{\pvs:\phi}{B}{\psi}}
        {\quad\quad\quad\proves{\Gamma}{A}{\ddiamond{\prepeat{\alpha}}{\phi}}}
 & \proves{\pvg:\ddiamond{\alpha}{\psi}}{C}{\psi}}
{\proves{\Gamma}{\efp{A}{B}{C}}{\psi}}
}{}
\end{calculus}
\end{calculuscollections}
\caption{\CGL proof calculus: loops}
\label{fig:cgl-loops}
\end{figure}
The looping rules in \rref{fig:cgl-loops}, especially \irref{dloopI}, are arguably the most sophisticated in \CGL.
Rule \irref{dloopI} provides a strategy to repeat a game $\alpha$ until the postcondition $\phi$ holds.
This is done by exhibiting a convergence predicate $\conv$ and termination metric $\met$ with terminal value $\metz$ and well-ordering $\metgr$.
Proof term $A$ shows $\conv$ holds initially.
Proof term $B$ guarantees $\met$ decreases with every iteration where $\met_0$ is a fresh metric variable which is equal to $\met$ at the antecedent of $B$ and is never modified.
Proof term $C$ allows any postcondition $\phi$ which follows from convergence $\conv \land \met = \metz$.
Proof term $\efor{A}{B}{C}$ suggests the computational interpretation as a for-loop:
proof $A$ shows the convergence predicate holds in the initial state,  $B$ shows that each step reduces the termination metric while maintaining the predicate, and $C$ shows that the postcondition follows from the convergence predicate upon termination.
The game $\alpha$ repeats until convergence is reached ($\met = \metz$).
By the assumption that metrics are well-founded, convergence is guaranteed in finitely (but arbitrarily) many iterations.

A na\"{i}ve, albeit correct, reading of rule \irref{dloopI} says $\met$ is literally some term $f$.
If lexicographic or otherwise non-scalar metrics should be needed, it suffices to interpret $\conv$ and $\met_0 \metgr \met$ as formulas over several scalar variables.

Rule \irref{dloopE} says $\ddiamond{\prepeat{\alpha}}{\phi}$ is a least pre-fixed-point.
That is, if we wish to show a formula $\psi$ holds now, we show that $\psi$ is any pre-fixed-point, then it must hold as it is no lesser than $\phi$.
Rule \irref{bloopI} is the well-understood induction rule for loops, which applies as well to repeated games.
Premiss $O$ ensures \irref{bloopI} supports any provable postcondition, which is crucial for eliminating \irref{mon} in \rref{lem:progress}.
The elimination form for $\dbox{\prepeat{\alpha}}{\phi}$ is simply \irref{bunroll}.
Like any program logic, reasoning in \CGL consists of first applying program-logic rules to decompose a program until the program has been entirely eliminated, then applying first-order logic principles at the leaves of the proof.
The \emph{constructive} theory of rationals is undecidable because it can express the undecidable~\cite{DBLP:journals/jsyml/Robinson49} \emph{classical} theory of rationals.
Thus facts about rationals require proof in practice.
For the sake of space and since our focus is on program reasoning, we defer an axiomatization of rational arithmetic to future work.
We provide a (non-effective!) rule \irref{QE} which says valid first-order formulas are provable.
\[\cinferenceRule[QE|FO]{}
{\linferenceRule[formula]
{\proves{\Gamma}{M}{\rho}}
{\proves{\Gamma}{\eQE{\phi}{M}}{\phi}}}{\text{exists }\myaa \text{ s.t. }\{\myaa\} \times \allstate \subseteq \fintR{\rho\limply\phi},\ \rho,\phi\text{ F.O.}}
\]
An effective special case of \irref{QE} is \irref{metsplit} (\rref{fig:cgl-loops}), which says all term comparisons are decidable.
Rule \irref{metsplit} can be generalized to decide termination metrics $(\met = \metz \lor \met \metgr \metz)$.
Rule \irref{ghost} says the value of term $f$ can be remembered in fresh ghost variable $x$:
\[\cinferenceRule[ghost|iG]{}
{\linferenceRule[formula]
  {\proves{\Gamma,\pvx:x=f}{M}{\phi}}
  {\proves{\Gamma}{\eghost{x}{f}{\pvx}{M}}{\phi}}
}
{\m{x}\text{ fresh except free in }\m{M, \pvx}\text{ fresh}}
\]
Rule \irref{ghost} can be defined using arithmetic and with quantifiers:
\[\eghost{x}{f}{\pvx}{M} \equiv \eapp{\eapp{(\elam{x}{\allrat}{(\elam{\pvx}{(x=f)}{M})})}{f}}{(\eQE{f=f}{})}\]

\paragraph{What's Novel in the \CGL Calculus?}
\CGL extends first-order reasoning with game reasoning (sequencing~\cite{kamide2010strong}, assignments, iteration, and duality).
The combination of first-order reasoning with game reasoning is synergistic: for example, repetition games are known to be more expressive than repetition systems~\cite{DBLP:journals/tocl/Platzer15}.
We give a new natural-deduction formulation of monotonicity.
Monotonicity is admissible and normalization translates monotonicity proofs into monotonicity-free proofs.
In doing so, normalization shows that right-to-left proofs can be (lazily) rewritten as left-to-right.
Additionally, first-order games are rife with changing state, and soundness requires careful management of the context $\Gamma$.
\iflongversion
In the appendix (\rref{app:example-proofs}), we use our calculus to prove the example formulas.
\else
The extended version~\cite{br2020constructive} uses our calculus to prove the example formulas.
\fi

\section{Theory: Soundness}
\label{sec:soundness}
Full versions of proofs outlined in this paper are given in the 
\iflongversion
appendix (\rref{app:proofs}).
\else
extended version~\cite{br2020constructive}.
\fi
We have introduced a proof calculus for \CGL which can prove winning strategies for $\nim$ and $\cake$.
For any new proof calculus, it is essential to convince ourselves of our soundness, which can be done within several prominent schools of thought.
In proof-theoretic semantics, for example, the proof rules are taken as the ground truth, but are validated by showing the rules obey expected properties such as harmony or, for a sequent calculus, cut-elimination.
While we will investigate proof terms separately (\rref{sec:op-theory}), we are already equipped to show soundness by direct appeal to the realizability semantics (\rref{sec:rel-sem}), which we take as an independent notion of ground truth.
We show soundness of \CGL proof rules against the realizability semantics, i.e., that every provable natural-deduction sequent is valid.
An advantage of this approach is that it explicitly connects the notions of provability and computability!
We build up to the proof of soundness by proving lemmas on structurality, renaming and substitution.

\begin{lemma}[Structurality]
\label{lem:structurality}
The structural rules \irref{weak}, \irref{exchange}, and \irref{contract} are admissible, i.e., the conclusions are provable whenever the premisses are provable.

\begin{calculus}
\cinferenceRule[weak|W]{$x$ fresh}
{\linferenceRule[formula]{\proves{\Gamma}{M}{\phi}}{\proves{\Gamma,\pvx:\psi}{M}{\phi}}}
{}
\end{calculus}
\hspace{0.18in}
\begin{calculus}
\cinferenceRule[exchange|X]{}
{\linferenceRule[formula]{\proves{\Gamma,\pvx:\phi,\pvy:\psi}{M}{\rho}}{\proves{\Gamma,\pvy:\psi,\pvx:\phi}{M}{\rho}}}
{}
\end{calculus}
\hspace{0.18in}
\begin{calculus}
\cinferenceRule[contract|C]{}
{\linferenceRule[formula]{\proves{\Gamma,\pvx:\phi,\pvy:\phi}{M}{\rho}}{\proves{\Gamma,\pvx:\phi}{\esub{M}{\pvy}{\pvx}}{\rho}}}
{}
\end{calculus}
\end{lemma}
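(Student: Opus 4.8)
The plan is to prove all three admissibilities simultaneously by induction on the derivation $\proves{\Gamma}{M}{\phi}$ (equivalently, on the proof term $M$); each could also be handled on its own, but doing them together keeps the bookkeeping uniform. The guiding observation is that the structural operations act only on the \emph{proof-variable} bindings recorded in $\Gamma$: \irref{weak} inserts an unused binding $\pvx:\psi$, \irref{exchange} transposes two adjacent bindings, and \irref{contract} identifies two bindings of the same formula, replacing $\pvy$ by $\pvx$ throughout $M$ via $\esub{M}{\pvy}{\pvx}$. Every rule of the calculus treats such bindings generically: a rule either leaves the proof-variable part of $\Gamma$ alone, extends it in a premiss with a binding for a freshly chosen proof variable, or applies a \emph{program}-variable renaming such as $\eren{\Gamma}{x}{y}$ or $\earen{\Gamma}{\alpha}$ to the whole context. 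Hence in each case the structural operation commutes with the rule, and the same rule instance, with $M$ adjusted by $\esub{\cdot}{\pvy}{\pvx}$ in the contraction case, derives the new conclusion.

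In the base cases there is essentially nothing to do: \irref{hyp} remains an instance of \irref{hyp} after inserting, permuting, or merging bindings (for \irref{contract}, $\esub{\cdot}{\pvy}{\pvx}$ sends the used variable to $\pvx$ when it was $\pvy$ and fixes it otherwise), and the non-effective leaf rule \irref{QE} is unaffected because its side condition constrains only the first-order formulas $\rho$ and $\phi$, never the proof-variable structure of $\Gamma$. For every inductive case one applies the induction hypothesis to each premiss and reassembles the rule; the only precaution is that for rules which bind a fresh proof variable in a premiss --- \irref{btestI}, \irref{dchoiceE}, \irref{drcase}, \irref{bloopI}, \irref{dloopI}, \irref{dloopE}, \irref{drandomE}, \irref{ghost}, and \irref{mon} --- that variable must be chosen distinct from $\pvx$ (and from $\pvy$, for contraction), which is always possible after $\alpha$-renaming proof variables. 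That $\esub{\cdot}{\pvy}{\pvx}$ distributes over every proof-term constructor and avoids capture (the bound proof variables can be taken fresh) is exactly the content of its definition, so no separate substitution lemma is needed here.

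The one place that requires genuine care, and which I expect to be the main obstacle, is the interaction of the structural operations with the rules that rename \emph{program} variables in the context: \irref{asgnI}, \irref{brandomI}, \irref{drandomI}, \irref{drandomE}, and \irref{mon}. For these I would use that a proof-variable weakening, exchange, or contraction on $\Gamma$ commutes with a program-variable renaming of $\Gamma$, since the two operations touch orthogonal data, and that the freshness side conditions (``$y$ fresh'', and for \irref{mon} the simultaneous renaming $\earen{\Gamma}{\alpha}$ of all of $\boundvars{\alpha}$ to fresh $\vec y$), the admissibility condition of \irref{brandomE}, and the condition $x \notin \freevars{\psi}$ of \irref{drandomE} all mention only program variables and formulas --- never the proof-variable part of $\Gamma$ --- so they survive any structural change unchanged; where \irref{weak} enlarges $\Gamma$, the witness program variable $y$ (or vector $\vec y$) is simply re-chosen fresh for the enlarged context, re-establishing the side condition. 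Once this commutation-plus-refreshing step is in hand, the remaining cases follow immediately from the induction hypothesis.
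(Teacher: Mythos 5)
Your proposal is correct and follows essentially the same route as the paper: induction on the proof term $M$, observing that every rule consults $\Gamma$ only through lookups of the form $\Gamma(\pvx)=\phi$, which are invariant under insertion, permutation, and (modulo renaming) merging of bindings, with the context varied in the inductive hypothesis for the rules that rename program variables or bind fresh proof variables. The only cosmetic difference is that the paper additionally notes that for premisses whose context discards $\Gamma$ entirely (e.g.\ the invariant premisses of the loop rules) the inductive hypothesis need not be applied at all, whereas you phrase it as applying the IH uniformly; this does not affect correctness.
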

\begin{proof}[Proof summary]
Each rule is proved admissible by induction on $M$.
Observe that the only premisses regarding $\Gamma$ are of the form $\Gamma(\pvx) = \phi$, which are preserved under weakening.
Premisses are trivially preserved under exchange because contexts are treated as sets, and preserved modulo renaming by contraction as it suffices to have \emph{any} assumption of a given formula, regardless its name.
The context $\Gamma$ is allowed to vary in applications of the inductive hypothesis, e.g., in rules that bind program variables.
Some rules discard $\Gamma$ in checking the subterms inductively, in which case the IH need not be applied at all.
\end{proof}
\begin{lemma}[Uniform renaming]
  Let $\eren{M}{x}{y}$ be the renaming of program variable $x$ to $y$ (and vice-versa) within $M$, even when neither $x$ nor $y$ is fresh.
  If $\proves{\Gamma}{M}{\phi}$ then $\proves{\eren{\Gamma}{x}{y}}{\eren{M}{x}{y}}{\eren{\phi}{x}{y}}$.
\label{lem:renaming}
\end{lemma}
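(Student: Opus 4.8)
The plan is to prove Lemma~\ref{lem:renaming} by induction on the structure of the proof term $M$ (equivalently, on the derivation of $\proves{\Gamma}{M}{\phi}$). Before the induction I would record the routine commutation facts about the swap $\eren{\cdot}{x}{y}$ that make every case go through: (i) it is an involution, $\eren{\eren{M}{x}{y}}{x}{y}=M$, and likewise on contexts and formulas; (ii) it distributes over every syntactic operation appearing in the rules and their side conditions — term substitution $\tsub{\cdot}{x}{f}$, proof-term substitution $\esub{\cdot}{\pvx}{N}$, and a second renaming whose index set is disjoint from $\{x,y\}$ (two renamings commute once their indices are made disjoint by $\alpha$-renaming bound program variables); and (iii) it preserves freshness: if $z$ is fresh for $M,\Gamma,\phi$ then $\eren{z}{x}{y}$ is fresh for their renamings, and in particular $z$ can always be taken outside $\{x,y\}$ to begin with. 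These are the same commutation lemmas one needs for any substitution/renaming infrastructure and are each a short induction on syntax.

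With these in hand the bulk of the induction is bookkeeping. For the propositional rules of \rref{fig:cgl-rules-prop}, for \irref{seqI}, \irref{dualI}, \irref{bunroll}, \irref{broll}, \irref{drcase}, \irref{dstop}, \irref{dgo}, and all the elimination forms, no program variable is bound by the rule: renaming pushes uniformly into $\Gamma$ and the subterms, the inductive hypothesis rewrites each premise, and the same rule instance reassembles the renamed conclusion (every copy of $\Gamma$ in the premises is renamed consistently, so well-formedness is preserved). The loop rules \irref{dloopI}, \irref{bloopI}, \irref{dloopE} are handled identically after choosing (or $\alpha$-renaming) the fresh metric variable $\met_0$ to avoid $\{x,y\}$ so that fact (iii) applies to their side conditions.

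The delicate cases are the rules that already carry a bound-variable renaming or a freshness condition: \irref{asgnI}, \irref{drandomI}, \irref{drandomE}, \irref{brandomI}, \irref{brandomE}, \irref{ghost}, and \irref{mon} with its $\earen{\Gamma}{\alpha}$ renaming of all of $\boundvars{\alpha}$. For these I would first use $\alpha$-renaming of $M$ to assume the rule's fresh variable (respectively the vector $\vec{y}$ in \irref{mon}) is disjoint from $\{x,y\}$; then by fact (ii) the lemma's swap $\eren{\cdot}{x}{y}$ commutes with the rule-internal renaming; then the inductive hypothesis applies to the renamed premise; and finally the same rule re-fires, its side condition still holding by fact (iii). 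Rule \irref{mon} needs the extra observation that $\earen{\Gamma}{\alpha}$ and $\eren{\cdot}{x}{y}$ commute once the $\boundvars{\alpha}$-targets are fresh, and that any occurrence of $x,y$ inside $\alpha$ is tracked consistently into the conclusion's modality. Rule \irref{QE} is the one non-syntactic case: from a realizer $\myaa$ with $\{\myaa\}\times\allstate\subseteq\fintR{\rho\limply\phi}$ I must exhibit one for $\fintR{\eren{\rho}{x}{y}\limply\eren{\phi}{x}{y}}$; this follows from a one-line semantic observation, namely that the state-swap $\om\mapsto\eren{\om}{x}{y}$ is a bijection on $\allstate$ under which the realizability relation is preserved, $(\myaa,\om)\in\fintR{\psi}\iff(\eren{\myaa}{x}{y},\eren{\om}{x}{y})\in\fintR{\eren{\psi}{x}{y}}$, so precomposing $\myaa$ with the swap realizes the renamed implication. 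The main obstacle is therefore not any single deep argument but the disciplined handling of freshness in these first-order and monotonicity cases — specifically, fixing once and for all the convention that bound program variables are $\alpha$-renamed away from $\{x,y\}$ so that rule-internal renamings and the lemma's renaming always commute; with that convention every case closes mechanically.
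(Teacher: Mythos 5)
Your proposal is correct and follows essentially the same route as the paper: a structural induction on $M$ in which renaming pushes homomorphically through proof terms, with the only interesting cases being the rules that bind or modify program variables, where the bound variable is swapped and the rule's auxiliary fresh variable is $\alpha$-varied to stay clear of $\{x,y\}$. Your semantic observation for the \irref{QE} case is exactly the paper's separate renaming lemma showing that $\eren{\cdot}{x}{y}$ commutes with the interpretation functions, so nothing is missing.
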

\begin{proof}[Proof summary]
Straightforward induction on the structure of $M$.
Renaming within proof terms (whose definition we omit as it is quite tedious) follows the usual homomorphisms, from which the inductive cases follow.
In the case that $M$ is a proof variable $z$, then $\left(\eren{\Gamma}{x}{y}\right)(z) =  \eren{\Gamma(z)}{x}{y}$ from which the case follows.
The interesting cases are those which modify program variables, e.g., $\edasgn{w}{z}{\pvx}{M}$.
The bound variable $z$ is renamed to $\eren{z}{x}{y},$ while the auxiliary variable $w$ is $\alpha$-varied if necessary to maintain freshness.
Renaming then happens recursively in $M$.
\end{proof}
Substitution will use proofs of coincidence and bound effect lemmas.
\begin{lemma}[Coincidence]
Only the free variables of an expression influence its semantics.
\label{lem:coincidental}
\end{lemma}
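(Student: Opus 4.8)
The plan is to prove Coincidence by a single simultaneous induction over the intertwined structure of terms, formulas, and games, since a formula mentions games inside its modalities and a game mentions formulas inside its tests. Writing $\om =_V \tom$ for agreement of two states on a variable set $V$, I would establish three mutually dependent claims. \emph{(T)} If $V \supseteq \freevars{f}$ and $\om =_V \tom$, then $\tint{f}{\om} = \tint{f}{\tom}$. \emph{(F)} If $V \supseteq \freevars{\phi}$ and $\om =_V \tom$, then $\phi$ is realized at $\om$ exactly when it is realized at $\tom$, via an explicit \emph{relocation} of the witnessing realizer --- note that a fixed realizer is not literally portable, since realizers may computably inspect variables outside $V$, so the honest statement is that the \emph{truth set} of $\phi$ depends only on $\freevars{\phi}$. \emph{(G)} If $V \supseteq \freevars{\alpha}$ and $\om =_V \tom$, then after relocating the incoming realizer so that every strategic decision it makes reads only variables in $V$, the forward images $\strategyforR[\alpha]{\{(\myaa,\om)\}}$ and $\strategyforR[\alpha]{\{(\myaa,\tom)\}}$ (and the Demonic images $\dstrategyforR[\alpha]{\cdot}$) contain matching possibilities agreeing on $V \cup \boundvars{\alpha}$, with $\stt$ occurring on one side iff on the other, and likewise $\sff$. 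Claim (T) is an immediate induction on the term. For (F), the atomic case $f \sim g$ is (T), while the modal cases unfold the game semantics, apply (G) to $\alpha$, and then apply (F) to $\phi$ at the output states, which agree on $V \cup \boundvars{\alpha} \supseteq (\freevars{\phi}\setminus\mustboundvars{\alpha}) \cup \boundvars{\alpha} \supseteq \freevars{\phi}$ because $V \supseteq \freevars{\ddiamond{\alpha}{\phi}} = \freevars{\alpha} \cup (\freevars{\phi}\setminus\mustboundvars{\alpha})$.

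For (G) I would argue by cases on $\alpha$. Deterministic assignment $\humod{x}{f}$ uses (T) on $f$: either $x \in V$ and both sides update $x$ identically, or $x \notin V$ and the update is invisible on $V$. The test $\ptest{\phi}$ uses (F) to transport a witness for $\phi$, which also makes the early-win pseudo-states coincide. Nondeterministic assignment $\prandom{x}$ is the first delicate case: the continuation $\rzSnd{\myaa}$ is reused verbatim, but the chosen value $\rzFst{\myaa}(\om)$ may differ from $\rzFst{\myaa}(\tom)$ --- the relocation built into (G) pins $\rzFst{\myaa}$ to $V$ (reading $\om$'s off-$V$ values), which forces the two values to agree while keeping the output in $\fintR{\phi}$. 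Sequential composition chains the IH twice; $\pchoice{\alpha}{\beta}$ uses the IH on each branch plus the fact that the Angelic/Demonic projections $\apL{\cdot},\apR{\cdot}$ (resp.\ $\dpL{\cdot},\dpR{\cdot}$) commute with relocation once the branch selector is pinned to $V$; $\pdual{\alpha}$ just swaps the forward operators $\strategyforR[\alpha]{\cdot}$ and $\dstrategyforR[\alpha]{\cdot}$ and invokes the IH in the other mode.

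The repetition game $\prepeat{\alpha}$ is the main obstacle, because its semantics is a least pre-fixed point $\bigcap\{\,Z \mid X \cup \strategyforR[\alpha]{\apR{Z}} \subseteq Z\,\}$ (resp.\ with Demonic projections), so there is no smaller syntactic subgame to recurse into. The plan is to pass to the Knaster--Tarski approximants of this fixed point --- $Z_0 = \emptyset$, $Z_{\lambda+1} = X \cup \strategyforR[\alpha]{\apR{Z_\lambda}}$, and unions at limit stages --- and prove by transfinite induction on the stage that the approximants generated from $(\myaa,\om)$ and from the relocated $(\myaa,\tom)$ agree on $V \cup \boundvars{\alpha}$, the successor step being one application of the IH for $\alpha$; here $\freevars{\prepeat{\alpha}} = \freevars{\alpha}$ is what keeps $V$ stable across all iterations. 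A pervasive secondary nuisance is the bound-variable bookkeeping used above: I would either cite a companion Bound Effect lemma or fold a short sublemma --- $\strategyforR[\alpha]{X}$ and $\dstrategyforR[\alpha]{X}$ modify only variables in $\boundvars{\alpha}$ --- into the same induction, so that the modal cases of (F) genuinely recover agreement on all of $\freevars{\phi}$. Once the fixed-point transfinite induction and the relocation discipline for state-reading realizers are in place, the remaining cases are routine.
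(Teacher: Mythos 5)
Your overall architecture matches the paper's: a simultaneous structural induction over terms, formulas, and games (the paper also folds realizers into the same induction), a separate Bound Effect lemma, and a fixed-point argument for $\prepeat{\alpha}$. Two of your choices differ only cosmetically from the paper's. Where you ``relocate'' the realizer so that its decisions read only $V$, the paper instead states the lemma for a \emph{pair} of realizers $\myaa,\taa$ agreeing on $V$ and adds the clause $\myaa(\om)=\myaa(\tom)$ to the induction; both devices answer the same (correctly identified) problem that a single realizer may computably inspect variables outside $V$. For repetition, the paper argues directly that the defining condition of the intersection $\bigcap\{\apL{Z} \mid X\cup\strategyforR[\alpha]{\apR{Z}}\subseteq Z\}$ is preserved by the coincidence hypothesis, whereas you iterate approximants transfinitely; the paper's separate inflationary-semantics lemma ($\strategyforR[\prepeat{\alpha}]{X}=\bigcup_{k}\apL{\istrat[\alpha]{X}{k}}$) shows your route is also available, and in fact with only $\omega$ many stages once the strategy is fixed.

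The one genuine defect is your induction invariant for games: you claim the matched output possibilities agree on $V\cup\boundvars{\alpha}$, but the provable (and needed) statement uses the \emph{must-bound} variables, $V\cup\mustboundvars{\alpha}$. With $\boundvars{\alpha}$ the choice case breaks: one has $\boundvars{\alpha\cup\beta}=\boundvars{\alpha}\cup\boundvars{\beta}$ but $\mustboundvars{\alpha\cup\beta}=\mustboundvars{\alpha}\cap\mustboundvars{\beta}$, and if the strategy takes the left branch nothing forces the outputs to agree on $\boundvars{\beta}\setminus(V\cup\boundvars{\alpha})$ --- e.g.\ for $\humod{x}{1}\cup\humod{y}{1}$ with $y\notin V$ and the left branch chosen, the two output states still disagree on $y$. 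Your use of the invariant in the modal case of (F) survives the repair, since $(\freevars{\phi}\setminus\mustboundvars{\alpha})\cup\mustboundvars{\alpha}\supseteq\freevars{\phi}$ already; so the fix is to weaken the invariant from $\boundvars{\alpha}$ to $\mustboundvars{\alpha}$ and recheck the choice and sequencing cases against the identities $\mustboundvars{\alpha\cup\beta}=\mustboundvars{\alpha}\cap\mustboundvars{\beta}$ and $\freevars{\alpha;\beta}=\freevars{\alpha}\cup(\freevars{\beta}\setminus\mustboundvars{\alpha})$, exactly as the paper does.
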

\begin{lemma}[Bound effect]
Only the bound variables of a game are modified by execution.
\label{lem:effects-bound}
\end{lemma}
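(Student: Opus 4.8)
The plan is to establish a strengthened, region-level claim by simultaneous structural induction on the game $\alpha$, treating the Angelic forward semantics $\strategyforR[\alpha]{\cdot}$ and the Demonic forward semantics $\dstrategyforR[\alpha]{\cdot}$ together, since they are defined by mutual recursion through the duality clause. Precisely, I would prove: assuming $\sff,\stt\notin X$, whenever $(\ab,\tom)\in\strategyforR[\alpha]{X}$ with $\tom\in\allstate$, there is $(\myaa,\om)\in X$ with $\om\in\allstate$ and $\om(z)=\tom(z)$ for all $z\notin\boundvars{\alpha}$, and symmetrically for $\dstrategyforR[\alpha]{X}$; the pseudo-states $\sff,\stt$ propagate by the displayed side equations and carry no state, so they need not be tracked. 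The stated lemma is the special case $X=\{(\myaa,\om)\}$. Three observations drive the base and easy cases: (i) both $\humod{x}{f}$ and $\prandom{x}$ change only the value of $x$, matching $\boundvars{\humod{x}{f}}=\boundvars{\prandom{x}}=\{x\}$; (ii) $\ptest{\phi}$ returns the state unchanged (or a pseudo-state) and has $\boundvars{\ptest{\phi}}=\emptyset$; and (iii) each of the four projections $\apL{\cdot},\apR{\cdot},\dpL{\cdot},\dpR{\cdot}$ rewrites only the realizer component and leaves every state intact.

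For the composite operators I would simply unfold the semantic clauses. Sequential composition $\strategyforR[\alpha;\beta]{X}=\strategyforR[\beta]{(\strategyforR[\alpha]{X})}$ chains the two inductive hypotheses and uses $\boundvars{\alpha;\beta}=\boundvars{\alpha}\cup\boundvars{\beta}$ with transitivity of agreement off a set. For the choice $\strategyforR[\alpha\cup\beta]{X}=\strategyforR[\alpha]{\apL{X}}\cup\strategyforR[\beta]{\apR{X}}$, observation (iii) shows the projections preserve every state, so the induction hypothesis for whichever branch produced $(\ab,\tom)$ yields agreement off $\boundvars{\alpha}$ (resp.\ $\boundvars{\beta}$), hence off $\boundvars{\alpha\cup\beta}$. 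Duality is immediate: $\strategyforR[\pdual{\alpha}]{X}=\dstrategyforR[\alpha]{X}$, $\boundvars{\pdual{\alpha}}=\boundvars{\alpha}$, and the Demonic half of the simultaneous induction is at hand. The Demonic clauses are dispatched the same way, with $\dpL{\cdot},\dpR{\cdot}$ replacing the Angelic projections.

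The one case calling for an idea is repetition, since $\strategyforR[\prepeat{\alpha}]{X}=\bigcap\{\apL{Z}\mid X\cup\strategyforR[\alpha]{\apR{Z}}\subseteq Z\}$ is a fixed point with no explicit round counter, so induction on the number of iterations is unavailable. Instead I would exhibit a single witness set $Z^{\star}$ that already carries the bounded-effect property --- roughly, all pseudo-state possibilities together with all $(\ac,\tnu)$ such that $\tnu\in\allstate$ and $\tnu$ agrees off $\boundvars{\alpha}$ with the state of some genuine possibility in $X$ --- and check it satisfies the closure condition: $X\subseteq Z^{\star}$ because each possibility witnesses itself; $\apR{Z^{\star}}$ preserves states by (iii); and one use of the induction hypothesis for $\alpha$ plus transitivity gives $\strategyforR[\alpha]{\apR{Z^{\star}}}\subseteq Z^{\star}$, using $\boundvars{\prepeat{\alpha}}=\boundvars{\alpha}$. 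Minimality of the intersection then forces $\strategyforR[\prepeat{\alpha}]{X}\subseteq\apL{Z^{\star}}$, and $\apL{Z^{\star}}$ inherits the bounded-effect property from $Z^{\star}$ by (iii) once more; the Demonic repetition clause is identical with $\dpL{\cdot},\dpR{\cdot}$. This invariant-set argument for loops is the crux of the proof; it mirrors the bound-effect lemmas of the \dL and \dGL developments, and everything else is routine bookkeeping.
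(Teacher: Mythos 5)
Your proposal is correct and shares the paper's overall skeleton: a structural induction on $\alpha$ that treats $\strategyforR[\alpha]{\cdot}$ and $\dstrategyforR[\alpha]{\cdot}$ simultaneously, with the atomic cases and the observation that the four projections preserve states doing the work everywhere except at loops. Two things differ. First, you strengthen the induction hypothesis to arbitrary regions $X$, whereas the paper proves the statement for singletons $X=\{(\myaa,\om)\}$ and recovers the composite cases (notably $\alpha;\beta$) by splitting regions back into singletons via the Partition lemma (\rref{lem:app-partition}); these are interchangeable pieces of bookkeeping, and yours saves one auxiliary appeal. Second, and more substantively, for $\prepeat{\alpha}$ the paper unfolds the least fixed point once and invokes the inductive hypothesis, an argument that implicitly rides on induction over membership in the inductively defined intersection; you instead exhibit an explicit closed set $Z^{\star}$ of possibilities whose states agree off $\boundvars{\alpha}$ with some state of $X$, verify the closure condition $X\cup\strategyforR[\alpha]{\apR{Z^{\star}}}\subseteq Z^{\star}$, and conclude from minimality that $\strategyforR[\prepeat{\alpha}]{X}\subseteq\apL{Z^{\star}}$. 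That invariant-set argument is, if anything, more explicit than the paper's one-line treatment of the same case, and it is indeed the standard device for such fixed-point clauses in the \dL and \dGL bound-effect lemmas you cite. The one detail to keep straight is that $Z^{\star}$ must also absorb the pseudo-states $\stt$ and $\sff$ that $\ptest{\phi}$ can inject during an iteration---you flag this, and with it in place the closure check and the rest of the proof go through.
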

\begin{proof}[Summary]
By induction on the expression, in analogy to \cite{DBLP:journals/jar/Platzer17}.
\end{proof}
\begin{definition}[Term substitution admissibility]
For simplicity, we say $\tsub{\phi}{x}{f}$ (likewise for context $\Gamma,$ term $f,$ game $\alpha,$ and proof term $M$) is admissible if $\phi$ binds neither $x$ nor free variables of $f$.
\label{def:lem-admit}
\end{definition}
The latter condition can be relaxed in practice~\cite{DBLP:conf/cade/Platzer18} to requiring $\phi$ does not mention $x$ under bindings of free variables.
\begin{lemma}[Arithmetic-term substitution]
 If $\proves{\Gamma}{M}{\phi}$ and the substitutions $\tsub{\Gamma}{x}f,$ $\tsub{M}{x}{f},$ and $\tsub{\phi}{x}{f}$ are admissible, then
 $\proves{\tsub{\Gamma}{x}{f}}{\tsub{M}{x}{f}}{\tsub{\phi}{x}{f}}$.
\label{lem:atsub}
\end{lemma}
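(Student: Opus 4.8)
The plan is to prove \rref{lem:atsub} by structural induction on the proof term $M$ (equivalently, on the derivation of $\proves{\Gamma}{M}{\phi}$), pushing the substitution $\tsub{\cdot}{x}{f}$ through each proof rule. The substitution on proof terms, like the renaming of \rref{lem:renaming}, is defined by the obvious homomorphism on term constructors, so for every rule whose conclusion is $\proves{\Gamma}{M}{\phi}$ the substituted statement $\proves{\tsub{\Gamma}{x}{f}}{\tsub{M}{x}{f}}{\tsub{\phi}{x}{f}}$ matches an instance of the same rule whose premisses are exactly the substituted premisses; the inductive hypothesis then discharges them. The admissibility hypothesis (\rref{def:lem-admit}) guarantees that $\Gamma$, $M$, and $\phi$ bind neither $x$ nor any free variable of $f$, so substitution genuinely commutes with the subterm structure in the purely propositional and modal-structural rules (\irref{hyp}, the choice/test/sequencing/duality and unroll rules of \rref{fig:cgl-rules-prop} and \rref{fig:cgl-compos}, and \irref{mon}), where the game state is untouched.

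The real work lies in the rules that bind program variables: \irref{asgnI}, the nondeterministic assignment rules \irref{drandomI+drandomE+brandomI+brandomE}, and the loop rules \irref{dloopI+bloopI+dloopE}. Each of these already performs an $\alpha$-renaming of a bound variable to a fresh name; I would first use \rref{lem:renaming} to choose that fresh name to \emph{also} avoid $x$ and every free variable of $f$, after which substitution commutes with the renaming-then-recurse structure and the equational side-assumption introduced by the rule (e.g.\ $x = \eren{f'}{x}{y}$ in \irref{asgnI}, where $f'$ is the rule's term, which under $\tsub{\cdot}{x}{f}$ becomes the correct equation about the substituted term) is precisely what the substituted rule instance demands. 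Here \rref{lem:coincidental} and \rref{lem:effects-bound} are invoked to justify that the substituted context handed to the recursive call agrees with what that call needs — only free variables matter, and only bound variables change across a binding step. For \irref{brandomE}, which instantiates the quantified variable with a term, I additionally need the usual substitution-composition identity, which holds under the admissibility side conditions.

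The one genuinely non-syntactic case is \irref{QE}: from the IH I obtain $\proves{\tsub{\Gamma}{x}{f}}{\tsub{M}{x}{f}}{\tsub{\rho}{x}{f}}$, and I must re-apply \irref{QE} with the substituted formulas, which requires a realizer uniformly realizing $\tsub{\rho}{x}{f}\limply\tsub{\phi}{x}{f}$. This reduces to the semantic fact that substituting a computable term for a program variable into a valid first-order formula yields a valid first-order formula — the counterpart of this lemma at the level of the realizability semantics of \rref{sec:rel-sem}, provable directly from \rref{lem:coincidental} and \rref{def:term-sem}. I expect this \irref{QE} case, together with the bookkeeping needed to keep substitution and the bound-variable renamings of the first-order and loop rules from interfering, to be the main obstacle; the remaining cases are a routine homomorphism chase.
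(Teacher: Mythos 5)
Your proposal matches the paper's own proof: both proceed by induction on $M$, discharging the non-binding cases homomorphically via the inductive hypotheses (with admissibility preserved recursively) and handling the program-variable-binding cases by appeal to the coincidence and bound-effect lemmas, with renaming used to keep the fresh bound variables out of the way of $x$ and $\freevars{f}$. Your extra remarks on the \irref{QE} case and on composing substitutions in \irref{brandomE} are consistent elaborations of the same argument (the semantic fact you need for \irref{QE} is exactly the expression-substitution lemma the paper proves alongside coincidence), not a different route.
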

\begin{proof}[Summary]
By induction on $M$.
Admissibility holds recursively, and so can be assumed at each step of the induction.
For non-atomic $M$ that bind no variables, the proof follows from the inductive hypotheses.
For $M$ that bind variables, we appeal to \rref{lem:coincidental} and \rref{lem:effects-bound}.
\end{proof}
Just as arithmetic terms are substituted for program variables, proof terms are substituted for proof variables.
\begin{lemma}[Proof term substitution]
  Let $\esub{M}{\pvx}{N}$ substitute $N$ for $\pvx$ in $M$, avoiding capture.
  If $\proves{\Gamma,\pvx:\psi}{M}{\phi}$ and $\proves{\Gamma}{N}{\psi}$ then $\proves{\Gamma}{\esub{M}{\pvx}{N}}{\phi}$.
\label{lem:pt-substitute}
\end{lemma}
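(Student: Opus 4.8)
\emph{Proof proposal.}
The plan is to induct on the structure of the proof term $M$ (equivalently, on the derivation of $\proves{\Gamma,\pvx:\psi}{M}{\phi}$), generalizing over $\Gamma$, $N$, $\psi$, $\phi$ so that the induction hypothesis can be reapplied in a context that has been extended and/or had a program variable renamed. Here $\esub{\cdot}{\pvx}{N}$ denotes the usual capture-avoiding substitution: homomorphic on connective structure, $\alpha$-renaming any bound \emph{proof} variable of $M$ that occurs free in $N$, and on state-changing proof terms propagating the program renamings those terms carry into $N$ — exactly the renamings \rref{lem:renaming} produces. For the base cases: if $M=\pvx$, the derivation ends in \irref{hyp}, so $\phi=\psi$ and $\esub{\pvx}{\pvx}{N}=N$, and $\proves{\Gamma}{N}{\psi}$ is the second assumption; if $M=\pvy$ with $\pvy\neq\pvx$, then \irref{hyp} gives $\pvy:\phi\in\Gamma$ and $\esub{\pvy}{\pvx}{N}=\pvy$, so $\proves{\Gamma}{\pvy}{\phi}$ again by \irref{hyp}.

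For the propositional and compositional constructors — pairing, the injections, application, the projections, loop folding/unfolding (\irref{broll}, \irref{bunroll}), sequencing (\irref{seqI}), duality (\irref{dualI}), \irref{dstop}, \irref{dgo}, \irref{brandomE}, and \irref{QE} — the substitution distributes over the constructor, each subderivation is in an unchanged context and is discharged by the induction hypothesis, and the conclusion is rebuilt with the same rule. The loop rules \irref{bloopI}, \irref{dloopI}, \irref{dloopE} are similar: their body and postcondition premisses discard $\Gamma$, so $\pvx$ does not occur there and the substitution acts as the identity on those subterms; only the premiss that retains $\Gamma$ invokes the induction hypothesis. For constructors that retain $\Gamma$ but extend it with fresh \emph{proof} variables (\irref{btestI}, \irref{dchoiceE}, \irref{drcase}, \irref{ghost}), first $\alpha$-rename those bound proof variables to be fresh for $\pvx$, $N$, and $\Gamma$; the relevant subderivation then has the form $\proves{\Gamma,\pvx:\psi,\vec{\pvz}:\vec{\rho}}{M'}{\phi'}$, by \rref{lem:structurality} (weakening, and exchange to reorder) the second assumption lifts to $\proves{\Gamma,\vec{\pvz}:\vec{\rho}}{N}{\psi}$, the induction hypothesis gives $\proves{\Gamma,\vec{\pvz}:\vec{\rho}}{\esub{M'}{\pvx}{N}}{\phi'}$, and we reassemble with the same rule. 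Contraction from \rref{lem:structurality} covers the degenerate subcase where a bound proof variable of $M$ coincided with $\pvx$ before $\alpha$-renaming.

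The delicate cases are the rules whose premiss is checked after renaming a \emph{program} variable: \irref{asgnI}, \irref{brandomI}, \irref{drandomI}, \irref{drandomE}, and \irref{mon} (whose $\earen{\Gamma}{\alpha}$ renames all of $\boundvars{\alpha}$). There the premiss is a derivation in $\eren{\Gamma}{x}{y}$ for a fresh $y$ (possibly with an extra fresh proof hypothesis), in which our hypothesis appears as $\pvx:\eren{\psi}{x}{y}$; to invoke the induction hypothesis we need a proof of $\eren{\psi}{x}{y}$ from $\eren{\Gamma}{x}{y}$, which is precisely \rref{lem:renaming} applied to the second assumption: $\proves{\eren{\Gamma}{x}{y}}{\eren{N}{x}{y}}{\eren{\psi}{x}{y}}$. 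Since $y$ is fresh, $\eren{N}{x}{y}$ is unproblematic; after any needed weakening the induction hypothesis yields $\proves{\eren{\Gamma}{x}{y},\ldots}{\esub{M'}{\pvx}{\eren{N}{x}{y}}}{\ldots}$, which by the definition of $\esub{\cdot}{\pvx}{N}$ above is exactly the corresponding subterm of $\esub{M}{\pvx}{N}$, so the same rule rebuilds the conclusion. The free- and bound-variable side conditions of these rules are discharged as in the soundness proof, using \rref{lem:coincidental} and \rref{lem:effects-bound}.

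I expect the main obstacle to be the bookkeeping in this last family: simultaneously tracking the $\alpha$-renaming of bound proof variables (to avoid capturing free proof variables of $N$), the rule-mandated renaming of bound program variables (pushed through $N$ via \rref{lem:renaming}), and the freshness conditions, while verifying that the recombined term is literally $\esub{M}{\pvx}{N}$. Rule \irref{mon} is the most intricate instance, as it renames every variable of $\boundvars{\alpha}$ at once, but it is structurally identical to the assignment and nondeterministic-assignment cases.
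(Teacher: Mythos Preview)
Your proposal is correct and takes essentially the same approach as the paper: structural induction on $M$ with the context generalized, using weakening for the proof-variable-binding rules and \rref{lem:renaming} on $N$ for the program-variable-renaming rules (\irref{asgnI}, \irref{brandomI}, \irref{drandomI}, \irref{drandomE}, \irref{mon}), with the substitution defined to push those renamings into $N$ exactly as you describe. Two minor remarks: your invocation of contraction for the case where a bound proof variable coincides with $\pvx$ is unnecessary (shadowing already makes the substitution a no-op there, and the paper simply assumes $\pvx\notin\{\pvl,\pvr\}$ ``by renaming if necessary''), and \irref{ghost} is a derived form so needs no separate case; neither affects the argument.
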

\begin{proof}
By induction on $M,$ appealing to renaming, coincidence, and bound effect.
When substituting $N$ for $\pvx$ into a term that binds program variables such as $\edasgn{y}{z}{\pvy}{M},$ we avoid capture by renaming within occurrences of $N$ in the recursive call, i.e., $\esub{\edasgn{y}{z}{\pvy}{M}}{\pvx}{N} = \edasgn{y}{z}{\pvy}{\esub{M}{\pvx}{\eren{N}{y}{z}}},$ preserving soundness by \rref{lem:renaming}.
\end{proof}
Soundness of the proof calculus exploits renaming and substitution.
\begin{theorem}[Soundness of proof calculus]
  If $\proves{\Gamma}{M}{\phi}$ then $(\seq{\Gamma}{\phi})$ is valid.
  As a special case for empty context $\Gemp$, if $\proves{\Gemp}{M}{\phi},$ then $\phi$ is valid.
\label{thm:proof-calculus-sound}
\end{theorem}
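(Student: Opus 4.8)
The plan is to prove \rref{thm:proof-calculus-sound} by induction on the structure of the proof term $M$ --- equivalently, on the derivation of $\proves{\Gamma}{M}{\phi}$ --- with one case per rule of the calculus. By definition, validity of the sequent $\seq{\Gamma}{\phi}$ means there is a single uniform realizer $\myaa$ with $\{\myaa\} \times \allstate \subseteq \fintR{\bigwedge\Gamma \limply \phi}$, so in each case I assume such uniform realizers exist for the premisses (the inductive hypotheses) and construct one for the conclusion. The realizer built in each case is exactly the computational skeleton recorded by the proof term: pairs for the conjunctive rules \irref{dtestI}, \irref{bchoiceI}; injections and case analysis for the disjunctive rules \irref{dchoiceIL}, \irref{dchoiceIR}, \irref{dchoiceE}, \irref{dstop}, \irref{dgo}, \irref{drcase}; state/first-order/higher-order lambdas and applications for \irref{btestI}, \irref{btestE}, \irref{brandomI}, \irref{brandomE}, \irref{asgnI}, \irref{drandomI}, \irref{drandomE}. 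For each, the verification amounts to unfolding the game semantics $\strategyforR[\alpha]{\cdot}$ or $\dstrategyforR[\alpha]{\cdot}$ by one level and checking the set inclusion demanded by the formula semantics of $\ddiamond{\alpha}{\phi}$ or $\dbox{\alpha}{\phi}$.

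The already-established infrastructure lemmas discharge most of the bookkeeping. Structurality (\rref{lem:structurality}) lets me adjust $\Gamma$ freely; uniform renaming (\rref{lem:renaming}) handles the fresh-variable side conditions of \irref{asgnI}, \irref{brandomI}, \irref{drandomI}, and \irref{drandomE}; arithmetic-term substitution (\rref{lem:atsub}) handles \irref{brandomE} and the equational hypothesis $x = \eren{f}{x}{y}$ introduced by \irref{asgnI}; proof-term substitution (\rref{lem:pt-substitute}) is what makes the elimination rules sound, since e.g. $\eapp{M}{N}$ in \irref{btestE} substitutes the realizer of the test condition into the continuation; and coincidence and bound effect (\rref{lem:coincidental}, \rref{lem:effects-bound}) justify that renamed and unmodified variables behave as expected --- this is precisely what is needed to handle the renamed context $\earen{\Gamma}{\alpha}$ in the second premiss of \irref{mon} and the side condition $x \notin \freevars{\psi}$ of \irref{drandomE}. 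The structural game rules are then immediate from the corresponding semantic clauses: \irref{dualI} from $\strategyforR[\pdual{\alpha}]{X} = \dstrategyforR[\alpha]{X}$ and $\dstrategyforR[\pdual{\alpha}]{X} = \strategyforR[\alpha]{X}$, \irref{seqI} from $\strategyforR[\alpha;\beta]{X} = \strategyforR[\beta]{\strategyforR[\alpha]{X}}$ and its Demonic twin, and \irref{bunroll}/\irref{broll} from the fixed-point characterization of $\prepeat{\alpha}$. Monotonicity \irref{mon} is sound because, after the renaming of $\boundvars{\alpha}$, the second premiss gives a uniform transformation of $\phi$-realizers into $\psi$-realizers that can be post-composed onto the play of $\alpha$; rule \irref{QE} is sound by fiat, its side condition asserting the required realizer exists, and \irref{metsplit} and \irref{ghost} then follow as an instance and a derived rule.

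The main obstacle is soundness of the two repetition rules, where the asymmetry between a constructive Angel and a classical Demon is most delicate. For \irref{dloopI} I must assemble the convergence predicate $\conv$, the metric $\met$ with well-order $\metgr$ and least element $\metz$, and the premiss realizers $A, B, C$ into an Angelic loop realizer of the shape $\rzInd{x}{\cdot}$, and then argue by induction along $\metgr$ that from any state satisfying $\conv$ this strategy reaches, in finitely many rounds of $\alpha$, a state satisfying $\conv \land \met = \metz$ and hence $\phi$ via $C$; the subtlety is checking that this inductively built realizer is genuinely a member of the intersection defining $\strategyforR[\prepeat{\alpha}]{X}$, i.e. that finite well-founded unfolding lands inside the least fixed point. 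Dually, \irref{bloopI} requires building a coinductive realizer $\rzCoind{M}{x}{N}{O}$ that maintains the invariant $J$ across every Demon-chosen iteration and yields a $\phi$-realizer from $J$ via $O$; soundness here is a coinduction showing the constructed set of possibilities is a pre-fixed point of the operator defining $\dstrategyforR[\prepeat{\alpha}]{\cdot}$, hence contained in the intersection, while \irref{dloopE} dually exhibits $\ddiamond{\prepeat{\alpha}}{\phi}$ as a least pre-fixed point. Getting the interaction between these (co)inductive realizer constructions and the region-level fixed-point semantics exactly right --- in particular that a single realizer works uniformly from every starting state, not merely pointwise --- is the crux; the remaining cases are routine unfoldings modulo the renaming and substitution lemmas.
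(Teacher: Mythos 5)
Your proposal follows essentially the same route as the paper's proof: structural induction on $M$, constructing in each case a uniform realizer for the conclusion from the realizers supplied by the inductive hypotheses, discharging the binding and elimination cases via the renaming, coincidence/bound-effect, and the two substitution lemmas, and handling the repetition rules by a well-founded recursion on the termination metric (Angelic case) and a (pre-)fixed-point/induction argument over the region defining $\dstrategyforR[\prepeat{\alpha}]{X}$ (Demonic case), exactly as the paper does via its inflationary characterization of loop semantics. The plan is correct and identifies the same crux points the paper's appendix proof works through.
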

\begin{proof}[Proof summary]
By induction on $M$.
Modus ponens case $\eapp{A}{B}$ reduces to \rref{lem:pt-substitute}.
Cases that bind program variables, such as assignment, hold by \rref{lem:atsub} and \rref{lem:renaming}.
Rule \irref{weak} is employed when substituting under a binder.
\end{proof}
We have now shown that the \CGL proof calculus is sound, the \emph{sine qua non} condition of any proof system.
Because soundness was w.r.t.\ a realizability semantics, we have shown \CGL is constructive in the sense that provable formulas correspond to realizable strategies, i.e., imperative programs executed in an adversarial environment.
We will revisit constructivity again in \rref{sec:op-theory} from the perspective of proof terms as \emph{functional} programs.

\section{Operational Semantics}
\label{sec:operational}
The Curry-Howard interpretation of games is not complete without exploring the interpretation of proof simplification as normalization of functional programs.
To this end, we now introduce a structural operational semantics for \CGL proof terms.
This semantics provides a view complementary to the realizability semantics: not only do provable formulas correspond to realizers, but proof terms can be directly executed as functional programs, resulting in a \emph{normal} proof term.
The chief subtlety of our operational semantics is that in contrast to realizer execution, proof simplification is a static operation, and thus does not inspect game state.
Thus the normal form of a proof which branches on the game state is, of necessity, also a proof which branches on the game state.
This static-dynamic phase separation need not be mysterious: it is analogous to the monadic phase separation between a functional program which returns an imperative command vs.\ the execution of the returned command.
While the primary motivation for our operational semantics is to complete the Curry-Howard interpretation, proof normalization is also helpful when implementing software tools which process proof artifacts, since code that consumes a normal proof is in general easier to implement than code that consumes an arbitrary proof.

The operational semantics consist of two main judgments:
$\isnorm{M}$ says that $M$ is a normal form, while $M \stepsto M'$ says that $M$ reduces to term $M'$ in one step of evaluation.
A normal proof is allowed a $\kwcase$ operation at the top-level, either $\edcase{A}{B}{C}$ or $\ercase{A}{B}{C}$.
Normal proofs $M$ without state-casing are called \emph{simple}, written $\issimp{M}$.
The requirement that cases are top-level ensures that proofs which differ only in where the case was applied share a common normal form, and ensures that $\beta$-reduction is never blocked by a case interceding between introduction-elimination pairs.
Top-level case analyses are analogous to case-tree normal forms in lambda calculi with coproducts~\cite{DBLP:conf/lics/AltenkirchDHS01}.
Reduction of proof terms is eager.
\begin{definition}[Normal forms]
We say $M$ is \emph{simple}, written $\issimp{M},$ if eliminators occur only under binders.
We say $M$ is \emph{normal}, written $\isnorm{M},$ if $\issimp{M}$ or $M$ has shape $\edcase{A}{B}{C}$ or $\ercase{A}{B}{C}$
where $A$ is a term such as $(\esplit{f}{g}{M})$ that inspects the state.
Subterms $B$ and $C$ need not be normal since they occur under the binding of $\pvl$ or $\pvr$ (resp.\ $\pvs$ or $\pvg$).
\label{def:normal}
\end{definition}
That is, a normal term has no top-level beta-redexes, and  state-dependent cases are top-level.
We consider rules \irref{broll}, \irref{brandomI}, \irref{btestI}, and \irref{asgnI} binding.
Rules such as \irref{dloopI} have multiple premisses but bind only one.
While \irref{broll} does not introduce a proof variable, it is rather considered binding to prevent divergence, which is in keeping with a coinductive understanding of formula $\dbox{\prepeat{\alpha}}{\phi}$.
If we did not care whether terms diverge, we could have made \irref{broll} non-binding.

For the sake of space, this section focuses on the $\beta$-rules (\rref{fig:opbeta}).
The full calculus, given in%
\iflongversion%
the appendix (\rref{app:pc-full}),
\else%
the extended version~\cite{br2020constructive},
\fi%
includes structural and commuting-conversion rules,
as well as what we call \emph{monotonicity conversion} rules: a proof term $\emon{M}{N}{\pvx}$ is simplified by structural recursion on $M$.
The capture-avoiding substitution of $M$ for $\pvx$ in $N$ is written $\esub{N}{\pvx}{M}$ (\rref{lem:pt-substitute}).
\begin{figure}
\centering
\begin{calculuscollections}{0.5\textwidth}
\begin{calculus}
\cinferenceRule[appBeta|{$\lambda\phi\beta$}]{}{\eapp{(\eplam{\phi}{M})}{N} \stepsto \esub{M}{\pvx}{N}}{}
\cinferenceRule[brandomBeta|{$\lambda\beta$}]{}{\eapp{(\etlam{\allrat}{M})}{f} \stepsto \tsub{M}{x}{f}}{}
\cinferenceRule[projLBeta|{$\pi_1\beta$}]{}{\eProjL{\eCons{M}{N}} \stepsto M}{}
\cinferenceRule[projRBeta|{$\pi_2\beta$}]{}{\eProjR{\eCons{M}{N}} \stepsto N}{}
\end{calculus}
\end{calculuscollections}
\begin{calculuscollections}{0.5\textwidth}
\begin{calculus}
\cinferenceRule[caseBetaL|{case$\beta$L}]{}{\eCase{\eInjL{A}}{B}{C} \stepsto \esub{B}{\ell}{A}}{}
\cinferenceRule[caseBetaR|{{case}$\beta$R}]{}{\eCase{\eInjR{A}}{B}{C} \stepsto \esub{C}{r}{A}}{}
\cinferenceRule[unrollBeta|{{unroll}$\beta$}]{}{\ebunroll{\ebroll{M}} \stepsto M}{}
\end{calculus}
\end{calculuscollections}

\begin{calculuscollections}{0.5\textwidth}
\begin{calculus}
\cinferenceRule[unpackBeta|{{unpack}$\beta$}]{}{\eunpack{\etconsgen{x}{y}{\pvy}{f}{M}}{N} \stepsto \eren{(\eghost{x}{\eren{f}{x}{y}}{\pvy}{\esub{N}{\pvx}{M}})}{y}{x}}{}
\cinferenceRule[fpBeta|{{FP}$\beta$}]{}{\efp{D}{B}{C} \stepsto (\ercase{D}{B}{\esub{C}{\pvg}{(\emon{\pvg}{\efp{z}{B}{C}}{z})}})}{}
\cinferenceRule[repBeta|{{rep}$\beta$}]{}{(\erep{M}{N}{\pvx:J}{O}) \stepsto \ebroll{\edcons{M}{\emon{(\esub{N}{\pvx}{M})}{(\erep{\pvy}{N}{\pvx:J}{O})}{\pvy}}}}{}
\cinferenceRule[forBeta|{{for}$\beta$}]{}{\efor{A}{B}{C} \stepsto}{}
\end{calculus}\\[-0.22in]
\begin{align*}
&\ecaseHead{\esplit{\met}{0}{}}\\
&\ \ \ecaseLeft{\pvl}{\estop{\esub{C}{(\pvx,\pvy)}{(A,\pvl)}}}\\
&\ecaseRight{\pvr}{\eghost{\met_0}{\met}{\textit{\pvrr}}{\ego{(\emon{(\esub{B}{\pvx,\pvy}{A,\edcons{\textit{\pvrr}}{\pvr}})}{(\efor{\eprojL{\pvz}}{B}{C})}{\pvz})}}}\ecaseEnd
  \end{align*}
\end{calculuscollections}
\caption{Operational semantics: $\beta$-rules}
\label{fig:opbeta}
\end{figure}
The propositional cases \irref{appBeta}, \irref{brandomBeta}, \irref{caseBetaL}, \irref{caseBetaR}, \irref{projLBeta}, and \irref{projRBeta} are standard reductions for applications, cases, and projections.
Projection terms $\eprojL{M}$ and $\eprojR{M}$ should not be confused with projection realizers $\rzFst{\myaa}$ and $\rzSnd{\myaa}$.
Rule \irref{unpackBeta} makes the witness of an existential available in its client as a ghost variable.

Rule \irref{fpBeta}, \irref{repBeta}, and \irref{forBeta} reduce introductions and eliminations of loops.
Rule \irref{fpBeta}, which reduces a proof $\efp{A}{B}{C}$ says that if $\prepeat{\alpha}$ has already terminated according to $A$, then $B$ proves the postcondition.
Else the inductive step $C$ applies, but every reference to the IH $\pvg$ is transformed to a recursive application of \irref{dloopE}.
If $A$ uses only \irref{dstop} and \irref{dgo}, then $\efp{A}{B}{C}$ reduces to a simple term, else if $A$ uses \irref{dloopI}, then $\efp{A}{B}{C}$ reduces to a case.
Rule \irref{repBeta} says loop induction $(\erep{M}{N}{\pvx:J}{O})$ reduces to a delayed pair of the ``stop'' and ``go'' cases, where the ``go'' case first shows $\dbox{\alpha}{J},$ for loop invariant $J,$ then expands $J \limply \dbox{\prepeat{\alpha}}{\phi}$ in the postcondition.
Note the laziness of $[\kwroll]$ is essential for normalization: when $(\erep{M}{N}{\pvx:J}{O})$ is understood as a coinductive proof, it is clear that normalization would diverge if \irref{repBeta} were applied indefinitely.
Rule \irref{forBeta} for $\efor{A}{B}{C}$ checks whether the termination metric $\met$ has reached terminal value $\metz$.
If so, the loop $\langle{\kwstop}\rangle$'s and $A$ proves it has converged.
Else, we remember $\met$'s value in a ghost term $\met_0,$ and $\langle{\kwgo}\rangle$ forward, supplying $A$ and $\edcons{\pvr}{\pvrr}$ to satisfy the preconditions of inductive step $B,$ then execute the loop $\efor{\eprojL{t}}{B}{C}$ in the postcondition.
Rule \irref{forBeta} reflects the fact that the exact number of iterations is state dependent.

We discuss the structural, commuting conversion, and monotonicity conversion rules for left injections as an example, with the full calculus in
\iflongversion%
\rref{app:pc-full}.
\else%
~\cite{br2020constructive}.
\fi%
Structural rule \irref{injLS} evaluates term $M$ under an injector.
Commuting conversion rule \irref{injLC} normalizes an injection of a case to a case with injectors on each branch.
Monotonicity conversion rule \irref{injLMon} simplifies a monotonicity proof of an injection to an injection of a monotonicity proof.

\begin{figure}[th!]
\begin{calculuscollections}{\textwidth}
\begin{calculus}
\cinferenceRule[injLS|{$\einjL{}$}S]{}{\linferenceRule[formula]
{M \stepsto M'}
{\eInjL{M} \stepsto \eInjL{M'}}}{}
\cinferenceRule[injLC|{$\eInjL{}$C}]{}
  {\eInjL{\ecasegen{A}{\pvx}{B}{\pvy}{C}}
   \stepsto \ecasegen{A}{\pvx}{\eInjL{B}}{\pvy}{\eInjL{C}}}
{}
\cinferenceRule[injLMon|{$\eInjL{}\circ$}]{}{\emon{\eInjL{M}}{N}{\pvx}   \stepsto \eInjL{(\emon{M}{N}{\pvx})}}{}
\end{calculus}
\end{calculuscollections}
\caption{Operational semantics: structural, commuting conversion, monotonicity rules}
\label{fig:op-rest}
\end{figure}

\section{Theory: Constructivity}
\label{sec:op-theory}
We now complete the study of \CGL's constructivity.
We validate the operational semantics on proof terms by proving that progress and preservation hold, and thus the \CGL proof calculus is sound as a type system for the functional programming language of \CGL proof terms.
\begin{lemma}[Progress]
If $\proves{\Gemp}{M}{\phi},$ then either $M$ is normal or $M \stepsto M'$ for some $M'$.
\label{lem:progress}
\end{lemma}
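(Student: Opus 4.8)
The plan is to prove Progress by induction on the typing derivation $\proves{\Gemp}{M}{\phi}$, equivalently by structural induction on $M$, case-splitting on the outermost proof-term constructor. The induction hypothesis is applied only to the \emph{closed} subterms --- the principal subterm of an elimination, or the non-binder subterms of a constructor --- since a $\lambda$-body or loop body may legitimately be an open, stuck term; this is exactly why the statement is restricted to the empty context $\Gemp$. The engine of the argument is a \emph{canonical forms} sub-lemma: if $\proves{\Gemp}{N}{\psi}$ and $\isnorm{N}$, then either $N$ is one of the top-level state-casing forms admitted by \rref{def:normal} (an $\edcase{\cdot}{\cdot}{\cdot}$, $\ercase{\cdot}{\cdot}{\cdot}$, or the $\esplit{\cdot}{\cdot}{\cdot}$-headed case produced by \irref{forBeta}, whose scrutinee inspects the state), or $N$ is simple and its head is the unique introduction form determined by the shape of $\psi$: a $\lambda$ for $\dbox{\ptest{\cdot}}{\cdot}$, an injection $\edinjL{\cdot}$/$\edinjR{\cdot}$ for $\ddiamond{\cdot\cup\cdot}{\cdot}$, a pair $\edcons{\cdot}{\cdot}$ resp.\ $\ebcons{\cdot}{\cdot}$ for $\ddiamond{\ptest{\cdot}}{\cdot}$ resp.\ $\dbox{\cdot\cup\cdot}{\cdot}$, $\estop{\cdot}$ or $\ego{\cdot}$ for $\ddiamond{\prepeat{\cdot}}{\cdot}$, $\ebroll{\cdot}$ for $\dbox{\prepeat{\cdot}}{\cdot}$, $\eSwap{\cdot}$ for a dual, $\eSeq{\cdot}$ for a composition, $\eAsgneq{\cdot}{\cdot}{\cdot}{\cdot}$ for an assignment, and $\etcons{\cdot}{\cdot}$ resp.\ $\etlam{\allrat}{\cdot}$ for $\ddiamond{\prandom{x}}{\cdot}$ resp.\ $\dbox{\prandom{x}}{\cdot}$. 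This holds because in the empty context there are no free proof variables, so a simple normal term cannot be headed by an eliminator (those occur only under binders, by the definition of simple normal forms in \rref{def:normal}) nor by an always-reducing form, hence it is headed by an introduction, which the typing rules pin down from $\psi$.

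With canonical forms in hand the case analysis runs as follows. For \textbf{introductions and binders that never fire at the top level} ($\edinjL{\cdot}$, $\edinjR{\cdot}$, $\edcons{\cdot}{\cdot}$, $\ebcons{\cdot}{\cdot}$, $\eplam{\cdot}{\cdot}$, $\etlam{\allrat}{\cdot}$, $\eAsgneq{\cdot}{\cdot}{\cdot}{\cdot}$, $\etcons{\cdot}{\cdot}$, $\eSwap{\cdot}$, $\eSeq{\cdot}$, $\estop{\cdot}$, $\ego{\cdot}$, $\ebroll{\cdot}$), apply the IH to each non-binder subterm: if all are simple then $M$ is simple, hence normal; if some is a non-simple normal form (a top-level case), a commuting conversion such as \irref{injLC} pushes the constructor through it, so $M$ steps; if some steps, a structural rule such as \irref{injLS} propagates it. For \textbf{eliminations} ($\eapp{\cdot}{\cdot}$, including \irref{btestE} and \irref{brandomE}; $\eprojL{\cdot}$/$\eprojR{\cdot}$; $\edcase{\cdot}{\cdot}{\cdot}$; $\eunpack{\cdot}{\cdot}$; $\ebunroll{\cdot}$; $\ercase{\cdot}{\cdot}{\cdot}$), apply the IH to the principal subterm $A$: if $A$ steps, propagate structurally; if $A$ is a top-level case, a commuting conversion hoists the elimination into its branches; and if $A$ is simple, canonical forms identifies its head as the matching introduction so the appropriate $\beta$-rule fires --- \irref{appBeta}, \irref{brandomBeta}, \irref{projLBeta}, \irref{projRBeta}, \irref{caseBetaL}, \irref{caseBetaR}, \irref{unpackBeta}, \irref{unrollBeta}, or the analogous loop-case $\beta$-rule --- with capture avoided via \rref{lem:pt-substitute}. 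The \textbf{always-reducing forms} $\efor{\cdot}{\cdot}{\cdot}$, $\erep{\cdot}{\cdot}{\cdot}{\cdot}$, and $\efp{\cdot}{\cdot}{\cdot}$ step unconditionally by \irref{forBeta}, \irref{repBeta}, and \irref{fpBeta}. Finally, \textbf{monotonicity} $\emon{M}{N}{\pvx}$ is dispatched by applying the IH to $M$: if $M$ steps or is a top-level case, a structural or commuting rule applies; if $M$ is simple, its head selects a monotonicity-conversion rule (e.g.\ \irref{injLMon}) that either recurses into the subterms of $M$ or substitutes into $N$. The arithmetic/leaf terms $\eQE{\cdot}{\cdot}$ and $\esplit{\cdot}{\cdot}{\cdot}$ are normal once their subterms are, and \irref{ghost} is definable and hence subsumed above.

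The main obstacle is formulating and then faithfully deploying the canonical forms sub-lemma: one must verify, for each of the many formula shapes, exactly which introduction (or permitted state-casing) form a closed normal term of that type can take, and this hinges on the delicate details of \rref{def:normal} --- which rules count as binding, and the restriction that a top-level case in a normal term scrutinizes a state-inspecting term such as $\esplit{\cdot}{\cdot}{\cdot}$. A secondary but unavoidable burden is the bookkeeping of the commuting conversions and monotonicity conversions: progress fails if a single configuration of an eliminator (or constructor, or $\emon$) applied to a top-level case, or a single head constructor of $M$ under $\emon$, lacks a reduction. None of these steps is conceptually deep, but their number makes the exhaustive case analysis the real work.
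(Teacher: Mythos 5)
Your proposal is correct and takes essentially the same approach as the paper: induction on $M$ (the typing derivation) with a canonical-forms sub-lemma for closed normal terms --- exactly the paper's Normal forms lemma --- driving the $\beta$-cases, and the remaining configurations dispatched by structural, commuting-conversion, and monotonicity-conversion rules, with the empty context eliminating the \irref{hyp} case. The only cosmetic difference is that you fire \irref{forBeta}, \irref{repBeta}, and \irref{fpBeta} unconditionally, whereas the paper first reduces or case-lifts the principal subterm; either way a step exists, so progress holds.
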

\begin{proof}[Summary]
By induction on the proof term $M$.
If $M$ is an introduction rule, by the inductive hypotheses the subterms are well-typed.
If they are all simple, then $\issimp{M}$. If some subterm (not under a binder) steps, then $M$ steps by a structural rule.
Else some subterm is an irreducible case expression not under a binder, it lifts by the commuting conversion rule.
If $M$ is an elimination rule, structural and commuting conversion rules are applied as above.
Else by \rref{def:normal} the subterm is an introduction rule, and $M$ reduces with a $\beta$-rule.
Lastly, if $M$ has form $\emon{A}{B}{x}$ and $\issimp{A},$ then by \rref{def:normal} $A$ is an introduction form, thus reduced by some monotonicity conversion rule.
\end{proof}

\begin{lemma}[Preservation]
Let $\stepsto^*$ be the reflexive, transitive closure of the $\stepsto$ relation.
If \m{\proves{\Gemp}{M}{\phi}} and \m{M {\stepsto^*} M',} then \m{\proves{\Gemp}{M'}{\phi}}
\end{lemma}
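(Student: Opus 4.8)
The plan is to establish the single-step version of subject reduction and then iterate it. First I would observe that, since $\stepsto^*$ is the reflexive--transitive closure of $\stepsto$, a trivial induction on the number of steps reduces the claim to: if $\proves{\Gamma}{M}{\phi}$ and $M \stepsto M'$ then $\proves{\Gamma}{M'}{\phi}$. I would prove this generalized one-step statement for an \emph{arbitrary} context $\Gamma$ (the theorem as stated is the instance $\Gamma = \Gemp$), since subterms of $M$ are in general typed in contexts extending $\Gamma$. The proof is by case analysis on which operational rule fires, with the inductive hypothesis available for the structural/congruence rules: in each case the shape of $M$ forces the last rule of the typing derivation $\proves{\Gamma}{M}{\phi}$, so inverting it supplies typing derivations for the immediate subterms, from which I reassemble a derivation of $\proves{\Gamma}{M'}{\phi}$.

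The easy cases are handled directly by the lemmas already proved. The propositional $\beta$-rules \irref{appBeta}, \irref{caseBetaL}, and \irref{caseBetaR} follow from proof-term substitution (\rref{lem:pt-substitute}); \irref{brandomBeta} follows from arithmetic-term substitution (\rref{lem:atsub}), whose admissibility hypothesis is exactly the side condition already carried by \irref{brandomE}; and \irref{projLBeta}, \irref{projRBeta}, \irref{unrollBeta} are immediate from inversion. Rule \irref{unpackBeta} additionally uses uniform renaming (\rref{lem:renaming}) and the admissibility of the \irref{ghost} step, which again matches the side conditions on \irref{drandomI} and \irref{drandomE}. The structural rules such as \irref{injLS} are discharged by the inductive hypothesis applied to the reducing subterm. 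The commuting-conversion rules such as \irref{injLC} only push an outer injector (or eliminator) into the branches of a case; this retypes by re-applying the case rule, using \rref{lem:structurality} to carry the branch hypotheses, and (where a branch crosses a program-variable binder) \rref{lem:coincidental} and \rref{lem:effects-bound}.

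The delicate cases are the loop $\beta$-rules \irref{fpBeta}, \irref{repBeta}, \irref{forBeta} and the monotonicity-conversion rules (\irref{injLMon} and its siblings). For \irref{forBeta}, inverting \irref{dloopI} yields the convergence witness $A$ of $\conv$, the step proof $B$ (typed against $\ddiamond{\alpha}{(\conv \land \met_0 \metgr \met)}$ with $\met_0$ fresh and $\met_0 = \met$), and the postcondition proof $C$; I then check that the produced top-level case typechecks by proving its guard $\met = \metz \lor \met \metgr \metz$ via \irref{metsplit}, instantiating $C$ in the stopping branch by \rref{lem:pt-substitute}, and in the continuing branch using \irref{ghost} to record $\met_0$, \rref{lem:pt-substitute} to feed $A$ and the split hypothesis into $B$, and \irref{mon} to thread the recursive \irref{dloopI}-application through the postcondition, with \rref{lem:renaming} and \rref{lem:structurality} reconciling the bound-variable renaming that \irref{mon} performs on $\Gamma$. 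Rules \irref{fpBeta} and \irref{repBeta} are analogous: \irref{fpBeta} uses that $\ddiamond{\prepeat{\alpha}}{\phi}$ is a least pre-fixed-point, so replacing each occurrence of the induction hypothesis by a recursive \irref{dloopE}-call stays well-typed, while for \irref{repBeta} the laziness of \irref{broll} matters only for termination of normalization, not for typing. For the monotonicity-conversion rules, inverting \irref{mon} on $\emon{M}{N}{\pvx}$ with $M$ an introduction form gives a typing for $M$ whose principal connective matches the rule, and distributing $\emon{\cdot}{N}{\pvx}$ into the subderivations and re-applying that introduction rule gives the result (again using \rref{lem:structurality} and \rref{lem:renaming} for the renaming built into \irref{mon}). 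I expect these loop and monotonicity cases to be the main obstacle, precisely because of the context bookkeeping under the renamings introduced by \irref{mon} and by the program-variable-binding rules; the remaining cases reduce mechanically to the substitution and structurality lemmas and to \rref{def:normal}, which pins down in each elimination case that the relevant subterm is an introduction form so that a $\beta$-rule indeed applies.
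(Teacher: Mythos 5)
Your proposal is correct and follows essentially the same route as the paper's proof: induction on the reduction sequence reducing to single-step preservation, then a case split over the $\beta$-, structural, commuting-conversion, and monotonicity-conversion rules, discharging the $\beta$-cases by \rref{lem:pt-substitute}, \rref{lem:atsub}, and \rref{lem:renaming}, the structural cases by the inductive hypothesis, and the conversion cases by re-deriving the typing with \irref{weak} and the renaming built into \irref{mon}. The only cosmetic difference is that you generalize the one-step statement to an arbitrary context $\Gamma$, whereas the paper keeps $\Gemp$ throughout (which suffices because reduction never descends under proof-variable binders); both work.
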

\begin{proof}[Summary]
Induct on the derivation \m{M {{\stepsto}^*} M'}, then induct on $M \stepsto M'$.
The $\beta$ cases follow by \rref{lem:pt-substitute} (for base constructs), and \rref{lem:pt-substitute} and \rref{lem:renaming} (for assignments).
C-rules and $\circ$-rules lift across binders, soundly by \irref{weak}.
S-rules are direct by IH.
\end{proof}
We gave two understandings of proofs in \CGL, as imperative strategies and as functional programs.
We now give a final perspective: \CGL proofs support synthesis in principle, one of our main motivations.
Formally, the Existential Property (EP) and Disjunction Property (DP) justify synthesis~\cite{degen2006towards} for existentials and disjunctions: whenever an existential or disjunction has a proof, then we can compute some instance or disjunct that has a proof.
We state and prove an EP and DP for \CGL, then introduce a Strategy Property, their counterpart for synthesizing strategies from game modalities.
It is important to our EP that terms are arbitrary computable functions, because more simplistic term languages are often too weak to witness the existentials they induce.
\begin{example}[Rich terms help]
Formulas over polynomial terms can have non-polynomial witnesses.
\end{example}
Let $\phi \equiv (x = y \land x \geq 0) \lor (x = -y \land x < 0)$. 
Then $f = \abs{x}$ witnesses $\lexists[\allrat]{y}{\phi}$.
\begin{lemma}[Existential Property]
If $\proves{\Gamma}{M}{(\lexists[\allrat]{x}{\phi})}$ then there exists a term $f$ and realizer $\ab$ such that for all $(\myaa,\om) \in \cintR{\Gamma},$
we have
$(\rzApp{\ab}{\myaa},\ssub{\om}{x}{f(\om)}) \in \fintR{\phi}$.
\label{lem:term-ep}
\end{lemma}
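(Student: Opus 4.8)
The plan is to unfold the definition $\lexists[\allrat]{x}{\phi} \equiv \ddiamond{\prandom{x}}{\phi}$ and prove, by induction on the proof term $M$, that from $\proves{\Gamma}{M}{\ddiamond{\prandom{x}}{\phi}}$ one can read off a term $f$ and a realizer $\ab$ with $(\rzApp{\ab}{\myaa},\ssub{\om}{x}{f(\om)}) \in \fintR{\phi}$ for all $(\myaa,\om) \in \cintR{\Gamma}$. Soundness (\rref{thm:proof-calculus-sound}) supplies the semantic backbone: it already gives validity of $\bigwedge\Gamma\limply\ddiamond{\prandom{x}}{\phi}$, hence existence of a realizer; the real content of the lemma is that the \emph{witness value} can be named by a term. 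I would use \rref{lem:pt-substitute}, \rref{lem:renaming}, \rref{lem:coincidental}, and \rref{lem:effects-bound} to push the construction through rules that substitute proof variables or modify program variables.

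The governing case is \irref{drandomI}: there $M$ is $\etcons{g}{N}$ with $N$ a proof of $\phi$ in a context where $x$ has been renamed to a fresh $y$, so the witness term is just $g$ with the renaming $y\leftrightarrow x$ undone (sound by \rref{lem:renaming}), and $\ab$ is assembled from the realizer furnished for $N$ by the inductive hypothesis. The leaf rule \irref{QE} is where ``rich terms help'': $\ddiamond{\prandom{x}}{\phi}$ is first-order, soundness hands us a uniform realizer, and unfolding the forward semantics $\strategyforR[\prandom{x}]{X}$ turns its $\rzFst$-component into a \emph{computable} choice of witness value as a function of the state, which is a legitimate \CGL term precisely because terms are arbitrary computable functions — exactly the point of the Example above, since a polynomial-term fragment could not name such a witness. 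The purely structural and eliminator rules — \irref{btestE}, \irref{drandomE}, \irref{brandomE}, \irref{mon}, the loop eliminators, and so on — pass the existential down to a subproof; I would invoke the inductive hypothesis there and reassemble $f$ and $\ab$, letting \rref{lem:pt-substitute} and \rref{lem:renaming} absorb the substitutions and bound-variable renamings. The side conditions ($x\notin\freevars{\psi}$ in \irref{drandomE}, freshness of $y$ in \irref{asgnI} and \irref{drandomI}) are exactly what keeps the accumulated witness reading only the \emph{original} game state, not a value bound or overwritten along the way.

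I expect the main obstacle to be the case-analysis rules — \irref{dchoiceE}, \irref{drcase}, \irref{dloopE}, and the split derived from \irref{metsplit} — where the conclusion is the existential but the proof may branch on information that, in an open context, was handed over by the opponent's realizer. The inductive hypotheses give a witness per branch, and these must be fused into a single term $f(\om)$ depending on the state alone. My plan is to take $f$ to be a state-computable conditional on the split's discriminating condition: for the arithmetic split this is literally a term comparison (decidable), and for \irref{dchoiceE} one appeals to soundness, which certifies that the realizer of the scrutinee provides a realizer of whichever branch's precondition is taken, so that each branch's witness is realizable precisely on the states where that branch is available — hence whichever branch the opponent actually selects, the state-only witness we committed to still works there; when the discriminating predicate is not itself a decidable comparison, its realizability still yields a computable choice function (soundness again), which \CGL's term language can name. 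Certifying that a uniform, opponent-independent witness always exists, even though the proof itself may branch on opponent data, is the heart of the argument; the rest is bookkeeping with the lemmas above. With $f$ in hand, $\ab := \rzHOLam{z}{\bigwedge\Gamma}{\ac}$ where $\ac$ packages the per-branch continuations, and $(\rzApp{\ab}{\myaa},\ssub{\om}{x}{f(\om)}) \in \fintR{\phi}$ then follows by unfolding $\fintR{\ddiamond{\prandom{x}}{\phi}}$ and the forward semantics of $\prandom{x}$ as in the base case.
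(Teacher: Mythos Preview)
Your induction on the proof term is correct in principle but takes a genuinely different, and much longer, route than the paper. The paper's proof is two lines: invoke soundness (\rref{thm:proof-calculus-sound}) once to obtain a uniform realizer $\ac$ for the sequent $\seq{\Gamma}{\lexists[\allrat]{x}{\phi}}$, then set $f = \rzFst{\rzApp{\ac}{\myaa}}$ and $\ab = \rzSnd{\rzApp{\ac}{\myaa}}$; unfolding the semantics of $\ddiamond{\prandom{x}}{\phi}$ (namely $\strategyforR[\prandom{x}]{\{(\rzApp{\ac}{\myaa},\om)\}} = \{(\rzSnd{\rzApp{\ac}{\myaa}}, \ssub{\om}{x}{\rzFst{\rzApp{\ac}{\myaa}}(\om)})\}$) finishes immediately. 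No induction on $M$, no per-rule analysis.

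The obstacle you identify --- fusing per-branch witnesses across \irref{dchoiceE}, \irref{drcase}, \irref{dloopE} into a single state-computable term --- is exactly what the realizability semantics dissolves, and the paper remarks on this explicitly right after the lemma: ``these proofs become simple because we adopted a realizability semantics.'' The realizer of an existential already packages the witness as a computable function of state (its $\rzFst$ projection), and since \CGL terms are by definition arbitrary computable functions, that projection \emph{is} a term. Soundness has already performed the global ``fusion'' you plan to redo rule-by-rule; the branching was resolved once and for all in the soundness proof of each rule. Your approach would re-derive this content locally, which works but is an order of magnitude longer and obscures the design payoff of the semantics. What your route \emph{would} buy is a more explicit, syntax-directed extraction algorithm keyed to the shape of $M$ --- useful if one wanted to implement extraction without going through the semantic realizer --- but for the lemma as stated the paper's direct argument suffices.
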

\begin{proof}
By \rref{thm:proof-calculus-sound}, the sequent $(\seq{\Gamma}{\lexists[\allrat]{x}{\phi}})$ is valid.
Since $(\myaa,\om) \in \cintR{\Gamma}$, then by the definition of sequent validity, there exists a common realizer $\ac$ such that $(\rzApp{\ac}{\myaa},\om) \in \fintR{\lexists[\allrat]{x}{\phi}}$.
Now let $f = \rzFst{\rzApp{\ac}{\myaa}}$ and $\ab = \rzSnd{\rzApp{\ac}{\myaa}}$ and the result is immediate by the semantics of existentials.
\end{proof}

Disjunction strategies can depend on the state, so na\"{i}ve DP does not hold.
\begin{example}[Na\"{i}ve DP]
When $\proves{\Gamma}{M}{(\phi \lor \psi)}$ there need not be $N$ such that ${\proves{\Gamma}{N}{\phi}}$ or ${\proves{\Gamma}{N}{\psi}}$.
\end{example}
  Consider $\phi \equiv x > 0$ and $\psi \equiv x < 1$.
  Then $\proves{\Gemp}{\esplit{x}{0}()}{(\phi \lor \psi)}$, but neither $x < 1$ nor $x > 0$ is valid, let alone provable.

\begin{lemma}[Disjunction Property]
When $\proves{\Gamma}{M}{\phi \lor \psi}$ there exists realizer $\ab$ and computable $f,$ s.t.\ for every $\om$ and $\myaa$ such that $(\myaa,\omega) \in \cintR{\Gamma}$, either $f(\omega)=0$ and $(\rzFst{\ab},\omega) \in \fintR{\phi}{}$, else $f(\omega)=1$ and $(\rzSnd{\ab},\omega) \in \fintR{\psi}$.
\end{lemma}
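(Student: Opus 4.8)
The plan is to follow the proof of \rref{lem:term-ep} almost verbatim, but unfolding disjunction through its definitional encoding as an Angelic choice of tests: recall $\phi \lor \psi \equiv \ddiamond{\pchoice{\ptest{\phi}}{\ptest{\psi}}}{\btt}$. First I would apply \rref{thm:proof-calculus-sound} to $\proves{\Gamma}{M}{\phi\lor\psi}$, so that the sequent $(\seq{\Gamma}{\phi\lor\psi})$, hence the formula $\bigwedge\Gamma\limply(\phi\lor\psi)$, is valid. By the definition of sequent validity there is a uniform realizer $\ac$ with $(\rzApp{\ac}{\myaa},\om)\in\fintR{\phi\lor\psi}$ for every $(\myaa,\om)\in\cintR{\Gamma}$; abbreviate $\ac_0 = \rzApp{\ac}{\myaa}$ (its dependence on $\myaa$ is as harmless here as in \rref{lem:term-ep}).

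Next I would unfold the game semantics. Membership $(\ac_0,\om)\in\fintR{\ddiamond{\pchoice{\ptest{\phi}}{\ptest{\psi}}}{\btt}}$ means $\strategyforR[\pchoice{\ptest{\phi}}{\ptest{\psi}}]{\{(\ac_0,\om)\}}\subseteq\fintR{\btt}\cup\{\stt\}$, and by the clause for $\cup$ this region equals $\strategyforR[\ptest{\phi}]{\apL{\{(\ac_0,\om)\}}}\cup\strategyforR[\ptest{\psi}]{\apR{\{(\ac_0,\om)\}}}$, where the Angelic projections filter on the bit $\rzFst{\ac_0}(\om)\in\allbool$ and both pass on the same continuation $\rzSnd{\ac_0}$. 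I would therefore take $f(\om)=\rzFst{\ac_0}(\om)$ as the decision function and $\ab=\rzCons{\rzFst{\rzSnd{\ac_0}}}{\rzFst{\rzSnd{\ac_0}}}$ as the realizer pair, both components being the witness that the Angelic test consumes.

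The remaining step is a case analysis on $f(\om)$. If $f(\om)=0$ then $\apR{\{(\ac_0,\om)\}}$ is empty and $\apL{\{(\ac_0,\om)\}}=\{(\rzSnd{\ac_0},\om)\}$, so the end-region reduces to $\strategyforR[\ptest{\phi}]{\{(\rzSnd{\ac_0},\om)\}}$; expanding the Angelic-test clause, this set is either $\{(\rzSnd{\rzSnd{\ac_0}},\om)\}$ or $\{\sff\}$ according to whether $(\rzFst{\rzSnd{\ac_0}},\om)\in\fintR{\phi}$, and since $\sff\notin\fintR{\btt}\cup\{\stt\}$ the containment forces $(\rzFst{\ab},\om)=(\rzFst{\rzSnd{\ac_0}},\om)\in\fintR{\phi}$. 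The case $f(\om)=1$ is symmetric and yields $(\rzSnd{\ab},\om)=(\rzFst{\rzSnd{\ac_0}},\om)\in\fintR{\psi}$, using once more that $\cup$ reuses the same continuation realizer for both branches. This is exactly the claimed conclusion.

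I do not expect a deep obstacle, since this is essentially \rref{lem:term-ep} transported to the test-choice encoding. The single point that must be gotten right -- and that the ``na\"{i}ve DP'' example already flags -- is that the selected disjunct is genuinely state-dependent, so $f$ must be a function of $\om$ rather than a fixed bit, and correspondingly the single realizer pair $\ab$ must carry correct evidence for whichever disjunct $f$ selects in each $\om$; this works precisely because the semantics of $\cup$ feeds the same continuation $\rzSnd{\ac_0}$ into both test branches. The other mild technicality is the bookkeeping with the pseudo-states: excluding $\sff$ from $\fintR{\btt}\cup\{\stt\}$ is what upgrades ``Angel does not break the rules of $\ptest{\phi}$'' to ``$\phi$ is realized.''
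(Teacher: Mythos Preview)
Your proposal is correct and follows essentially the same route as the paper: invoke \rref{thm:proof-calculus-sound}, obtain the uniform realizer $\ac$, and read off $f$ and $\ab$ from the components of $\rzApp{\ac}{\myaa}$ by unfolding the semantics of $\ddiamond{\ptest{\phi}\cup\ptest{\psi}}{\btt}$. The paper's proof compresses this to a single sentence (``let $f=\rzFst{\rzApp{\ac}{\myaa}}$ and $\ab=\rzSnd{\rzApp{\ac}{\myaa}}$ and the result is immediate''), whereas you spell out the Angelic-projection and test-clause bookkeeping explicitly; your pairing $\ab=\rzCons{\rzFst{\rzSnd{\ac_0}}}{\rzFst{\rzSnd{\ac_0}}}$ is a careful way to make the projections line up with the statement, exploiting that the $\cup$ semantics reuses the same continuation for both branches.
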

\begin{proof}
By \rref{thm:proof-calculus-sound}, the sequent $\seq{\Gamma}{\phi \lor \psi}$ is valid.
Since $(\myaa,\om) \in \cintR{\Gamma}$, then by the definition of sequent validity, there exists a common realizer $\ac$ such that 
$(\rzApp{\ac}{\myaa},\om) \in \fintR{\phi \lor \psi}$.
Now let $f = \rzFst{\rzApp{\ac}{\myaa}}$ and $\ab = \rzSnd{\rzApp{\ac}{\myaa}}$ and the result is immediate by the semantics of disjunction.
\end{proof}
Following the same approach, we generalize to a Strategy Property.
In \CGL, strategies are represented by realizers, which implement every computation made throughout the game.
Thus, to show provable games have computable winning strategies, it suffices to exhibit realizers.
\begin{theorem}[Active Strategy Property]
If $\proves{\Gamma}{M}{\ddiamond{\alpha}{\phi}},$ then there exists a realizer $\ab$ such that for all $\om$ and realizers $\myaa$ such that $(\myaa,\om) \in \cintR{\Gamma},$
then $\strategyforR[\alpha]{\{(\rzApp{\ab}{\myaa},\om)\}} \subseteq \fintR{\phi} \cup \{\stt\}$.
\label{thm:stratprop}
\end{theorem}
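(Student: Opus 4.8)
The plan is to derive \rref{thm:stratprop} directly from the Soundness Theorem (\rref{thm:proof-calculus-sound}), exactly as the Existential and Disjunction Properties were: the genuine work of strategy extraction is already performed by the inductive construction inside that theorem, so here we merely repackage its conclusion. First I would apply \rref{thm:proof-calculus-sound} to the hypothesis $\proves{\Gamma}{M}{\ddiamond{\alpha}{\phi}}$ to obtain that the sequent $(\seq{\Gamma}{\ddiamond{\alpha}{\phi}})$ is valid, which by definition means the formula $\bigwedge\Gamma \limply \ddiamond{\alpha}{\phi}$ is valid. By the definition of formula validity there is a \emph{single} realizer $\ac$ that uniformly realizes this implication, i.e.\ $\{\ac\}\times\allstate \subseteq \fintR{\bigwedge\Gamma \limply \ddiamond{\alpha}{\phi}}$. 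Unfolding the semantics of implication (recalling that $\phi \limply \psi$ abbreviates $\dbox{\ptest{\phi}}{\psi}$, and that $\cintR{\Gamma} = \fintR{\bigwedge\Gamma}$), this says precisely that whenever $(\myaa,\om) \in \cintR{\Gamma}$ we have $(\rzApp{\ac}{\myaa},\om) \in \fintR{\ddiamond{\alpha}{\phi}}$.

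Second, I would take $\ab$ to be $\ac$ and read off the result by unfolding the formula semantics one further step: the clause for $\ddiamond{\alpha}{\phi}$ states that $(\rzApp{\ab}{\myaa},\om) \in \fintR{\ddiamond{\alpha}{\phi}}$ holds exactly when $\strategyforR[\alpha]{\{(\rzApp{\ab}{\myaa},\om)\}} \subseteq \fintR{\phi} \cup \{\stt\}$, which is the claim. Since $(\myaa,\om)$ ranged over all of $\cintR{\Gamma}$ while $\ac$ (hence $\ab$) was chosen independently of it, the same $\ab$ witnesses the theorem uniformly.

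There is no combinatorial obstacle in this step itself; the substance lives in \rref{thm:proof-calculus-sound}, whose proof is the promised on-paper algorithm turning the proof term $M$ into the effective Angel strategy $\ab$ by recursion on the structure of $M$. The only care required is bookkeeping: confirming that the realizer extracted from sequent validity is genuinely uniform in the state (so one $\ab$ serves for every $\om$), and tracking the currying convention whereby a realizer of $\bigwedge\Gamma \limply \ddiamond{\alpha}{\phi}$ consumes the context realizer $\myaa$ to produce the conclusion's realizer, which is exactly what $\rzApp{\ab}{\myaa}$ denotes. I would close by remarking that the dormant counterpart --- if $\proves{\Gamma}{M}{\dbox{\alpha}{\phi}}$ then some $\ab$ satisfies $\dstrategyforR[\alpha]{\{(\rzApp{\ab}{\myaa},\om)\}} \subseteq \fintR{\phi} \cup \{\stt\}$ for all $(\myaa,\om) \in \cintR{\Gamma}$ --- follows by the identical argument with the Demonic forward semantics $\dstrategyforR[\alpha]{X}$ in place of the Angelic $\strategyforR[\alpha]{X}$.
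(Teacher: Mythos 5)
Your proposal is correct and follows the same overall route as the paper: both reduce the Strategy Property to Soundness (\rref{thm:proof-calculus-sound}), extract the uniform realizer $\ac$ guaranteed by sequent validity, and curry it against the context realizer $\myaa$ exactly as in the Existential and Disjunction Property proofs. The one difference is that the paper's proof summary additionally ``proceeds by induction on $\alpha$'' to argue that $\rzApp{\ab}{\myaa}$ supplies the strategic decisions at each game construct, whereas you observe that no such induction is needed: the theorem's conclusion $\strategyforR[\alpha]{\{(\rzApp{\ab}{\myaa},\om)\}} \subseteq \fintR{\phi} \cup \{\stt\}$ is, by the formula-semantics clause for $\ddiamond{\alpha}{\phi}$, \emph{definitionally} the statement $(\rzApp{\ab}{\myaa},\om) \in \fintR{\ddiamond{\alpha}{\phi}}$, which is exactly what sequent validity delivers once the Demonic-test semantics of the implication $\bigwedge\Gamma \limply \ddiamond{\alpha}{\phi}$ is unfolded. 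Your version is therefore a mild streamlining rather than a different argument; the paper's induction serves only to make explicit, construct by construct, how the realizer encodes the moves of the winning strategy, which is expository rather than logically necessary. Your bookkeeping remarks (uniformity of $\ab$ in the state, and the currying convention for context realizers) address the only points where care is actually required.
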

\begin{theorem}[Dormant Strategy Property]
If $\proves{\Gamma}{M}{\dbox{\alpha}{\phi}},$ then there exists a realizer $\ab$ such that for all $\om$ and realizers $\myaa$ such that $(\myaa,\om) \in \cintR{\Gamma},$
then $\dstrategyforR[\alpha]{\{(\rzApp{\ab}{\myaa},\om)\}} \subseteq \fintR{\phi} \cup \{\stt\}$.
\label{thm:dstratprop}
\end{theorem}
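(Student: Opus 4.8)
The plan is to obtain both statements as near-immediate corollaries of soundness, following exactly the template already used for the Existential Property (\rref{lem:term-ep}) and the Disjunction Property. First I would apply \rref{thm:proof-calculus-sound} to the hypothesis $\proves{\Gamma}{M}{\ddiamond{\alpha}{\phi}}$, concluding that the sequent $(\seq{\Gamma}{\ddiamond{\alpha}{\phi}})$ is valid, i.e.\ that the formula $\bigwedge\Gamma \limply \ddiamond{\alpha}{\phi}$ is valid. Unfolding validity and recalling that implication abbreviates the test modality $\dbox{\ptest{\bigwedge\Gamma}}{\ddiamond{\alpha}{\phi}}$, this yields a single common realizer $\ac$ such that, for every $(\myaa,\om)\in\cintR{\Gamma}$, applying $\ac$ to the context realizer $\myaa$ gives $(\rzApp{\ac}{\myaa},\om) \in \fintR{\ddiamond{\alpha}{\phi}}$.

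The second and final step is to read off the clause of the formula semantics for the diamond modality: $(\rzApp{\ac}{\myaa},\om) \in \fintR{\ddiamond{\alpha}{\phi}}$ holds precisely when $\strategyforR[\alpha]{\{(\rzApp{\ac}{\myaa},\om)\}} \subseteq \fintR{\phi} \cup \{\stt\}$. Taking $\ab := \ac$ discharges \rref{thm:stratprop}. For the Dormant Strategy Property (\rref{thm:dstratprop}) the argument is symmetric: soundness gives validity of $(\seq{\Gamma}{\dbox{\alpha}{\phi}})$, the same extraction yields a uniform realizer $\ac$ with $(\rzApp{\ac}{\myaa},\om) \in \fintR{\dbox{\alpha}{\phi}}$ for all $(\myaa,\om)\in\cintR{\Gamma}$, and the box clause of the formula semantics gives $\dstrategyforR[\alpha]{\{(\rzApp{\ac}{\myaa},\om)\}} \subseteq \fintR{\phi} \cup \{\stt\}$.

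I expect no genuine obstacle at this point, because all the technical weight has already been absorbed: the induction over proof terms handling assignments, loops, duality, monotonicity, and the delicate management of renamings and substitutions all lives inside \rref{thm:proof-calculus-sound} and its supporting lemmas (\rref{lem:renaming}, \rref{lem:atsub}, \rref{lem:pt-substitute}). The only care required here is notational bookkeeping — confirming that the ``common realizer'' delivered by sequent validity is literally the object that, once applied to a context realizer $\myaa$, serves as the strategy realizer $\rzApp{\ab}{\myaa}$, and that the forward game semantics $\strategyforR[\alpha]{\cdot}$ (resp.\ $\dstrategyforR[\alpha]{\cdot}$) is invoked directly rather than re-derived. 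The real content of these theorems is the \emph{statement}: a \CGL proof of a game modality \emph{is} a computable winning strategy, delivering the synthesis guarantee motivated in the introduction and unavailable in classical \GL.
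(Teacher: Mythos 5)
Your proposal is correct and takes the same soundness-based route as the paper: both reduce the theorem to \rref{thm:proof-calculus-sound} and then read the conclusion off the realizability semantics, exactly as in the paper's own proofs of the Existential and Disjunction Properties. The one difference is that the paper's (very terse) proof summary additionally claims to ``proceed by induction on $\alpha$,'' whereas you observe — correctly — that no induction is needed for the stated set inclusion: once sequent validity delivers the uniform realizer $\ac$ with $(\rzApp{\ac}{\myaa},\om) \in \fintR{\dbox{\alpha}{\phi}}$ for every $(\myaa,\om) \in \cintR{\Gamma}$ (unfolding the Demonic-test clause for $\limply$ and noting that the possibility $(\rzApp{\ac}{\myaa},\om)$ is a genuine pair, hence not the pseudo-state $\stt$), the inclusion $\dstrategyforR[\alpha]{\{(\rzApp{\ac}{\myaa},\om)\}} \subseteq \fintR{\phi} \cup \{\stt\}$ is literally the defining clause of the box-modality semantics, so taking $\ab := \ac$ finishes. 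The induction on $\alpha$ in the paper serves only to narrate how the realizer operationally unrolls into concrete moves at each game connective — useful for the synthesis story, but contributing nothing to the set inclusion itself. Your version is the cleaner rendering of the same argument.
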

\begin{proof}[Summary]
From proof term $M$ and \rref{thm:proof-calculus-sound}, we have a realizer for formula $\ddiamond{\alpha}{\phi}$ or $\dbox{\alpha}{\phi},$ respectively.
We proceed by induction on $\alpha$: the realizer $\rzApp{\ab}{\myaa}$ contains all realizers applied in the inductive cases composed with their continuations that prove $\phi$ in each base case.
\end{proof}
While these proofs, especially EP and DP, are short and direct, we note that this is by design: the challenge in developing \CGL is not so much the proofs of this section, rather these proofs become simple because we adopted a realizability semantics.
The challenge was in developing the semantics and adapting the proof calculus and theory to that semantics.

\section{Conclusion and Future Work}
In this paper, we developed a Constructive Game Logic \CGL, from syntax and realizability semantics to a proof calculus and operational semantics on the proof terms.
We developed two understandings of proofs as programs: semantically, every proof of game winnability corresponds to a realizer which computes the game's winning strategy, while the language of proof terms is also a functional programming language where proofs reduce to their normal forms according to the operational semantics.
We completed the Curry-Howard interpretation for games by showing Existential, Disjunction, and Strategy properties: programs can be synthesized that decide which instance, disjunct, or moves are taken in existentials, disjunctions, and games.
In summary, we have developed the most comprehensive Curry-Howard interpretation of any program logic to date, for a much more expressive logic than prior work~\cite{kamide2010strong}.
Because \CGL contains constructive Concurrent \DL and first-order \DL as strict fragments, we have provided a comprehensive Curry-Howard interpretation for them in one fell swoop.
The key insights behind \CGL should apply to the many dynamic and Hoare logics used in verification today.

Synthesis is the immediate application of \CGL.
Motivations for synthesis include security games~\cite{DBLP:journals/sLogica/PaulyP03a}, concurrent programs with demonic schedulers (Concurrent Dynamic Logic), and control software for safety-critical cyber-physical systems such as cars and planes.
In general, any kind of software program which must operate correctly in an adversarial environment can benefit from game logic verification.
The proofs of \rref{thm:stratprop} and \rref{thm:dstratprop} constitute an (on-paper) algorithm which perform synthesis of guaranteed-correct strategies from game proofs.
The first future work is to implement this algorithm in code, providing much-needed assurance for software which is often mission-critical or safety-critical.
This paper focused on discrete \CGL with one numeric type simply because any further features would distract from the core features.
Real applications come from many domains which add features around this shared core.

The second future work is to extend \CGL to hybrid games, which provide compelling applications from the domain of adversarial cyber-physical systems.
This future work will combine the novel features of \CGL with those of the classical logic \dGL.
The primary task is to define a constructive semantics for differential equations and to give constructive interpretations to the differential equation rules of \dGL.
Previous work on formalizations of differential equations~\cite{DBLP:conf/itp/MakarovS13} suggests differential equations can be treated constructively.
In principle, existing proofs in \dGL might happen to be constructive, but this does not obviate the present work.
On the contrary, once a game logic proof is shown to fall in the constructive fragment, our work gives a correct synthesis guarantee for it too!

\bibliographystyle{splncs04}
\bibliography{constructive-games,platzer}

\begin{thebibliography}{10}
\providecommand{\url}[1]{\texttt{#1}}
\providecommand{\urlprefix}{URL }
\providecommand{\doi}[1]{\href{https://doi.org/#1}{\nolinkurl{#1}}}

\bibitem{DBLP:journals/iandc/AbramskyJM00}
Abramsky, S., Jagadeesan, R., Malacaria, P.: Full abstraction for {PCF}. Inf.
  Comput.  \textbf{163}(2),  409--470 (2000). \doi{10.1006/inco.2000.2930}

\bibitem{DBLP:journals/jsyml/AczelG06}
Aczel, P., Gambino, N.: The generalised type-theoretic interpretation of
  constructive set theory. J. Symb. Log.  \textbf{71}(1),  67--103 (2006),
  \url{https://doi.org/10.2178/jsl/1140641163}

\bibitem{DBLP:conf/csl/AlechinaMPR01}
Alechina, N., Mendler, M., de~Paiva, V., Ritter, E.: Categorical and {Kripke}
  semantics for constructive {S4} modal logic. In: Fribourg, L. (ed.) Computer
  Science Logic, 15th International Workshop, {CSL} 2001. 10th Annual
  Conference of the EACSL, Paris, France, September 10-13, 2001, Proceedings.
  LNCS, vol.~2142, pp. 292--307. Springer (2001). \doi{10.1007/3-540-44802-0},
  \url{https://doi.org/10.1007/3-540-44802-0\_21}

\bibitem{DBLP:journals/jacm/AlurHK02}
Alur, R., Henzinger, T.A., Kupferman, O.: Alternating-time temporal logic. J.
  {ACM}  \textbf{49}(5),  672--713 (2002). \doi{10.1145/585265.585270}

\bibitem{DBLP:conf/lfcs/ArtemovDN97}
Art{\"{e}}mov, S.N., Davoren, J.M., Nerode, A.: Topological semantics for
  hybrid systems. In: Adian, S.I., Nerode, A. (eds.) Logical Foundations of
  Computer Science, 4th International Symposium, LFCS'97, Yaroslavl, Russia,
  July 6-12, 1997, Proceedings. LNCS, vol.~1234, pp.~1--8. Springer (1997).
  \doi{10.1007/3-540-63045-7}, \url{https://doi.org/10.1007/3-540-63045-7\_1}

\bibitem{DBLP:conf/cca/Bauer05}
Bauer, A.: Realizability as connection between constructive and computable
  mathematics. In: Grubba, T., Hertling, P., Tsuiki, H., Weihrauch, K. (eds.)
  {CCA} 2005 - Second International Conference on Computability and Complexity
  in Analysis, August 25-29, 2005, Kyoto, Japan. Informatik Berichte, vol.
  326-7/2005, pp. 378--379. FernUniversit{\"{a}}t Hagen, Germany (2005)

\bibitem{DBLP:series/lncs/Benthem15}
van Benthem, J.: Logic of strategies: What and how? In: van Benthem, J., Ghosh,
  S., Verbrugge, R. (eds.) Models of Strategic Reasoning - Logics, Games, and
  Communities, LNCS, vol.~8972, pp. 321--332. Springer (2015).
  \doi{10.1007/978-3-662-48540-8\_10}

\bibitem{DBLP:reference/spatial/BenthemB07}
van Benthem, J., Bezhanishvili, G.: Modal logics of space. In: Aiello, M.,
  Pratt{-}Hartmann, I., van Benthem, J. (eds.) Handbook of Spatial Logics, pp.
  217--298. Springer (2007), \url{https://doi.org/10.1007/978-1-4020-5587-4\_5}

\bibitem{DBLP:conf/lori/BenthemBE17}
van Benthem, J., Bezhanishvili, N., Enqvist, S.: A propositional dynamic logic
  for instantial neighborhood models. In: Baltag, A., Seligman, J., Yamada, T.
  (eds.) Logic, Rationality, and Interaction - 6th International Workshop,
  {LORI} 2017, Sapporo, Japan, September 11-14, 2017, Proceedings. LNCS, vol.
  10455, pp. 137--150. Springer (2017). \doi{10.1007/978-3-662-55665-8},
  \url{https://doi.org/10.1007/978-3-662-55665-8\_10}

\bibitem{DBLP:journals/sLogica/BenthemP11}
van Benthem, J., Pacuit, E.: Dynamic logics of evidence-based beliefs. Studia
  Logica  \textbf{99}(1-3),  61--92 (2011). \doi{10.1007/s11225-011-9347-x}

\bibitem{DBLP:journals/games/BenthemPR11}
van Benthem, J., Pacuit, E., Roy, O.: Toward a theory of play: {A} logical
  perspective on games and interaction. Games  (2011). \doi{10.3390/g2010052}

\bibitem{bishop1967foundations}
Bishop, E.: Foundations of constructive analysis  (1967)

\bibitem{bridges2007techniques}
Bridges, D.S., Vita, L.S.: Techniques of constructive analysis. Springer (2007)

\bibitem{DBLP:journals/fuin/Celani01}
Celani, S.A.: A fragment of intuitionistic dynamic logic. Fundam. Inform.
  \textbf{46}(3),  187--197 (2001),
  \url{http://content.iospress.com/articles/fundamenta-informaticae/fi46-3-01}

\bibitem{DBLP:conf/concur/ChatterjeeHP07}
Chatterjee, K., Henzinger, T.A., Piterman, N.: Strategy logic. In: Caires, L.,
  Vasconcelos, V.T. (eds.) CONCUR. LNCS, Springer (2007).
  \doi{10.1007/978-3-540-74407-8\_5}

\bibitem{DBLP:journals/iandc/CoquandH88}
Coquand, T., Huet, G.P.: The calculus of constructions. Inf. Comput.
  \textbf{76}(2/3),  95--120 (1988). \doi{10.1016/0890-5401(88)90005-3}

\bibitem{curry1967combinatory}
Curry, H., Feys, R.: Combinatory logic. In: Heyting, A., Robinson, A. (eds.)
  Studies in logic and the foundations of mathematics. North-Holland (1958)

\bibitem{degen2006towards}
Degen, J., Werner, J.: Towards intuitionistic dynamic logic. Logic and Logical
  Philosophy  \textbf{15}(4),  305--324 (2006). \doi{10.12775/LLP.2006.018}

\bibitem{DBLP:journals/corr/abs-1109-3685}
Doberkat, E.: Towards a coalgebraic interpretation of propositional dynamic
  logic. CoRR  \textbf{abs/1109.3685} (2011),
  \url{http://arxiv.org/abs/1109.3685}

\bibitem{DBLP:journals/lmcs/Fernandez-Duque18}
Fern{\'{a}}ndez{-}Duque, D.: The intuitionistic temporal logic of dynamical
  systems. Log.\ Methods in Computer Science  \textbf{14}(3) (2018),
  \url{https://doi.org/10.23638/LMCS-14(3:3)2018}

\bibitem{DBLP:journals/logcom/FrittellaGKPS16}
Frittella, S., Greco, G., Kurz, A., Palmigiano, A., Sikimic, V.: A
  proof-theoretic semantic analysis of dynamic epistemic logic. J. Log. Comput.
   \textbf{26}(6),  1961--2015 (2016),
  \url{https://doi.org/10.1093/logcom/exu063}

\bibitem{DBLP:conf/cpp/FultonP16}
Fulton, N., Platzer, A.: A logic of proofs for differential dynamic logic:
  Toward independently checkable proof certificates for dynamic logics. In:
  Avigad, J., Chlipala, A. (eds.) Proceedings of the 2016 Conference on
  Certified Programs and Proofs, {CPP} 2016, St. Petersburg, FL, USA, January
  18-19, 2016. pp. 110--121. {ACM} (2016). \doi{10.1145/2854065.2854078}

\bibitem{DBLP:journals/aml/Ghilardi89}
Ghilardi, S.: Presheaf semantics and independence results for some
  non-classical first-order logics. Arch. Math. Log.  \textbf{29}(2),  125--136
  (1989), \url{https://doi.org/10.1007/BF01620621}

\bibitem{ghosh2008strategies}
Ghosh, S.: Strategies made explicit in dynamic game logic. Workshop on Logic
  and Intelligent Interaction at ESSLLI pp. 74 --81 (2008)

\bibitem{DBLP:journals/sLogica/Goranko03}
Goranko, V.: The basic algebra of game equivalences. Studia Logica
  \textbf{75}(2),  221--238 (2003),
  \url{https://doi.org/10.1023/A:1027311011342}

\bibitem{DBLP:conf/popl/Griffin90}
Griffin, T.: A formulae-as-types notion of control. In: Allen, F.E. (ed.)
  Conference Record of the Seventeenth Annual {ACM} Symposium on Principles of
  Programming Languages, San Francisco, California, USA, January 1990. pp.
  47--58. {ACM} Press (1990), \url{https://doi.org/10.1145/96709.96714}

\bibitem{Trinity}
Harper, R.: The holy trinity (2011),
  \url{https://web.archive.org/web/20170921012554/http://existentialtype.wordpress.com/2011/03/27/the-holy-trinity/}

\bibitem{Hilken_afirst}
Hilken, B.P., Rydeheard, D.E.: A first order modal logic and its sheaf models

\bibitem{DBLP:journals/cacm/Hoare69}
Hoare, C.A.R.: An axiomatic basis for computer programming. Commun. {ACM}
  \textbf{12}(10),  576--580 (1969). \doi{10.1145/363235.363259}

\bibitem{DBLP:conf/atal/HoekJW05}
van~der Hoek, W., Jamroga, W., Wooldridge, M.J.: A logic for strategic
  reasoning. In: Dignum, F., Dignum, V., Koenig, S., Kraus, S., Singh, M.P.,
  Wooldridge, M.J. (eds.) AAMAS. {ACM} (2005). \doi{10.1145/1082473.1082497}

\bibitem{howard1980formulae}
Howard, W.A.: The formulae-as-types notion of construction. To HB Curry: essays
  on combinatory logic, lambda calculus and formalism  \textbf{44},  479--490
  (1980)

\bibitem{kamide2010strong}
Kamide, N.: Strong normalization of program-indexed lambda calculus. Bull.
  Sect. Logic Univ. \L \'{o}d\'{z}  \textbf{39}(1-2),  65--78 (2010)

\bibitem{lipton1992constructive}
Lipton, J.: Constructive {K}ripke semantics and realizability. In: Moschovakis,
  Y. (ed.) Logic from Computer Science. pp. 319--357. Springer (1992).
  \doi{10.1007/978-1-4612-2822-6\_13}

\bibitem{DBLP:conf/itp/MakarovS13}
Makarov, E., Spitters, B.: The {Picard} algorithm for ordinary differential
  equations in {Coq}. In: Blazy, S., Paulin{-}Mohring, C., Pichardie, D. (eds.)
  ITP. LNCS, vol.~7998. Springer (2013). \doi{10.1007/978-3-642-39634-2\_34}

\bibitem{DBLP:journals/corr/Mamouras16}
Mamouras, K.: Synthesis of strategies using the {Hoare} logic of angelic and
  demonic nondeterminism. Logical Methods in Computer Science  \textbf{12}(3)
  (2016). \doi{10.2168/LMCS-12(3:6)2016}

\bibitem{DBLP:journals/fmsd/MitschP16}
Mitsch, S., Platzer, A.: {ModelPlex}: Verified runtime validation of verified
  cyber-physical system models. Form. Methods Syst. Des.  \textbf{49}(1),
  33--74 (2016). \doi{10.1007/s10703-016-0241-z}, special issue of selected
  papers from RV'14

\bibitem{DBLP:journals/mscs/Oosten02}
van Oosten, J.: Realizability: {A} historical essay. Mathematical Structures in
  Computer Science  \textbf{12}(3),  239--263 (2002).
  \doi{10.1017/S0960129502003626}

\bibitem{DBLP:conf/focs/Parikh83}
Parikh, R.: Propositional game logic. In: FOCS. pp. 195--200. {IEEE} (1983).
  \doi{10.1109/SFCS.1983.47}

\bibitem{DBLP:journals/logcom/Pauly02}
Pauly, M.: A modal logic for coalitional power in games. J. Log. Comput.
  \textbf{12}(1),  149--166 (2002). \doi{10.1093/logcom/12.1.149},
  \url{https://doi.org/10.1093/logcom/12.1.149}

\bibitem{DBLP:journals/sLogica/PaulyP03a}
Pauly, M., Parikh, R.: Game logic - an overview. Studia Logica  \textbf{75}(2),
   165--182 (2003), \url{https://doi.org/10.1023/A:1027354826364}

\bibitem{DBLP:journals/jacm/Peleg87}
Peleg, D.: Concurrent dynamic logic. J. {ACM}  \textbf{34}(2),  450--479
  (1987), \url{https://doi.org/10.1145/23005.23008}

\bibitem{DBLP:journals/tocl/Platzer15}
Platzer, A.: Differential game logic. {ACM} Trans. Comput. Log.
  \textbf{17}(1),  1:1--1:51 (2015). \doi{10.1145/2817824}

\bibitem{DBLP:conf/focs/Pratt76}
Pratt, V.R.: Semantical considerations on floyd-hoare logic. In: FOCS. pp.
  109--121. {IEEE} (1976). \doi{10.1109/SFCS.1976.27}

\bibitem{DBLP:conf/kr/RamanujamS08}
Ramanujam, R., Simon, S.E.: Dynamic logic on games with structured strategies.
  In: Brewka, G., Lang, J. (eds.) Knowledge Representation. pp. 49--58. {AAAI}
  Press (2008), \url{http://www.aaai.org/Library/KR/2008/kr08-006.php}

\bibitem{DBLP:journals/jsyml/Robinson49}
Robinson, J.: Definability and decision problems in arithmetic. J. Symb. Log.
  \textbf{14}(2),  98--114 (1949), \url{https://doi.org/10.2307/2266510}

\bibitem{DBLP:journals/tcs/Scott93}
Scott, D.S.: A type-theoretical alternative to {ISWIM}, {CUCH}, {OWHY}. Theor.
  Comput. Sci.  \textbf{121}(1{\&}2),  411--440 (1993),
  \url{https://doi.org/10.1016/0304-3975(93)90095-B}

\bibitem{COQ}
{The Coq development team}: The {Coq} proof assistant reference manual (2019),
  \url{https://coq.inria.fr/}

\bibitem{van2001games}
Van~Benthem, J.: Games in dynamic-epistemic logic. Bulletin of Economic
  Research  \textbf{53}(4),  219--248 (2001)

\bibitem{DBLP:conf/tldi/VanderwaartDPCHC03}
Vanderwaart, J., Dreyer, D., Petersen, L., Crary, K., Harper, R., Cheng, P.:
  Typed compilation of recursive datatypes. In: Shao, Z., Lee, P. (eds.)
  Proceedings of TLDI'03: 2003 {ACM} {SIGPLAN} International Workshop on Types
  in Languages Design and Implementation, New Orleans, Louisiana, USA, January
  18, 2003. pp. 98--108. {ACM} (2003),
  \url{https://doi.org/10.1145/604174.604187}

\bibitem{DBLP:series/txtcs/Weihrauch00}
Weihrauch, K.: Computable Analysis - An Introduction. Texts in Theoretical
  Computer Science. An {EATCS} Series, Springer (2000).
  \doi{10.1007/978-3-642-56999-9}

\bibitem{DBLP:journals/apal/Wijesekera90}
Wijesekera, D.: Constructive modal logics {I}. Ann. Pure Appl. Logic
  \textbf{50}(3),  271--301 (1990). \doi{10.1016/0168-0072(90)90059-B}

\bibitem{DBLP:journals/apal/WijesekeraN05}
Wijesekera, D., Nerode, A.: Tableaux for constructive concurrent dynamic logic.
  Ann. Pure Appl. Logic  (2005). \doi{10.1016/j.apal.2004.12.001}

\end{thebibliography}

\appendix
\newpage
\renewcommand{\G}{\Gamma}
\section{Full Proof Calculus}
\label{app:pc-full}
\begin{figure}[h]
\centering
\begin{calculuscollections}{\columnwidth}
\begin{calculus}
\cinferenceRule[dchoiceE|{$\langle\cup\rangle${E}}]{}
{
\linferenceRule[formula]
{\proves{\G}{A}{\ddiamond{\alpha\cup\beta}{\phi}}
        &\proves{\G,\pvl:\ddiamond{\alpha}{\phi}}{B}{\psi}
        &\proves{\G,\pvr:\ddiamond{\beta}{\phi}}{C}{\psi}}
{\proves{\G}{\edcase{A}{B}{C}}{\psi}}
}{}
\end{calculus}
\end{calculuscollections}

\begin{calculuscollections}{0.4\columnwidth}
\begin{calculus}
\cinferenceRule[dchoiceIL|{$\langle\cup\rangle${I1}}]{}
{\linferenceRule[formula]
  {\proves{\G}{M}{\ddiamond{\alpha}{\phi}}}
  {\proves{\G}{\edinjL{M}}{\ddiamond{\alpha\cup\beta}{\phi}}}
}{}
\cinferenceRule[dchoiceIR|{$\langle\cup\rangle${I2}}]{}
{\linferenceRule[formula]
  {\proves{\G}{M}{\ddiamond{\beta}{\phi}}}
  {\proves{\G}{\edinjR{M}}{\ddiamond{\alpha\cup\beta}{\phi}}}
}{}
\cinferenceRule[bchoiceI|{$[\cup]${I}}]{}
{\linferenceRule[formula]
  {\proves{\G}{M}{\dbox{\alpha}{\phi}} & \proves{\G}{N}{\dbox{\beta}{\phi}}}
  {\proves{\G}{\ebcons{M}{N}}{\dbox{\alpha\cup\beta}{\phi}}}
}{}
\cinferenceRule[dtestI|{$\langle?\rangle$}{I}]{}
{\linferenceRule[formula]
  {\proves{\G}{M}{\phi} & \proves{\G}{N}{\psi}}
  {\proves{\G}{\edcons{M}{N}}{\ddiamond{\ptest{\phi}}{\psi}}}
}{}
\cinferenceRule[btestI|{$[?]$}{I}]{}
{\linferenceRule[formula]
  {\proves{\G,\pvx:\phi}{M}{\psi}}
  {\proves{\G}{(\eplam{\phi}{M})}{\dbox{\ptest{\phi}}{\psi}}}
}{}
\cinferenceRule[btestE|{$[?]$}{E}]{}
{\linferenceRule[formula]
  {\proves{\G}{M}{\dbox{\ptest{\phi}}{\psi}} & \proves{\G}{N}{\phi}}
  {\proves{\G}{(\eapp{M}{N})}{\psi}}
}{}
\end{calculus}
\end{calculuscollections}%
\qquad%
\begin{calculuscollections}{0.4\columnwidth}
\begin{calculus}
\cinferenceRule[bchoiceEL|{$[\cup]${E1}}]{}
{\linferenceRule[formula]
  {\proves{\G}{M}{\dbox{\alpha\cup\beta}{\phi}}}
  {\proves{\G}{\ebprojL{M}}{\dbox{\alpha}{\phi}}}
}{}
\cinferenceRule[bchoiceER|{$[\cup]${E2}}]{}
{\linferenceRule[formula]
  {\proves{\G}{M}{\dbox{\alpha\cup\beta}{\phi}}}
  {\proves{\G}{\ebprojR{M}}{\dbox{\beta}{\phi}}}
}{}
\cinferenceRule[hyp|{hyp}]{}
{\linferenceRule[formula]
  {}
  {\proves{\G,\pvx:\phi}{\pvx}{\phi}}
}{}
\cinferenceRule[dtestEL|{$\langle?\rangle$}{E1}]{}
{\linferenceRule[formula]
  {\proves{\G}{M}{\ddiamond{\ptest{\phi}}{\psi}}}
  {\proves{\G}{\edprojL{M}}{\phi}}
}{}
\cinferenceRule[dtestER|{$\langle?\rangle$}{E2}]{}
{\linferenceRule[formula]
  {\proves{\G}{M}{\ddiamond{\ptest{\phi}}{\psi}}}
  {\proves{\G}{\edprojR{M}}{\psi}}
}{}
\end{calculus}
\end{calculuscollections}
\caption{\CGL proof calculus: Propositional rules}
\label{fig:app-cgl-rules-prop}
\end{figure}

\begin{figure}[h]
  \centering
\begin{calculuscollections}{\columnwidth}
\begin{calculus}
\cinferenceRule[drcase|{$\langle*\rangle${C}}]{}
{
\linferenceRule[formula]
{\proves{\G}{A}{\ddiamond{\prepeat{\alpha}}{\phi}} & \proves{\G,\pvs:\phi}{B}{\psi} & \proves{\G,\pvg:\ddiamond{\alpha}{\ddiamond{\prepeat{\alpha}}{\phi}}}{C}{\psi}}
{\proves{\G}{\ercase{A}{B}{C}}{\psi}}
}{}
\cinferenceRule[bunroll|{$[*]${E}}]{}
{\linferenceRule[formula]
  {\proves{\G}{M}{\dbox{\prepeat{\alpha}}{\phi}}}
  {\proves{\G}{\ebunroll{M}}{\phi \land \dbox{\alpha}{\dbox{\prepeat{\alpha}}{\phi}}}}
}{}
\cinferenceRule[dualI|{$\lstrike{}^d\rstrike$}I]{}
{\linferenceRule[formula]
  {\proves{\G}{M}{\pmodality{\alpha}{\phi}}}
  {\proves{\G}{\eSwap{M}}{\dmodality{\pdual{\alpha}}{\phi}}}
}{}
\cinferenceRule[mon|{M}]{}
{\linferenceRule[formula]
  {\proves{\G}{M}{\ddiamond{\alpha}{\phi}} & \proves{\earen{\G}{\alpha},\pvx:\phi}{N}{\psi}}
  {\proves{\G}{\emon{M}{N}{\pvx}}{\ddiamond{\alpha}{\psi}}}
}{}
\end{calculus}%
\hfill%
\begin{calculus}
\cinferenceRule[dstop|{$\langle*\rangle$S}]{}
{
\linferenceRule[formula]
{\proves{\G}{M}{\phi}}  
{\proves{\G}{\estop{M}}{\ddiamond{\prepeat{\alpha}}{\phi}}}
}{}
\cinferenceRule[dgo|{$\langle*\rangle$G}]{}
{
\linferenceRule[formula]
{\proves{\G}{M}{\ddiamond{\alpha}{\ddiamond{\prepeat{\alpha}}{\phi}}}}
{\proves{\G}{\ego{M}}{\ddiamond{\prepeat{\alpha}}{\phi}}}
}{}
\cinferenceRule[broll|{$[*]${R}}]{}
{\linferenceRule[formula]
  {\proves{\G}{M}{\phi \land \dbox{\alpha}{\dbox{\prepeat{\alpha}}{\phi}}}}
  {\proves{\G}{\ebroll{M}}{\dbox{\prepeat{\alpha}}{\phi}}}
}{}
\cinferenceRule[seqI|{$\lstrike{;}\rstrike$}I]{}
{\linferenceRule[formula]
  {\proves{\G}{M}{\dmodality{\alpha}{\dmodality{\beta}{\phi}}}}
  {\proves{\G}{\eSeq{M}}{\dmodality{\alpha;\beta}{\phi}}}
}{}
\end{calculus}
\end{calculuscollections}
\caption{\CGL proof calculus: Non-propositional rules}
\label{fig:app-cgl-compos}
\end{figure}

\begin{figure}
  \centering
\begin{calculuscollections}{\columnwidth}
\begin{calculus}
\cinferenceRule[drandomI|{$\langle{:}*\rangle${I}}]{}
{
\linferenceRule[formula]
{\proves{\eren{\G}{x}{y},\pvx:(x=\eren{f}{x}{y})}{M}{\phi}}
{\proves{\G}{\etcons{f}{M}}{\ddiamond{\prandom{x}}{\phi}}}
}{$y,\pvx$ fresh, $f$ comp.}
\cinferenceRule[asgnI|{$\lstrike{:=}\rstrike$}I]{}
{\linferenceRule[formula]
{\proves{\eren{\G}{x}{y},\pvx:(x=\eren{f}{x}{y})}{M}{\phi}}
{\proves{\G}{\eAsgneq{y}{x}{\pvx}{M}}{\dmodality{\humod{x}{f}}{\phi}}}
}{$y$ fresh}
\cinferenceRule[drandomE|{$\langle{:}*\rangle${E}}]{}
{
\linferenceRule[formula]
{\proves{\G}{M}{\ddiamond{\prandom{x}}{\phi}} & \proves{\eren{\G}{x}{y},\pvx:\phi}{N}{\psi}}
{\proves{\G}{\eunpack{M}{N}}{\psi}}
}{$y$ fresh, $x \notin \freevars{\psi}$}
\cinferenceRule[brandomI|{$[{:}{*}]${I}}]{}
{
\linferenceRule[formula]
{\proves{\eren{\G}{x}{y}}{M}{\phi}}
{\proves{\G}{(\etlam{\allrat}{M})}{\dbox{\prandom{x}}{\phi}}}
}{y\text{ fresh}}
\cinferenceRule[brandomE|{$[{:}{*}]${E}}]{}
{
      \linferenceRule[formula]
        {\proves{\G}{M}{\dbox{\prandom{x}}{\phi}}}
        {\proves{\G}{\eapp{M}{f}}{\tsub{\phi}{x}{f}}}
}{$\tsub{\phi}{x}{f}$ admiss.}
\end{calculus}
\end{calculuscollections}
\caption{\CGL proof calculus: first-order programs}
\label{fig:app-cgl-first-order}
\end{figure}

\begin{figure}
\begin{calculuscollections}{\columnwidth}
\begin{calculus}
\cinferenceRule[dloopI|{$\langle*\rangle${I}}]{}
{\linferenceRule[formula]
{\deduce{\proves{\pvx:\conv,\pvy:\met_0 = \met \metgr \metz}{B}{\ddiamond{\alpha}{(\conv\land \met_0 \metgr \met)}}}
        {\proves{\G}{A}{\conv}}
 &  \proves{\pvx:\conv,\pvy:\met = \metz}{C}{\phi}}
{\proves{\G}{\efor{A}{B}{C}}{\ddiamond{\prepeat{\alpha}}{\phi}}}
}{$\met_0$ fresh}
\end{calculus}
\\
\begin{calculus}
\cinferenceRule[bloopI|{loop}]{}
{\linferenceRule[formula]
  {\proves{\G}{M}{J} & \proves{\pvx:J}{N}{\dbox{\alpha}{J}} & \proves{\pvx:J}{O}{\phi}}
  {\proves{\G}{(\erep{M}{N}{\pvx:J}{O})}{\dbox{\prepeat{\alpha}}{\phi}}}
}{}
\cinferenceRule[metsplit|split]{}
{
\proves{\G}{(\esplit{f}{g})}{f \leq g \lor f > g}
}
{}
\end{calculus}%
\hfill%
\begin{calculus}
\cinferenceRule[dloopE|FP]{}
{
\linferenceRule[formula]
{\proves{\G}{A}{\ddiamond{\prepeat{\alpha}}{\phi}}
        &\proves{\pvs:\phi}{B}{\psi} & \proves{\pvg:\ddiamond{\alpha}{\psi}}{C}{\psi}}
{\proves{\G}{\efp{A}{B}{C}}{\psi}}
}{}
\end{calculus}
\end{calculuscollections}
\caption{\CGL proof calculus: loops}
\label{fig:app-cgl-loops}
\end{figure}

\begin{figure}
\begin{calculus}
\cinferenceRule[QE|FO]{}
{\linferenceRule[formula]
{\proves{\G}{M}{\rho}}
{\proves{\G}{\eQE{\phi}{M}}{\phi}}
 }{exists $\myaa, \{\myaa\} \times \allstate \subseteq \fintR{\rho\limply\phi},$ $\rho,\phi$ F.O.}
\cinferenceRule[dec|Dec]{}
{\linferenceRule[formula]
 {\proves{\G}{M}{\rho}}
 {\proves{\G}{(\elem{\phi \lor \psi}{M})}{\phi \lor \psi}}
 }{exists $\myaa, \{\myaa\} \times \allstate \subseteq \fintR{\rho \limply \phi \lor \psi},$ $\rho, \phi, \psi$ F.O.}
\end{calculus}
\caption{\CGL proof calculus: first-order arithmetic}
\label{fig:cgl-arith}
\end{figure}

\[\cinferenceRule[ghost|iG]{}
{\linferenceRule[formula]
  {\proves{\G,\pvx:x=f}{M}{\phi}}
  {\proves{\G}{\eghost{x}{f}{\pvx}{M}}{\phi}}
}
{\m{x}\text{ fresh except free in }\m{M,} \m{\pvx}\text{ fresh}}
\]

\newpage~

\section{Full operational semantics}
\label{app:op-full}
We recite the full operational semantics on proof terms here for reference.
We begin by reciting the language of proof terms.
\begin{definition}[Proof terms]
The proof terms $M,N$ (sometimes $A,B,C$) are given by the grammar:
\begin{align*}
M,N &\bebecomes~
\etlam{\allrat}{M} \alternative \eapp{M}{f} \alternative \eplam{\phi}{M}  \alternative  \eapp{M}{N} \alternative \edinjL{M} \alternative \edinjR{M} \\
& \alternative \ercase{A}{B}{C} \alternative \edcase{A}{B}{C} \\
&\alternative  \efor{A}{B}{C} \alternative\efp{A}{B}{C} \\
 & \alternative \erep{M}{N}{\pvx:J}{O}  \alternative \ebunroll{M}  \alternative \estop{M} \alternative \ego{M}\\
& \alternative \etcons{f}{M} \alternative  \eProjL{M} \alternative \eProjR{M} \alternative \eCons{M}{N}  \alternative \eSeq{M}  \alternative \eSwap{M}\\ 
& \alternative \eAsgneq{y}{x}{\pvx}{M} \alternative \eunpack{M}{N}   \alternative \emon{M}{N}{\pvx} \alternative \eQE{\phi}{M} \alternative \pvx
\end{align*}
where $\pvx, \pvy,\pvl,\pvr,\pvs,$ and $\pvg$ are \emph{proof variables}, that is variables that range over proof terms of a given proposition.
In contrast to the assignable \emph{program variables}, the proof variables are given their meaning by substitution and are scoped locally, not globally.
\end{definition}
We proceed with the $\beta$ rules.
\begin{figure}[h]
\centering
\begin{calculuscollections}{0.5\textwidth}
\begin{calculus}
\cinferenceRule[appBeta|{$\lambda\phi\beta$}]{}{\eapp{(\eplam{P}{M})}{N} \stepsto \esub{M}{x}{N}}{}
\cinferenceRule[brandomBeta|{$\lambda\beta$}]{}{\eapp{(\etlam{\allrat}{M})}{f} \stepsto \tsub{M}{x}{f}}{}
\cinferenceRule[projLBeta|{$\pi_1\beta$}]{}{\eProjL{\eCons{M}{N}} \stepsto M}{}
\cinferenceRule[projRBeta|{$\pi_2\beta$}]{}{\eProjR{\eCons{M}{N}} \stepsto N}{}
\end{calculus}
\end{calculuscollections}
\begin{calculuscollections}{0.5\textwidth}
\begin{calculus}
\cinferenceRule[caseBetaL|{case$\beta$L}]{}{\eCase{\eInjL{A}}{B}{C} \stepsto \esub{B}{\ell}{A}}{}
\cinferenceRule[caseBetaR|{{case}$\beta$R}]{}{\eCase{\eInjR{A}}{B}{C} \stepsto \esub{C}{r}{A}}{}
\cinferenceRule[bunrollBeta|{{unroll}$\beta$}]{}{\ebunroll{\ebroll{M}} \stepsto M}{}
\cinferenceRule[QEAllBeta|{FO$\forall\beta$}]{}{\eQE{\lforall{x}{\phi}}{M} \stepsto (\etlam{\allrat}{\eQE{\phi}{M}})}{}
\cinferenceRule[QEAndBeta|{FO$\wedge\beta$}]{}{\eQE{\phi \land \psi}{M} \stepsto \econs{\eQE{\phi}{M}}{\eQE{\psi}{M}}}{}
\end{calculus}
\end{calculuscollections}

\begin{calculuscollections}{0.5\textwidth}
\begin{calculus}
\cinferenceRule[QEExistsBeta|{FO$\exists\beta$}]{ \text{ for satisfying instance }\ensuremath{f}}
{        \eQE{\lexists{x}{\phi}}{M} 
\stepsto \etcons{f} 
                {\eQE{\phi}{\esub{(\tsub{M}{x}{\big(\eren{f}{x}{y}\big)})}
                                 {\pvx}
                                 {\eQE{\eren{f}{x}{y}
                                      =\eren{f}{x}{y}}
                                      {}}}}}
{}
\cinferenceRule[QEOrBeta|{FO$\lor\beta$}]{$\fint{\phi\lor\psi}{\myaa}=\allstate$}
{\eQE{\phi \lor \psi}{M} \stepsto \ecase{\rzFst{\myaa}}{\einjL{\eQE{\phi}{M,\ell}}}{\einjR{\eQE{\psi}{M,r}}}}
{}
\cinferenceRule[unpackBeta|{{unpack}$\beta$}]{}{\eunpack{\etconsgen{x}{y}{\pvy}{f}{M}}{N} \stepsto \eren{(\eghost{x}{\eren{f}{x}{y}}{\pvy}{\esub{N}{\pvx}{M}})}{y}{x}}{}
\cinferenceRule[fpBeta|{{FP}$\beta$}]{}{\efp{A}{B}{C} \stepsto \ercase{A}{B}{\esub{C}{\pvg}{(\emon{\pvg}{\efp{z}{B}{C}}{z})}}}{}
\cinferenceRule[repBeta|{{rep}$\beta$}]{}{(\erep{M}{N}{\pvx:J}{O}) \stepsto \ebroll{\edcons{M}{\emon{(\esub{N}{\pvx}{M})}{(\erep{\pvy}{N}{\pvx:J}{O})}{\pvy}}}}{}
\cinferenceRule[forBeta|{{for}$\beta$}]{}{\efor{A}{B}{C} \stepsto}{}
\end{calculus}\\[-0.22in]
\begin{align*}
&\ecaseHead{\esplit{\met}{0}{}}\\
&\ \ \ecaseLeft{\pvl}{\estop{\esub{C}{(\pvx,\pvy)}{(A,\pvl)}}}\\
&\ecaseRight{\pvr}{\eghost{\met_0}{\met}{\textit{\pvrr}}{\ego{(\emon{(\esub{B}{\pvx,\pvy}{A,\edcons{\textit{\pvrr}}{\pvr}})}{(\efor{\eprojL{\pvz}}{B}{C})}{\pvz})}}}\ecaseEnd
  \end{align*}
\end{calculuscollections}
\caption{Operational semantics: $\beta$-rules}
\label{fig:app-opbeta}
\end{figure}
The first-order simplification rules (e.g., \irref{QEAllBeta}) are not discussed in the main paper for the sake of simplicitiy.
These rules enable a simpler notion of normal forms.
First-order non-syntactic proofs can be used to provide a proof of arbitrary first-order facts.
However, these rules show that non-syntactic proofs need not be thought of as a normal form because they can be simplified into normal proofs by inspection on the underlying realizer.

\begin{figure}
\centering
\begin{calculuscollections}{\textwidth}
\begin{calculus}
\cinferenceRule[rlamMon|{$\lambda\circ$}]{}{\emon{(\etlam{\allrat}{M})}{N}{\pvy} \stepsto (\etlam{\allrat}{(\esub{N}{\pvy}{M})})}{}
\cinferenceRule[bconsMon|{$[\wedge]\circ$}]{}{\emon{\ebcons{A}{B}}{N}{\pvy}\stepsto \ebcons{\emon{A}{(\eren{N}{\va'}{\va})}{\pvx}}{\emon{B}{(\eren{N}{\vb'}{\vb})}{\pvx}}}{}
\cinferenceRule[dconsMon|{$\langle\wedge\rangle\circ$}]{}{\emon{\edcons{A}{B}}{N}{\pvx} \stepsto \edcons{A}{\esub{N}{\pvx}{B}}}{}
\cinferenceRule[injLMon|{$\eInjL{}\circ$}]{}{\emon{\eInjL{M}}{N}{\pvx}   \stepsto \eInjL{(\emon{M}{N}{\pvx})}}{}
\cinferenceRule[bswapMon|{$[\pdual{}]\circ$}]{}{\emon{\ebswap{M}}{N}{\pvx} \stepsto \ebswap{(\emon{M}{N}{\pvx})}}{}
\cinferenceRule[dswapMon|{$\langle\pdual{}\rangle\circ$}]{}{\emon{\edswap{M}}{N}{\pvx} \stepsto \edswap{(\emon{M}{N}{\pvx})}}{}
\end{calculus}
\begin{calculus}
\cinferenceRule[lamMon|{$\lambda\phi\circ$}]{}{\emon{(\eplam{P}{M})}{N}{\pvy} \stepsto (\eplam{P}{(\esub{N}{\pvy}{M})})}{}
\cinferenceRule[tconsMon|{${{:}*}\circ$}]{}{\emon{\etconsgen{x}{y}{\pvy}{f}{M}}{N}{\pvx} \stepsto \etconsgen{x}{y}{\pvy}{f}{\esub{N}{\pvx}{M}}}{}
\cinferenceRule[bseqMon|{$[\iota\circ]$}]{}{\emon{\ebseq{M}}{N}{\pvx} \stepsto \ebseq{(\emon{M}{(\emon{\pvy}{N}{\pvx})}{\pvy})}}{}
\cinferenceRule[dseqMon|{$\langle\iota\circ\rangle$}]{}{\emon{\edseq{M}}{N}{\pvx} \stepsto \edseq{(\emon{M}{(\emon{\pvy}{N}{\pvx})}{\pvy})}}{}
\cinferenceRule[injRMon|{$\eInjR{}\circ$}]{}{\emon{\eInjR{M}}{N}{\pvx}   \stepsto \eInjR{(\emon{M}{N}{\pvx})}}{}
\end{calculus}
\end{calculuscollections}

\begin{calculuscollections}{\textwidth}
\begin{calculus}
\cinferenceRule[dasgnMon|{$\langle{:=}\circ\rangle$}]{}{\emon{\edasgn{y}{x}{\pvy}{M}}{N}{\pvx} \stepsto \edasgn{y}{x}{\pvy}{\esub{N}{\pvx}{M}}}{}
\cinferenceRule[basgnMon|{$[{:=}\circ]$}]{}{\emon{\ebasgn{y}{x}{\pvy}{M}}{N}{\pvx} \stepsto \ebasgn{y}{x}{\pvy}{\esub{N}{\pvx}{M}}}{}
\cinferenceRule[caseMon|{case$\circ$}]{}{\emon{\eCase{A}{B}{C}}{D}{\pvx} \stepsto \ecase{A}{\emon{B}{D}{\pvx}}{\emon{C}{D}{\pvx}}}{}
\cinferenceRule[brollMon|{$[*]\circ$}]{}{\emon{\ebroll{M}}{N}{\pvx}  \stepsto \ebroll{\edcons{\emon{(\eprojL{M})}{(\eren{N}{\vec{y}}{\vec{x}})}{\pvx}}{\emon{(\eprojR{M})}{(\emon{\pvz}{N}{\pvx})}{\pvz}}}}{}
\cinferenceRule[stopMon|stop$\circ$]{}{\emon{\estop{M}}{N}{\pvx} \stepsto \estop{(\esub{(\eren{N}{\va'}{\va})}{\pvx}{M})}}
{}
\cinferenceRule[goMon|go$\circ$]{}{\emon{\ego{M}}{N}{\pvx} \stepsto \ego{(\emon{M}{\emon{\pvy}{N}{\pvx}}{\pvy})}}{}
\end{calculus}
\end{calculuscollections}
\caption{Operational semantics: monotonicity rules.
When rule {stop$\circ$} is applied to program $\prepeat{\alpha}$ then $\va = \boundvars{\alpha}$ and $\va'$ is the vector of  corresponding ghost variables}
\label{fig:opmon}
\end{figure}

\begin{figure}[h]
\centering
\begin{calculuscollections}{\textwidth}
\begin{calculus}
\cinferenceRule[projLC|{$\eprojL{}$C}]{}
{\eprojL{\ecase{A}{B}{C}} \stepsto \ecase{A}{\eprojL{B}}{\eprojL{C}}}
{}
\cinferenceRule[projRC|{$\eprojR{}$C}]{}
{\eprojR{\ecase{A}{B}{C}} \stepsto \ecase{A}{\eprojR{B}}{\eprojR{C}}}
{}

\cinferenceRule[bconsCL|{$\wedge$C1}]{}
{\ebcons{\ecase{A}{B}{C}}{N} \stepsto \ecase{A}{\ebcons{B}{N}}{\ebcons{C}{N}}}
{}
\cinferenceRule[bconsCR|{$\wedge$C2}]{}
{\ebcons{M}{\ecase{A}{B}{C}} \stepsto \ecase{A}{\ebcons{M}{B}}{\ebcons{M}{C}}}
{}
\cinferenceRule[dconsCL|{$\wedge$}C1]{}
{\edcons{\ecase{A}{B}{C}}{N} \stepsto \ecase{A}{\edcons{B}{N}}{\edcons{C}{N}}}
{}
\cinferenceRule[dconsCR|{$\wedge$}C2]{}
{\edcons{M}{\ecase{A}{B}{C}} \stepsto \ecase{A}{\edcons{M}{B}}{\edcons{M}{C}}}
{}
\end{calculus}
\end{calculuscollections}
\caption{Operational semantics: commuting conversion rules (contd.)}
\end{figure}

\begin{figure}[h]
\begin{calculuscollections}{\textwidth}
\begin{calculus}
\cinferenceRule[stopC|stopC]{}
{\estop{\ecase{A}{B}{C}} \stepsto \ecase{A}{\estop{B}}{\estop{C}}}
{}
\cinferenceRule[goC|goC]{}
{\ego{\ecase{A}{B}{C}} \stepsto \ecase{A}{\ego{B}}{\ego{C}}}
{}
\cinferenceRule[injLC|{$\eInjL{}$C}]{}
{\eInjL{\ecasegen{A}{\pvx}{B}{\pvy}{C}}
   \stepsto \ecasegen{A}{\pvx}{\einjL{B}}{\pvy}{\einjL{C}}}
{}
\cinferenceRule[injRC|{$\einjR{}$C}]{}
{\eInjR{\ecasegen{A}{\pvx}{B}{\pvy}{C}}
   \stepsto \ecasegen{A}{\pvx}{\einjR{B}}{\pvy}{\einjR{C}}}
{}
\cinferenceRule[drcaseC|caseC]{}
{
\m{\begin{aligned}
&\ercase{\ecase{A}{B}{C}}{D}{E}\\
\stepsto &\ecase{A}{\ercase{B}{D}{E}}{\ercase{C}{D}{E}}
\end{aligned}}
}
{}
\cinferenceRule[caseC|caseC]{}
{
\m{\begin{aligned}
&\edcase{\ecase{A}{B}{C}}{D}{E}\\
\stepsto &\ecase{A}{\edcase{B}{D}{E}}{\edcase{C}{D}{E}}
\end{aligned}}}
{}
\end{calculus}
\end{calculuscollections}
\caption{Operational semantics: commuting conversion rules (contd.)}
\end{figure}

\begin{figure}[h]
\begin{calculuscollections}{\textwidth}
\begin{calculus}
\cinferenceRule[bunrollC|unrollC]{}
{\ebunroll{\ecase{A}{B}{C}} \stepsto \ecase{A}{\ebunroll{B}}{\ebunroll{C}}}
{}
\cinferenceRule[repC|repC]{}
{
\m{
  \begin{aligned}
&\erep{\ecase{A}{B}{C}}{N}{\pvx:J}{O} \\
\stepsto & \ecase{A}{(\erep{B}{N}{\pvx:J}{O})}{(\erep{C}{N}{\pvx:J}{O})}
  \end{aligned}
}}
{}
\cinferenceRule[forC|{forC}]{}
{
\m{\begin{aligned}
 &\efor{\ecase{A}{B}{C}}{N}{O}\\
\stepsto & \ecase{A}{\efor{B}{N}{O}}{\efor{C}{N}{O}}
\end{aligned}}}
{}
\cinferenceRule[fpC|FPC]{}
{
\m{\begin{aligned}
 &\efp{\ecase{A}{B}{C}}{D}{E}\\
\stepsto &\ecase{A}{\efp{B}{D}{E}}{\efp{C}{D}{E}}
\end{aligned}}
}
{}

\end{calculus}
\end{calculuscollections}
\caption{Operational semantics: commuting conversion rules (contd.)}
\end{figure}

\begin{figure}[h]
\begin{calculuscollections}{\textwidth}
\begin{calculus}
\cinferenceRule[dseqC|{$\iota$C}]{}
{\edseq{\ecase{A}{B}{C}} \stepsto \ecase{A}{\edseq{B}}{\edseq{C}}}
{}
\cinferenceRule[bseqC|{$\iota$C}]{}
{\ebseq{\ecase{A}{B}{C}} \stepsto \ecase{A}{\ebseq{B}}{\ebseq{C}}}
{}
\cinferenceRule[dswapC|yieldC]{}
{\edswap{\ecase{A}{B}{C}} \stepsto \ecase{A}{\edswap{B}}{\edswap{C}}}
{}
\cinferenceRule[bswapC|yieldC]{}
{\ebswap{\ecase{A}{B}{C}} \stepsto \ecase{A}{\ebswap{B}}{\ebswap{C}}}
{}
\cinferenceRule[appCL|app1C]{}
{\eapp{\ecase{A}{B}{C}}{N} \stepsto \ecase{A}{\eapp{B}{N}}{\eapp{C}{N}}}
{}
\cinferenceRule[appCR|app2C]{}
{\eapp{M}{\ecase{A}{B}{C}} \stepsto \ecase{A}{\eapp{M}{B}}{\eapp{M}{C}}}
{}
\cinferenceRule[brandomC|appC]{}
{\eapp{\ecase{A}{B}{C}}{f} \stepsto \ecase{A}{\eapp{B}{f}}{\eapp{C}{f}}}
{}
\cinferenceRule[monC|{$\circ$}C]{}
{\emon{\ecase{A}{B}{C}}{N}{\pvx} \stepsto \ecase{A}{(\emon{B}{N}{\pvx})}{(\emon{C}{N}{\pvx})}}
{}
\cinferenceRule[tconsC|{tconsC}]{}
{
\m{\begin{aligned}
&\etcons{f}{\ecase{A}{B}{C}} \\
\stepsto &\ecase{A}{\etcons{f}{B}}{\etcons{f}{C}}
\end{aligned}}}
{}
\cinferenceRule[unpackC|{${:*}$}C]{}
{
\m{\begin{aligned}
&\eunpack{\ecase{A}{B}{C}}{N} \\
\stepsto &\ecase{A}{\eunpack{B}{N}}{\eunpack{C}{N}}
\end{aligned}}
}
{}
\end{calculus}
\end{calculuscollections}
\caption{Operational semantics: commuting conversion rules (contd.)}
\end{figure}

\begin{figure}
\centering
\begin{calculuscollections}{\textwidth}
\begin{calculus}
\cinferenceRule[projLS|{$\pi_1$}S]{}{\linferenceRule[formula]
{M \stepsto M'}
{\eProjL{M} \stepsto \eProjL{M'}}}{}
\cinferenceRule[projRS|{$\pi_2$}S]{}{\linferenceRule[formula]
{M \stepsto M'}
{\eProjR{M} \stepsto \eProjR{M'}}}{}
\cinferenceRule[repS|repS]{}{\linferenceRule[formula]
{M \stepsto M'}
{(\erep{M}{N}{\pvx:J}{O}) \stepsto (\erep{M'}{N}{\pvx:J}{O})}}{}
\cinferenceRule[bunrollS|{unroll}]{}{\linferenceRule[formula]
{M \stepsto M'}
{\ebunroll{M} \stepsto \ebunroll{M'}}}{}
\cinferenceRule[brandomS|{[{:}{*}]}{S}]{}{\linferenceRule[formula]
{M \stepsto M'}
{\eapp{M}{f} \stepsto \eapp{M'}{f}}}{}
\cinferenceRule[appSL|appS1]{}{\linferenceRule[formula] 
{M \stepsto M'}
{\eapp{M}{N} \stepsto \eapp{M'}{N}}}{}
\cinferenceRule[bseqS|{$[\iota]$}S]{}{\linferenceRule[formula]
{M \stepsto M'}
{\eSeq{M} \stepsto \eSeq{M'}}}{}
\cinferenceRule[dseqS|{$\langle\iota\rangle$}S]{}{\linferenceRule[formula]
{M \stepsto M'}
{\eSeq{M} \stepsto \eSeq{M'}}}{}
\end{calculus}
\begin{calculus}
\cinferenceRule[monS|{$\circ$}S]{}{\linferenceRule[formula]
{M \stepsto M'}
{\emon{M}{N}{\pvx} \stepsto \emon{M'}{N}{\pvx}}}{}
\cinferenceRule[injLS|{$\einjL{}$}S]{}{\linferenceRule[formula]
{M \stepsto M'}
{\eInjL{M} \stepsto \eInjL{M'}}}{}
\cinferenceRule[injRS|{$\einjR{}$}S]{}{\linferenceRule[formula]
{M \stepsto M'}
{\eInjR{M} \stepsto \eInjR{M'}}}{}
\cinferenceRule[bconsSL|{$[\wedge$]}S1]{}{\linferenceRule[formula]
{M \stepsto M'}
{\ebcons{M}{N} \stepsto \ebcons{M'}{N}}}{}
\cinferenceRule[dconsSL|{$\langle\wedge\rangle$}S1]{}{\linferenceRule[formula]
{M \stepsto M'}
{\edcons{M}{N} \stepsto \edcons{M'}{N}}}{}
\cinferenceRule[bconsSR|{$[\wedge]$}S2]{}{\linferenceRule[formula]
{\isnorm{M} & N \stepsto N'}
{\ebcons{M}{N} \stepsto \ebcons{M}{N'}}}{}
\cinferenceRule[dconsSR|{$\langle\wedge\rangle$}S2]{}{\linferenceRule[formula]
{\isnorm{M} & N \stepsto N'}
{\edcons{M}{N} \stepsto \edcons{M}{N'}}}{}
\cinferenceRule[appSR|{app{S}2}]{}{\linferenceRule[formula]%
{\isnorm{M} & N \stepsto N'}
{\eapp{M}{N} \stepsto \eapp{M}{N'}}}{}
\cinferenceRule[bswapS|{$[\pdual{}]$}S]{}{\linferenceRule[formula]
{M \stepsto M'}
{\ebswap{M} \stepsto \ebswap{M'}}}{}
\cinferenceRule[dswapS|{$\langle\pdual{}\rangle$}S]{}{\linferenceRule[formula]
{M \stepsto M'}
{\edswap{M} \stepsto \edswap{M'}}}{}
\end{calculus}

\begin{calculus}
\cinferenceRule[forS|forS]{}{\linferenceRule[formula]
{A \stepsto A'}	
{\efor{A}{B}{C} \stepsto \efor{A}{B}{C}}}{}
\cinferenceRule[fpS|FPS]{}{\linferenceRule[formula]
{A \stepsto A'}	
{\efp{A}{B}{C} \stepsto \efp{A'}{B}{C}}}{}
\cinferenceRule[caseS|caseS]{}{\linferenceRule[formula]
{A \stepsto A'}
{\eCase{A}{B}{C} \stepsto \eCase{A'}{B}{C}}}{}
\cinferenceRule[unpackS|{${:*}$}S]{}{\linferenceRule[formula]
{M \stepsto M'}
{\eunpack{M}{N} \stepsto \eunpack{M'}{N}}}{}
\end{calculus}
\end{calculuscollections}
\caption{Operational semantics: structural rules}
\label{fig:op-structural}
\end{figure}

\newpage~\newpage~\newpage~\newpage~\newpage

\newpage
\section{Example Proofs}
\label{app:example-proofs}
Having developed a proof calculus, we are now equipped to verify the example games of \rref{sec:example-games}.
We give proof terms for each player's winning region in Nim in \rref{fig:example-proof-term-nim}, likewise for cake-cutting in \rref{fig:example-proof-term-cake}.
While we expect to provide additional tool support for practical proving, the natural deduction proofs in this case are already surprisingly simple.
Moreover, the examples serve to demonstrate the typical usage of each construct and to confirm that the rules are applicable to typical practical examples.
By convention, in the examples, we assume the proof variable for the fact introduced by assigning to a variable named $x$ is itself also named $x$.

\begin{figure}
\centering
\begin{align*}
\textsf{Mid} &\equiv c>0 \land \emod{c}{4} \in \{0,2,3\}\ \ \met\equiv \ediv{c}{4}\ \ \varphi \equiv (c > 0 \land \emod{c}{4} \in \{0,2,3\})\\
\textsf{aNim} &\equiv
\elam{\nzvar}{c>0}{\elam{\modvar_0}{(\emod{c}{4} \in \{1,2,3\})}{}}\\
&\spc\eforHead{\convvar}{\edcons{\nzvar}{\modvar_0}}{\modvar}{
      \pstep
   }\{\nim\}\\
\pstep &\equiv \edOpenseq \edOpenseq \ecaseHead{\elem{(\emod{c}{4}=3)}{}}\\
    &\spc\spc\ecaseLeft{\ell_1}{\edinjR{\edinjL{\iedasgn{c}{\eQE{\textsf{Mid}}{\convvar,\ell_1,c}}}}}\\
    &\spc\ecaseRight{r_1}{}\ecaseHead{\elem{(\emod{c}{4}=2)}{}}\\
    &\spc\spc\spc\ecaseLeft{\ell_2}{\edinjL{\iedasgn{c}{\eQE{\textsf{Mid}}{\convvar,r_1,\ell_2,c}}}}\\
    &\spc\spc\ecaseRight{r_2}{\edinjR{\edinjR{\iedasgn{c}{\eQE{\textsf{Mid}}{\convvar,r_1,r_2,c}}}}}\ecaseEnd\ecaseEnd\\
 &\emonInfix{\monvar:\textsf{Mid}}\\
   &\spc\edOpencons\eQE{c>0}{\monvar}  \edSepcons \edOpenswap\ebOpenseq \\
   &\spc\,\,\ebOpencons \iebasgn{c}{(\elam{\testvar}{c > 0}{\eQE{\conv\land\met_0 \metgr \met}{\monvar,c}})}\\
   &\spc\ebSepcons\ebOpencons\iebasgn{c}{(\elam{\testvar}{c > 0}{\eQE{\conv\land\met_0 \metgr \met}{\monvar,c}})}\\
   &\spc\ebSepcons\iebasgn{c}{(\elam{\testvar}{c > 0}{\eQE{\conv\land\met_0 \metgr \met}{\monvar,c}})}
    \ebClosecons\ebClosecons\ebCloseseq\edCloseswap\edClosecons\edCloseseq\edCloseseq
\end{align*}
\begin{align*}
\textsf{Mid} &\equiv c \geq 0 \land (\emod{c}{4}=2 \lor \emod{c}{4}=3 \lor \emod{c}{4}=0)\\
\met &\equiv \ediv{c}{4}\ \ \ \ \varphi \equiv (c > 0 \land \emod{c}{4} = 1)\\
\textsf{dNim} &\equiv  \elam{\nzvar}{c>0}{\elam{\modvar_0}{(\emod{c}{4}=1)}{}}\\
              &\spc \erep{\edcons{\nzvar}{\modvar}}{\pstep}{\convvar:\varphi}{\convvar}\\
\pstep &\equiv  \ebOpenseq  (\ebOpenseq \\
&\spc\,\,\ebOpencons\iebasgn{c}{(\elam{\testvar}{c > 0}{\eQE{\textsf{Mid}}{\convvar,c,\testvar}})}\\
&\spc\ebSepcons\ebOpencons\iebasgn{c}{(\elam{\testvar}{c > 0}{\eQE{\textsf{Mid}}{\convvar,c,\testvar}})}\\
&\spc\ebSepcons\iebasgn{c}{(\elam{\testvar}{c > 0}{\eQE{\textsf{Mid}}{\convvar,c,\testvar}})}\ebClosecons\ebClosecons)\\
&\emonInfix{\monvar:\textsf{Mid}}\\
&\spc\edOpenswap\ecaseHead{\monvar}\\
&\spc\spc\ecaseLeft{\ell_1}{\edinjL{\iedasgn{c}{\edcons{\eQE{c>0}{\monvar,\ell_1,c}}{\eQE{\conv\land\met<\met_0}{\monvar,\ell_1,c}}}}}\\
&\spc\ecaseRight{r_1}{}\ecaseHead{r_1}\\
&\spc\spc\spc\ecaseLeft{\ell_2}{\edinjR{\edinjL{\iedasgn{c}{\edcons{\eQE{c>0}{\monvar,r_1,\ell_2,c}}{\eQE{\conv\land\met_0 \metgr \met}{\monvar,r_1,\ell_2,c}}}}}}\\
&\spc\spc\ecaseRight{r_2}{\edinjR{\edinjR{\iedasgn{c}{\edcons{\eQE{c>0}{\monvar,r_1,r_2,c}}{\eQE{\conv\land\met_0 \metgr \met}{\monvar,r_1,r_2,c}}}}}}\ecaseEnd\ecaseEnd
\edCloseswap
\end{align*}
\caption{Proof terms for Nim example}
\label{fig:example-proof-term-nim}
\end{figure}
\begin{figure}
\begin{align*}
\textsf{Mid} &\equiv x = 0.5 \land y = 0.5\\
\textsf{aCake} &\equiv\spc\ietcons{x}{0.5}{\edcons{\eQE{0 < x < 1}{}}{\iedasgn{y}{\edcons{\eQE{\textsf{Mid}}{x,y}}{\eQE{\textsf{Mid}}{x,y}}}}}\\
&\emonInfix{\monvar:\textsf{Mid}}\\
&\spc\edOpenswap\ebOpencons\ebswap{\edseq{\iedasgn{a}{\iedasgn{d}{\eQE{a\geq 0.5}{a,x}}}}}\\
&\spc\spc\ebSepcons\ebswap{\edseq{\iedasgn{a}{\iedasgn{d}{\eQE{a\geq 0.5}{a,x}}}}}\ebClosecons\edCloseswap\\
\textsf{dCake} &\equiv
\elam{x}{\allrat}{}
\elam{\rangevar}{(0\leq x \leq 1)}{}\\
&\spc(\iebasgn{y}{\eQE{x + y = 1}{\rangevar,y}})\\
&\emonInfix{\monvar:(x + y = 1)} \ebOpenswap\\
&\spc\ecaseHead{(\esplit{x}{0.5}{})}\\
&\spc\spc\ecaseLeft{\ell}{\einjL{\edswap{\iebasgn{a}{\iebasgn{d}{\eQE{d \geq 0.5}{\ell,d,m,\rangevar}}}}}}\\
&\spc\spc\ecaseRight{r}{\einjR{\edswap{\iebasgn{a}{\iebasgn{d}{\eQE{d \geq 0.5}{r,d}}}}}}\ecaseEnd\ebCloseswap
\end{align*}
\caption{Proof terms for cake-cutting example}
\label{fig:example-proof-term-cake}
\end{figure}

\textsf{aNim} and \textsf{dNim} both give examples of typical convergence proofs, with a subtle difference.
In \textsf{aNim}, the invariant is $\emod{c}{4} \neq 0$, i.e., $(\emod{c}{4} = 0) \limply \bff,$ so the invariant does not tell Angel which value $\emod{c}{4}$ assumes, only that it would be a contradiction to take value 0.
Thus Angel must inspect the state using the law of excluded middle (which is constructive for equality tests on natural numbers).
For each case Angel picks the branch that ensures $\emod{c}{4}=0$. 
The uses of \irref{mon} reduces the number of cases from 9 total to 3 for Angel and 3 for demon: Under the assumption (named $\monvar$) that $c > 0 \land \emod{c}{4}=0,$ then all possible Demon choices restore the invariant, which we prove by \irref{QE} (automation, in practice).
The \textsf{dNim} shows a similar strategy where the dormant Angel mirrors Demon's choices, but for the sake of demonstrating both techniques, the monotonicity formula is a constructive disjunction, ie.., Demon explicitly tells Angel which branch they took, which Angel then mirrors.

In \textsc{aCake}, Angel splits the cake evenly, which is their optimal strategy.
Even assuming Demon is entitled to nonconstructive strategies, whatever slice they pick must be one of the two, both of which have size 0.5.
In \textsc{dCake}, Demon splits the cake according to an arbitrary strategy, then Angel inspects the value of $x$ to decide which slice to take.
If $x \geq 0.5$ then Angel takes it, else Angel takes $y$.
Angel has an optimal strategy because rationals can be compared exactly.
If we operated over constructive reals, Angel would not have exact optimality because reals could not be compared exactly.
They would, however, be allowed to get arbitrarily close to optimal.
\newpage~\newpage

\newpage
\section{Theory Proofs}
\label{app:proofs}
We give proofs regarding semantics, soundness, and progress and preservation.

\subsection{Static semantics}
We repeat the definitions of free variable $\freevars{e}$, bound variable $\boundvars{\alpha}$, and must-bound variable $\mustboundvars{\alpha}$ computations which will be used throughout the proofs.
Free variables are those which might influence the expression.
Bound variables are those which could be modified during a game, and must-bound variables are those which are modified on every path of a game.
The definitions are standard~\cite{DBLP:conf/cade/Platzer18}.

\begin{align*}
  \freevars{f \sim g} &= \freevars{f} \sim \freevars{g}\\
  \freevars{\ddiamond{\alpha}{\phi}} &= \freevars{\alpha} \cup (\freevars{\phi} \setminus \mustboundvars{\alpha})\\
  \freevars{\dbox{\alpha}{\phi}} &= \freevars{\alpha} \cup (\freevars{\phi} \setminus \mustboundvars{\alpha})\\\hline
  \freevars{\ptest{\phi}} &= \freevars{\phi}\\
  \freevars{\humod{x}{f}} &= \freevars{f}\\
  \freevars{\prandom{x}} &= \emptyset\\
  \freevars{\alpha;\beta} &= \freevars{\alpha} \cup (\freevars{\beta} \setminus \mustboundvars{\alpha})\\
  \freevars{\alpha\cup\beta} &= \freevars{\alpha} \cup \freevars{\beta}\\
  \freevars{\prepeat{\alpha}} &= \freevars{\alpha}\\
  \freevars{\pdual{\alpha}} &= \freevars{\alpha}
\end{align*}

\begin{align*}
  \boundvars{\ptest{\phi}} &= \emptyset\\
  \boundvars{\humod{x}{f}} &= \{x\}\\
  \boundvars{\prandom{x}} &= \{x\}\\
  \boundvars{\alpha;\beta} &= \boundvars{\alpha} \cup \boundvars{\beta}\\
  \boundvars{\alpha\cup\beta} &= \boundvars{\alpha} \cup \boundvars{\beta}\\
  \boundvars{\prepeat{\alpha}} &= \boundvars{\alpha}\\
  \boundvars{\pdual{\alpha}} &= \boundvars{\alpha}
\end{align*}

\begin{align*}
  \mustboundvars{\ptest{\phi}} &= \emptyset\\
  \mustboundvars{\humod{x}{f}} &= \{x\}\\
  \mustboundvars{\prandom{x}} &= \{x\}\\
  \mustboundvars{\alpha;\beta} &= \mustboundvars{\alpha} \cup \mustboundvars{\beta}\\
  \mustboundvars{\alpha\cup\beta} &= \mustboundvars{\alpha} \cap \mustboundvars{\beta}\\
  \mustboundvars{\prepeat{\alpha}} &= \emptyset\\
  \mustboundvars{\pdual{\alpha}} &= \mustboundvars{\alpha}
\end{align*}

\subsection{Repetition}
In classical \GL, the challenge in defining the semantics of repetition games $\prepeat{\alpha}$ is that the number of iterations, while finite, can depend on both players' actions and is thus not known in advance, while the \DL-like semantics of $\prepeat{\alpha}$ as the finite reflexive, transitive closure of $\alpha$ gives an advance-notice semantics.
Classical \GL provides the no-advance-notice semantics as a fixed point~\cite{DBLP:conf/focs/Parikh83}, and we adopt the fixed point semantics as well.
Surprisingly, the fixed point semantics and inflationary semantics agree with each other for \CGL because Angel must fix a computable strategy in advance.
This results in a semantics with only Demonic nondeterminism, reminiscent of Kripke semantics for \DL.
\begin{definition}[Inflationary \CGL semantics]
We define the $k$-step Angelic and Demonic inflations recursively for $k \in \mathbb{N},$ i.e.,
\begin{align*}
\istrat[\alpha]{X}{0}  &= X & \istrat[\alpha]{X}{k+1} &= \strategyforR[\alpha]{\apR{(\istrat[\alpha]{X}{k})}}\\
\idstrat[\alpha]{X}{0} &= X & \idstrat[\alpha]{X}{k+1} &= \dstrategyforR[\alpha]{\apR{(\idstrat[\alpha]{X}{k})}}
\end{align*}
\end{definition}
\begin{remark}[Pre-inflation and post-inflation]
Pre-inflation and post-inflation agree in the following sense:
\begin{align*}
\istrat[\alpha]{(\strategyforR[\alpha]{\apR{X}})}{k} &= \strategyforR[\alpha]{\apR{(\istrat[\alpha]{X}{k})}}
&
\idstrat[\alpha]{(\dstrategyforR[\alpha]{\dpR{X}})}{k} &= \dstrategyforR[\alpha]{\dpR{(\istrat[\alpha]{X}{k})}}
\end{align*}
\label{rem:app-inflate}
\end{remark}
\begin{proof}
Each claim is by induction on $k$.

Angel, case 0:
$\istrat[\alpha]{(\strategyforR[\alpha]{\apR{X}})}{0}
= \strategyforR[\alpha]{\apR{X}}
= \strategyforR[\alpha]{(\apR{(\istrat[\alpha]{X}{0})})}$.

Angel, case $k+1$:\\
Assume IH:  $\istrat[\alpha]{(\strategyforR[\alpha]{\apR{X}})}{k} = \strategyforR[\alpha]{(\apR{(\istrat[\alpha]{X}{k})})}$\\
Show: $\istrat[\alpha]{(\strategyforR[\alpha]{\apR{X}})}{k+1}
=      \istrat[\alpha]{(\strategyforR[\alpha]{\apR{(\strategyforR[\alpha]{\apR{X}})}})}{k}
=_{IH}  \strategyforR[\alpha]{\apR{(\istrat[\alpha]{\apR{(\strategyforR[\alpha]{X})}}{k})}}
= \strategyforR[\alpha]{\apR{(\istrat[\alpha]{X}{k+1})}}$
as desired, completing the induction for Angel.

Demon, case 0:
$\idstrat[\alpha]{(\dstrategyforR[\alpha]{\dpR{X}})}{0}
= \dstrategyforR[\alpha]{\dpR{X}}
= \dstrategyforR[\alpha]{\dpR{(\idstrat[\alpha]{X}{0})}}$.

Demon, case $k+1$:\\
Assume IH:  $\idstrat[\alpha]{(\dstrategyforR[\alpha]{\dpR{X}})}{k} = \dstrategyforR[\alpha]{(\dpR{(\idstrat[\alpha]{X}{k})})}$\\
Show: $\idstrat[\alpha]{(\dstrategyforR[\alpha]{\dpR{X}})}{k+1}
=        \idstrat[\alpha]{(\dstrategyforR[\alpha]{(\dpR{(\dstrategyforR[\alpha]{\dpR{X}})})})}{k}
=_{IH}   \dstrategyforR[\alpha]{\dpR{(\idstrat[\alpha]{\dpR{(\dstrategyforR[\alpha]{X})}}{k})}}
= \dstrategyforR[\alpha]{\dpR{(\idstrat[\alpha]{X}{k+1})}}$
as desired, completing the induction for Demon.
\end{proof}
\begin{lemma}[Alternative semantics]
The \CGL fixed-point definition of repetition agrees with the inflationary definition:
\begin{align*}
\strategyforR[\prepeat{\alpha}]{X}   &= \bigcup_{k\in\mathbb{N}} \apL{(\istrat[\alpha]{X}{k})}
&
\dstrategyforR[\prepeat{\alpha}]{X} &= \bigcup_{k\in\mathbb{N}} \dpL{(\idstrat[\alpha]{X}{k})}
\end{align*}
\label{lem:app-altern}
\end{lemma}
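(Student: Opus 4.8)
The plan is to read both displayed identities as instances of the Knaster--Tarski fixpoint theorem and then compute the least fixpoint explicitly by iteration. I will carry out the Angelic equation in detail; the Demonic identity is entirely parallel, replacing $\strategyforR[\alpha]{\cdot}$ by $\dstrategyforR[\alpha]{\cdot}$, the Angelic projections $\apL{\cdot},\apR{\cdot}$ by the Demonic ones $\dpL{\cdot},\dpR{\cdot}$, and $\sff$ by $\stt$ in the test case. The pseudo-states $\sff,\stt$ propagate monotonically through $\strategyforR[\alpha]{\cdot}$ and through all four projections in accordance with the stated extension equations, so the whole computation takes place in the complete lattice of subsets of $\allRz\times\allstate$ enlarged by $\{\sff,\stt\}$; I will suppress this routine bookkeeping and, where convenient, assume $\sff,\stt\notin X$.

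First I would record (or establish, by a straightforward simultaneous induction on the game) two facts. (i) \emph{Monotonicity}: $\strategyforR[\alpha]{\cdot}$ and $\dstrategyforR[\alpha]{\cdot}$ are monotone in their region argument; in the repetition case $\prepeat\beta$ one observes that enlarging $X$ enlarges the operator $Z\mapsto X\cup\strategyforR[\beta]{\apR{Z}}$ pointwise, hence can only shrink the set of its pre-fixed points and so can only enlarge their least element. (ii) \emph{One-sided distributivity over unions}: for every family $\{Y_i\}_{i\in I}$ of regions, $\strategyforR[\alpha]{\bigcup_i Y_i}\subseteq\bigcup_i\strategyforR[\alpha]{Y_i}$, and likewise for $\dstrategyforR[\alpha]{\cdot}$. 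Every case except tests is in fact an equality and follows structurally; the test case is the one place the inclusion is proper, because $\sff\in\strategyforR[\ptest\phi]{\bigcup_i Y_i}$ demands that \emph{every} $Y_i$ fail $\phi$ outright, which nonetheless still implies $\sff\in\strategyforR[\ptest\phi]{Y_i}$ for each $i$, so the $\subseteq$ direction survives. The four projections distribute over arbitrary unions outright, being defined pointwise on possibilities.

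Now fix $X$ and put $F(Z)=X\cup\strategyforR[\alpha]{\apR{Z}}$, monotone by (i). By Knaster--Tarski, $F$ has a least pre-fixed point $Z^{\ast}=\bigcap\{Z\mid F(Z)\subseteq Z\}$, and $Z^{\ast}$ is itself a pre-fixed point of $F$. Since $\apL{\cdot}$ is monotone, $\apL{Z^{\ast}}$ lies inside every member of $\{\apL{Z}\mid F(Z)\subseteq Z\}$, while $Z^{\ast}$ is itself one of these $Z$, so $\strategyforR[\prepeat\alpha]{X}=\bigcap\{\apL{Z}\mid F(Z)\subseteq Z\}=\apL{Z^{\ast}}$ (this step does not need $\apL{\cdot}$ to preserve intersections). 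It remains to prove $Z^{\ast}=\bigcup_{k\in\mathbb{N}}\istrat[\alpha]{X}{k}$, after which preservation of unions by $\apL{\cdot}$ gives $\strategyforR[\prepeat\alpha]{X}=\apL{Z^{\ast}}=\bigcup_k\apL{\istrat[\alpha]{X}{k}}$ as wanted. For $\bigcup_k\istrat[\alpha]{X}{k}\subseteq Z^{\ast}$ one shows $\istrat[\alpha]{X}{k}\subseteq Z^{\ast}$ by induction on $k$: $\istrat[\alpha]{X}{0}=X\subseteq F(Z^{\ast})=Z^{\ast}$, and $\istrat[\alpha]{X}{k+1}=\strategyforR[\alpha]{\apR{\istrat[\alpha]{X}{k}}}\subseteq\strategyforR[\alpha]{\apR{Z^{\ast}}}\subseteq F(Z^{\ast})=Z^{\ast}$ using (i) and the hypothesis. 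For the reverse inclusion it suffices, by minimality of $Z^{\ast}$, to check that $W:=\bigcup_k\istrat[\alpha]{X}{k}$ is a pre-fixed point of $F$; and indeed $X=\istrat[\alpha]{X}{0}\subseteq W$, while, using pointwise distribution of $\apR{\cdot}$ and then (ii),
\begin{align*}
\strategyforR[\alpha]{\apR{W}}
&=\strategyforR[\alpha]{\bigcup\nolimits_k\apR{\istrat[\alpha]{X}{k}}}
 \subseteq\bigcup\nolimits_k\strategyforR[\alpha]{\apR{\istrat[\alpha]{X}{k}}}\\
&=\bigcup\nolimits_k\istrat[\alpha]{X}{k+1}\subseteq W,
\end{align*}
so $F(W)\subseteq W$, as required.

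I expect the main obstacle to be fact (ii): the naive hope that $\strategyforR[\alpha]{\cdot}$ commutes with unions fails precisely at test games (and dually at $\stt$ in $\dstrategyforR[\ptest\phi]{\cdot}$), so one must isolate the weaker, one-sided inclusion that does hold and verify that it is exactly the direction consumed in the pre-fixed-point check above. Once (i) and (ii) are in place, the Knaster--Tarski packaging, the iteration, and the pseudo-state bookkeeping are all routine, and the Demonic case needs only the obvious substitutions.
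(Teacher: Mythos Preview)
Your approach is sound and actually more complete than the paper's own proof, which writes out only the $L\subseteq R$ direction (rather sketchily, via the Monotonicity lemma and the pre-/post-inflation remark) and never returns for $R\subseteq L$. Your route is genuinely different: you first repackage $\strategyforR[\prepeat\alpha]{X}$ via Knaster--Tarski as $\apL{Z^\ast}$ for the least pre-fixed point $Z^\ast$ of $F$, then compute $Z^\ast$ explicitly as the union of iterates, using the one-sided distributivity (ii) to verify that this union is itself a pre-fixed point. The paper does not isolate (ii); instead it asserts, later in the appendix, the two-sided Partition lemma $\strategyforR[\alpha]{X}=\bigcup_{(\myaa,\om)\in X}\strategyforR[\alpha]{\{(\myaa,\om)\}}$, which is strictly stronger than what you need and whose equality direction is in fact delicate at Angelic tests for exactly the reason you identify. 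One small correction: contrary to your parenthetical, the Demonic test case distributes fully over unions---inspect the semantics: $\stt$ is contributed existentially, per element of $X$, not by a universal condition as $\sff$ is on the Angelic side---so only the Angelic test genuinely forces the one-sided weakening. This asymmetry is harmless for your argument but worth getting right.
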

\begin{proof}
We show the set equality first for Angel by showing both directions of inclusion, then likewise for Demon.

For Angel, let $L = \strategyforR[\prepeat{\alpha}]{X} = \bigcap\{\apL{Z}~|~X \cup \strategyforR[\alpha]{\apR{Z}} \subseteq Z\}$.
Let $R = \bigcup_{k\in\mathbb{N}} \apL{\istrat[\alpha]{X}{k}}$.
First show $L \subseteq R$.
Because $L$ was defined inductively, it suffices to show any show two cases:
Case 1: $X \subseteq L$.
Then $\apL{X} = \apL{\istrat[\alpha]{X}{0}} \subseteq R$.

Case 2: $\strategyforR[\alpha]{\apR{Z}} \subseteq Z$.
By IH, assume some $\apL{Z} \subseteq R$.
Let $Z^i = Z \cap \istrat[\alpha]{X}{k},$ then $Z = \bigcup_{i \in \mathbb{N}}{Z^i}$ and suffices to show (since $\apL{\cdot}$ is monotone)
that $\strategyforR[\alpha]{\apR{Z^i}} \subseteq R$.
This holds because  (by \rref{lem:app-mono} first, then \rref{rem:app-inflate})
$\strategyforR[\alpha]{\apR{Z^i}} \subseteq \strategyforR[\alpha]{\apR{\istrat[\alpha]{X}{k}}} = \istrat[\alpha]{X}{1+k} = Z_{i+i}$.
Then since $\bigcup_{k\in\mathbb{N}}{Z^{i+1}} = \bigcup_{k\in\mathbb{N}}{Z^i} = Z$ it follows that each $\strategyforR[\alpha]{\apR{Z^i}} \subseteq R,$ as desired.
\end{proof}

It may be surprising at first that \rref{rem:app-inflate} holds because unlike in DL,
the formula  $\ddiamond{\alpha}{\ddiamond{\prepeat{\alpha}}{\phi}}$ need not hold when the commuted formula
$\ddiamond{\prepeat{\alpha}}{\ddiamond{\alpha}{\phi}}$ does.
However, \rref{rem:app-inflate} does not refer to two different games, but to two characterizations of the executions taken through a game $\prepeat{\alpha}$ by the same realizers given in $X$.
Likewise, \label{lem:app-inflate} may be surprising because the game $\alpha$ is also disequal to the game $\bigcup_{k\in\mathbb{N}}{\alpha^k}$ and because fixed point constructions of winning regions in classical \GL can take infinitely many repetitions to converge.
However, \label{lem:app-inflate} only claims something simpler: once the strategies have been fixed (by $X$), it suffices to consider finite executions according to those strategies, which is equally true in classical \GL and \CGL.

We differ from the classical definition in that we require the termination condition to be computable ($\rzFst{\rzApp{\myaa}{\ab}}$), which gives a constructive interpretation as shown by the elimination rule (\irref{dloopE} in \rref{sec:proof-calculus}).
We postulate that the closure ordinal of \CGL is exactly the Church-Kleene ordinal $\churchkleene,$ which is only known to be a lower bound for classical GL's~\cite{DBLP:journals/tocl/Platzer15}.
The Angelic choice whether to stop ($\apL{Z}$) or iterate the loop ($\apR{Z}$) is analogous to the case for $\alpha \cup \beta$.

\subsection{Soundness}
Let $\eren{e}{x}{y}$ denote the  renaming of $x$ and $y$ in $e,$ (where $e$ is term $f$ formula $\phi$, game $\alpha$, or realizer $\myaa$).
\begin{lemma}[Renaming is Self-dual]
\label{lem:app-ren-selfdual}
Renaming the same variables twice cancels, because renaming is by transposition.
\begin{itemize}
\item $\eren{{\eren{f}{x}{y}}}{x}{y} = f$
\item $\eren{{\eren{\myaa}{x}{y}}}{x}{y} = \myaa$
\item $\eren{{\eren{\phi}{x}{y}}}{x}{y} = \phi$
\item $\eren{{\eren{\alpha}{x}{y}}}{x}{y} = \alpha$
\item $\sren{{\sren{\om}{x}{y}}}{x}{y} = \om$
\end{itemize}
\end{lemma}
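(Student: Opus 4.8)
The plan is to establish all five equations together by a single structural induction, exploiting the fact that $\eren{\cdot}{x}{y}$ is by definition the \emph{transposition} that swaps the variables $x$ and $y$ wherever they occur. The conceptual core is the state case, which needs no induction: by definition $\sren{\om}{x}{y}$ is the state with $(\sren{\om}{x}{y})(x) = \om(y)$, $(\sren{\om}{x}{y})(y) = \om(x)$, and $(\sren{\om}{x}{y})(z) = \om(z)$ for $z \notin \{x,y\}$. Applying this operation a second time swaps the two values back, so $\sren{\sren{\om}{x}{y}}{x}{y} = \om$ by function extensionality; this is just the observation that the transposition $(x\ y)$ is an involution on the set of program variables.

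Next I would prove the syntactic cases — for terms $f$, formulas $\phi$, games $\alpha$, and realizers $\myaa$ — by simultaneous induction on the structure of the expression. As noted in the proof of \rref{lem:renaming}, renaming distributes as a homomorphism over every term, formula, game, and realizer constructor, and it acts on a program-variable leaf $z$ by $z \mapsto (x\ y)(z)$, which satisfies $(x\ y)((x\ y)(z)) = z$. Rational literals $q$ and the unit realizer $\rzNil$ are outright invariant under renaming. Hence in each inductive case the double renaming pushes inside the constructor, both copies cancel by the inductive hypotheses (or by the variable-leaf computation in the base case), and the constructor is reassembled unchanged, yielding the original expression.

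The only mildly delicate point is the bookkeeping for constructs that bind program variables, e.g.\ $\prandom{x}$, $\humod{x}{f}$, and the corresponding realizer and quantifier forms, where the renaming also acts on the binder. But since \rref{lem:renaming} provides \emph{uniform} renaming that applies even when neither variable is fresh, there is no $\alpha$-conversion side condition to track: the binder is merely another occurrence of a program variable, and the transposition cancels there exactly as it does elsewhere. Consequently the induction goes through with no genuine obstacle; the ``hard part'' is only spelling out the homomorphism clauses for each constructor, which is routine and which I would elide.
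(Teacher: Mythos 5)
Your proposal is correct and matches the paper's own (very terse) argument: the state case is handled directly by observing the transposition is an involution, and the term, realizer, formula, and game cases follow by structural induction using the homomorphic action of uniform renaming, with no $\alpha$-conversion side conditions to worry about. The only cosmetic difference is that you fold everything into one simultaneous induction while the paper separates the realizer induction from the mutual formula/game induction, which changes nothing of substance.
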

The first claim holds for any reasonable representation of computable functions, and can be proved by induction on the example term constructors.
The next three claims are by induction, where the induction on formulas and games is simultaneous.
The last claim is direct.

\begin{lemma}[Renaming]
\label{lem:app-tren}
Renaming commutes with the interpretation functions.
\begin{itemize}
\item $\tint{\eren{f}{x}{y}}{\om}  = \tint{f}{\sren{\om}{x}{y}}$
\item $\fint{\eren{\phi}{x}{y}}{}     = \{ (\sren{\myaa}{x}{y},\sren{\om}{x}{y})~|~(\myaa,\om) \in\fintR{\phi}\}$
\item $\strategyforR[\eren{\alpha}{x}{y}]{\sren{X}{x}{y}} = \eren{(\strategyforR[\alpha]{X})}{x}{y}$
\item $\dstrategyforR[\eren{\alpha}{x}{y}]{\sren{X}{x}{y}} = \eren{(\dstrategyforR[\alpha]{X})}{x}{y}$
\item $\eren{\myaa}{x}{y}(\om) = \myaa(\sren{\om}{x}{y})$
\end{itemize}
\end{lemma}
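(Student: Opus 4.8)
The plan is to prove all five identities simultaneously by a single structural induction, exploiting that proof-term, formula, game, and realizer renaming were each defined by the obvious homomorphic (congruence) clauses, so that every inductive step reduces to feeding the relevant induction hypotheses to the immediate subexpressions. The dependency pattern mirrors the mutual recursion already built into \rref{def:term-sem}, the formula semantics, and the Angel/Demon game semantics: the realizer claim $\eren{\myaa}{x}{y}(\om)=\myaa(\sren{\om}{x}{y})$ and the term claim $\tint{\eren{f}{x}{y}}{\om}=\tint{f}{\sren{\om}{x}{y}}$ sit at the bottom, the formula claim invokes the game claims for modalities, and the game claims invoke the formula claim (for tests) and the realizer claim (for guards and chosen values). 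The single preliminary fact I would establish first is that $\eren{\cdot}{x}{y}$ is a \emph{bijection} on each syntactic category and on states: by \rref{lem:app-ren-selfdual} it is its own inverse, hence it lifts to a bijection on possibilities and on regions that commutes with $\cup$, $\cap$, arbitrary intersections, and---once the realizer claim is available---with the Angelic and Demonic projections $\apL{\cdot},\apR{\cdot},\dpL{\cdot},\dpR{\cdot}$, since those projections only reshape the realizer component and the guard $\rzFst{\myaa}(\om)=0$ is preserved when both $\myaa$ and $\om$ are renamed.

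I would handle the base cases next. For terms, induct on the (example) grammar: the only interesting clause is a program variable $z$, where $\tint{\eren{z}{x}{y}}{\om}=\om(\eren{z}{x}{y})=(\sren{\om}{x}{y})(z)=\tint{z}{\sren{\om}{x}{y}}$ directly from the definition of $\sren{\om}{x}{y}$, and the arithmetic operators close under the induction hypothesis; since terms denote arbitrary computable functions, this is really the statement that renaming a variable is interpreted as the corresponding transposition of the state, valid for any reasonable encoding. For realizers I only need the single application-step identity, and the interesting cases are the state lambda $\rzBLam{\om}{\myaa(\om)}$, where both sides evaluate by substituting a state and the renaming is pushed through using \rref{lem:app-ren-selfdual}, and the conditional $\rzIf{f(\om)}{\myaa}{\ab}$, which reduces to the term claim for the guard $f$; the coinductive structure of realizers is not an obstacle here because this equation concerns one evaluation step only and never recurses into continuation components---those are consumed later by the game induction, where the recursion is on the game, not the realizer.

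Finally I would do the simultaneous induction on formulas and games. Comparisons $f\sim g$ reduce to the term claim; the modalities $\ddiamond{\alpha}{\phi}$ and $\dbox{\alpha}{\phi}$ reduce to the game claims for $\alpha$, the formula claim for $\phi$, and the fact that $\stt$ is fixed by renaming. For the game claims, the atomic cases use the term claim for $\humod{x}{f}$, the realizer and formula claims for $\ptest{\phi}$, and the realizer claim for $\prandom{x}$ (all modulo \rref{lem:app-ren-selfdual}); $\alpha;\beta$, $\pchoice{\alpha}{\beta}$, and $\pdual{\alpha}$ follow from the induction hypotheses plus commutation of renaming with composition, with $\cup$, with the projections, and with the Angel/Demon swap in the dual case. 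The delicate case is repetition: since $\strategyforR[\prepeat{\alpha}]{X}$ is a least pre-fixed point, I would pass to the inflationary characterisation of \rref{lem:app-altern}, prove $\eren{(\istrat[\alpha]{X}{k})}{x}{y}=\istrat[\eren{\alpha}{x}{y}]{\sren{X}{x}{y}}{k}$ by a short side induction on $k$ using the already-established one-step game claim, and then take unions over $k$; the Demonic analogue is identical. Alternatively one argues that the renaming bijection carries pre-fixed points of $X'\mapsto\strategyforR[\alpha]{\apR{X'}}$ to pre-fixed points of the corresponding operator for $\eren{\alpha}{x}{y}$, so the two defining intersections coincide, but routing through \rref{lem:app-altern} keeps the induction structural.

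The main obstacle is organisational rather than mathematical: fitting the five claims into one well-founded argument despite the formula/game cross-recursion, the coinductive realizer syntax, and the non-structural fixed-point clause for $\prepeat{\alpha}$. I would neutralise all three at once by (i) proving the realizer one-step identity as a standalone lemma up front, so it can be cited by the formula/game induction without circularity, and (ii) eliminating the fixed point via \rref{lem:app-altern}, after which the remaining induction is an ordinary structural induction on formulas and games with a nested numerical induction for loops. Everything else is the same bookkeeping as in the analogous renaming lemma for dynamic logic~\cite{DBLP:conf/cade/Platzer18} and as in \rref{lem:renaming} of this paper, to which I would defer the routine congruence clauses.
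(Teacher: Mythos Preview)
Your proposal is correct and follows essentially the same architecture as the paper: a separate induction for realizers (and terms), then a simultaneous structural induction on formulas and games, with the same bijection-via-\rref{lem:app-ren-selfdual} bookkeeping underpinning commutation with unions, intersections, and the projections $\apL{\cdot},\apR{\cdot},\dpL{\cdot},\dpR{\cdot}$.

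The one substantive divergence is the $\prepeat{\alpha}$ case. You route it through the inflationary characterisation of \rref{lem:app-altern} and a side induction on $k$; the paper instead stays with the defining intersection and argues directly that renaming bijects the family of pre-fixed sets $Z$, i.e.\ $\{Z\mid X\cup\strategyforR[\alpha]{\apR{Z}}\subseteq Z\}$ and $\{Z\mid \sren{X}{x}{y}\cup\strategyforR[\eren{\alpha}{x}{y}]{\apR{Z}}\subseteq Z\}$ correspond under $Z\mapsto\sren{Z}{x}{y}$, so the intersections match. You in fact name this second route as your ``alternatively'' option; the paper simply takes it as primary. Either works: your detour through \rref{lem:app-altern} is a bit more mechanical (no need to juggle quantification over $Z$), while the paper's direct fixed-point argument avoids importing the extra lemma and is self-contained for this case.
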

\begin{proof}
For terms, this property holds of any reasonable representation of computable functions.
We give an inductive proof which handles the polynomial cases from the paper.
Formula and program cases are by simultaneous induction on formulas and programs.
Realizers are by a separate induction.

We give cases for term connectives.
Any term language which supports the renaming and substitution properties could be used; these cases are only examples.

\mycase $q$
Have $\tint{\eren{q}{x}{y}}{\om} = \tint{q}{\om} = q = \tint{q}{\sren{\om}{x}{y}}$

\mycase $x$
Have $\tint{\eren{z}{x}{y}}{\om} = \om(\eren{z}{x}{y}) = \sren{\om}{x}{y}(z) = \tint{z}{\sren{\om}{x}{y}}$

\mycase $f + g$
Have $\tint{\eren{(f + g)}{x}{y}}{\om}
=\tint{\eren{f}{x}{y}}{\om} + \tint{\eren{g}{x}{y}}{\om}
=\tint{f}{\sren{\om}{x}{y}} + \tint{g}{\sren{\om}{x}{y}}
=\tint{f + g}{\sren{\om}{x}{y}}$

\mycase $f \cdot g$
Have $\tint{\eren{(f \cdot g)}{x}{y}}{\om}
=\tint{\eren{f}{x}{y}}{\om} \cdot \tint{\eren{g}{x}{y}}{\om}
=\tint{f}{\sren{\om}{x}{y}} \cdot \tint{g}{\sren{\om}{x}{y}}
=\tint{f \cdot g}{\sren{\om}{x}{y}}$

We give the formula cases.

\mycase $f > g$
Have 
\begin{align*}
&\phantom{= } \fintR{\eren{f > g}{x}{y}}\\
&= \{(\rzNil,\om)~|~\tint{\eren{f}{x}{y}}{\om} > \tint{\eren{g}{x}{y}}{\om}\}\\
&= \{(\rzNil,\om)~|~\tint{f}{\sren{\om}{x}{y}} > \tint{g}{\sren{\om}{x}{y}}\}\\
&= \{(\rzNil,\sren{\om}{x}{y})~|~\tint{f}{\om} > \tint{g}{\om}\}\\
&= \sren{\fintR{f > g}}{x}{y}.
\end{align*}
The cases for $<, \leq, =, \neq \geq$ are symmetric.

\mycase $\ddiamond{\alpha}{\phi}$
Have
\begin{align*}
&\phantom{= } \fintR{\eren{\ddiamond{\alpha}{\phi}}{x}{y}}\\
&=  \{(\myaa,\om)~|~\strategyforR[\eren{\alpha}{x}{y}]{\{(\myaa,\om)\}} \subseteq \fint{\eren{\phi}{x}{y}} \cup \{\stt\}\}\\
&=  \{(\myaa,\om)~|~\sren{\strategyforR[\alpha]{\{(\eren{\myaa}{x}{y},\eren{\om}{x}{y})\}}}{x}{y} \subseteq \sren{\fint{\phi}}{x}{y} \cup \{\stt\}\}\\
&=  \{(\sren{\myaa}{x}{y},\sren{\om}{x}{y})~|~\sren{\strategyforR[\alpha]{\{(\myaa,\om)\}}}{x}{y} \subseteq \sren{\fint{\phi}}{x}{y} \cup \{\stt\}\}\\
&=  \{(\sren{\myaa}{x}{y},\sren{\om}{x}{y})~|~\strategyforR[\alpha]{\{(\myaa,\om)\}} \subseteq \fint{\phi} \cup \{\stt\}\}\\
&= \sren{\fint{\ddiamond{\alpha}{\phi}}}{x}{y}
\end{align*}.

\mycase $\dbox{\alpha}{\phi}$
Have
\begin{align*}
&\phantom{= }\fintR{\eren{\dbox{\alpha}{\phi}}{x}{y}}\\
&=  \{(\myaa,\om)~|~\dstrategyforR[\eren{\alpha}{x}{y}]{\{(\myaa,\om)\}} \subseteq \fint{\eren{\phi}{x}{y}}  \cup \{\stt\} \}\\
&=  \{(\myaa,\om)~|~\sren{\dstrategyforR[\alpha]{\{(\eren{\myaa}{x}{y},\eren{\om}{x}{y})\}}}{x}{y} \subseteq \sren{\fint{\phi}}{x}{y}  \cup \{\stt\}\}\\
&=  \{(\sren{\myaa}{x}{y},\sren{\om}{x}{y})~|~\sren{\dstrategyforR[\alpha]{\{(\myaa,\om)\}}}{x}{y} \subseteq \sren{\fint{\phi}}{x}{y}  \cup \{\stt\}\}\\
&=  \{(\sren{\myaa}{x}{y},\sren{\om}{x}{y})~|~\dstrategyforR[\alpha]{\{(\myaa,\om)\}} \subseteq \fint{\phi} \cup \{\stt\}\}\\
&= \sren{\fint{\dbox{\alpha}{\phi}}}{x}{y}
\end{align*}.

We give the game cases for Angel.

\mycase $\humod{x}{f}$
Have $\strategyforR[\eren{\{\humod{x}{f}\}}{x}{y}]{\sren{X}{x}{y}}
= \strategyforR[\humod{y}{(\eren{f}{x}{y})}]{\sren{X}{x}{y}}
= \{(\myaa,\subst[\om]{y}{\tint{\eren{f}{x}{y}}{\om}})~|~(\myaa,\om) \in \sren{X}{x}{y}\}
= \{(\myaa,\subst[\om]{y}{\tint{f}{\sren{\om}{x}{y}}})~|~(\myaa,\om) \in \sren{X}{x}{y}\}
= \{(\sren{\myaa}{x}{y},\subst[\om]{x}{\tint{f}{\om}})~|~(\myaa,\om) \in  X\}
= \sren{\strategyforR[\humod{x}{f}]{X}}{x}{y}$

\mycase $\prandom{x}$
Have $\strategyforR[\eren{\{\prandom{x}\}}{x}{y}]{\sren{X}{x}{y}}
=\strategyforR[\prandom{y}]{\sren{X}{x}{y}}
=\{(\myaa,\subst[\om]{y}{\eren{\myaa}{x}{y}(\om)})~|~(\myaa,\om) \in \sren{X}{x}{y}\}
=\{(\myaa,\subst[\om]{y}{\myaa(\sren{\om}{x}{y})})~|~ (\myaa,\om) \in \sren{X}{x}{y}\}
=\{(\sren{\myaa}{x}{y},\sren{\subst[\om]{x}{\myaa(\subst[\om]{x}{y})}}{x}{y})~|~(\myaa,\om) \in X\}
=\sren{\strategyforR[\prandom{x}]{X}}{x}{y}$

\mycase $\ptest{\phi}$
Have $\strategyforR[\ptest{\eren{\phi}{x}{y}}]{\sren{X}{x}{y}}
= \{(\rzSnd{\myaa}, \om)~|~ (\myaa,\om) \in \sren{X}{x}{y}, (\rzFst{\myaa},\om) \in \fintR{\eren{\phi}{x}{y}} \}
= \{(\rzSnd{\myaa}, \om)~|~ (\myaa,\om) \in \sren{X}{x}{y}, (\sren{\rzFst{\myaa}}{x}{y},\sren{\om}{x}{y}) \in \fintR{\phi} \}
= \{(\rzSnd{\myaa}, \om)~|~ (\sren{\myaa}{x}{y},\sren{\om}{x}{y}) \in X, (\sren{\rzFst{\myaa}}{x}{y},\sren{\om}{x}{y}) \in \fintR{\phi} \}
= \{(\sren{\rzSnd{\myaa}}{x}{y}, \sren{\om}{x}{y})~|~ (\myaa,\om) \in X, (\rzFst{\myaa},\om) \in \fintR{\phi} \}
= \sren{\strategyforR[\ptest{\phi}]{X}}{x}{y}$

\mycase $\alpha \cup \beta$
Have $\strategyforR[\eren{\alpha \cup \beta}{x}{y}]{\eren{X}{x}{y}}
= \strategyforR[\eren{\alpha}{x}{y}]{\eren{\apL{X}}{x}{y}} \cup \strategyforR[\eren{\beta}{x}{y}]{\eren{\apR{X}}{x}{y}}
= \sren{\strategyforR[\alpha]{\apL{X}}}{x}{y} \cup \sren{\strategyforR[\beta]{\apR{X}}}{x}{y}
= \sren{\strategyforR[\alpha\cup\beta]{X}}{x}{y}$

\mycase $\alpha; \beta$
Have $\strategyforR[\alpha;\beta]{\eren{X}{x}{y}}
= \strategyforR[\alpha]{\strategyforR[\beta]{\eren{X}{x}{y}}}
= \strategyforR[\alpha]{\sren{\strategyforR[\beta]{X}}{x}{y}}
= \sren{\strategyforR[\alpha]{\strategyforR[\beta]{X}}}{x}{y}$

\mycase $\prepeat{\alpha}$
Have 
\begin{align*}
&\ \ \ \ \strategyforR[\prepeat{\eren{\alpha}{x}{y}}]{\sren{X}{x}{y}} \\
&= \bigcap{\apL{Z}~|~\sren{X}{x}{y} \cup \strategyforR[\eren{\alpha}{x}{y}]{\apR{Z}} \subseteq Z}\\
&= \bigcap{\apL{Z}~|~\sren{X}{x}{y} \cup \sren{\strategyforR[\alpha]{\sren{\apR{Z}}{x}{y}}}{x}{y} \subseteq Z}\\
&= \bigcap{\apL{Z}~|~X \cup \strategyforR[\alpha]{\sren{\apR{Z}}{x}{y}} \subseteq \sren{Z}{x}{y}}\\
&= \bigcap{\sren{\apL{Z}}{x}{y}~|~X \cup \strategyforR[\alpha]{\apR{Z}} \subseteq Z}\\
&= \bigcap{\apL{\sren{Z}{x}{y}}~|~X \cup \strategyforR[\alpha]{\apR{Z}} \subseteq Z}\\
&= \sren{\strategyforR[\prepeat{\alpha}]{X}}{x}{y}
\end{align*}

\mycase $\pdual{\alpha}$
$\strategyforR[\pdual{{\eren{\alpha}{x}{y}}}]{\sren{X}{x}{y}}
= \dstrategyforR[\eren{\alpha}{x}{y}]{\sren{X}{x}{y}}
= \sren{\dstrategyforR[\alpha]{X}}{x}{y}
= \sren{\strategyforR[\pdual{\alpha}]{X}}{x}{y}$

We give the game cases for Demon.

\mycase $\humod{x}{f}$
Have
$\dstrategyforR[\eren{\humod{x}{f}}{x}{y}]{\sren{X}{x}{y}}
= \dstrategyforR[\humod{y}{\eren{f}{x}{y}}]{\sren{X}{x}{y}}
= \{(\rzNil,\subst[\om]{y}{\tint{\eren{f}{x}{y}}{\om}})~|~ (\rzNil,\om)\in \sren{X}{x}{y}\}
= \{(\rzNil,\subst[\om]{y}{\tint{f}{\eren{\om}{x}{y}}})~|~ (\rzNil,\om)\in \sren{X}{x}{y}\}
= \{(\rzNil,\subst[\om]{y}{\tint{f}{\eren{\om}{x}{y}}})~|~ (\rzNil,\sren{\om}{x}{y}) \in X\}
= \{(\rzNil,\subst[\sren{\om}{x}{y}]{y}{\tint{f}{\om}})~|~ (\rzNil,\om) \in X\}
= \{(\rzNil,\sren{\subst[\om]{x}{\tint{f}{\om}}}{x}{y})~|~ (\rzNil,\om) \in X\}
= \sren{\dstrategyforR[\humod{x}{f}]{X}}{x}{y}$

\mycase $\prandom{x}$
Have
$\dstrategyforR[\eren{\prandom{x}}{x}{y}]{\sren{X}{x}{y}}
=\dstrategyforR[\prandom{y}]{\sren{X}{x}{y}}
=\{\subst[\om]{y}{r}~|~\om \in \sren{X}{x}{y},\text{ exists }r\in\allrat\}
=\{\sren{\subst[\om]{x}{r}}{x}{y}~|~ \om \in X,\text{ exists }r\in\allrat\}
=\sren{\dstrategyforR[\prandom{x}]{X}}{x}{y}$

\mycase $\ptest{\phi}$
Have
$\dstrategyforR[\ptest{\eren{\phi}{x}{y}}]{\sren{X}{x}{y}}
= \{(\rzApp{\myaa}{\ab}, \om)~|~(\myaa,\om) \in \sren{X}{x}{y}, (\ab,\om) \in \fintR{\eren{\phi}{x}{y}}\}
= \{(\rzApp{\myaa}{\ab}, \om)~|~(\sren{\myaa}{x}{y},\sren{\om}{x}{y}) \in X, (\sren{\ab}{x}{y},\sren{\om}{x}{y}) \in \fintR{\phi}\}
= \{(\sren{\rzApp{\myaa}{\ab}}{x}{y}, \sren{\om}{x}{y})~|~(\myaa,\om) \in X, (\ab,\om) \in \fintR{\phi}\}
= \sren{\dstrategyforR[\ptest{\phi}]{X}}{x}{y}$

\mycase $\alpha \cup \beta$
Have
$\dstrategyforR[\eren{\alpha \cup \beta}{x}{y}]{\sren{X}{x}{y}} 
= \dstrategyforR[\eren{\alpha}{x}{y}]{\dpL{{\sren{X}{x}{y}}}} \cup \dstrategyforR[\eren{\beta}{x}{y}]{\dpR{{\sren{X}{x}{y}}}}
= \dstrategyforR[\eren{\alpha}{x}{y}]{\sren{{\dpL{X}}}{x}{y}} \cup \dstrategyforR[\eren{\beta}{x}{y}]{\sren{{\dpR{X}}}{x}{y}}
= \sren{\dstrategyforR[\alpha]{\dpL{X}}}{x}{y} \cup \sren{\dstrategyforR[\beta]{\dpR{X}}}{x}{y}
= \sren{\dstrategyforR[\alpha]{\dpL{X}} \cup \dstrategyforR[\beta]{\dpR{X}}}{x}{y}
= \sren{\dstrategyforR[\alpha\cup\beta]{X}}{x}{y}$

\mycase $\alpha; \beta$
Have
$ \dstrategyforR[\eren{\alpha}{x}{y};\eren{\beta}{x}{y}]{\sren{X}{x}{y}} 
= \dstrategyforR[\eren{\alpha}{x}{y}]{\dstrategyforR[\eren{\beta}{x}{y}]{\sren{X}{x}{y}}}
= \dstrategyforR[\eren{\alpha}{x}{y}]{\sren{\dstrategyforR[\beta]{X}}{x}{y}}
= \sren{\dstrategyforR[\alpha]{\dstrategyforR[\beta]{X}}}{x}{y}$

\mycase $\prepeat{\alpha}$
Have
\begin{align*}
&\ \ \ \ \dstrategyforR[\prepeat{\eren{\alpha}{x}{y}}]{\sren{X}{x}{y}}\\
&= \bigcap{\dpL{Z}~|~\sren{X}{x}{y} \cup \dstrategyforR[\eren{\alpha}{x}{y}]{\dpR{Z}} \subseteq Z}\\
&= \bigcap{\dpL{Z}~|~\sren{X}{x}{y} \cup \sren{\dstrategyforR[\alpha]{\sren{\dpR{Z}}{x}{y}}}{x}{y} \subseteq Z}\\
&= \bigcap{\dpL{Z}~|~X \cup \dstrategyforR[\alpha]{\sren{\dpR{Z}}{x}{y}} \subseteq \sren{Z}{x}{y}}\\
&= \bigcap{\sren{\dpL{Z}}{x}{y}~|~X \cup \dstrategyforR[\alpha]{\dpR{Z}} \subseteq Z}\\
&= \bigcap{\dpL{\sren{Z}{x}{y}}~|~X \cup \dstrategyforR[\alpha]{\dpR{Z}} \subseteq Z}\\
&= \sren{\dstrategyforR[\prepeat{\alpha}]{X}}{x}{y}
\end{align*}

\mycase $\pdual{\alpha}$
$\dstrategyforR[\pdual{{\eren{\alpha}{x}{y}}}]{\sren{X}{x}{y}}
= \strategyforR[\eren{\alpha}{x}{y}]{\sren{X}{x}{y}}
= \sren{\strategyforR[\alpha]{X}}{x}{y}
= \sren{\dstrategyforR[\pdual{\alpha}]{X}}{x}{y}$

We give the cases for realizers.

 \mycase $z$:
 Have $\eren{z}{x}{y}(\om) = z(\sren{\om}{x}{y})$ by definition of renaming on realizer variables.

 \mycase $f$:
 Have $\eren{f}{x}{y}(\om) = f(\sren{\om}{x}{y}$ by IH on terms.

 \mycase $\rzFst{\myaa}$:
 Have $\eren{\rzFst{\myaa}}{x}{y}(\om)
 = \rzFst{\eren{\myaa}{x}{y}}(\om)
 = \rzFst{\eren{\myaa}{x}{y}(\om)}
 = \rzFst{\myaa(\sren{\om}{x}{y})}
 = \rzFst{\myaa}(\sren{\om}{x}{y})$

 \mycase $\rzSnd{\myaa}$:
 Have $\eren{\rzSnd{\myaa}}{x}{y}(\om)
 = \rzSnd{\eren{\myaa}{x}{y}}(\om)
 = \rzSnd{\eren{\myaa}{x}{y}(\om)}
 = \rzSnd{\myaa(\sren{\om}{x}{y})}
 = \rzSnd{\myaa}(\sren{\om}{x}{y})$

 \mycase $\rzNil$:
 Have $\eren{\rzNil}{x}{y}(\om) = \rzNil(\om) = \text{unit} = \rzNil(\sren{\om}{x}{y})$

 \mycase $\rzCons{\myaa}{\ab}$
 Have $\eren{(\rzCons{\myaa}{\ab})}{x}{y}(\om)
 = \rzCons{\eren{\myaa}{x}{y}}{\eren{\ab}{x}{y}}(\om)
 = (\eren{\myaa}{x}{y}(\om),\eren{\ab}{x}{y}(\om)
 = (\myaa(\sren{\om}{x}{y}),\ab(\sren{\om}{x}{y}))
 = \rzCons{\myaa}{\ab}(\sren{\om}{x}{y}) $

\mycase $(\rzBLam{\nu}{\myaa})$
Have $\eren{(\rzBLam{\nu}{\myaa})}{x}{y}(\om)
=(\rzBLam{\nu}{\tsub{\myaa}{\nu}{\sren{\nu}{x}{y}}}(\om))
=\tsub{\myaa}{\nu}{\sren{\om}{x}{y}}
=(\rzBLam{\nu}{\myaa}(\sren{\om}{x}{y}))$

\mycase $(\rzFOLam{z}{\allrat}{\myaa})$
Have $\eren{(\rzFOLam{z}{\allrat}{\myaa})}{x}{y}(\om)
=(\rzFOLam{z}{\allrat}{\eren{\myaa}{x}{y}(\om)})
=(\rzFOLam{z}{\allrat}{\myaa(\sren{\om}{x}{y})})
=(\rzFOLam{z}{\allrat}{\myaa(\om)}(\sren{\om}{x}{y}))$

\mycase $(\rzHOLam{z}{\phi}{\myaa})$
Have that $\eren{(\rzHOLam{z}{\phi}{\myaa})}{x}{y}(\om)
=(\rzHOLam{z}{\phi}{\eren{\myaa}{x}{y}(\om)})
=(\rzHOLam{z}{\phi}{\myaa(\sren{\om}{x}{y})})
=(\rzHOLam{z}{\phi}{\myaa(\om)}(\sren{\om}{x}{y}))$
\end{proof}

\begin{lemma}[Suitable substitutes]
The realizer $\myaa$ is suitable for $\alpha$ iff $\sigma(\myaa)$ is suitable for $\sigma(\alpha)$.
\end{lemma}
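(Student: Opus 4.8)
The plan is to prove both directions at once by structural induction on the game $\alpha$, exploiting the fact that $\sigma$ — exactly like the renaming of \rref{lem:app-tren} — acts as a syntactic homomorphism on realizers and on games, and that admissibility of $\sigma$ guarantees it commutes with the bound-variable structure of $\alpha$. The property ``$\myaa$ is suitable for $\alpha$'' is, by definition, the conjunction of: (i) a \emph{shape} constraint on $\myaa$ fixed by the top-level connective of $\alpha$ (a pair for $\alpha\cup\beta$ or $\ptest{\phi}$, a state-lambda for $\prandom{x}$, a stream for $\prepeat{\alpha_0}$, a sequential/dual repackaging for $\alpha;\beta$ and $\pdual{\alpha_0}$, and a trivial constraint for $\humod{x}{f}$); (ii) in the test case $\ptest{\phi}$, the semantic side condition that the relevant component of $\myaa$ lies in $\fintR{\phi}$ at the current state; and (iii) recursively, suitability of the residual realizers $\rzSnd{\myaa}$, $\rzFst{\myaa}(\om)$, etc., for the corresponding subgames. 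The argument is simply that each of (i)--(iii) is transported in both directions by $\sigma$.

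For (i): since $\sigma(\rzCons{\myaa}{\ab}) = \rzCons{\sigma(\myaa)}{\sigma(\ab)}$, $\sigma(\rzBLam{\om}{\cdots}) = \rzBLam{\om}{\cdots}$, and likewise for the other constructors — these are precisely the homomorphism clauses used in \rref{lem:app-tren} — the shape of $\sigma(\myaa)$ matches the shape demanded by $\sigma(\alpha)$, and the shape of $\sigma(\alpha)$ matches that of $\alpha$ because substitution is homomorphic on games and, being admissible, does not disturb the head connective; conversely $\sigma$ applied to an ill-shaped realizer cannot manufacture a well-shaped one. For (ii): the test case reduces to showing $(\sigma(\rzFst{\myaa}),\sigma(\om)) \in \fintR{\sigma(\phi)}$ iff $(\rzFst{\myaa},\om) \in \fintR{\phi}$ for every $\om$, which is the substitution analogue of the formula clause of \rref{lem:app-tren}, with \rref{lem:coincidental} and \rref{lem:effects-bound} supplying the admissibility bookkeeping. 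For (iii): the inductive hypothesis on the subgames yields suitability transfer for the residual realizers directly. The only cases needing more than routine bookkeeping are the binding games $\humod{x}{f}$ and $\prandom{x}$, where one first $\alpha$-renames so that $\sigma$ is capture-avoiding (hence $\sigma(\prandom{x})$ is again a nondeterministic assignment with a legitimate bound variable), and the dual game $\pdual{\alpha_0}$, where suitability is phrased against the opposite, Demonic role and one just invokes the dual clause of the definition together with the inductive hypothesis.

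The main obstacle is the repetition game $\prepeat{\alpha_0}$: suitability of a possibly-infinite stream realizer is a \emph{coinductive} property, so the structural induction on $\alpha$ must be interleaved with a coinductive argument on streams. Concretely, I would show that the relation $\{(\sigma(\myaa),\sigma(\alpha)) : \myaa \text{ suitable for } \alpha\}$ is a post-fixed point of the functional defining suitability, using the structural inductive hypothesis on $\alpha_0$ at each unfolding. Once this coinductive frame is in place the verification is mechanical, because $\sigma$ distributes over the stream constructors $\rzInd{\cdot}{\cdot}$ and $\rzCoind{\cdot}{\cdot}{\cdot}{\cdot}$ just as it distributes over pairs and lambdas, so each observation of $\sigma(\myaa)$ mirrors an observation of $\myaa$ under $\sigma$, and the suitability invariant is preserved in lockstep; the converse direction follows symmetrically since $\sigma$ is invertible up to admissibility (it is built from transpositions and a substitution that can be read back off the introduced equality hypothesis, cf.\ the treatment of \irref{ghost}).
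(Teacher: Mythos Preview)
Your overall approach---structural induction on $\alpha$, treating the Angel and Demon suitability predicates in parallel---is exactly what the paper does; the paper's own proof is a single sentence (``By induction on $\alpha$ with separate cases for Demon and Angel suitability''), so in that sense you are on the right track and have simply elaborated far more than the authors did.

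Two points in your elaboration are not right, however. First, your clause (ii) treats suitability as carrying a \emph{semantic} side condition, namely that the test component lie in $\fintR{\phi}$ at the current state. That is not part of suitability: suitability is a purely structural (type-like) constraint on the shape of the realizer relative to the shape of the game; whether a test realizer actually witnesses $\phi$ is a separate semantic fact handled elsewhere (in the formula semantics and in \rref{lem:app-tsub}). If you bake semantic truth into suitability, the iff you are trying to prove becomes the substitution lemma for formulas, which is a different (and later) result. Second, your converse direction appeals to $\sigma$ being ``invertible up to admissibility.'' Arithmetic-term substitution is not invertible in general, and the ghost-variable trick you cite recovers a \emph{value}, not a syntactic inverse. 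The correct and much simpler argument is that $\sigma$ is defined homomorphically over the realizer constructors, so the head constructor of $\sigma(\myaa)$ is the same as that of $\myaa$; hence the shape constraint in (i) transfers in both directions immediately, with no inverse needed. Once (ii) is dropped and (i) is argued via constructor preservation, your coinductive worry about $\prepeat{\alpha_0}$ also dissolves: the induction on $\alpha$ supplies the hypothesis for $\alpha_0$, and the stream shape is preserved by $\sigma$ for the same homomorphism reason.
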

\begin{proof}
By induction on $\alpha$ with separate cases for Demon and Angel suitability.
\end{proof}

\begin{definition}[Substitution adjoints]
We write $\sigma$ for any substitution $\tsub{\cdot}{x}{f}$.
For any region $X,$ we write $\adj{X}$ for the adjoint region which applies substitution $\sigma$ to every possibility in $X$.
We write $\spp{e}$ for the result of substituting $\sigma$ in expression $e$, which could be a term $f,$ formula $\phi,$ game $\alpha,$ or realizer $\myaa$.
\end{definition}

\begin{lemma}[Substitution distributes over unions]
For all regions $A$ and $B$ have
$\adj{(A \cup B)} = \adj{A} \cup \adj{B}$
\end{lemma}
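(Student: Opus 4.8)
The plan is to observe that $\adj{\cdot}$ acts pointwise on regions, so the claim reduces to the elementary set-theoretic fact that the image of a set under a function commutes with unions. By the definition of substitution adjoints, $\adj{X}$ is obtained by applying $\sigma$ to each possibility in $X$ independently; introduce ad-hoc notation $\sigma(\myaa,\om)$ for the result of this operation on a single possibility $(\myaa,\om)$ (concretely, $\sigma$ applied to the realizer together with the induced action on the state), so that $\adj{X} = \{\,\sigma(\myaa,\om) \mid (\myaa,\om) \in X\,\}$. The whole point is simply to justify pulling this set-builder through the union.

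First I would prove $\adj{(A \cup B)} \subseteq \adj{A} \cup \adj{B}$: any element of $\adj{(A\cup B)}$ has the form $\sigma(\myaa,\om)$ for some $(\myaa,\om) \in A \cup B$; that possibility lies in $A$ or in $B$, hence $\sigma(\myaa,\om) \in \adj{A}$ or $\sigma(\myaa,\om) \in \adj{B}$, so it lies in the union. Conversely, for $\adj{A} \cup \adj{B} \subseteq \adj{(A\cup B)}$, any element of $\adj{A}$ is $\sigma(\myaa,\om)$ for some $(\myaa,\om) \in A \subseteq A \cup B$, and symmetrically for $\adj{B}$, so in either case it lies in $\adj{(A \cup B)}$. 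The two inclusions give the desired equality.

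There is no real obstacle here: the only thing to be careful about is confirming from the definition of substitution adjoints that $\adj{\cdot}$ is genuinely defined possibility-by-possibility rather than via some global condition on the region, which that definition makes explicit. The same argument would also show $\adj{\cdot}$ distributes over arbitrary unions, but only the binary-union case is needed in the sequel.
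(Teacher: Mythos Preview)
Your proposal is correct and is in fact more explicit than the paper's own argument. The paper's proof is the single line ``Follows from $\spp{\myaa}(\om) = \myaa(\adj{\om})$,'' which at best gestures at the same point you make precise: the adjoint is defined possibility-by-possibility, so it is an image map and commutes with union. Your two-inclusion argument spells this out directly, whereas the paper cites the realizer-level identity and leaves the elementary set-theoretic step implicit.
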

\begin{proof}
Follows from $\spp{\myaa}(\om) = \myaa(\adj{\om})$.
\end{proof}

\begin{lemma}[Expression substitution]
If the substitution $\sigma$ is admissible, then:
\begin{itemize}
\item $\spp{\myaa}(\om) = \myaa(\adj{\om})$
\item $\tint{\sapp{\sigma}{f}}{\om} = \tint{f}{\adj{\om}}$
\item $(\myaa,\om) \in \dstrategyforR[\spp{\alpha}]{X} \text{ iff } (\adj{\myaa},\adj{\om}) \in \dstrategyforR[\alpha]{\adj{X}}$
\item $(\myaa,\om) \in \strategyforR[\spp{\alpha}]{X} \text{ iff } (\adj{\myaa},\adj{\om}) \in \strategyforR[\alpha]{\adj{X}}$
\item $(\myaa,\om) \in \fintR{\sapp{\sigma}{\phi}} \text{ iff } (\adj{\myaa},\adj{\om}) \in \fintR{\phi}$
\end{itemize}
\label{lem:app-tsub}
\end{lemma}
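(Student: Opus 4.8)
The plan is to prove all five claims by one simultaneous structural induction, mirroring the Renaming Lemma (\rref{lem:app-tren}), with the state-level adjoint understood concretely: for $\sigma = \tsub{\cdot}{x}{f}$ the adjoint state is $\adj{\om} = \ssub{\om}{x}{\tint{f}{\om}}$, the adjoint realizer is $\adj{\myaa} = \spp{\myaa}$, and the adjoint of a possibility is $(\spp{\myaa},\adj{\om})$. First I would discharge the realizer claim $\spp{\myaa}(\om) = \myaa(\adj{\om})$ by induction on the realizer syntax: the variable, unit $\rzNil$, pair, and projection $\rzFst{\cdot},\rzSnd{\cdot}$ cases are plain homomorphisms, the conditional case $\rzIf{g(\om)}{\myaa}{\ab}$ reduces to the term claim applied to the guard $g$, and the state-lambda case $(\rzBLam{\nu}{\myaa})$ is where the adjoint appears, using that applying $\sigma$ commutes with substituting $\adj{\om}$ for $\nu$. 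The arithmetic-term claim $\tint{\sapp{\sigma}{f}}{\om} = \tint{f}{\adj{\om}}$ holds of any reasonable representation of computable functions and is checked on the polynomial constructors exactly as in \rref{lem:app-tren}; the only interesting case is the variable $x$, where $\tint{\sapp{\sigma}{x}}{\om} = \tint{f}{\om} = \adj{\om}(x) = \tint{x}{\adj{\om}}$.

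With these base facts in hand I would prove the two game claims and the formula claim by simultaneous induction on $\alpha$ and $\phi$, with admissibility of $\sigma$ available recursively by \rref{def:lem-admit}. The comparison case follows from the term claim, and the modal cases $\ddiamond{\alpha}{\phi}$, $\dbox{\alpha}{\phi}$ reduce to the game claims on the singleton region together with the formula claim on the postcondition, noting that $\adj{\cdot}$ fixes the pseudo-state $\stt$. For games: tests defer to the formula claim, both the Angelic form $\ptest{\phi}$, which inspects $\rzFst{\myaa}$, and the Demonic form, which quantifies over evidence realizers $\ab$ (here the realizer claim is applied to $\ab$); deterministic and nondeterministic assignment $\humod{z}{g}$, $\prandom{z}$ are the crux, where admissibility guarantees $z \neq x$ and $z \notin \freevars{f}$, so that $\adj{(\ssub{\om}{z}{v})} = \ssub{\adj{\om}}{z}{v}$ by Coincidence (\rref{lem:coincidental}) and Bound Effect (\rref{lem:effects-bound}), and the state update commutes past the adjoint without capture; sequencing, choice, and duality follow from the inductive hypotheses together with the fact that $\adj{\cdot}$ distributes over unions (the lemma stated just above) and commutes with the Angelic and Demonic projections $\apL{\cdot},\apR{\cdot},\dpL{\cdot},\dpR{\cdot}$ (a small separate observation, since the projections only reshuffle realizer components and never touch the state), with duality swapping the Angelic and Demonic game claims; finally repetition is handled as in \rref{lem:app-tren}, pushing $\adj{\cdot}$ through the defining intersection $\bigcap\{\apL{Z} \mid X \cup \strategyforR[\alpha]{\apR{Z}} \subseteq Z\}$ and its Demonic analogue, using that $\adj{\cdot}$ is monotone and preserves unions and projections so that the pre-fixed points of the two semantic operators are in bijection. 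Throughout I would invoke the Suitable Substitutes lemma to keep realizer and game suitability in sync under $\sigma$.

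I expect the main obstacle to be exactly the assignment and repetition cases. For assignment one must use precisely what admissibility buys — that $x$ is not bound and that the free variables of $f$ are not bound — to justify $\tint{f}{\ssub{\om}{z}{v}} = \tint{f}{\om}$ and hence commute the state update past the adjoint; getting the bookkeeping of which variable is which right is where most care is needed. For repetition one must reproduce the pre-/post-fixed-point shuffling in the style of \rref{lem:app-altern} and confirm that \emph{every} region operation in play — $\cup$, the four projections $\apL{\cdot},\apR{\cdot},\dpL{\cdot},\dpR{\cdot}$, and the monotone operators $\strategyforR[\alpha]{\cdot}$ and $\dstrategyforR[\alpha]{\cdot}$ — commutes with $\adj{\cdot}$, so that the two defining intersections are literally in bijection. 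Everything else is a routine homomorphism check identical in shape to the corresponding case of the Renaming Lemma.
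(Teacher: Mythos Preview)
Your proposal is correct and follows essentially the same approach as the paper: a simultaneous structural induction on realizers, terms, games (Angel and Demon separately), and formulas, with the repetition case handled by pushing the adjoint through the fixed-point intersection just as in the Renaming Lemma. Your explicit bookkeeping about the projections $\apL{\cdot},\apR{\cdot},\dpL{\cdot},\dpR{\cdot}$ commuting with $\adj{\cdot}$ and your appeal to Coincidence in the assignment cases are details the paper's proof leaves implicit, but the skeleton and the key steps match.
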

\begin{proof}
We give the realizer cases.

\mycase $\rzNil$
Have $\spp{\rzNil}(\om) = \rzNil(\om) = \text{unit} = \rzNil(\adj{\om})$

\mycase $x$
Have $\spp{x}(\om) = x(\om) = x(\adj{\om})$ by definition of substitution on realizer variables.

\mycase $f$
Have $\spp{f}(\om) = \tint{\spp{f}}{\om} = \tint{f}{\adj{\om}} = f(\adj{\om})$ by term IH.

\mycase $\rzFst{\myaa}$
Have $\spp{\rzFst{\myaa}}(\om)
= \rzFst{\spp{\myaa}}(\om)
= \rzFst{\spp{\myaa}(\om)}
= \rzFst{\myaa(\adj{\om})}
= \rzFst{\myaa}(\adj{\om})$

\mycase $\rzSnd{\myaa}$
Have $\spp{\rzSnd{\myaa}}(\om)
= \rzSnd{\spp{\myaa}}(\om)
= \rzSnd{\spp{\myaa}(\om)}
= \rzSnd{\myaa(\adj{\om})}
= \rzSnd{\myaa}(\adj{\om})$

\mycase $\rzCons{\myaa}{\ab}$
We have that $\spp{\rzCons{\myaa}{\ab}}(\om) 
=  \rzCons{\spp{\myaa}}{\spp{\ab}}(\om)
= (\spp{\myaa}(\om),\spp{\ab}(\om))
= (\myaa(\adj{\om}),\ab(\adj{\om}))
=  \rzCons{\myaa}{\ab}(\adj{\om})$

\mycase $(\rzBLam{x}{\myaa})$
We have $\spp{\rzBLam{x}{\myaa}}(\om)
=(\rzBLam{x}{\spp{\myaa}})(\om)
= \tsub{\spp{\myaa}}{x}{\om}
= \tsub{\myaa}{x}{\adj{\om}}
= (\rzBLam{x}{\myaa})(\adj{\om})$

\mycase $(\rzFOLam{x}{\allrat}{\myaa})$
We reason by equality: $\spp{\rzFOLam{x}{\allrat}{\myaa}}(\om)
=(\rzFOLam{x}{\allrat}{\spp{\myaa}})(\om)
=(\rzFOLam{x}{\allrat}{\tsub{\spp{\myaa}}{y}{\om}})
=(\rzFOLam{x}{\allrat}{\tsub{\myaa}{y}{\adj{\om}}}
=(\rzFOLam{x}{\allrat}{\myaa})(\adj{\om})$

\mycase $\rzHOLam{x}{\phi}{\myaa}$
We have $\spp{\rzHOLam{x}{\phi}{\myaa}}(\om)
=(\rzHOLam{x}{\phi}{\spp{\myaa}})(\om)
=(\rzHOLam{x}{\phi}{\tsub{\spp{\myaa}}{y}{\om}})
=(\rzHOLam{x}{\phi}{\tsub{\myaa}{y}{\adj{\om}}}
=(\rzHOLam{x}{\phi}{\myaa})(\adj{\om})$

For terms, we give an inductive proof handling the polynomial cases from the paper.
The claim for terms will hold for any reasonable definition of computable functions.

\mycase $q$
Have $\tint{\spp{q}}{\om} = q = \tint{q}{\adj{\om}}$

\mycase $x$
Have $\tint{\spp{x}}{\om} = \om(\spp{x}) = \om(x) = \adj{\om}(x) = \tint{x}{\adj{\om}}$
in the case $x \notin \boundvars{\sigma}$
or
$\tint{\spp{x}}{\om} = \tint{f_i}{\om} =  \adj{\om}(x) = \tint{x}{\adj{\om}}$
if $x \in \boundvars{\sigma}$.

\mycase$f + g$
Have $\tint{\spp{f + g}}{\om} 
= \tint{\spp{f} + \spp{g}}{\om} 
= \tint{\spp{f}}{\om} + \tint{\spp{g}}{\om}
= \tint{f}{\adj{\om}} + \tint{g}{\adj{\om}}
= \tint{f + g}{\adj{\om}}$

\mycase$f \cdot g$
Have $\tint{\spp{f \cdot g}}{\om} 
= \tint{\spp{f} \cdot \spp{g}}{\om} 
= \tint{\spp{f}}{\om} \cdot \tint{\spp{g}}{\om}
= \tint{f}{\adj{\om}} \cdot \tint{g}{\adj{\om}}
= \tint{f \cdot g}{\adj{\om}}$

We give the Angel cases.

\mycase $\humod{x}{f}$
Have $(\myaa,\om) \in \strategyforR[\spp{\humod{x}{f}}]{X}$
iff $(\myaa,\om) \in \strategyforR[\humod{x}{\spp{f}}]{X}$
iff $\om = \subst[\nu]{x}{\tint{\spp{f}}{\om}}$ for some $(\myaa,\nu) \in X$
iff $\om = \subst[\nu]{x}{\tint{\spp{f}}{\om}}$ for some $(\myaa,\adj{\nu}) \in \adj{X}$
iff $\om = \subst[\nu]{x}{\tint{f}{\adj{\om}}}$ for some $(\myaa,\adj{\nu}) \in \adj{X}$
iff $\adj{\om} = \subst[\adj{\nu}]{x}{\tint{f}{\adj{\om}}},$ for some $(\myaa,\nu) \in \adj{X}$
iff $\adj{\om} \in \strategyforR[\humod{x}{f}]{\adj{X}}$
by IH on $f$ and since by admissibility of $\sigma$ have $x \notin \boundvars{\sigma}$.

\mycase $\prandom{x}$
Have $(\ab,\om) \in \strategyforR[\spp{\prandom{x}}]{X}$
iff $(\ab,\om) \in \strategyforR[\prandom{x}]{X}$
iff $\om = \subst[\nu]{x}{\rzApp{\myaa}{\om}},$ for some $(\rzCons{\myaa}{\ab},\nu) \in X$
iff $\om = \subst[\nu]{x}{\rzApp{\myaa}{\om}},$ for some $(\rzCons{\myaa}{\ab},\adj{\nu}) \in \adj{X}$
iff $\om = \subst[\nu]{x}{\rzApp{\adj{\myaa}}{\adj{\om}}},$ for some $(\rzCons{\myaa}{\ab},\adj{\nu}) \in \adj{X}$
iff $\adj{\om} = \subst[\adj{\nu}]{x}{\rzApp{\adj{\myaa}}{\adj{\om}}},$ for some $(\rzCons{\myaa}{\ab},\adj{\nu}) \in \adj{X}$
iff $(\adj{\ab},\adj{\om}) = \subst[\adj{\nu}]{x}{\rzApp{\adj{\myaa}}{\adj{\om}}},$ for some $(\rzCons{\myaa}{\ab},\adj{\nu}) \in \adj{X}$
iff $(\adj{\ab},\adj{\om}) \in \strategyforR[\humod{x}{f}]{\adj{X}}$
by IH on $f$ and since by admissibility of $\sigma$ have $x \notin \boundvars{\sigma}$.

\mycase $\ptest{\phi}$
Have $(\ab,\om) \in \strategyforR[\spp{\ptest{\phi}}]{X}$
iff $(\myaa,\om) \in \fintR{\spp{\phi}}$ and $(\rzCons{\myaa}{\ab}, \om) \in X$
iff $(\adj{\myaa},\adj{\om}) \in \adj{\fintR{\phi}}$ and $(\adj{\rzCons{\myaa}{\ab}}, \adj{\om}) \in \adj{X}$
iff $(\adj{\myaa},\adj{\om}) \in \strategyforR[\ptest{\phi}]{\adj{X}}$

\mycase $\alpha;\beta$
Have
\begin{align*}
&\phantom{\text{iff}}\, (\myaa,\om) \in \strategyforR[\spp{\alpha;\beta}]{X}\\
&\text{iff}\,(\myaa,\om) \in \strategyforR[\spp{\beta}]{\strategyforR[\spp{\alpha}]{X}}\\
&\text{iff}\,(\adj{\myaa},\adj{\om}) \in \strategyforR[\beta]{\{(\adj{\myaa},\adj{\nu})~|~ (\myaa,\nu) \in \strategyforR[\spp{\alpha}]{X}\}}\\
&\text{iff}\,(\adj{\myaa},\adj{\om}) \in \strategyforR[\beta]{\strategyforR[\alpha]{\adj{X}}}\\
&\text{iff}\,(\adj{\myaa},\adj{\om}) \in \strategyforR[\alpha;\beta]{\adj{X}}
\end{align*}
using the IH's on $\alpha$ and $\beta$.

\mycase$\alpha\cup\beta$
Have
\begin{align*}
&\phantom{\text{iff }}\ (\myaa,\om) \in \strategyforR[\spp{\alpha\cup\beta}]{X}\\
&\text{iff } (\myaa,\om) \in \strategyforR[\spp{\alpha}]{\apL{X}} \cup \strategyforR[\spp{\beta}]{\apR{X}}\\
&\text{iff } (\adj{\myaa},\adj{\om}) \in \strategyforR[\alpha]{(\adj{\apL{X}})} \cup \strategyforR[\beta]{(\adj{\apR{X}})}\\
&\text{iff } (\adj{\myaa},\adj{\om}) \in \strategyforR[\alpha]{(\apL{\adj{X}})} \cup \strategyforR[\beta]{(\apR{\adj{X}})}\\
&\text{iff } (\adj{\myaa},\adj{\om}) \in \strategyforR[\alpha\cup\beta]{\adj{X}}
\end{align*}
where the main step uses the IH's on $\alpha,\beta,$ and $\rzFst{\myaa}$.

\mycase$\prepeat{\alpha}$
Have
\begin{align*}
&\phantom{\text{iff}}\om \in \strategyforR[\spp{\prepeat{\alpha}}]{X}\\
&\text{iff}\,(\myaa,\om) \in \bigcap{\{\apL{Z}~|~X \cup \strategyforR[\spp{\alpha}]{\apR{Z}} \subseteq Z\}}\\
&\text{iff}\,(\adj{\myaa},\adj{\om}) \in \bigcap{\{\adj{\apL{Z}}~|~\adj{X} \cup \adj{\strategyforR[\spp{\alpha}]{\apR{Z}}} \subseteq \adj{Z}\}}\\
&\text{iff}\,(\adj{\myaa},\adj{\om}) \in \bigcap{\{\adj{\apL{Z}}~|~\adj{X} \cup \adj{\strategyforR[\spp{\alpha}]{\apR{Z}}} \subseteq \adj{Z}\}}\\
&\text{iff}\,(\adj{\myaa},\adj{\om}) \in \bigcap{\{\adj{\apL{Z}}~|~\adj{X} \cup \strategyforR[\alpha]{\adj{\apR{Z}}} \subseteq \adj{Z}\}}\\
&\text{iff}\,(\adj{\myaa},\adj{\om}) \in \bigcap{\{\apL{\adj{Z}}~|~\adj{X} \cup \strategyforR[\alpha]{\apR{\adj{Z}}} \subseteq \adj{Z}\}}\\
&\text{iff}\,(\adj{\myaa},\adj{\om}) \in \bigcap{\{\apL{Z}~|~\adj{X} \cup \strategyforR[\alpha]{\apR{Z}} \subseteq Z\}}\\
&\text{iff}\,(\adj{\myaa},\adj{\om}) \in \strategyforR[\prepeat{\alpha}]{\adj{X}}
\end{align*}

\mycase$\pdual{\alpha}$
Have $(\myaa,\om) \in \strategyforR[\spp{\pdual{\alpha}}]{X}$
iff $(\myaa,\om) \in \dstrategyforR[\spp{\alpha}]{X}$
iff $(\adj{\myaa},\adj{\om}) \in \dstrategyforR[\alpha]{\adj{X}}$
iff $(\adj{\myaa},\adj{\om}) \in \strategyforR[\pdual{\alpha}]{\adj{X}}$

We give the Demon cases.

\mycase $\humod{x}{f}$
Have $(\myaa,\om) \in \dstrategyforR[\spp{\humod{x}{f}}]{X}$
iff $(\myaa,\om) \in \dstrategyforR[\humod{x}{\spp{f}}]{X}$
iff $\om = \subst[\nu]{x}{\tint{\spp{f}}{\nu}},$ for some $(\myaa,\nu)\in X$
iff $\adj{\om} = \adj{\subst[\nu]{x}{\tint{\spp{f}}{\nu}}},$ for some $(\myaa,\nu)\in \adj{X}$
iff $\adj{\om} = \adj{\subst[\nu]{x}{\tint{f}{\adj{\nu}}}},$ for some $(\myaa,\nu)\in \adj{X}$
iff $\adj{\om} = \subst[\adj{\nu}]{x}{\tint{f}{\adj{\nu}}},$ for some $(\myaa,\nu) \in \adj{X}$
iff $(\adj{\myaa},\adj{\om}) \in \dstrategyforR[\humod{x}{f}]{\adj{X}}$
by IH on $f$ and since by admissibility of $\sigma$ have $x \notin \boundvars{\sigma}$.

\mycase $\prandom{x}$
Have $(\myaa,\om) \in \dstrategyforR[\spp{\prandom{x}}]{X}$
iff $(\myaa,\om) \in \dstrategyforR[\prandom{x}]{X}$
iff $\om = \subst[\nu]{x}{v},$ for some $(\myaa,\nu)\in X, v \in \allrat$
iff $\adj{\om} = \adj{\subst[\nu]{x}{v}},$ for some  $(\myaa,\nu)\in \adj{X}, v \in \allrat$
iff $\adj{\om} = \subst[\adj{\nu}]{x}{v},$ for some $(\myaa,\nu) \in \adj{X}$
iff $(\adj{\myaa},\adj{\om}) \in \dstrategyforR[\prandom{x}]{\adj{X}}$
by IH on $f$ and since by admissibility of $\sigma$ have $x \notin \boundvars{\sigma}$.

\mycase $\ptest{\phi}$
Have $(\rzApp{\myaa}{\ab},\om) \in \dstrategyforR[\spp{\ptest{\phi}}]{X}$
iff $(\myaa,\om) \in X$ for all  $(\ab,\om) \in \fintR{\spp{\phi}}$	
iff $(\adj{\myaa},\adj{\om}) \in \adj{X}$ for all $(\adj{\ab},\adj{\om}) \in \fintR{\phi}$
iff $(\adj{\myaa},\adj{\om}) \in \dstrategyforR[\ptest{\phi}]{\adj{X}}$

\mycase $\alpha;\beta$
\begin{align*}
&\phantom{\text{iff }} (\ab,\om) \in \dstrategyforR[\spp{\alpha;\beta}]{X}\\
&\text{iff } (\ab,\om) \in \dstrategyforR[\spp{\beta}]{\dstrategyforR[\spp{\alpha}]{\spp{\rzSnd{\myaa}}}{X}}\\
&\text{iff }  (\adj{\ab},\adj{\om}) \in \dstrategyforR[\beta]{\{(\adj{\myaa},\adj{\nu})~|~ (\myaa,\nu) \in \dstrategyforR[\spp{\alpha}]{X}\}}\\
&\text{iff } (\adj{\ab},\adj{\om}) \in \dstrategyforR[\beta]{\{(\adj{\myaa},\adj{\nu})~|~ (\adj{\myaa},\adj{\nu}) \in \dstrategyforR[\alpha]{\adj{X}}\}}\\
&\text{iff } (\adj{\ab},\adj{\om}) \in \dstrategyforR[\beta]{\dstrategyforR[\alpha]{\adj{X}}}\\
&\text{iff } (\adj{\ab},\adj{\om}) \in \dstrategyforR[\alpha;\beta]{\adj{X}}
\end{align*}
using the IH's on $\alpha$ and $\beta$.

\mycase$\alpha\cup\beta$
Have 
\begin{align*}
&\text{iff }(\myaa,\om) \in \dstrategyforR[\spp{\alpha\cup\beta}]{X}\\
&\text{iff }(\myaa,\om) \in \dstrategyforR[\spp{\alpha}]{\dpL{X}} \cup \dstrategyforR[\spp{\beta}]{\dpR{X}}\\
&\text{iff }(\adj{\myaa},\adj{\om}) \in \dstrategyforR[\alpha]{(\adj{\dpL{X}})} \cup \dstrategyforR[\beta]{(\adj{\dpR{X}})}\\
&\text{iff }(\adj{\myaa},\adj{\om}) \in \dstrategyforR[\alpha]{(\dpL{\adj{X}})} \cup \dstrategyforR[\beta]{(\dpR{\adj{X}})}\\
&\text{iff }(\adj{\myaa},\adj{\om}) \in \dstrategyforR[\alpha\cup\beta]{(\adj{X})}
\end{align*}
where the main step uses the IH's on $\alpha,\beta.$

\mycase$\prepeat{\alpha}$
Have
\begin{align*}
    &\phantom{\text{iff }} (\myaa,\om) \in \bigcap{\{\dpL{Z}~|~X \cup \dstrategyforR[\spp{\alpha}]{\dpR{Z}} \subseteq Z\}}\\
&\text{iff } (\adj{\myaa},\adj{\om}) \in \bigcap{\{\adj{\dpL{Z}}~|~\adj{X} \cup \adj{\dstrategyforR[\spp{\alpha}]{\dpR{Z}}} \subseteq \adj{Z}\}}\\
&\text{iff } (\adj{\myaa},\adj{\om}) \in \bigcap{\{\adj{\dpL{Z}}~|~\adj{X} \cup \adj{\dstrategyforR[\spp{\alpha}]{\dpR{Z}}} \subseteq \adj{Z}\}}\\
&\text{iff } (\adj{\myaa},\adj{\om}) \in \bigcap{\{\adj{\dpL{Z}}~|~\adj{X} \cup \dstrategyforR[\alpha]{\adj{\dpR{Z}}} \subseteq \adj{Z}\}}\\
&\text{iff } (\adj{\myaa},\adj{\om}) \in \bigcap{\{\dpL{\adj{Z}}~|~\adj{X} \cup \dstrategyforR[\alpha]{\dpR{\adj{Z}}} \subseteq \adj{Z}\}}\\
&\text{iff } (\adj{\myaa},\adj{\om}) \in \bigcap{\{\dpL{Z}~|~\adj{X} \cup \dstrategyforR[\alpha]{\dpR{Z}} \subseteq Z\}}\\
&\text{iff } (\adj{\myaa},\adj{\om}) \in \dstrategyforR[\prepeat{\alpha}]{\adj{X}}
\end{align*}

\mycase$\pdual{\alpha}$
Have $(\myaa,\om) \in \dstrategyforR[\spp{\pdual{\alpha}}]{X}$
iff $(\myaa,\om) \in \strategyforR[\spp{\alpha}]{X}$
iff $(\adj{\myaa},\adj{\om}) \in \strategyforR[\alpha]{\adj{X}}$
iff $(\adj{\myaa},\adj{\om}) \in \dstrategyforR[\pdual{\alpha}]{\adj{X}}$

We give the formula cases.

\mycase $f > g$
Have $(\rzNil,\om) \in \fintR{\spp{f > g}}$
iff  $\tint{\spp{f}}{\om} < \tint{\spp{g}}{\om}$
iff  $\tint{f}{\adj{\om}} < \tint{g}{\adj{\om}}$
iff  $(\adj{\rzNil}, \adj{\om}) \in \fintR{f < g}$.
The cases for $\leq, <, =, \neq, \geq$ are symmetric.

\mycase $\ddiamond{\alpha}{\phi}$
Have
\begin{align*}
&\phantom{\text{iff }}(\myaa,\om) \in \fintR{\spp{\ddiamond{\alpha}{\phi}}}\\
&\text{iff }\strategyforR[\spp{\alpha}]{\{(\myaa,\om)\}} \subseteq \{(\myaa,\om)~|~(\adj{\myaa},\adj{\om}) \in \fintR{\phi} \cup \{\stt\}\}\\
&\text{iff }\adj{\strategyforR[\alpha]{\{(\adj{\myaa},\adj{\om})\}}} \subseteq \{(\myaa,\om)~|~(\adj{\myaa},\adj{\om}) \in \fintR{\phi} \cup \{\stt\}\}\\
&\text{iff }\strategyforR[\alpha]{\{(\adj{\myaa},\adj{\om})\}} \subseteq \fintR{\phi} \cup \{\stt\}\\
&\text{iff }(\adj{\myaa},\adj{\om}) \in \fintR{\ddiamond{\alpha}{\phi}}
\end{align*}

\mycase $\dbox{\alpha}{\phi}$
Have
\begin{align*}
&\phantom{\text{iff }}(\myaa,\om) \in \fintR{\spp{\dbox{\alpha}{\phi}}}\\
&\text{iff }\dstrategyforR[\spp{\alpha}]{\{(\myaa,\om)\}} \subseteq \{(\myaa,\om)~|~(\adj{\myaa},\adj{\om}) \in \fintR{\phi}  \cup \{\stt\}\}\\
&\text{iff }\adj{\dstrategyforR[\alpha]{\{(\adj{\myaa},\adj{\om})\}}} \subseteq \{(\myaa,\om)~|~(\adj{\myaa},\adj{\om}) \in \fintR{\phi}  \cup \{\stt\}\}\\
&\text{iff }\adj{\dstrategyforR[\alpha]{\{(\adj{\myaa},\adj{\om})\}}} \subseteq \{(\myaa,\om)~|~(\adj{\myaa},\adj{\om}) \in \fintR{\phi}  \cup \{\stt\}\}\\
&\text{iff }\dstrategyforR[\alpha]{\{(\adj{\myaa},\adj{\om})\}} \subseteq \fintR{\phi}  \cup \{\stt\}\\
&\text{iff }(\adj{\myaa},\adj{\om}) \in \fintR{\dbox{\alpha}{\phi}}
\end{align*}
\end{proof}

\begin{lemma}[Coincidence]
If $\om = \tom$ and $\myaa = \taa$ on $ V \supseteq \freevars{e}$ (where $e$ is $f$ or $\phi$ or $\alpha$ or $\myaa$):
\begin{itemize}
\item $\myaa(\om) = \myaa(\tom)$
\item $\tint{f}{\om} = \tint{f}{\tom}$
\item $\strategyforR[\alpha]{\{(\myaa,\om)\}}$ =  $\strategyforR[\alpha]{\{(\taa,\tom)\}}$ on $\mustboundvars{\alpha} \cup V$ 
\item $\dstrategyforR[\alpha]{\{(\myaa,\om)\}}$ =  $\dstrategyforR[\alpha]{\{(\taa,\tom)\}}$ on $\mustboundvars{\alpha} \cup V$ 
\item $\om \in \fint{\phi}{}$ iff $\tom \in \fint{\phi}{}$
\end{itemize}
\label{lem:app-coincide}
\end{lemma}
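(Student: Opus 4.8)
The plan is to prove all five claims by a single structural induction, with the term and realizer claims handled by their own sub-inductions and the game and formula claims by a mutual induction (since $\ddiamond{\alpha}{\phi}$ nests a game inside a formula while $\ptest{\phi}$ nests a formula inside a game). The organizing principle is that only free variables influence an expression, which for games refines to: execution of $\alpha$ from ``bisimilar'' inputs reaches bisimilar outputs, where $(\myaa,\om)$ and $(\taa,\tom)$ count as bisimilar relative to a set $U$ when $\om=\tom$ and $\myaa=\taa$ on $U$. To make the game clauses compose cleanly I would restate them in the explicit form ``for every $(\myaa',\om')\in\strategyforR[\alpha]{\{(\myaa,\om)\}}$ there is $(\taa',\tom')\in\strategyforR[\alpha]{\{(\taa,\tom)\}}$ with $\om'=\tom'$ and $\myaa'=\taa'$ on $\mustboundvars{\alpha}\cup V$, and symmetrically'' (likewise for $\dstrategyforR{\cdot}$). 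Before the induction I would record the trivial monotonicity remark that agreement on $V'\supseteq V$ is a stronger hypothesis, so the agreement set may be freely enlarged or shrunk as the syntax dictates, and that $\freevars{e}\subseteq V$ always by assumption.

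The term cases are immediate: $q$ is constant, $x$ reads a single variable lying in $V=\freevars{x}$, and $f+g,\ f\cdot g$ follow from the inductive hypotheses and $\freevars{f+g}=\freevars{f}\cup\freevars{g}$. The realizer cases are analogous and mirror exactly the bookkeeping already done in \rref{lem:app-tren} and \rref{lem:app-tsub}: $\rzNil$ is constant; a realizer variable reads $\om(x)$ with $x$ free; $\rzFst{\myaa}$, $\rzSnd{\myaa}$, $\rzCons{\myaa}{\ab}$ follow from the IH; and the lambdas $(\rzBLam{\nu}{\myaa})$, $(\rzFOLam{x}{\allrat}{\myaa})$, $(\rzHOLam{x}{\phi}{\myaa})$ follow from the IH after noting that substituting the common argument preserves agreement. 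For the game clauses, proved simultaneously for Angel and Demon: $\ptest{\phi}$ leaves the state untouched and appeals to the formula IH for $\phi$ (using $\freevars{\ptest{\phi}}=\freevars{\phi}$); $\humod{x}{f}$ and $\prandom{x}$ modify only $x\in\mustboundvars{\cdot}$, and the new value of $x$ comes from $f$ (term IH) or from $\rzFst{\myaa}(\om)$ (realizer IH), so the outputs agree on $\{x\}\cup V$; $\alpha;\beta$ composes the two IH's, using $\freevars{\alpha;\beta}=\freevars{\alpha}\cup(\freevars{\beta}\setminus\mustboundvars{\alpha})$ so that after playing $\alpha$ the states agree on $\mustboundvars{\alpha}\cup V\supseteq\freevars{\beta}$, yielding agreement on $\mustboundvars{\alpha}\cup\mustboundvars{\beta}\cup V=\mustboundvars{\alpha;\beta}\cup V$; $\alpha\cup\beta$ takes the same branch on both sides since the Angelic/Demonic projections depend only on $\rzFst{\myaa}(\om)$ (realizer IH), and $\mustboundvars{\alpha\cup\beta}=\mustboundvars{\alpha}\cap\mustboundvars{\beta}$ is exactly what is modified on both branches; $\pdual{\alpha}$ invokes the opposite-polarity IH with the same agreement set, since $\freevars{\pdual{\alpha}}=\freevars{\alpha}$ and $\mustboundvars{\pdual{\alpha}}=\mustboundvars{\alpha}$.

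The formula cases then close the mutual induction. For $f\sim g$ the realizer is the constant $\rzNil$ and the two sides follow from the term IH. For $\ddiamond{\alpha}{\phi}$, agreement on $V\supseteq\freevars{\ddiamond{\alpha}{\phi}}=\freevars{\alpha}\cup(\freevars{\phi}\setminus\mustboundvars{\alpha})$ feeds the game IH for $\alpha$, producing for each end-possibility of one side a matching end-possibility of the other agreeing on $\mustboundvars{\alpha}\cup V\supseteq\freevars{\phi}$; the formula IH for $\phi$ then shows the two end-possibilities lie in $\fintR{\phi}\cup\{\stt\}$ together, so the two modality memberships coincide. The case $\dbox{\alpha}{\phi}$ is identical with $\dstrategyforR{\cdot}$ in place of $\strategyforR{\cdot}$.

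The main obstacle is the repetition case $\prepeat{\alpha}$, where $\mustboundvars{\prepeat{\alpha}}=\emptyset$ leaves only agreement on $V$ as leverage, and matching the two fixed points directly is awkward because the $\alpha$-IH must be re-applied with a large enough agreement set at every unrolling. The safest route is to pass through the inflationary characterization of \rref{lem:app-altern} (together with \rref{rem:app-inflate}): prove by an inner induction on $k$ that $\istrat[\alpha]{\{(\myaa,\om)\}}{k}$ and $\istrat[\alpha]{\{(\taa,\tom)\}}{k}$ remain bisimilar relative to $V$ — which goes through because $\freevars{\alpha}=\freevars{\prepeat{\alpha}}$ and the Angelic projection $\apR{\cdot}$ preserves agreement by the realizer IH — and then take the union over $k\in\mathbb{N}$; the Demonic side is symmetric. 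A secondary subtlety, already visible in \rref{lem:app-tren}, is keeping the ``on $\mustboundvars{\alpha}\cup V$'' phrasing precise, which the explicit matching-possibility formulation above is designed to handle.
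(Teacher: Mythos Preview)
Your proposal is correct and follows the same overall architecture as the paper: a separate induction for realizers, an induction for terms, and a simultaneous structural induction on games and formulas, with the formula modality cases closing the loop by feeding the game IH into the formula IH via the inclusion $\mustboundvars{\alpha}\cup V\supseteq\freevars{\phi}$.

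There are two genuine differences worth noting. First, for $\prepeat{\alpha}$ the paper argues directly on the fixed-point definition: it observes that the defining set $\{(\myaa,\om)\}\cup\strategyforR[\alpha]{\apR{Z}}$ agrees on $V$ with $\{(\taa,\tom)\}\cup\strategyforR[\alpha]{\apR{Z}}$ (the second summand being independent of the start), and then uses that the least fixed point itself is one of the $Z$'s to conclude $C(\myaa,\om)=C(\taa,\tom)$ on $V$. Your route through \rref{lem:app-altern} with an inner induction on $k$ is more explicit and arguably more transparent about why agreement is preserved across iterations; the paper's fixed-point argument is terser but leaves more to the reader. Second, for $\alpha;\beta$ the paper invokes \rref{lem:app-partition} to decompose $\strategyforR[\beta]{\strategyforR[\alpha]{X}}$ into a union of singleton-input runs so the $\beta$-IH applies pointwise; your explicit ``matching end-possibility'' reformulation bakes this decomposition into the induction hypothesis itself, which is why you did not need to cite Partition. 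Both choices are sound; yours front-loads the bookkeeping into the statement, the paper's keeps the statement as written and pays with an auxiliary lemma.
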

\begin{proof}
First claim is by induction on $\myaa$.
Second claim holds for any reasonable representation of computable functions, we give an example inductive proof for polynomial cases.
Remaining claims are by induction simultaneously on $\alpha, \phi$.
In each case assumption (1) is the assumption $\om = \tom$ on $\freevars{e}$

We give the cases for realizers.

\mycase $\rzNil$
Have $\rzNil(\om) = \text{unit} = \rzNil(\tom)$

\mycase $x$  where $x$ is a variable over realizers.
We have $x(\om) = x(\tom)$ since $\om$ and $\tom$ agree on the free variables of $x$.
That is, they agree on all variables so $\om = \tom$.

\mycase $f$
Have $f(\om) = f(\tom)$ by IH on terms.

\mycase $\rzFst{\myaa}$
Have $\rzFst{\myaa}(\om) = \rzFst{\myaa(\om)} = \rzFst{\taa(\tom)} = \rzFst{\taa}(\tom)$ by IH.

\mycase $\rzSnd{\myaa}$
Have $\rzSnd{\myaa}(\om) = \rzSnd{\myaa(\om)} = \rzSnd{\taa(\tom)} = \rzSnd{\taa}(\tom)$ by IH.

\mycase $\rzCons{\myaa}{\ab}$
Have $\rzCons{\myaa}{\ab}(\om)
= (\myaa(\om),\ab(\om))
= (\myaa(\tom),\ab(\tom))
= (\taa(\tom),\tab(\tom))
= \rzCons{\taa}{\tab}(\tom)$ by IHs.

\mycase $(\rzBLam{x}{\myaa})$
Have 
$(\rzBLam{x}{\myaa})(\om)
= \tsub{\myaa}{x}{\om}
= \tsub{\taa}{x}{\tom}
= (\rzBLam{x}{\taa}(\tom))
$ by (1),
$\om = \tom$ on $\freevars{\rzBLam{x}{\myaa}} = \freevars{\myaa}$. 

\mycase $(\rzFOLam{x}{\allrat}{\myaa})$
Have $(\rzFOLam{x}{\allrat}{\myaa})(v)
= \tsub{\myaa}{x}{v}
= \tsub{\taa}{x}{v}
= (\rzFOLam{x}{\allrat}{\taa})(v)$

\mycase $(\rzHOLam{x}{\phi}{\myaa})$
Have $(\rzHOLam{x}{\phi}{\myaa})(v)
= \tsub{\myaa}{x}{\ab}
= \tsub{\taa}{x}{\tab}
= (\rzHOLam{x}{\phi}{\taa})(v)$

We give (example) term cases.

\mycase $q$
Have $\tint{q}{\om} = q = \tint{q}{\tom} $

\mycase $x$
Have $\tint{x}{\om} = \om(x) = \tom(x) = \tint{x}{\tom}$ by (1).

\mycase$f + g$
Have $\tint{f + g }{\om} = \tint{f}{\om} + \tint{g}{\om}  = \tint{f}{\tom} + \tint{g}{\tom} = \tint{f + g}{\tom}$ by the IHs.

\mycase$f \cdot g$
Have $\tint{f \cdot g }{\om} = \tint{f}{\om} \cdot \tint{g}{\om}  = \tint{f}{\tom} \cdot \tint{g}{\tom} = \tint{f \cdot g}{\tom}$ by the IHs.

We give the Angel cases.

\mycase $\humod{x}{f}$
    $\strategyforR[\humod{x}{f}]{\{(\myaa,\om)\}}$
= $\{(\myaa,\subst[\om]{x}{\tint{f}{\om}}\}$
= $\{(\myaa,\subst[\om]{x}{\tint{f}{\om}}\}$
= $\{(\myaa,\subst[\om]{x}{\tint{f}{\tom}}\}$
which equals = $\{(\taa,\subst[\tom]{x}{\tint{f}{\tom}}\}$ on $V \cup \mustboundvars{\alpha} = V \cup \{x\}$.

\mycase $\prandom{x}$
Have   $\strategyforR[\prandom{x}]{\{(\myaa,\om)\}}$
= $\{(\rzSnd{\myaa},\subst[\om]{x}{\rzApp{\rzFst{\myaa}}{\om}}\}$
= $\{(\rzSnd{\myaa},\subst[\om]{x}{\rzApp{\rzFst{\myaa}}{\om}}\}$
Then $\rzApp{\rzFst{\myaa}}{\om} = \rzApp{\rzFst{\taa}}{\tom}$ by assumption so
$\{(\rzSnd{\myaa},\subst[\om]{x}{\rzApp{\rzFst{\myaa}}{\om}}\}$
= $\{(\rzSnd{\myaa},\subst[\om]{x}{\rzApp{\rzFst{\myaa}}{\om}}\}$ on $V \cup \mustboundvars{\alpha} = V \cup \{x\}$.

\mycase $\ptest{\phi}$
Have $\strategyforR[\ptest{\phi}]{\{(\myaa,\om)\}}$ 
= $\{(\rzSnd{\myaa},\om)\}$ if $(\rzFst{\myaa},\om) \in \fintR{\phi}$ else $\emptyset$.
= $\{(\rzSnd{\myaa},\om)\}$ if $(\rzFst{\myaa},\om) \in \fintR{\phi}$ else $\emptyset$.
= (IH) $\{(\rzSnd{\myaa},\om)\}$ if $(\rzFst{\taa},\tom) \in \fintR{\phi}$ else $\emptyset$.
= (1) $\{(\rzSnd{\taa},\tom)\}$ if $(\rzFst{\taa},\tom) \in \fintR{\phi}$ else $\emptyset$.
=$\strategyforR[\ptest{\phi}]{\{(\taa,\tom)\}}$

\mycase $\alpha;\beta$
Have $\strategyforR[\alpha;\beta]{\{(\myaa,\om)\}}$ 
= $\strategyforR[\beta]{\strategyforR[\alpha]{\{(\myaa,\om)\}}}$ 
Then by IH on $\alpha$ have $\strategyforR[\alpha]{\{(\myaa,\om)\}} = \strategyforR[\alpha]{\{(\taa,\tom)\}}$ on $\mustboundvars{\alpha} \cup V$.
Then by \rref{lem:app-partition} and IH on $\beta$ have $\strategyforR[\beta]{\strategyforR[\alpha]{\{(\myaa,\om)\}}} = \strategyforR[\beta]{\strategyforR[\alpha]{\{(\taa,\tom)\}}}$ on $\mustboundvars{\alpha;\beta} = \mustboundvars{\alpha} \cup \mustboundvars{\beta} \cup V$.

\mycase$\alpha\cup\beta$
Have $\strategyforR[\alpha\cup\beta]{\{(\myaa,\om)\}}$ 
= $\strategyforR[\alpha]{\apL{\{(\myaa,\om)\}}} \cup \strategyforR[\beta]{\apR{\{(\myaa,\om)\}}}$ 
In each case, one projection is singleton and the other empty.
In each case, $\strategyforR[\alpha\cup\beta]{\{(\myaa,\om)\}} = \{(\myaa,\om)\}$  on $\mustboundvars{\alpha} \cup V$ or $\mustboundvars{\beta} \cup V$, then since $\mustboundvars{\alpha\cup\beta} = \mustboundvars{\alpha}\cap\mustboundvars{\beta},$ in either case they agree on $\mustboundvars{\alpha\cup\beta}\cup V$.

\mycase$\prepeat{\alpha}$
Let $C(\myaa,\om) = \bigcap\{\apL{Z}~|~\{(\myaa,\om)\} \cup \strategyforR[\alpha]{\apR{Z}} \subseteq Z\}$.
$(\myaa,\om) \in \strategyforR[\prepeat{\alpha}]{\{(\myaa,\om)\}}$ 
iff $(\myaa,\om) \in \bigcap\{\apL{Z}~|~ \{(\myaa,\om)\} \cup \strategyforR[\alpha]{\apR{Z}} \subseteq Z \}$.
Consider any such $Z$:
$\apL{\{(\myaa,\om)\}} \cup \strategyforR[\alpha]{\apR{Z}}$ 
= $\apL{\{(\taa,\tom)\}} \cup \strategyforR[\alpha]{\apR{Z}},$ on $\mustboundvars{\prepeat{\alpha}} \cup V = V,$ the first conjunct by (1) and the second by the IH.
Then by definition of least fixed point, $C$ is one such $Z$, satisfying the set inclusion with exact equality, so
$C(\myaa,\om) $ iff $C(\taa,\tom),$ i.e.,
$\strategyforR[\prepeat{\alpha}]{\{(\myaa,\om)\}}$ 
= $\strategyforR[\prepeat{\alpha}]{\{(\taa,\tom)\}}$ 

\mycase$\pdual{\alpha}$
Have    $\strategyforR[\pdual{\alpha}]{\{(\myaa,\om)\}}$ 
= $\dstrategyforR[\alpha]{\{(\myaa,\om)\}}$ 
= (IH) $\dstrategyforR[\alpha]{\{(\taa,\tom)\}}$ 
on $V \cup \mustboundvars{\alpha} = V \cup \mustboundvars{\pdual{\alpha}}$.
which equals
  $\strategyforR[\pdual{\alpha}]{\{(\taa,\tom)\}}$ 
yieliding the case.

We give the Demon cases.

\mycase $\humod{x}{f}$
Have $\dstrategyforR[\humod{x}{f}]{\{(\myaa,\om)\}}$
= $\{(\myaa,\subst[\om]{x}{\tint{f}{\om}}\}$
= $\{(\myaa,\subst[\om]{x}{\tint{f}{\om}}\}$
= $\{(\myaa,\subst[\om]{x}{\tint{f}{\tom}}\}$
which equals = $\{(\taa,\subst[\tom]{x}{\tint{f}{\tom}}\}$ on $V \cup \mustboundvars{\alpha} = V \cup \{x\}$.

\mycase $\prandom{x}$
Have    $\dstrategyforR[\prandom{x}]{\{(\myaa,\om)\}}$
= $\{(\rzApp{\myaa}{v},\subst[\om]{x}{v})~|~v \in \allrat\}$
Then $\rzApp{\myaa}{v} = \rzApp{\taa}{v}$ by assumption so
$\{(\rzApp{\myaa}{v},\subst[\om]{x}{v})\}$
= $\{(\rzApp{\myaa}{v},\subst[\om]{x}{v})\}$ on $V \cup \mustboundvars{\alpha} = V \cup \{x\}$.

\mycase $\ptest{\phi}$
Have $\dstrategyforR[\ptest{\phi}]{\{(\myaa,\om)\}}$
= $\{(\rzApp{\myaa}{\ab},\om)~| (\ab,\om) \in \fintR{\phi}\}$
likewise
$ \dstrategyforR[\ptest{\phi}]{\{(\taa,\tom)\}}$
= $\{(\rzApp{\taa}{\tab},\tom)~| (\tab,\tom) \in \fintR{\phi}\}$
then by (IH) and (1)  $(\tab,\tom) \in \fintR{\phi}$ iff $(\ab,\om) \in \fintR{\phi}$
and $\myaa$ = $\taa$ and $\om = \tom$ on $V \cup \mustboundvars{\ptest{\phi}} = V$.

\mycase $\alpha;\beta$
Have $\dstrategyforR[\alpha;\beta]{\{(\myaa,\om)\}}$
= $\dstrategyforR[\beta]{\dstrategyforR[\alpha]{\{(\myaa,\om)\}}}$
Then by IH on $\alpha$ have $\dstrategyforR[\alpha]{\{(\myaa,\om)\}} = \dstrategyforR[\alpha]{\{(\taa,\tom)\}}$ on $\mustboundvars{\alpha} \cup V$.
Then by \rref{lem:app-partition} and IH on $\beta$ have $\dstrategyforR[\beta]{\dstrategyforR[\alpha]{\{(\myaa,\om)\}}} = \dstrategyforR[\beta]{\dstrategyforR[\alpha]{\{(\taa,\tom)\}}}$ on $\mustboundvars{\alpha;\beta} = \mustboundvars{\alpha} \cup \mustboundvars{\beta} \cup V$.

\mycase$\alpha\cup\beta$
Have $\dstrategyforR[\alpha\cup\beta]{\{(\myaa,\om)\}}$
= $\dstrategyforR[\alpha]{\dpL{\{(\myaa,\om)\}}} \cup \dstrategyforR[\beta]{\dpR{\{(\myaa,\om)\}}}$
In each case, one projection is singleton and the other empty.
In each case, $\dstrategyforR[\alpha\cup\beta]{\{(\myaa,\om)\}} = \{(\myaa,\om)\}$  on $\mustboundvars{\alpha} \cup V$ or $\mustboundvars{\beta} \cup V$, then since $\mustboundvars{\alpha\cup\beta} = \mustboundvars{\alpha}\cap\mustboundvars{\beta},$ in either case they agree on $\mustboundvars{\alpha\cup\beta}\cup V$.

\mycase$\prepeat{\alpha}$
Let $C(\myaa,\om) = \bigcap\{\dpL{Z}~|~\{(\myaa,\om)\} \cup \dstrategyforR[\alpha]{\dpR{Z}} \subseteq Z\}$.
$(\myaa,\om) \in \dstrategyforR[\prepeat{\alpha}]{\{(\myaa,\om)\}}$
iff $(\myaa,\om) \in \bigcap\{\dpL{Z}~|~ \{(\myaa,\om)\} \cup \dstrategyforR[\alpha]{\dpR{Z}} \subseteq Z \}$.
Consider any such $Z$:
$\dpL{\{(\myaa,\om)\}} \cup \dstrategyforR[\alpha]{\dpR{Z}}$
= $\dpL{\{(\taa,\tom)\}} \cup \dstrategyforR[\alpha]{\dpR{Z}},$ on $\mustboundvars{\prepeat{\alpha}} \cup V = V,$ the first conjunct by (1) and the second by the IH.
Then by definition of least fixed point, $C$ is one such $Z$, satisfying the set inclusion with exact equality, so
$C(\myaa,\om) $ iff $C(\taa,\tom),$ i.e.,
$\dstrategyforR[\prepeat{\alpha}]{\{(\myaa,\om)\}}$ 
= $\dstrategyforR[\prepeat{\alpha}]{\{(\taa,\tom)\}}$

\mycase$\pdual{\alpha}$
Have $\dstrategyforR[\pdual{\alpha}]{\{(\myaa,\om)\}}$
= $\strategyforR[\alpha]{\{(\myaa,\om)\}}$
= (IH) $\strategyforR[\alpha]{\{(\taa,\tom)\}}$
on $V \cup \mustboundvars{\alpha} = V \cup \mustboundvars{\pdual{\alpha}}$.
which equals
  $\dstrategyforR[\pdual{\alpha}]{\{(\taa,\tom)\}}$ 
yieliding the case.

We give the formula cases

\mycase $f > g$
Have $(\rzNil,\om) \in \fintR{f > g}$
iff $\tint{f}{\om} > \tint{g}{\om}$
iff $\tint{f}{\tom} > \tint{g}{\tom}$
iff $(\rzNil,\tom) \in \fintR{f > g}$.
The cases for $\leq, <, =, \neq, \geq$ are symmetric.

\mycase $\ddiamond{\alpha}{\phi}$
Have  $(\myaa,\om) \in \fintR{\ddiamond{\alpha}{\phi}}$
iff $\strategyforR[\alpha]{\{(\myaa,\om)\}} \subseteq \fint{\phi}  \cup \{\stt\}$
iff$^*$ $\strategyforR[\alpha]{\{(\taa,\tom)\}} \subseteq \fint{\phi}  \cup \{\stt\}$
iff $(\taa,\tom) \in \fintR{\ddiamond{\alpha}{\phi}}$
where the starred step holds by the IH because $\myaa = \taa$ and $\om = \tom$ on $\freevars{\ddiamond{\alpha}{\phi}} = \freevars{\alpha} \cup (\freevars{\phi} - \mustboundvars{\alpha}),$ thus $ \strategyforR[\alpha]{\{(\myaa,\om)\}} = \strategyforR[\alpha]{\{(\taa,\tom)\}}$ on $\mustboundvars{\alpha} \cup V \supseteq \freevars{\phi}$ giving the equivalence by the IH on $\phi$.

\mycase $\dbox{\alpha}{\phi}$
Have $(\myaa,\om) \in \fintR{\dbox{\alpha}{\phi}}$
iff $\dstrategyforR[\alpha]{\{(\myaa,\om)\}} \subseteq \fintR{\phi}  \cup \{\stt\}$
iff$^*$ $\dstrategyforR[\alpha]{\{(\taa,\tom)\}} \subseteq \fintR{\phi}  \cup \{\stt\}$
iff  $(\taa,\tom) \in \fintR{\dbox{\alpha}{\phi}}$
where the starred step holds by the IH because $\myaa = \taa$ and $\om = \tom$ on $\freevars{\ddiamond{\alpha}{\phi}} = \freevars{\alpha} \cup (\freevars{\phi} - \mustboundvars{\alpha}),$ thus $ \strategyforR[\alpha]{\{(\myaa,\om)\}} = \strategyforR[\alpha]{\{(\taa,\tom)\}}$ on $\mustboundvars{\alpha} \cup V \supseteq \freevars{\phi}$ giving the equivalence by the IH on $\phi$.
\end{proof}

\begin{lemma}[Partition]
For all suitable regions $X$ and all games $\alpha,$
\begin{itemize}
\item $\strategyforR[\alpha]{X} = \bigcup_{(\myaa,\om) \in X} \strategyforR[\alpha]{\{(\myaa,\om)\}}$
\item $\dstrategyforR[\alpha]{X} = \bigcup_{(\myaa,\om) \in X} \dstrategyforR[\alpha]{\{(\myaa,\om)\}}$
\end{itemize}
\label{lem:app-partition}
\end{lemma}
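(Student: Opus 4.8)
The plan is to prove both identities simultaneously by induction on the structure of the game $\alpha$, since the Angelic and Demonic forward semantics are mutually recursive through the dual case. Throughout I will use the elementary observation that every region operation occurring in the semantics \emph{except} repetition already distributes over arbitrary unions of regions, and that the Angelic and Demonic projections do too: $\apL{\bigcup_i X_i} = \bigcup_i \apL{X_i}$, and likewise for $\apR{\cdot}$, $\dpL{\cdot}$, $\dpR{\cdot}$, each because a projection of a region is computed possibility-by-possibility. I will also silently use a strengthening of the inductive hypothesis: once the partition identity is known for a game $\beta$, the map $\strategyforR[\beta]{\cdot}$ (resp.\ $\dstrategyforR[\beta]{\cdot}$) distributes over \emph{arbitrary} unions of suitable regions, not merely over the decomposition of a region into singletons, since writing $X = \bigcup_i X_i$ and applying the identity once gives $\strategyforR[\beta]{X} = \bigcup_{(\myaa,\om)\in X}\strategyforR[\beta]{\{(\myaa,\om)\}} = \bigcup_i \bigcup_{(\myaa,\om)\in X_i}\strategyforR[\beta]{\{(\myaa,\om)\}} = \bigcup_i\strategyforR[\beta]{X_i}$. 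Suitability of each singleton $\{(\myaa,\om)\}$ follows from suitability of $X$.

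The base cases $\ptest{\phi}$, $\humod{x}{f}$, and $\prandom{x}$ are immediate: in each of the four semantic clauses the resulting region is literally given by a set-builder ranging over the individual possibilities $(\myaa,\om)\in X$, so it is by construction the union over $(\myaa,\om)\in X$ of the corresponding singleton value. For $\alpha;\beta$ I apply the inductive hypothesis for $\alpha$ to split $\strategyforR[\alpha]{X}$ into a union of singleton runs, then push $\strategyforR[\beta]{\cdot}$ through that union via the strengthened inductive hypothesis for $\beta$, and recollect; the Demonic clause is identical. For $\alpha\cup\beta$ I first distribute the projections $\apL{X}$ and $\apR{X}$ over the partition of $X$ into singletons, then apply the inductive hypotheses for $\alpha$ and $\beta$ and take the union of the two resulting families; again the Demonic clause with $\dpL{\cdot}$, $\dpR{\cdot}$ is the same. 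The dual case $\pdual{\alpha}$ follows directly from the other half of the simultaneous induction, since $\strategyforR[\pdual{\alpha}]{X} = \dstrategyforR[\alpha]{X}$ and $\dstrategyforR[\pdual{\alpha}]{X} = \strategyforR[\alpha]{X}$.

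The one case needing genuine work is repetition $\prepeat{\alpha}$, because $\strategyforR[\prepeat{\alpha}]{\cdot}$ is a least fixed point, i.e.\ an intersection of pre-fixed points, and intersections do not distribute over unions. Here the plan is to route through the inflationary characterization of \rref{lem:app-altern}, which expresses $\strategyforR[\prepeat{\alpha}]{X} = \bigcup_{k\in\mathbb{N}}\apL{(\istrat[\alpha]{X}{k})}$ and $\dstrategyforR[\prepeat{\alpha}]{X} = \bigcup_{k\in\mathbb{N}}\dpL{(\idstrat[\alpha]{X}{k})}$. A side induction on $k$ then shows $\istrat[\alpha]{X}{k} = \bigcup_{(\myaa,\om)\in X}\istrat[\alpha]{\{(\myaa,\om)\}}{k}$: the case $k=0$ is $\istrat[\alpha]{X}{0} = X = \bigcup_{(\myaa,\om)\in X}\{(\myaa,\om)\}$, and the step, using $\istrat[\alpha]{X}{k+1} = \strategyforR[\alpha]{\apR{(\istrat[\alpha]{X}{k})}}$, is handled by the inner hypothesis on $k$, distributivity of $\apR{\cdot}$ over unions, and the outer hypothesis on $\alpha$ (to push $\strategyforR[\alpha]{\cdot}$ through the union). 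Applying $\apL{\cdot}$, using that it distributes over unions, swapping the two unions $\bigcup_k$ and $\bigcup_{(\myaa,\om)\in X}$, and invoking \rref{lem:app-altern} again to collapse $\bigcup_k\apL{(\istrat[\alpha]{\{(\myaa,\om)\}}{k})}$ back into $\strategyforR[\prepeat{\alpha}]{\{(\myaa,\om)\}}$ yields the identity. The Demonic identity is proved the same way with $\idstrat[\alpha]{\cdot}{k}$ and $\dpL{\cdot}$, $\dpR{\cdot}$.

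I expect the repetition case to be the main obstacle, specifically the bookkeeping needed to keep the inflationary form and its pointwise decomposition in sync; the remaining cases are routine once the ``distributes over arbitrary unions'' reformulation is available. An alternative to the inflationary route, should \rref{lem:app-altern} be unavailable, is to argue directly with the least pre-fixed point $Z_X$ of the monotone operator $F_X(Z) = X \cup \strategyforR[\alpha]{\apR{Z}}$: one checks that $W = \bigcup_{(\myaa,\om)\in X} Z_{\{(\myaa,\om)\}}$ is itself a pre-fixed point of $F_X$ (again using the hypothesis on $\alpha$ to distribute $\strategyforR[\alpha]{\cdot}$ over the union), hence $Z_X \subseteq W$, while $W \subseteq Z_X$ follows from monotonicity (\rref{lem:app-mono}) of the least fixed point in its seed; applying $\apL{\cdot}$ then gives the claim. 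I would present the inflationary argument as primary, since the lemma it needs is already proved just above.
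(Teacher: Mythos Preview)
Your proposal is correct and follows the same overall approach as the paper, namely induction on $\alpha$. The paper's own proof consists of a single sentence, ``Straightforward induction on $\alpha$,'' with no case analysis, so your write-up is considerably more detailed than what appears there. In particular, your careful treatment of the repetition case (via the inflationary characterization of \rref{lem:app-altern}, with the direct pre-fixed-point argument as a backup) goes well beyond anything the paper spells out; the authors evidently regard that case as routine, but your argument makes explicit why the least-fixed-point definition, despite being an intersection, still partitions.
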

\begin{proof}
Straightforward induction on $\alpha$.
\end{proof}

\begin{lemma}[Bound Effect]
If $X = \{(\myaa,\om)\}$ and $(\ab,\nu) \in \dstrategyforR[\alpha]{X}$ or $(\ab,\nu) \in \strategyforR[\alpha]{X}$ then $\om = \nu$ on $\boundvars{\alpha}^\complement,$ the complement of set $\boundvars{\alpha}$.
\end{lemma}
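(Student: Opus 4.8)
The plan is to prove the Bound Effect Lemma by induction on the structure of the game $\alpha$, carrying both the Angelic claim (about $\strategyforR[\alpha]{X}$) and the Demonic claim (about $\dstrategyforR[\alpha]{X}$) simultaneously, since the duality case of each refers to the other. Throughout, $X = \{(\myaa,\om)\}$ is a singleton, and we show that any outcome possibility $(\ab,\nu)$ satisfies $\om = \nu$ off $\boundvars{\alpha}$. The atomic cases are where the actual modifications happen: for $\humod{x}{f}$ we have $\boundvars{\humod{x}{f}} = \{x\}$ and the only outcome has state $\ssub{\om}{x}{\tint{f}{\om}}$, which agrees with $\om$ everywhere except possibly $x$; for $\prandom{x}$ similarly $\boundvars{\prandom{x}} = \{x\}$ and every outcome state is $\ssub{\om}{x}{v}$ for some value $v$; for $\ptest{\phi}$ we have $\boundvars{\ptest{\phi}} = \emptyset$ and every non-pseudo outcome keeps the state $\om$ unchanged (pseudo-states $\sff,\stt$ are vacuously excluded since they are not of the form $(\ab,\nu)$).

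The compound cases follow routinely. For $\alpha;\beta$, note $\boundvars{\alpha;\beta} = \boundvars{\alpha} \cup \boundvars{\beta}$; any $(\ab,\nu) \in \strategyforR[\alpha;\beta]{X} = \strategyforR[\beta]{\strategyforR[\alpha]{X}}$ arises from some intermediate $(\ac,\mu) \in \strategyforR[\alpha]{X}$ with $(\ab,\nu) \in \strategyforR[\beta]{\{(\ac,\mu)\}}$ (using \rref{lem:app-partition} to reduce to a singleton intermediate region), so by the IH twice $\om = \mu$ off $\boundvars{\alpha}$ and $\mu = \nu$ off $\boundvars{\beta}$, hence $\om = \nu$ off $\boundvars{\alpha} \cup \boundvars{\beta}$. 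For $\alpha \cup \beta$, $\boundvars{\alpha\cup\beta} = \boundvars{\alpha}\cup\boundvars{\beta}$ and an outcome lies in $\strategyforR[\alpha]{\apL{X}}$ or $\strategyforR[\beta]{\apR{X}}$; each Angelic projection of the singleton $X$ is again a singleton (or empty), so the IH applies directly and in either subcase agreement holds off $\boundvars{\alpha}$ or off $\boundvars{\beta}$, hence off the union. The dual case $\pdual{\alpha}$ is immediate since $\strategyforR[\pdual{\alpha}]{X} = \dstrategyforR[\alpha]{X}$ and $\boundvars{\pdual{\alpha}} = \boundvars{\alpha}$, using the Demonic IH (and vice versa).

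The main obstacle is the repetition case $\prepeat{\alpha}$, where $\boundvars{\prepeat{\alpha}} = \boundvars{\alpha}$. Here $\strategyforR[\prepeat{\alpha}]{X}$ is defined as a least fixed point $\bigcap\{\apL{Z} \mid X \cup \strategyforR[\alpha]{\apR{Z}} \subseteq Z\}$, and there is no direct structural recursion. The clean approach is to exhibit an explicit $Z$ that is a pre-fixed-point and is contained in the ``good'' set: let $G = \{(\ab,\nu) \mid \nu = \om \text{ on } \boundvars{\alpha}^\complement\}$ paired appropriately with $X \subseteq G$. One checks $X \subseteq G$ trivially, and that $\strategyforR[\alpha]{\apR{G}} \subseteq G$ using the IH for $\alpha$ together with \rref{lem:app-partition}: every element of $\apR{G}$ is $(\ac,\mu)$ with $\mu = \om$ off $\boundvars{\alpha}$, and running $\alpha$ from it produces $(\ab,\nu)$ with $\nu = \mu$ off $\boundvars{\alpha}$ by the IH, hence $\nu = \om$ off $\boundvars{\alpha}$, so $(\ab,\nu) \in G$. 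Thus $G$ (closed up to include $X$) is one of the sets $Z$ in the intersection, so $\strategyforR[\prepeat{\alpha}]{X} = \bigcap\{\apL{Z}\mid \cdots\} \subseteq \apL{G} \subseteq G$, giving the claim; the Demonic repetition case is entirely analogous with $\dpR{\cdot}$, $\dpL{\cdot}$ in place of the Angelic projections. Alternatively one could route through \rref{lem:app-altern} and argue by induction on the finite iteration count $k$, but the fixed-point argument is more direct and avoids invoking the alternative-semantics lemma.
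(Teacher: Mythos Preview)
Your proof is correct and follows the same overall structural induction on $\alpha$ that the paper uses, with the same atomic and compound cases. The one place you diverge is the repetition case: the paper handles $\prepeat{\alpha}$ tersely by invoking the fixed-point equation for the least pre-fixed-point $C$ and saying ``by IH,'' whereas you explicitly exhibit the set $G = \{(\ab,\nu)\mid \nu = \om \text{ on }\boundvars{\alpha}^\complement\}$ as a pre-fixed-point containing $X$, so that $\strategyforR[\prepeat{\alpha}]{X} \subseteq \apL{G}$. Your version is cleaner and makes the use of the IH for $\alpha$ (via \rref{lem:app-partition}) fully explicit; it also avoids the need to unwind what the paper's ``(LFP)'' step is really asserting. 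Both arguments are sound, but yours is the more careful one here.
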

\begin{proof}
Angel Cases:

\mycase $\humod{x}{f}$
Have $(\ab,\nu) \in \strategyforR[\humod{x}{f}]{X}$ 
iff $\ab = \myaa$ and $\nu  =\subst[\om]{x}{\tint{f}{\om}}$
and $\boundvars{\humod{x}{f}} = \{x\}$ and $\nu = \om$ on $\{x\}^\complement.$

\mycase $\prandom{x}$
Have $(\ab,\nu) \in \strategyforR[\prandom{x}]{X}$ 
iff $\ab = \rzSnd{\myaa}$ and $\nu  =\subst[\om]{x}{\rzApp{\rzFst{\myaa}}{\om}}$
and $\boundvars{\prepeat{x}} = \{x\}$ and $\nu = \om$ on $\{x\}^\complement.$

\mycase $\ptest{\phi}$
Have $(\ab,\nu) \in \strategyforR[\ptest{\phi}]{X}$ 
iff $\nu = \om$ and $\ab = \rzSnd{\myaa}$ and $(\rzFst{\myaa},\om) \in \fintR{\phi}$ and $(\myaa,\om) \in X$
then $\om = \om$ on $\emptyset^\complement$ and $\boundvars{\ptest{\phi}} = \emptyset$.

\mycase $\alpha;\beta$
Have $(\ab,\nu) \in \strategyforR[\alpha;\beta]{X}$ 
iff $(\ab,\nu) \in \strategyforR[\beta]{\strategyforR[\alpha]{X}}$ 
iff exists $(\ac,\mu) \in \strategyforR[\alpha]{X}$ s.t.\ $(\ab,\nu) \in \strategyforR[\beta]{\{(\ab,\nu)\}}$, then result holds by \rref{lem:app-partition} and IH twice since $\boundvars{\alpha;\beta} = \boundvars{\alpha} \cup \boundvars{\beta}$.

\mycase$\alpha\cup\beta$
Have $(\ab,\nu )\in \strategyforR[\alpha\cup\beta]{X}$
iff $(\ab,\nu) \in \strategyforR[\alpha]{\apL{X}} \cup  \strategyforR[\beta]{\apR{X}}$ .
In each case one branch is empty and the other singleton, so by IH on the singleton side, $\om = \nu$  on $\boundvars{\alpha}^\complement$ or $\boundvars{\beta}^\complement$ and thus in both cases on $\boundvars{\alpha\cup\beta}^\complement = \boundvars{\alpha}^\complement \cap \boundvars{\beta}^\complement$.

\mycase$\prepeat{\alpha}$
Let $C = \bigcap \{\apL{Z}~|~ X \cup \strategyforR[\alpha]{\apR{Z}} \subseteq Z\}$.
$(\ab,\nu) \in \strategyforR[\prepeat{\alpha}]{X}$
iff (LFP) $\apL{\{(\ab,\nu)\}} \subseteq X \cup \strategyforR[\alpha]{\apR{C}}$
then by IH $\nu = \om$ on $\boundvars{\alpha}^\complement = \boundvars{\prepeat{\alpha}}^\complement$.

\mycase$\pdual{\alpha}$
$(\ab,\nu) \in \strategyforR[\pdual{\alpha}]{X}$
iff $(\ab,\nu) \in \dstrategyforR[\alpha]{X}$
so by IH $\nu = \om$ on $\boundvars{\pdual{\alpha}}^\complement = \boundvars{\alpha}^\complement$.

We give the Demon cases.

\mycase $\humod{x}{f}$
Have $(\ab,\nu) \in \dstrategyforR[\humod{x}{f}]{X}$ 
iff $\ab = \myaa$ and $\nu  =\subst[\om]{x}{\tint{f}{\om}}$
and $\boundvars{\humod{x}{f}} = \{x\}$ and $\nu = \om$ on $\{x\}^\complement.$

\mycase $\prandom{x}$
Have $(\ab,\nu) \in \dstrategyforR[\prandom{x}]{X}$ 
iff $\ab = \rzApp{\myaa}{v}$ and $\nu  =\subst[\om]{x}{v}{\om},$ some $v \in \allrat$.
So $\boundvars{\prandom{x}} = \{x\}$ and $\nu = \om$ on $\{x\}^\complement.$

\mycase $\alpha;\beta$
Have $(\ab,\nu) \in \dstrategyforR[\alpha;\beta]{X}$ 
iff $(\ab,\nu) \in \dstrategyforR[\beta]{\strategyforR[\alpha]{X}}$ 
iff exists $(\ac,\mu) \in \dstrategyforR[\alpha]{X}$ s.t.\ $(\ab,\nu) \in \dstrategyforR[\beta]{\{(\ab,\nu)\}}$, then result holds by \rref{lem:app-partition} and IH twice since $\boundvars{\alpha;\beta} = \boundvars{\alpha} \cup \boundvars{\beta}$.

\mycase$\alpha\cup\beta$
Have $(\ab,\nu )\in \strategyforR[\alpha\cup\beta]{X}$ 
iff $(\ab,\nu) \in \strategyforR[\alpha]{\dpL{X}} \cup  \strategyforR[\beta]{\dpR{X}}$ .
In each case one branch is empty and the other singleton, so by IH on the singleton side, $\om = \nu$  on $\boundvars{\alpha}^\complement$ or $\boundvars{\beta}^\complement$ and thus in both cases on $\boundvars{\alpha\cup\beta}^\complement = \boundvars{\alpha}^\complement \cap \boundvars{\beta}^\complement$.

\mycase$\prepeat{\alpha}$
Let $C = \bigcap \{\dpL{Z}~|~ X \cup \dstrategyforR[\alpha]{\apR{Z}} \subseteq Z\}$.
$(\ab,\nu) \in \dstrategyforR[\prepeat{\alpha}]{X}$ 
iff (LFP) $\dpL{\{(\ab,\nu)\}} \subseteq X \cup \dstrategyforR[\alpha]{\apR{C}}$
then by IH $\nu = \om$ on $\boundvars{\alpha}^\complement = \boundvars{\prepeat{\alpha}}^\complement$.

\mycase$\pdual{\alpha}$
Have $(\ab,\nu) \in \dstrategyforR[\pdual{\alpha}]{X}$ 
iff $(\ab,\nu) \in \strategyforR[\alpha]{X}$
so by IH $\nu = \om$ on $\boundvars{\pdual{\alpha}}^\complement = \boundvars{\alpha}^\complement$.
\end{proof}

\begin{lemma}[Monotonicity]
If $X \subseteq Y$ then $\dstrategyforR[\alpha]{X} \subseteq \dstrategyforR[\alpha]{Y}$ and $\dstrategyforR[\alpha]{X} \subseteq \dstrategyforR[\alpha]{Y}$.
\label{lem:app-mono}
\end{lemma}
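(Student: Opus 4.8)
The plan is to prove both inclusions simultaneously by structural induction on the game $\alpha$, carrying the Angelic semantics $\strategyforR[\alpha]{\cdot}$ and the Demonic semantics $\dstrategyforR[\alpha]{\cdot}$ as parallel cases of one induction, since the $\pdual{\alpha}$ case of each refers to the other and the induction would otherwise not be well-founded. The inductive hypothesis to carry is exactly the statement: for every strict subgame, $X \subseteq Y$ implies the corresponding output regions are ordered by $\subseteq$, in both the Angelic and the Demonic reading. Optionally one could first invoke \rref{lem:app-partition} to reduce to singleton input regions, but the direct induction is shorter and is what I would write.

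For the atomic games the claim falls straight out of the defining equations. Each of $\strategyforR[\ptest{\phi}]{\cdot}$, $\strategyforR[\humod{x}{f}]{\cdot}$, $\strategyforR[\prandom{x}]{\cdot}$ and their Demonic counterparts is a set comprehension whose comprehension variable ranges over $(\myaa,\om) \in X$, so enlarging $X$ only adds elements to the output; in the $\ptest{\phi}$ case one notes additionally that if the test clause is witnessed by some element of $X$ it is a fortiori witnessed by some element of $Y \supseteq X$, and the early-win clause only ever contributes the pseudo-state $\sff$ (resp.\ $\top$), never removing anything. For sequential composition, $\strategyforR[\alpha;\beta]{X} = \strategyforR[\beta]{\strategyforR[\alpha]{X}}$ and the result follows by applying the inductive hypothesis first to $\alpha$ and then to $\beta$ (dually for Demon). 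For the choice $\alpha\cup\beta$, observe that the Angelic projections $\apL{\cdot},\apR{\cdot}$ and the Demonic projections $\dpL{\cdot},\dpR{\cdot}$ are themselves monotone — each merely filters possibilities by a decision bit and re-projects the realizer — so $X \subseteq Y$ yields $\apL{X}\subseteq\apL{Y}$, $\apR{X}\subseteq\apR{Y}$, etc.; then the inductive hypothesis on $\alpha$ and $\beta$ plus monotonicity of $\cup$ closes the case. For $\pdual{\alpha}$ we unfold $\strategyforR[\pdual{\alpha}]{X} = \dstrategyforR[\alpha]{X}$ (resp.\ $\dstrategyforR[\pdual{\alpha}]{X} = \strategyforR[\alpha]{X}$) and apply the inductive hypothesis for the opposite semantics.

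The repetition case is the one that needs care. Recall $\strategyforR[\prepeat{\alpha}]{X} = \bigcap\{\apL{Z}~|~ X \cup \strategyforR[\alpha]{\apR{Z}} \subseteq Z\}$. If $X \subseteq Y$, then any $Z$ with $Y \cup \strategyforR[\alpha]{\apR{Z}} \subseteq Z$ also satisfies $X \cup \strategyforR[\alpha]{\apR{Z}} \subseteq Z$, so the indexing family for $Y$ is contained in the indexing family for $X$; intersecting over the smaller family gives the larger set, i.e.\ $\strategyforR[\prepeat{\alpha}]{X} \subseteq \strategyforR[\prepeat{\alpha}]{Y}$, as required. The Demonic case is verbatim with $\dpL{\cdot},\dpR{\cdot}$ in place of $\apL{\cdot},\apR{\cdot}$. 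The main obstacle, such as it is, is purely bookkeeping: keeping straight the contravariance in this fixed-point step — shrinking the indexing family of a $\bigcap$ \emph{enlarges} the intersection — and ensuring the joint Angelic/Demonic induction is set up so the mutually-referential $\pdual{\alpha}$ cases are legitimate. Every remaining step is a routine unfolding of definitions.
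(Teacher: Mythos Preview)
Your proposal is correct and follows essentially the same route as the paper: a structural induction on $\alpha$ handling the Angelic and Demonic semantics in parallel, with the atomic cases falling out of the set comprehensions, composition and choice handled by the inductive hypotheses (plus monotonicity of the projections), duality by swapping to the other semantics, and repetition by the contravariance-of-$\bigcap$ argument you spell out. Your write-up is in fact slightly more explicit than the paper's about why the joint induction is needed for $\pdual{\alpha}$ and about the direction of the inclusion in the fixed-point step, but the underlying argument is identical.
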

\begin{proof}
By induction on $\alpha$.
We begin with the Angel cases.

\mycase $\humod{x}{f}$
Have $\strategyforR[\humod{x}{f}]{X}
= \{(\myaa,\subst[\om]{x}{\tint{f}{\om}})~|~ (\myaa,\om) \in X\}
\subseteq \{(\myaa,\subst[\om]{x}{\tint{f}{\om}})~|~ (\myaa,\om)\in Y\}
= \strategyforR[\humod{x}{f}]{Y}$

\mycase $\prandom{x}$
Have $\strategyforR[\prandom{x}]{X}
= \{(\rzSnd{\myaa},\subst[\om]{x}{\rzApp{\rzFst{\myaa}}{\om}})~|~ (\myaa,\om) \in X\}
\subseteq \{(\rzSnd{\myaa},\subst[\om]{x}{\rzApp{\rzFst{\myaa}}{\om}})~|~ (\myaa,\om) \in Y\}
= \strategyforR[\prandom{x}]{Y}$

\mycase $\ptest{\phi}$
Have $\strategyforR[\ptest{\phi}]{X}
= \{(\rzSnd{\myaa},\om)~|~(\myaa,\om) \in X\text{ and } (\rzFst{\myaa},\om) \in \fintR{\phi}\}
\subseteq \{(\rzSnd{\myaa},\om)~|~(\myaa,\om) \in Y\text{ and } (\rzFst{\myaa},\om) \in \fintR{\phi}\}
= \strategyforR[\ptest{\phi}]{Y}$.

\mycase $\alpha \cup \beta$
Have $\strategyforR[\alpha \cup \beta]{X}
= \strategyforR[\alpha ]{\apL{X}} \cup \strategyforR[\beta]{\apR{X}}
\subseteq
= \strategyforR[\alpha ]{\apL{Y}} \cup \strategyforR[\beta]{\apR{Y}}
= \strategyforR[\alpha \cup \beta]{Y}$

\mycase $\alpha; \beta$
Have $\strategyforR[\alpha; \beta]{X}
=\strategyforR[\beta]{\strategyforR[\alpha]{X}}
\subseteq \strategyforR[\beta]{\strategyforR[\alpha]{Y}}
= \strategyforR[\alpha; \beta]{Y}$
since by the second IH have  $\strategyforR[\beta]{X} \subseteq \strategyforR[\beta]{Y}$.

\mycase $\prepeat{\alpha}$
Have
$\strategyforR[\prepeat{\alpha}]{X}
= \bigcap\{\apL{Z}~|~ X \cup \strategyforR[\alpha]{\apR{Z}} \subseteq Z\}
\subseteq  \bigcap\{\apL{Z}~|~ Y \cup \strategyforR[\alpha]{\apR{Z}} \subseteq Z\}
= \strategyforR[\prepeat{\alpha}]{Y}$.
Since by assumption $X \subseteq Y$, then $X \cup \strategyforR[\alpha]{\apR{Z}}$ is no larger than $Y \cup \strategyforR[\alpha]{\apR{Z}}$, so the $\bigcap Z$ for $Y$ contains only larger terms than $\bigcap Z$ of $X$ never smaller.

\mycase $\pdual{\alpha}$
$\strategyforR[\pdual{\alpha}]{X}
= \dstrategyforR[\alpha]{X}
\subseteq \dstrategyforR[\alpha]{Y}
= \strategyforR[\pdual{\alpha}]{Y}$

We give the Demon cases.

\mycase $\humod{x}{f}$
Have $\dstrategyforR[\humod{x}{f}]{X}
= \{(\myaa,\subst[\om]{x}{\tint{f}{\om}})~|~ (\myaa,\om) \in X\}
\subseteq \{(\myaa,\subst[\om]{x}{\tint{f}{\om}})~|~ (\myaa,\om)\in Y\}
= \dstrategyforR[\humod{x}{f}]{Y}$

\mycase $\prandom{x}$
Have $\dstrategyforR[\prandom{x}]{X}
= \{(\rzApp{\myaa}{v},\subst[\om]{x}{v})~|~ (\myaa,\om) \in X\text{, for some }v\in\allrat\}
\subseteq \{(\rzApp{\myaa}{v},\subst[\om]{x}{v})~|~(\myaa,\om) \in Y\text{, for some }v\in\allrat\}
= \dstrategyforR[\prandom{x}]{Y}$

\mycase $\ptest{\phi}$
Have $\dstrategyforR[\ptest{\phi}]{X}
= \{(\rzApp{\myaa}{\ab},\om)~|~(\myaa,\om) \in X, (\ab,\om) \in \fintR{\phi}\}
\subseteq  \{(\rzApp{\myaa}{\ab},\om)~|~(\myaa,\om) \in Y, (\ab,\om) \in \fintR{\phi}\}
= \dstrategyforR[\ptest{\phi}]{Y}$

\mycase $\alpha \cup \beta$
Have $\dstrategyforR[\alpha \cup \beta]{X}
= \dstrategyforR[\alpha]{\dpL{X}} \cup \dstrategyforR[\beta]{\dpR{X}}
= \dstrategyforR[\alpha]{\dpL{Y}} \cup \dstrategyforR[\beta]{\dpR{Y}}
 \dstrategyforR[\alpha \cup \beta]{Y}$

\mycase $\alpha; \beta$
Have $\dstrategyforR[\alpha; \beta]{X}
= \dstrategyforR[\beta]{\dstrategyforR[\alpha]{X}}
\subseteq 
\dstrategyforR[\beta]{\dstrategyforR[\alpha]{Y}}
= \dstrategyforR[\alpha; \beta]{Y}$.
since by the second IH $\dstrategyforR[\beta]{X} \subseteq \dstrategyforR[\beta]{Y}$.

\mycase $\prepeat{\alpha}$
Have $\dstrategyforR[\prepeat{\alpha}]{X}
= \bigcap\{\dpL{Z}~|~ X \cup \dstrategyforR[\alpha]{\dpR{Z}} \subseteq Z\}
\subseteq  \bigcap\{\dpL{Z}~|~ Y \cup \dstrategyforR[\alpha]{\dpR{Z}} \subseteq Z\}
= \dstrategyforR[\prepeat{\alpha}]{Y}$.
Since by assumption $X \subseteq Y$, then $X \cup \dstrategyforR[\alpha]{\dpR{Z}}$ is no larger than 
${Y \cup \dstrategyforR[\alpha]{\dpR{Z}}}$, so the $\bigcap Z$ for $Y$ contains only larger terms than $\bigcap Z$ of $X,$ never smaller.

\mycase $\pdual{\alpha}$
Have $\dstrategyforR[\pdual{\alpha}]{X}
= \strategyforR[\alpha]{X}
\subseteq \strategyforR[\alpha]{Y}
= \dstrategyforR[\pdual{\alpha}]{Y}$
\end{proof}

\begin{theorem}[Soundness]
If $\proves{\G}{M}{\phi},$ then the sequent $(\seq{\G}{\phi})$ is valid.
\label{thm:app-soundness}
\end{theorem}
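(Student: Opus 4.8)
The plan is to prove soundness by structural induction on the proof term $M$, equivalently on the derivation of $\proves{\G}{M}{\phi}$. Unfolding the definition of sequent validity, it suffices in each case to exhibit a single realizer that, applied uniformly to any realizer of $\bigwedge\G$ at any state $\om$, yields a realizer of $\phi$ at $\om$; this realizer is assembled from the realizers supplied by the inductive hypotheses on the premisses. The argument leans on the infrastructure already established: \rref{lem:renaming} (uniform renaming), \rref{lem:atsub} (arithmetic-term substitution), \rref{lem:pt-substitute} (proof-term substitution), \rref{lem:coincidental} and \rref{lem:effects-bound} (coincidence and bound effect), \rref{lem:structurality} (weakening), \rref{lem:app-mono} (monotonicity of the game semantics), \rref{lem:app-partition} (partitioning a region into singletons), and \rref{lem:app-altern} (agreement of the fixed-point and inflationary semantics of repetition), with substitution-heavy cases also using \rref{lem:app-tsub}.

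The propositional and structural cases are routine. Rule \irref{hyp} is immediate from the semantics of $\bigwedge\G$; the pairing/projection rules for $\ddiamond{\ptest{\phi}}{\psi}$ and $\dbox{\alpha\cup\beta}{\phi}$ and the injection/case rules for $\ddiamond{\alpha\cup\beta}{\phi}$ correspond to the standard realizability combinators for products and sums, using the Angelic and Demonic projections $\apL{\cdot},\apR{\cdot},\dpL{\cdot},\dpR{\cdot}$; \irref{btestI} extends the realizer with a higher-order $\Lambda$ and \irref{btestE} eliminates it, the latter reducing to \rref{lem:pt-substitute}. The computationally trivial rules \irref{seqI}, \irref{dualI}, \irref{bunroll}, \irref{broll}, \irref{dstop}, \irref{dgo}, \irref{drcase} follow directly by unfolding the forward-semantics equations $\strategyforR[\alpha;\beta]{X}=\strategyforR[\beta]{(\strategyforR[\alpha]{X})}$, $\strategyforR[\pdual{\alpha}]{X}=\dstrategyforR[\alpha]{X}$, and the fixed-point equations for $\prepeat{\alpha}$, together with \rref{lem:app-mono}. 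The side-condition rules \irref{QE}, \irref{metsplit}, and \irref{ghost} are immediate: the first two bake the existence of the required uniform realizer into their side conditions, and \irref{ghost} is already derived from the quantifier rules in the excerpt.

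For the first-order and state-changing rules \irref{asgnI}, \irref{brandomI}, \irref{brandomE}, \irref{drandomI}, \irref{drandomE} the key move is that each premiss is type-checked in a renamed context $\eren{\G}{x}{y}$ with $y$ fresh, so \rref{lem:renaming} lets us pull the inductive hypothesis back to a statement about the pre-state, while \rref{lem:coincidental} and \rref{lem:effects-bound} (only $x$, hence a bound variable, changes) reconcile the realizer of $\bigwedge\G$ before and after the assignment; the admissibility side conditions and $x\notin\freevars{\psi}$ are exactly what make these reconciliations valid, and the substitution instances go through \rref{lem:atsub}. Rule \irref{mon} is the most delicate of the non-loop rules: given realizers for $\ddiamond{\alpha}{\phi}$ and, in the renamed context $\earen{\G}{\alpha}$, a realizer witnessing $\phi\limply\psi$, we must thread the latter as a continuation through every end-possibility produced by running $\alpha$; soundness here combines \rref{lem:app-mono}, \rref{lem:app-partition}, and the renaming of $\boundvars{\alpha}$ in $\G$ justified by \rref{lem:renaming} together with \rref{lem:effects-bound}.

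The main obstacle will be the loop rules \irref{dloopI}, \irref{bloopI}, and \irref{dloopE}. For \irref{dloopI} I would use the inflationary characterization from \rref{lem:app-altern}: the convergence predicate $\conv$ and the well-ordered termination metric $\met$ let one build the Angelic loop realizer $\rzInd{x}{\cdot}$ by well-founded recursion on $\met$, maintaining $\conv$ as an invariant while $\met$ strictly decreases via the premiss for $B$, so that after finitely but unboundedly many inflation steps $\met=\metz$ holds and the premiss for $C$ discharges $\phi$; the bookkeeping that ties the well-order on metrics to the inflation index and shows the constructed realizer is genuinely inductive is where the real work lies. Dually, \irref{bloopI} yields a coinductive realizer $\rzCoind{\cdot}{x}{\cdot}{\cdot}$ that keeps a realizer of the invariant $J$ available at every Demon-chosen unrolling and then applies the premiss for $O$, and \irref{dloopE} (least pre-fixed-point) recurses on the inflation index to collapse $\ddiamond{\prepeat{\alpha}}{\phi}$ into any pre-fixed-point $\psi$. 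Once the loop cases are settled, the remaining cases close by the lemmas above, with \rref{lem:structurality} invoked whenever a sub-derivation must be re-checked under an enlarged context, e.g.\ when descending under a binder.
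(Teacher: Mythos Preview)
Your proposal is correct and follows essentially the same approach as the paper: structural induction on the proof term $M$, constructing in each case a uniform realizer from the realizers supplied by the inductive hypotheses, and leaning on exactly the infrastructure lemmas you list (renaming, substitution, coincidence, bound effect, monotonicity, partition, and the inflationary characterization of repetition for the loop cases). One small remark: the \irref{btestE} case does not actually reduce to \rref{lem:pt-substitute} (a syntactic typing lemma); in the detailed proof it is handled directly by semantic realizer application, defining $\myaa(\ab)=\rzApp{(\rzApp{\ac_1}{\ab})}{(\rzApp{\ac_2}{\ab})}$, though the paper's own proof summary is similarly loose on this point.
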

\begin{proof}
By structural induction on the proof term $M$.
In every case, we use $\drv, \drv_1, \drv_2,$ etc.\ to refer to the subderivations and the IH's on those subderivations.

\mycase $\infer{\proves{\G}{\edprojL{M}}{\phi}}{\proves{\G}{M}{\ddiamond{\ptest{\phi}}{\psi}}}$

By $\drv,$ assume $(\ab,\om) \in \cint{\G}$
and $(\rzApp{\ac}{\ab}, \om) \in \fintR{\ddiamond{\ptest{\phi}}{\psi}}$
iff $\strategyforR[\ptest{\phi}]{\{(\rzApp{\ac}{\ab}, \om)\}} \subseteq \fintR{\psi}   \cup \{\stt\}$
iff $(\rzFst{\rzApp{\ac}{\ab}},\om) \in \fintR{\phi}$ and $(\rzSnd{\rzApp{\ac}{\ab}},\om) \in \fintR{\psi}$
Now define $\myaa(\ab) = \rzBLam{\nu}{\rzFst{\rzApp{\ac}{\ab}}(\nu)}$ and
$(\myaa,\om) \in \fintR{\phi},$ then since this held for all $(\ab,\om) \in \cint{\G}$ and $\ac,$ we have that $\seq{\G}{\phi}$ is valid.

\mycase $\infer{\proves{\G}{\edprojR{M}}{\psi}}{\proves{\G}{M}{\ddiamond{\ptest{\phi}}{\psi}}}$

By $\drv,$ assume $(\ab,\om) \in \cint{\G}$
and $(\rzApp{\ac}{\ab}, \om) \in \fintR{\ddiamond{\ptest{\phi}}{\psi}}$
iff $\strategyforR[\ptest{\phi}]{\{(\rzApp{\ac}{\ab}, \om)\}} \subseteq \fintR{\psi}   \cup \{\stt\}$
iff $(\rzFst{\rzApp{\ac}{\ab}},\om) \in \fintR{\phi}$ and $(\rzSnd{\rzApp{\ac}{\ab}},\om) \in \fintR{\psi}$
Now define $\myaa(\ab) = \rzBLam{\nu}{\rzSnd{\rzApp{\ac}{\ab}}(\nu)}$ and
$(\myaa,\om) \in \fintR{\psi},$ then since this held for all $(\ab,\om) \in \cint{\G}$ and $\ac,$ we have that $\seq{\G}{\psi}$ is valid.

\mycase $\infer{\proves{\G}{\ebprojL{M}}{\dbox{\alpha}{\phi}}}{\proves{\G}{M}{\dbox{\alpha \cup \beta}{\phi}}}$

By $\drv,$ assume $(\ab,\om) \in \cint{\G}$
and $(\rzApp{\ac}{\ab},\om) \in \fintR{\dbox{\alpha \cup \psi}{\phi}} = 
\dstrategyforR[\alpha\cup\beta]{(\rzApp{\ac}{\ab},\om)} \subseteq \fintR{\phi}  \cup \{\stt\}= 
\dstrategyforR[\alpha]{\dpL{(\rzApp{\ac}{\ab},\om)}} \cup  \dstrategyforR[\beta]{\dpR{(\rzApp{\ac}{\ab},\om)}} \subseteq \fintR{\phi}  \cup \{\stt\} \text{ implies }
\dstrategyforR[\alpha]{\dpL{(\rzApp{\ac}{\ab},\om)}} \subseteq \fintR{\phi}  \cup \{\stt\}$ 
Now define $\myaa(\ab) = \rzFst{\rzApp{\ac}{\ab}}$ and
$(\myaa,\om) \in \fintR{\dbox{\alpha}{\phi}},$ 
i.e., $\seq{\G}{\dbox{\alpha}{\phi}}$ is valid.

\mycase $\infer{\proves{\G}{\ebprojR{M}}{\dbox{\beta}{\phi}}}{\proves{\G}{M}{\dbox{\alpha \cup \beta}{\phi}}}$

By $\drv,$ assume $(\ab,\om) \in \cint{\G}$
and $(\rzApp{\ac}{\ab},\om) \in \fintR{\dbox{\alpha \cup \psi}{\phi}} = 
\dstrategyforR[\alpha\cup\beta]{(\rzApp{\ac}{\ab},\om)} \subseteq \fintR{\phi} = 
\dstrategyforR[\alpha]{\dpL{(\rzApp{\ac}{\ab},\om)}} \cup  \dstrategyforR[\beta]{\dpR{(\rzApp{\ac}{\ab},\om)}} \subseteq \fintR{\phi} \text{ implies }
\dstrategyforR[\beta]{\dpR{(\rzApp{\ac}{\ab},\om)}} \subseteq \fintR{\phi}$ 
Now define $\myaa(\ab) = \rzSnd{\rzApp{\ac}{\ab}}$ and
$(\myaa,\om) \in \fintR{\dbox{\beta}{\phi}},$ 
i.e., $\seq{\G}{\dbox{\beta}{\phi}}$ is valid.

\mycase $\infer{\proves{\G}{\edcons{M}{N}}{\ddiamond{\ptest{\phi}}{\psi}}}{\proves{\G}{M}{\phi} & \proves{\G}{N}{\psi}}$

Assume $(\ab,\om) \in \cint{\G}$ and $(\rzApp{\ac_1}{\ab},\om) \in \fintR{\phi}$ and $(\rzApp{\ac_2}{\ab},\om) \in \fintR{\psi}$ by $\drv_1$  and $\drv_2,$ 
thus $\strategyforR[\ptest{\phi}]{(\rzCons{\rzApp{\ac_1}{\ab}}{\rzApp{\ac_2}{\ab}},\om)} \subseteq \fintR{\psi}  \cup \{\stt\}$
thus $(\rzCons{\rzApp{\ac_1}{\ab}}{\rzApp{\ac_2}{\ab}},\om) \in \fintR{\ddiamond{\ptest{\phi}}{\psi}} $ thus 
letting $\myaa(\ab) = \rzCons{\rzApp{\ac_1}{\ab}}{\rzApp{\ac_2}{\ab}},$ have $\seq{\G}{\ddiamond{\ptest{\phi}}{\psi}}$ valid.

\mycase $\infer{\proves{\G}{\ebcons{M}{N}}{\dbox{\alpha \cup \beta}{\phi}}}{\proves{\G}{M}{\dbox{\alpha}{\phi}} & \proves{\G}{N}{\dbox{\beta}{\phi}}}$

Assume $(\ab,\om) \in \cint{\G}$ 
and $(\rzApp{\ac_1}{\ab},\om) \in \fintR{\dbox{\alpha}{\phi}}$ 
and $(\rzApp{\ac_2}{\ab},\om) \in \fintR{\dbox{\beta}{\phi}}$ by $\drv_1$  and $\drv_2,$ 
thus $\dstrategyforR[\alpha]{(\rzApp{\ac_1}{\ab},\om)}
\cup \dstrategyforR[\beta]{(\rzApp{\ac_2}{\ab},\om)}
 \subseteq  \fintR{\phi}$.
Then, letting $\myaa(\ab) = \rzCons{\rzApp{\ac_1}{\ab}}{\rzApp{\ac_2}{\ab}},$ have
$\dstrategyforR[\alpha\cup\beta]{(\myaa,\om)} 
= \dstrategyforR[\alpha]{\dpL{\{(\myaa,\om)\}}} \cup \dstrategyforR[\beta]{\dpR{\{(\myaa,\om)\}}}
= \dstrategyforR[\alpha]{(\rzApp{\ac_1}{\ab},\om)} \cup \dstrategyforR[\beta]{(\rzApp{\ac_2}{\ab},\om)}
 \subseteq \fintR{\phi}$.
Thus have $\seq{\G}{\dbox{\alpha\cup\beta}}{\phi}$ valid.

\mycase $\infer{\proves{\G}{\etcons{f}{M}}{\ddiamond{\prandom{x}}{\phi}}}{\proves{\G}{M}{\subst[\phi]{x}{f}}}$

Assume $(\ab,\om) \in \cint{\G},$
by \rref{lem:app-tsub} then $(\eren{\rzApp{\ac}{\ab}}{x}{y}, \subst[\om]{x}{\eren{f}{x}{y}}) \in \cint{\eren{\G}{x}{y}}$ so by $\drv_1$
Let $\myaa(\ab) = \eren{\rzCons{f}{\rzApp{\ac}{\ab}}}{x}{y}$
have $(\rzApp{\ac}{\eren{\ab}{x}{y}}, \subst[\om]{x}{\eren{f}{x}{y}}) \in \fintR{\phi}$
Suffice to show $(\rzApp{\ac}{\eren{\ab}{x}{y}}, \subst[\om]{x}{\eren{f}{x}{y}}) \in \strategyforR[\prandom{x}]{\{(\myaa,\om)\}}
= \{(\rzSnd{\myaa}, \ssub{\om}{x}{\rzApp{\rzFst{\myaa}}{\om}})\}$ which holds by definition of $\myaa,$ then by the previous line
$\strategyforR[\prandom{x}]{\{(\myaa,\om)\}} \subseteq \fintR{\phi}  \cup \{\stt\}$ as desired.

\mycase $\infer{\proves{\G}{\edinjL{M}}{\ddiamond{\alpha\cup\beta}{\phi}}}{\proves{\G}{M}{\ddiamond{\alpha}{\phi}}}$

By $\drv,$ assume $(\ab,\om) \in \cint{\G}$ have $(\rzApp{\ac}{\ab},\om) \in \fintR{\ddiamond{\alpha}{\phi}}$
so $\strategyforR[\alpha]{\{(\rzApp{\ac}{\ab},\om)\}} \subseteq \fintR{\phi}  \cup \{\stt\}$.
Now let $f(\nu) = 0$ and $\myaa(\ab) = \rzCons{f}{\rzApp{\ac}{\ab}},$ then 
$\strategyforR[\alpha\cup\beta]{\{(\myaa,\om)\}} = \strategyforR[\alpha]{\{(\rzApp{\ac}{\ab},\om)\}}$ and by transitivity 
$\strategyforR[\alpha\cup\beta]{\{(\myaa,\om)\}} \subseteq \fintR{\phi}  \cup \{\stt\},$ that is 
$(\myaa,\om) \in \fintR{\ddiamond{\alpha\cup\beta}{\phi}}$ as desired, so $\seq{\G}{\ddiamond{\alpha\cup\beta}{\phi}}$ is valid.

\mycase $\infer{\proves{\G}{\edinjR{M}}{\ddiamond{\alpha\cup\beta}{\phi}}}{\proves{\G}{M}{\ddiamond{\beta}{\phi}}}$

By $\drv,$ assume $(\ab,\om) \in \cint{\G}$ have $(\rzApp{\ac}{\ab},\om) \in \fintR{\ddiamond{\beta}{\phi}}$
so $\strategyforR[\beta]{\{(\rzApp{\ac}{\ab},\om)\}} \subseteq \fintR{\phi}  \cup \{\stt\}$.
Now let $f(\nu) = 1$ and $\myaa(\ab) = \rzCons{f}{\rzApp{\ac}{\ab}},$ then 
$\strategyforR[\alpha\cup\beta]{\{(\myaa,\om)\}} = \strategyforR[\beta]{\{(\rzApp{\ac}{\ab},\om)\}}$ and by transitivity 
$\strategyforR[\alpha\cup\beta]{\{(\myaa,\om)\}} \subseteq \fintR{\phi}  \cup \{\stt\},$ that is 
$(\myaa,\om) \in \fintR{\ddiamond{\alpha\cup\beta}{\phi}}$ as desired, so $\seq{\G}{\ddiamond{\alpha\cup\beta}{\phi}}$ is valid.

\mycase $\infer{\proves{\G}{\edcase{A}{B}{C}}{\psi}}{\proves{\G}{A}{\ddiamond{\alpha\cup\beta}{\phi}} & \proves{\G,\pvl:\ddiamond{\alpha}{\phi}}{B}{\psi} & \proves{\G,\pvr:\ddiamond{\beta}{\phi}}{C}{\psi}}$

By $\drv_1$ assume $(\ab,\om) \in \cint{\G}$ and have $(\rzApp{\ac}{\ab},\om) \in \fintR{\ddiamond{\alpha\cup\beta}{\phi}}$ so
$\strategyforR[\alpha\cup\beta]{\{(\rzApp{\ac}{\ab},\om)\}} \subseteq \fintR{\phi}  \cup \{\stt\}$.

In the first case, $\rzFst{\rzApp{\ac}{\ab}}(\om) = 0$, so $\strategyforR[\alpha]{\{(\rzSnd{\rzApp{\ac}{\ab}},\om)\}} \subseteq \fintR{\phi}  \cup \{\stt\}$.
Now let $\ab_1 = \rzCons{\ab}{\rzSnd{\rzApp{\ac}{\ab}}}$ then $\drv_2$ is applicable because $(\ab_1,\om) \in \cintR{\G,\pvl:\ddiamond{\alpha}{\phi}},$ so we have some $\ac_1$ such that $(\rzApp{\ac_1}{\ab_1},\om) \in \fintR{\psi}$

In the second case, $\rzFst{\rzApp{\ac}{\ab}}(\om) = 1$, so $\strategyforR[\beta]{\{(\rzSnd{\rzApp{\ac}{\ab}},\om)\}} \subseteq \fintR{\phi}  \cup \{\stt\}$.
Now let $\ab_2 = \rzCons{\ab}{\rzSnd{\rzApp{\ac}{\ab}}}$ then $\drv_3$ is applicable because $(\ab_2,\om) \in \cintR{\G,\pvr:\ddiamond{\beta}{\phi}},$ so we have some $\ac_2$ such that $(\rzApp{\ac_2}{\ab_2},\om) \in \fintR{\psi}$

Lastly, define $\myaa(\ab)(\om) = \textsf{if}(f(\om)=0)\{\rzApp{\ac_1}{\ab_1}(\om)\}\textsf{else}\{\rzApp{\ac_2}{\ab_2}(\om)\}$ and
$(\rzApp{\myaa}{\ab},\om) \in \fintR{\psi}$ as desired by the previous two cases.

\mycase $\linferenceRule[formula]
  {\proves{\G}{M}{J} & \proves{\pvx:J}{N}{\dbox{\alpha}{J}} & \proves{\pvx:J}{O}{\phi}}
  {\proves{\G}{(\erep{M}{N}{\pvx:J}{O})}{\dbox{\prepeat{\alpha}}{\phi}}}$

Assume some $\ac_1$ such that for all $(\ab_1,\om) \in \cint{\G}$  have $(\rzApp{\ac_1}{\ab_1},\om) \in \fintR{J}$ by $\drv_1$,
assume some $\ac_2$ such that for all $(\ab_2,\om) \in \cint{\pvx:J}$ have $(\rzApp{\ac_2}{\ab_2},\om) \in \fintR{\dbox{\alpha}{J}}$ by $\drv_2,$ and
assume some $\ac_3$ such that for all $(\ab_3,\om) \in \cint{\pvx:J}$ have $(\rzApp{\ac_3}{\ab_3}, \om) \in \fintR{\phi}$.
Let $C \equiv \bigcap\{Z \subseteq \allRz \times \allstate~|~X \cup\dstrategyforR[\alpha]{\dpR{Z}} \subseteq Z \}$,
where $X = \{(\rzApp{\myaa}{\ab}, \om)\},$
letting $\myaa(\ab) = \rzCons{\rzApp{\ac_1}{\ab}}{\rzApp{\ac_2}{\ab}}$.
Assume $(\ab,\om) \in \cint{\G}$.
Consider arbitrary $(\ad,\nu) \in C,$ suffices to show that $(\ad,\nu) \in \fintR{J}$.
Because $C$ is inductively defined as the intersection across $Z,$ suffices to induct on the membership of $(\ad,\nu) $ in $C$.
In the base case, $(\ad,\nu) \in \dpL{X} = \dpL{\{(\myaa,\om)\}}$ so $\ad = \rzFst{\myaa} = \rzApp{\ac_1}{\ab}$ and $\nu = \om$.
By $\drv_1$ then have $(\ad,\nu) = (\rzApp{\ac_1}{\ab},\om) \in \fintR{J},$ which then yields
$(\rzApp{\ac_3}{(\rzApp{\ac_1}{\ab})}, \om) \in \fintR{\phi}$ as desired.
In the inductive case assume $Z \subseteq \fintR{J}$ and $(\ad,\nu) \in \dpL{\dstrategyforR[\alpha]{\dpR{Z}}}$ and show $(\ad,\nu) \in \fint{J}$.
By the IH have $\ac_2$ s.t.\ for all $(\ab,\om) \in Z$ have  $(\rzApp{\ac_2}{\ab},\om) \in \fintR{\dbox{\alpha}{J}},$ so
$\dstrategyforR[\alpha]{\{(\rzApp{\ac_2}{\ab},\om)\}} \subseteq \fintR{J}$.
Then $\dpR{\{(\myaa(\ab),\om)\}} = \{(\rzApp{\ac_2}{\ab},\om)\}$ so
$\dstrategyforR[\alpha]{\dpR{\{(\myaa(\ab),\om)\}}} \subseteq \fintR{J}$.
Then $\dpL{\dstrategyforR[\alpha]{\dpR{\{(\myaa(\ab),\om)\}}}} \subseteq \dstrategyforR[\alpha]{\dpR{\{(\myaa(\ab),\om)\}}} \subseteq \fintR{J} \subseteq \fintR{\phi}$ (by $\drv_3$) as desired.

\mycase $\infer{\proves{\G}{\eapp{M}{N}}{\psi}}{\proves{\G}{M}{\dbox{\ptest{\phi}}{\psi}} & \proves{\G}{N}{\phi}}$

Assume (1) some $\ac_1$ such that for $(\ab,\om) \in \cint{\G}$ have $(\rzApp{\ac_1}{\ab},\om) \in \fintR{\dbox{\ptest{\phi}}{\psi}}$ by $\drv_1$.
Assume (2) some $\ac_2$ such that for $(\ab,\om) \in \cint{\G}$ have $(\rzApp{\ac_2}{\ab},\om) \in \fintR{\phi}$ by $\drv_2$.
Assume (3) some $(\ab,\om) \in \cint{\G}$.
By (1), then $\dstrategyforR[\ptest{\phi}]{(\rzApp{\ac_1}{\ab},\om)} \subseteq \fint{\psi}  \cup \{\stt\}$.
That is for all $\ad$ s.t.\ $(\ad,\om) \in \fint{\phi}$ then $(\rzApp{\rzApp{\ac_1}{\ab}}{\ad},\om) \in \fint{\psi}$.
By (2) then let $\ad = \rzApp{\ac_2}{\ab}$ and have $(\rzApp{(\rzApp{\ac_1}{\ab})}{(\rzApp{\ac_2}{\ab})},\om) \in \fint{\psi}$.
Then letting $\myaa(\ab) = \rzApp{(\rzApp{\ac_1}{\ab})}{(\rzApp{\ac_2}{\ab})}$ have $(\rzApp{\myaa}{\ab},\om) \in \fintR{\psi}$ as desired.

\mycase $\infer{\proves{\G}{\eplam{\phi}{M}}{\dbox{\ptest{\phi}}{\psi}}}{\proves{\G,\pvx:\phi}{M}{\psi}}$

Assume (1) some $\ac$ such that for $(\ab_2,\om) \in \cint{(\G,\pvx:\phi)}$ have $(\rzApp{\ac}{\ab_2},\om) \in \fintR{\psi}$ by $\drv$.
Assume (2) some $(\ab,\om) \in \cint{\G}$.
Let $\myaa(\ab)(\ab_1) =  \rzApp{\ac}{\rzCons{\ab}{\ab_1}}$
Suffice to show that assuming (3)  $(\ab_1,\om) \in \fintR{\phi}$ then  $\rzApp{\rzApp{\myaa}{\ab}}{\ab_1} \in \fintR{\psi}$.
This holds by expanding the definition of $\myaa,$ then applying (1), whose assumptions hold by (2) and (3).

\mycase 
$\linferenceRule[formula]
  {\proves{\eren{\G}{x}{y}}{M}{\phi}}
  {\proves{\G}{(\etlam{\allrat}{M})}{\dbox{\prandom{x}}{\phi}}}$

Assume (1) some $\ac$ such that for $(\ab_1,\om) \in \cint{\eren{\G}{x}{y}}$ have $(\rzApp{\ac}{\ab_1},\om) \in \fintR{\phi}$ by $\drv$.
Assume (2) some $(\ab,\om) \in \cint{\G}$.
By side condition, $y$ is fresh.
Let $\myaa(\ab)(v) = \rzApp{\ac}{\ssub{\ab}{x}{v}}$.
Suffice to show $\fintR{\phi} \supseteq \dstrategyforR[\prandom{x}]{\{(\myaa(\ab), \om)\}} = \{(\myaa(\ab)(v), \subst[\om]{x}{v})~|~v\in \allrat\}$.
Fix some $v \in \allrat,$ then by freshness of $y$ and \rref{lem:app-tren} have 
$(\eren{\ab}{x}{y},\eren{\om}{x}{y}) \in \cintR{\eren{\G}{x}{y}}$.
Then by (1) have $(\rzApp{\ac}{\eren{\ab}{x}{y}},\eren{\om}{x}{y}) \in \fintR{\phi}$.
Then by \rref{lem:app-coincide} note $(\rzApp{\ac}{\eren{\ab}{x}{y}},\eren{\om}{x}{y})$ agrees with $\{(\myaa(\ab)(v), \subst[\om]{x}{v})~|~v\in\allrat\}$
on the free variables of $\phi$, since $\myaa(\ab)(v) = \rzApp{\ac}{\ssub{\ab}{x}{v}} = \rzApp{\ac}{\ssub{\ab}{x}{y}}$ for \emph{some} assignment of the fresh $y$, that is each $v$ fixed earlier matches some value of $y$ when taking the union.
Furthermore $\eren{\om}{x}{y} = \subst[\om]{x}{v}$  for some valuation of $y$ for the same reason.
Then by \rref{lem:app-coincide} can conclude $\{(\myaa(\ab)(v), \subst[\om]{x}{v})~|~v\in\allrat\} \subseteq \fintR{\phi}$ as desired.

\mycase $\infer{\proves{\G}{\ebseq{M}}{\dbox{\alpha;\beta}{\phi}}}{\proves{\G}{M}{\dbox{\alpha}{\dbox{\beta}{\phi}}}}$

Assume (1) some $\ac$ such that for $(\ab,\om) \in \cint{\G}$ have $(\rzApp{\ac}{\ab},\om) \in \fintR{\dbox{\alpha}{\dbox{\beta}{\phi}}}$ by $\drv$.
Assume (2) some $(\ab,\om) \in \cint{\G}$.
Then by (1) have
$(\rzApp{\ac}{\ab},\om) \in \fintR{\dbox{\alpha}{\dbox{\beta}{\phi}}}$
iff $\dstrategyforR[\alpha]{\{(\rzApp{\ac}{\ab},\om)\}} \subseteq \fintR{\dbox{\beta}{\phi}}$
iff $\dstrategyforR[\alpha]{\{(\rzApp{\ac}{\ab},\om)\}} \subseteq \{(\ad,\nu)~|~\dstrategyforR[\beta]{\{(\ad,\nu)\}} \subseteq \fintR{\phi}\}$
iff $\left(\bigcup_{(\ad,\nu)\in\dstrategyforR[\alpha]{\{(\rzApp{\ac}{\ab},\om)\}}} \dstrategyforR[\beta]{\{(\ad,\nu)\}}\right) \subseteq \fintR{\phi}$
iff (\rref{lem:app-partition}) $ \dstrategyforR[\beta]{\dstrategyforR[\alpha]{\{(\rzApp{\ac}{\ab},\om)\}}} \subseteq \fintR{\phi}$
iff $(\rzApp{\ac}{\ab},\om) \in \fintR{\dbox{\alpha;\beta}{\phi}}$ as desired.

\mycase $\infer{\proves{\G}{\edseq{M}}{\ddiamond{\alpha;\beta}{\phi}}}{\proves{\G}{M}{\ddiamond{\alpha}{\ddiamond{\beta}{M}}}}$

Assume (1) some $\ac$ such that for $(\ab,\om) \in \cint{\G}$ have $(\rzApp{\ac}{\ab},\om) \in \fintR{\ddiamond{\alpha}{\dbox{\beta}{\phi}}}$ by $\drv$.
Assume (2) some $(\ab,\om) \in \cint{\G}$.
Then by (1) have
\begin{align*}
&\phantom{\text{iff}}    (\rzApp{\ac}{\ab},\om) \in \fintR{\ddiamond{\alpha}{\dbox{\beta}{\phi}}}\\
&\text{iff} \strategyforR[\alpha]{\{(\rzApp{\ac}{\ab},\om)\}} \subseteq (\fintR{\dbox{\beta}{\phi}}\cup\{\top\})\\
&\text{iff} \strategyforR[\alpha]{\{(\rzApp{\ac}{\ab},\om)\}} \subseteq \{(\ad,\nu)~|~\strategyforR[\beta]{\{(\ad,\nu)\}} \subseteq \fintR{\phi} \cup\{\top\})\}\\
&\text{iff} \left(\bigcup_{(\ad,\nu)\in\strategyforR[\alpha]{\{(\rzApp{\ac}{\ab},\om)\}}} \strategyforR[\beta]{\{(\ad,\nu)\}}\right) \subseteq \fintR{\phi} \cup \{\top\}\\
&\text{iff (\rref{lem:app-partition})} \strategyforR[\beta]{\strategyforR[\alpha]{\{(\rzApp{\ac}{\ab},\om)\}}} \subseteq \fintR{\phi} \cup\{\top\}\\
&\text{iff}(\rzApp{\ac}{\ab},\om) \in \fintR{\ddiamond{\alpha;\beta}{\phi}}
\end{align*}
 as desired.

\mycase $\infer{\proves{\G}{\edswap{M}}{\ddiamond{\pdual{\alpha}}{\phi}}}{\proves{\G}{M}{\dbox{\alpha}{\phi}}}$

Assume (1) some $\ac$ such that for $(\ab,\om) \in \cint{\G}$ have $(\rzApp{\ac}{\ab},\om) \in \fintR{\dbox{\alpha}{\phi}}$ by $\drv$.
Assume (2) some $(\ab,\om) \in \cint{\G}$, then by (1) have
$(\rzApp{\ac}{\ab},\om) \in \fintR{\dbox{\alpha}{\phi}}$
iff $\dstrategyforR[\alpha]{\{(\rzApp{\ac}{\ab},\om)\}} \subseteq \fintR{\phi} \cup \{\bot\}$
iff $(\sdual{(\dstrategyforR[\alpha]{\{(\rzApp{\ac}{\ab},\om)\}})} \subseteq \fintR{\phi} \cup \{\top\}$
iff $\strategyforR[\pdual{\alpha}]{\{(\rzApp{\ac}{\ab},\om)\}} \subseteq \fintR{\phi} \cup \{\top\}$
iff $(\rzApp{\ac}{\ab},\om) \in \fintR{\ddiamond{\pdual{\alpha}}{\phi}}$ as desired.

\mycase $\infer{\proves{\G}{\ebswap{M}}{\dbox{\pdual{\alpha}}{\phi}}}{\proves{\G}{M}{\ddiamond{\alpha}{\phi}}}$

Assume (1) some $\ac$ such that for $(\ab,\om) \in \cint{\G}$ have $(\rzApp{\ac}{\ab},\om) \in \fintR{\ddiamond{\alpha}{\phi}}$ by $\drv$.
Assume (2) some $(\ab,\om) \in \cintR{\G}$, then by (1) have
$(\rzApp{\ac}{\ab},\om) \in \fintR{\ddiamond{\alpha}{\phi}}$
iff $\strategyforR[\alpha]{\{(\rzApp{\ac}{\ab},\om)\}} \subseteq \fintR{\phi} \cup \{\top\}$
iff $(\sdual{(\strategyforR[\alpha]{\{(\rzApp{\ac}{\ab},\om)\}})} \subseteq \fintR{\phi} \cup \{\bot\}$
iff $\dstrategyforR[\pdual{\alpha}]{\{(\rzApp{\ac}{\ab},\om)\}} \subseteq \fintR{\phi} \cup \{\bot\}$
iff $(\rzApp{\ac}{\ab},\om) \in \fintR{\dbox{\pdual{\alpha}}{\phi}}$ as desired.

\mycase $\infer{\proves{\G}{\emon{M}{N}{x}}{\ddiamond{\alpha}{\psi}}}{\proves{\G}{M}{\ddiamond{\alpha}{\phi}} & \proves{x:\phi}{N}{\psi}}$

Assume (1) some $\ac_1$ such that for $(\ab_1,\om) \in \cintR{\G}$ have  $(\rzApp{\ac_1}{\ab_1},\om) \in \fintR{\ddiamond{\alpha}{\phi}}$ by $\drv_1$.
Assume (2) some $\ac_2$ such that for $(\ab_2,\nu)  \in \cintR{x:\phi}$ have $(\rzApp{\ac_2}{\ab_2},\om) \in \fintR{\psi}$ by $\drv_2$.
Assume (3) some $(\ab,\om) \in \cintR{\G}$.
Let $\myaa(\ab) =  \rzApp{\ac_2} \circ \rzApp{\ac_1}{\ab},$ that is the realizer which applies $\ac_2$ as the continuation after playing according to $\ac_1$.
By (1) have $(\rzApp{\ac_1}{\ab},\om) \in \fintR{\ddiamond{\alpha}{\phi}}$
iff $\strategyforR[\alpha]{\{(\rzApp{\ac_1}{\ab},\om)\}} \subseteq \fintR{\phi} \cup \{\top\}$
Consider arbitrary $(\ab_2,\nu)$ in $\strategyforR[\alpha]{\{(\rzApp{\ac_1}{\ab},\om)\}}$.
If $(\ab_2,\nu) = \top$ then we're done because $\strategyforR[\alpha]{\{(\rzApp{\ac_1}{\ab},\om)\}} \subseteq \fintR{\psi} \cup \{\top\}$ as well, else
we can apply (2) and get $(\rzApp{\ac_2}{\ab_2},\nu) \in \fintR{\psi}$.
Now recall it suffices to show $\strategyforR[\alpha]{(\rzApp{\myaa}{\ab},\om)} \subseteq \fintR{\psi} \cup \{\top\}$.
By definition of $\myaa,$ have $\strategyforR[\alpha]{(\rzApp{\myaa}{\ab},\om)} = \{(\rzApp{\ac_2}{\ab_2},\nu)~|~(\ab_2,\nu) \in \strategyforR[\alpha]{\{(\rzApp{\ac_1}{\ab},\om)\}}\}$, so by (2)  have $\strategyforR[\alpha]{(\rzApp{\myaa}{\ab},\om)} \subseteq \fintR{\psi}$ as desired.

\mycase $\infer{\proves{\G}{\estop{M}}{\ddiamond{\prepeat{\alpha}}{\phi}}}{\proves{\G}{M}{\phi}}$

Assume (1) some $\ac$ such that for $(\ab,\om) \in \cintR{\G}$ have $(\rzApp{\ac_1}{\ab},\om) \in \fintR{\phi}$ by $\drv$.
Assume (2) some $(\ab,\om) \in \cintR{\G}$.
Let $\myaa(\ab) = \rzCons{(\rzBLam{\nu}{0})}{\rzApp{\ac_1}{\ab}}$.
Then by (1) have $(\rzApp{\ac}{\ab},\om) \in \fintR{\phi}$
and $\apL{\{(\myaa(\ab),\om)\}} = \{(\rzApp{\ac}{\ab},\om)\}$
so $\apL{\{(\myaa(\ab),\om)\}} \subseteq \fintR{\phi}$.

Suffices to show $(\myaa(\ab), \om) \in \fintR{\ddiamond{\prepeat{\alpha}}{\phi}}$
iff $\strategyforR[\prepeat{\alpha}]{\{(\myaa(\ab), \om)\}} \subseteq \fintR{\phi}$ 
iff $\bigcap\{\apL{Z}\subseteq \allRz \times \allstate~|~ \{(\myaa(\ab), \om)\} \cup (\strategyforR[\alpha]{\apR{Z}}) \subseteq Z\} \subseteq \fintR{\phi}\cup\{\top\}$
which holds if $\apL{\{(\myaa(\ab), \om)\}} \subseteq \fintR{\phi},$ which we showed above, so we're done.

\mycase $\infer{\proves{\G}{\ego{M}}{\ddiamond{\prepeat{\alpha}}{\phi}}}{\proves{\G}{M}{\ddiamond{\alpha}{\ddiamond{\prepeat{\alpha}}{\phi}}}}$

Assume (1) some $\ac_{both}$ such that for $(\ab,\om) \in \cintR{\G}$ have $(\rzApp{\ac_1}{\ab},\om) \in \fintR{\ddiamond{\alpha}{\ddiamond{\prepeat{\alpha}}{\phi}}}$ by $\drv$.
Split $\ac_{both}$ into $\ac_2 \circ \ac_1$ for $\prepeat{\alpha}$ and $\alpha$.
Assume (2) some $(\ab,\om) \in \cintR{\G}$.
Let $C(X) = \bigcap\{Z \subseteq \allRz \times \allstate~|~ X \cup (\strategyforR[\alpha]{\apR{Z}}) \subseteq Z\}$
Note because $C$ is a fixed point, $C(X) = X \cup (\strategyforR[\alpha]{\apR{C}})$ holds as an exact equality.
Then by (1) have $(\rzApp{\ac}{\ab},\om) \in \fintR{\ddiamond{\alpha}{\ddiamond{\prepeat{\alpha}}{\phi}}}$
iff $\strategyforR[\alpha]{\{(\rzApp{\ac}{\ab},\om)\}} \subseteq \fintR{\ddiamond{\prepeat{\alpha}}{\phi}} \cup \{\top\}$
iff $\strategyforR[\alpha]{\{(\rzApp{\ac}{\ab},\om)\}} \subseteq \{(\ad,\nu)~|~\strategyforR[\prepeat{\alpha}]{\{(\ad,\nu)\}} \subseteq \fintR{\phi} \cup \{\top\}\}$
iff (2) for all $(\rzApp{\ac_1}{\ab},\mu) \in \strategyforR[\alpha]{\{(\rzApp{\ac}{\ab},\om)\}}$
  and all $(\rzApp{\ac_2}{\rzApp{\ac_1}{\ab}},\nu) \in \strategyforR[\prepeat{\alpha}]{\{(\rzApp{\ac_1}{\ab},\mu)\}}$
  have $(\rzApp{\ac_2}{\rzApp{\ac_1}{\ab}},\nu) \in \fintR{\phi} \cup \{\top\}$.
Let $\myaa(\ab) = \rzCons{(\rzBLam{\nu}{1})}{\ac_2 \circ \rzApp{\ac_1}{\ab}}$.
So $\apL{\{(\myaa(\ab),\om)\}} = \emptyset $
and $\apR{\{(\myaa(\ab),\om)\}} = \{(\ac_2 \circ \rzApp{\ac_1}{\ab},\om)\}$.
Then $\strategyforR[\alpha]{\{(\ac_2 \circ \rzApp{\ac_1}{\ab},\om)\}} =
\{(\rzApp{\ac_1}{\ab},\mu) \in \strategyforR[\alpha]{\{(\rzApp{\ac}{\ab},\om)\}}\}$, fulfilling the first assumption of (2).
Then $\strategyforR[\prepeat{\alpha}]{\{(\rzApp{\ac_1}{\ab},\mu)~|~\cdots\}} =
(\rzApp{\ac_2}{\rzApp{\ac_1}{\ab}},\nu) \in \strategyforR[\prepeat{\alpha}]{\{(\rzApp{\ac_1}{\ab},\mu)\}}$ fulfilling second assumption of (2),
thus we conclude $(\rzApp{\ac_2}{\rzApp{\ac_1}{\ab}},\nu) \in \fintR{\phi} \cup \{\top\}$
By definition of $\myaa,$ we have that $\strategyforR[\prepeat{\alpha}]{(\strategyforR[\alpha]{\{(\myaa(\ab),\om)\}})}$ is the set of all such 
$(\rzApp{\ac_2}{\rzApp{\ac_1}{\ab}},\nu)$ so 
$\strategyforR[\prepeat{\alpha}]{(\strategyforR[\alpha]{\{(\myaa(\ab),\om)\}})} \subseteq \fintR{\phi} \cup \{\top\}$
as desired.

\mycase $\linferenceRule[formula]
{\proves{\G}{A}{\ddiamond{\prepeat{\alpha}}{\phi}} & \proves{\G,\ell:\phi}{B}{\psi} & \proves{\G,r:\ddiamond{\alpha}{\ddiamond{\prepeat{\alpha}}{\phi}}}{C}{\psi}}
{\proves{\G}{\ercase{A}{B}{C}}{\psi}}$

Assume (1) some $\ac_1$ such that for $(\ab,\om) \in \cintR{\G}$ have $(\rzApp{\ac_1}{\ab},\om) \in \fintR{\ddiamond{\prepeat{\alpha}}{\phi}}$ by $\drv_1$.
Assume (2) some $\ac_2$ such that for $(\ab_2,\om) \in \cintR{\G,\pvl:\phi}$ have $(\rzApp{\ac_2}{\ab_2},\om) \in \fintR{\psi}$ by $\drv_2$.
Assume (3) some $\ac_3$ such that for $(\ab_3,\om) \in \cintR{\G,\pvr:\ddiamond{\alpha}{\ddiamond{\prepeat{\alpha}}{\phi}}}$ have $(\rzApp{\ac_3}{\ab_3},\om) \in \fintR{\psi}$ by $\drv_3$.
Assume (4) some $(\ab,\om) \in \cintR{\G}$, 
To handle case $\rzFst{\rzApp{\ac_1}{\ab}}(\om) = 0$  let  $\ab_2 = \rzCons{\ab}{\rzSnd{\rzApp{\ac_1}{\ab}}}$.
To handle case $\rzFst{\rzApp{\ac_1}{\ab}}(\om) = 1$ let  $\ab_3 = \rzCons{\ab}{\rzSnd{\rzApp{\ac_1}{\ab}}}$.
Note (5a) when $\rzFst{\rzApp{\ac_1}{\ab}}(\om) = 0$  then  $(\rzSnd{\rzApp{\ac_1}{\ab}},\om) \in \fintR{\phi}$.
and (5b) when $\rzFst{\rzApp{\ac_1}{\ab}}(\om) = 0$  then  $(\rzSnd{\rzApp{\ac_1}{\ab}},\om) \in \fintR{\ddiamond{\alpha}{\ddiamond{\prepeat{\alpha}}{\phi}}}$ by (1) and the definition of diamond repetition realizers.
Let $\myaa(\ab)(\om) = \textsf{if}(\rzFst{\rzApp{\ac_1}{\ab}}(\om) = 0)\{\rzApp{\ac_2}{\ab_2(\om)}\}\textsf{else}\{\rzApp{\ac_3}{\ab_3(\om)}\}$

By (4) and (1) have  $(\rzApp{\ac_1}{\ab},\om) \in \fintR{\ddiamond{\prepeat{\alpha}}{\phi}}$
iff $\strategyforR[\prepeat{\alpha}]{(\rzApp{\ac_1}{\ab},\om)} \subseteq \fintR{\phi} \cup \{\top\}$.
By \rref{lem:app-inflate} we have either
a) $\strategyforR[\prepeat{\alpha}]{\{(\rzApp{\ac_1}{\ab},\om)\}} = \apL{(\rzApp{\ac_1}{\ab},\om)} \subseteq \fintR{\phi} \cup \{\top\}$, or
b) $\strategyforR[\prepeat{\alpha}]{\{(\rzApp{\ac_1}{\ab},\om)\}} = \strategyforR[\prepeat{\alpha}]{\strategyforR[\alpha]{\apR{(\rzApp{\ac_1}{\ab},\om)}}} \subseteq \fintR{\phi} \cup \{\top\}$.

In case (a) then $\strategyforR[\prepeat{\alpha}]{\{(\myaa(\ab),\om)\}} = \rzApp{\ac_2}{\ab_2(\om)} \subseteq \fintR{\psi}$ by (2), which is applicable because 
$(\rzCons{\ab}{\rzSnd{\rzApp{\ac_1}{\ab}}},\om) \in \cintR{\G,\pvl:\phi}$ by (5a).
In case (b) then $\strategyforR[\prepeat{\alpha}]{\{(\myaa(\ab),\om)\}} = \rzApp{\ac_3}{\ab_3(\om)} \subseteq \fintR{\psi}$ by (3), which is applicable because 
$(\rzCons{\ab}{\rzSnd{\rzApp{\ac_1}{\ab}}},\om) \in \cintR{\G,\pvl:\ddiamond{\alpha}{\ddiamond{\prepeat{\alpha}}{\phi}}}$ by (5b).

In either case $\strategyforR[\prepeat{\alpha}]{\{(\myaa(\ab),\om)\}} \subseteq \fintR{\psi}$ as desired.

\mycase $\infer{\proves{\G}{\ebroll{M}}{\dbox{\prepeat{\alpha}}{\phi}}}{\proves{\G}{M}{\phi \land \dbox{\alpha}{\dbox{\prepeat{\alpha}}{\phi}}}}$

Assume (1) some $\ac$ such that for $(\ab,\om) \in \cintR{\G}$ have $(\rzApp{\ac}{\ab},\om) \in \fintR{\phi \land \dbox{\alpha}{\dbox{\prepeat{\alpha}}{\phi}}}$ by $\drv$.
Assume (2) some $(\ab,\om) \in \cintR{\G}$, so by (1) have 
$(\rzApp{\ac}{\ab},\om) \in \fintR{\phi \land \dbox{\alpha}{\dbox{\prepeat{\alpha}}{\phi}}}$
iff (3a) $(\rzFst{\rzApp{\ac}{\ab}},\om) \in \fintR{\phi}$ and (3b) $(\rzSnd{\rzApp{\ac}{\ab}},\om) \in \fintR{\dbox{\alpha}{\dbox{\prepeat{\alpha}}{\phi}}}$

Suffice to show $(\rzApp{\ac}{\ab},\om) \in \fintR{\dbox{\prepeat{\alpha}}{\phi}}$
iff $\dstrategyforR[\prepeat{\alpha}]{\{(\rzApp{\ac}{\ab},\om)\}} \subseteq \fintR{\phi} \cup \{\bot\}$
iff (by \rref{lem:app-inflate}) (G1) $\dpL{\{(\rzApp{\ac}{\ab},\om)\}} \subseteq \fintR{\phi} \cup \{\bot\}$
and (G2) $\dstrategyforR{\dstrategyforR[\prepeat{\alpha}]{\{(\rzApp{\ac}{\ab},\om)\}}} \subseteq \fintR{\phi} \cup \{\bot\}$.
Goal (G1) is direct by (3a) and (G2) holds from (3b) since 
$(\rzSnd{\rzApp{\ac}{\ab}},\om) \in \fintR{\dbox{\alpha}{\dbox{\prepeat{\alpha}}{\phi}}}$
iff  $\dstrategyforR[\prepeat{\alpha}]{\dstrategyforR[\alpha]{\{(\rzSnd{\rzApp{\ac}{\ab}},\om)\}}} \subseteq \fintR{\phi} \cup \{\bot\}$.

\mycase $\infer{\proves{\G}{\ebunroll{M}}{\phi \land \dbox{\alpha}{\dbox{\prepeat{\alpha}}{\phi}}}}{\proves{\G}{M}{\dbox{\prepeat{\alpha}}{\phi}}}$

Assume (1) some $\ac$ such that for $(\ab,\om) \in \cintR{\G}$ have $(\rzApp{\ac}{\ab},\om) \in \fintR{\dbox{\prepeat{\alpha}}{\phi}}$ by $\drv$.
Assume (2) some $(\ab,\om) \in \cintR{\G}$, so by (1) have 
$(\rzApp{\ac}{\ab},\om) \in \fintR{\dbox{\prepeat{\alpha}}{\phi}}$
iff  (3a) $(\rzFst{\rzApp{\ac}{\ab}},\om) \in \fintR{\phi}$ and 
(3b) $(\rzSnd{\rzApp{\ac}{\ab}},\om) \in \fintR{\dbox{\alpha}{\dbox{\prepeat{\alpha}}{\phi}}}$

Suffice to show $(\rzApp{\ac}{\ab},\om) \in \fintR{\phi \land \dbox{\alpha}{\dbox{\prepeat{\alpha}}{\phi}}}$
iff (G1) $\dstrategyforR[\prepeat{\alpha}]{\{(\rzFst{\rzApp{\ac}{\ab}},\om)\}} \subseteq \fintR{\phi} \cup \{\bot\}$
and (G2) $\dstrategyforR{\dstrategyforR[\prepeat{\alpha}]{\{(\rzSnd{\rzApp{\ac}{\ab}},\om)\}}} \subseteq \fintR{\phi} \cup \{\bot\}$
Goal (G1) is direct by (3a) and (G2) holds from (3b) since 
$(\rzSnd{\rzApp{\ac}{\ab}},\om) \in \fintR{\dbox{\alpha}{\dbox{\prepeat{\alpha}}{\phi}}}$
iff $\dstrategyforR[\prepeat{\alpha}]{\dstrategyforR[\alpha]{\{(\rzSnd{\rzApp{\ac}{\ab}},\om)\}}} \subseteq \fintR{\phi} \cup \{\bot\}$.

\mycase $\linferenceRule[formula]
       {\proves{\eren{\G}{x}{y},\pvx:(x=\eren{f}{x}{y})}{M}{\phi}}
        {\proves{\G}{\ebasgneq{y}{x}{\pvx}{M}}{\dbox{\humod{x}{f}}{\phi}}}$

Assume (1) some $\ac$ such that for $(\ab_1,\nu) \in \cintR{\eren{\G}{x}{y},\pvx:(x=\eren{f}{x}{y}}$ 
 have $(\rzApp{\ac}{\ab_1},\nu) \in \fintR{\phi}$ by $\drv$.
Assume (2) some $(\ab,\om) \in \cintR{\G}$.
Let $v = \tint{\eren{f}{x}{y}}{\eren{\om}{x}{y}} = \tint{f}{\om}$ by \rref{lem:app-tren} and
$\nu = \ssub{\eren{\om}{x}{y}}{x}{v}$.
Let  $\ab_1(\om) = \ab(\subst[\om]{x}{\tint{f}{\om}})$ and $\ab_2(\om) = \ab_1(\eren{\om}{x}{y})$
Then by \rref{lem:app-tsub} have $(\ab_1,\sren{\om}{x}{y}) \in \cintR{\eren{\G}{x}{y}}$
Then by \rref{lem:app-tren} $(\ab_2, \nu) \in \cintR{\eren{\G}{x}{y}}$.
Also by definition of $\nu$ have $(\rzCons{\rzNil}{\rzNil},\nu) \in \fintR{x=\eren{f}{x}{y}}$
so let $\ab_1 = \rzCons{\ab_2}{\rzCons{\rzNil}{\rzNil}}$ and then by (1) have $(\rzApp{\ac}{\ab_1},\nu) \in \fintR{\phi}$.
Let $\myaa = \rzApp{\ac}{\ab_1}$.
Suffices to show $\dstrategyforR[\humod{x}{f}]{\{(\myaa,\om)\}} \subseteq \fintR{\phi} \cup \{\top\},$ the latter case occurring only when $(\myaa,\om) = \top$.
Then $\dstrategyforR[\humod{x}{f}]{\{(\myaa,\om)\}} = \{(\myaa,\subst[\om]{x}{v})\}$ which agrees with $(\rzApp{\ac}{\ab_1},\nu)$ except on fresh $y,$ so by \rref{lem:app-coincide} have $(\myaa,\subst[\om]{x}{v}) \in \fintR{\phi}$ as desired.

\mycase  $\linferenceRule[formula]
       {\proves{\eren{\G}{x}{y},\pvx:(x=\eren{f}{x}{y})}{M}{\phi}}
        {\proves{\G}{\edasgneq{y}{x}{\pvx}{M}}{\ddiamond{\humod{x}{f}}{\phi}}}$

Assume (1) some $\ac$ such that for $(\ab_1,\nu) \in \cintR{\eren{\G}{x}{y},\pvx:(x=\eren{f}{x}{y})}$ 
 have $(\rzApp{\ac}{\ab_1},\nu) \in \fintR{\phi}$ by $\drv$.
Assume (2) some $(\ab,\om) \in \cintR{\G}$.
Let $v = \tint{\eren{f}{x}{y}}{\eren{\om}{x}{y}} = \tint{f}{\om}$ by \rref{lem:app-tren} and
$\nu = \ssub{\eren{\om}{x}{y}}{x}{v}$.
Let  $\ab_1(\om) = \ab(\subst[\om]{x}{\tint{f}{\om}})$ and $\ab_2(\om) = \ab_1(\eren{\om}{x}{y})$
Then by \rref{lem:app-tsub} have $(\ab_1,\sren{\om}{x}{y}) \in \cintR{\eren{\G}{x}{y}}$
Then by \rref{lem:app-tren} $(\ab_2, \nu) \in \cintR{\eren{\G}{x}{y}}$.
Also by definition of $\nu$ have $(\rzCons{\rzNil}{\rzNil},\nu) \in \fintR{x=\eren{f}{x}{y}}$
so let $\ab_1 = \rzCons{\ab_2}{\rzCons{\rzNil}{\rzNil}}$ and then by (1) have $(\rzApp{\ac}{\ab_1},\nu) \in \fintR{\phi}$.
Let $\myaa = \rzApp{\ac}{\ab_1}$.
Suffices to show $\strategyforR[\humod{x}{f}]{\{(\myaa,\om)\}} \subseteq \fintR{\phi} \cup \{\top\},$ the latter case occurring only when $(\myaa,\om) = \top$.
Then $\strategyforR[\humod{x}{f}]{\{(\myaa,\om)\}} = \{(\myaa,\subst[\om]{x}{v})\}$ which agrees with $(\rzApp{\ac}{\ab_1},\nu)$ except on fresh $y,$ so by \rref{lem:app-coincide} have $(\myaa,\subst[\om]{x}{v}) \in \fintR{\phi}$ as desired.

\mycase  $\linferenceRule[formula]{\proves{\G}{M}{\ddiamond{\prandom{x}}{\phi}} & \proves{\eren{\G}{x}{y},\pvx:\phi}{N}{\psi}}
{\proves{\G}{\eunpack{M}{N}}{\psi}}$ for  $y$ fresh and  $x \notin \freevars{\psi}$

Assume (1) some $\ac_1$ such that for $(\ab,\om) \in \cintR{\G}$ have $(\rzApp{\ac_1}{\ab},\om) \in \fintR{\ddiamond{\prandom{x}}{\phi}}$ by $\drv_1$.
Assume (2) some $\ac_2$ such that for $(\ab_2,\nu) \in \cintR{\eren{\G}{x}{y},\pvx:\phi}$ have $(\rzApp{\ac_2}{\ab_2},\nu) \in \fintR{\psi}$ by $\drv_2$.
Assume (3) some $(\ab,\om) \in \cintR{\G}$, then by (1) have $(\rzApp{\ac_1}{\ab},\om) \in \fintR{\ddiamond{\prandom{x}}{\phi}}$
iff $\strategyforR[\prandom{x}]{\{(\rzApp{\ac_1}{\ab},\om)\}} \subseteq \fintR{\phi} \cup \{\top\},$ the latter occurring only when $(\ab,\om) = \top,$ in which case we are done.
Else $(\rzSnd{\rzApp{\ac_1}{\ab}}, \subst[\om]{x}{\rzFst{\rzApp{\ac_1}{\ab}}}) \in \fintR{\phi}$ and by freshness of $y$ and by \rref{lem:app-coincide} then have
(4a) $(\rzSnd{\rzApp{\ac_1}{\ab}}, \subst[\eren{\om}{x}{y}]{x}{\rzFst{\rzApp{\ac_1}{\ab}}}) \in \fintR{\phi}$
Also (4b) $(\eren{\ab}{x}{y}, \ssub{\eren{\om}{x}{y}}{x}{\rzFst{\rzApp{\ac_1}{\ab}}}) \in \cintR{\eren{\G}{x}{y}}$ by \rref{lem:app-tren} on (3) and by \rref{lem:app-coincide} because $x \notin \freevars{\eren{\G}{x}{y}}$.
Then from (4a) and (4b) have $(\rzCons{\eren{\ab}{x}{y}}{\rzSnd{\rzApp{\ac_1}{\ab}}}, \ssub{\eren{\om}{x}{y}}{x}{\rzFst{\rzApp{\ac_1}{\ab}}}) \in \cintR{\eren{\G}{x}{y},\pvx:\phi}$ so by (2) have 
(5) $(\rzApp{\ac_2}{\rzCons{\eren{\ab}{x}{y}}{\rzSnd{\rzApp{\ac_1}{\ab}}}},\ssub{\eren{\om}{x}{y}}{x}{\rzFst{\rzApp{\ac_1}{\ab}}}) \in \fintR{\psi}$.
Lastly note $\om = \ssub{\eren{\om}{x}{y}}{x}{\rzFst{\rzApp{\ac_1}{\ab}}}$ on $\freevars{\psi} \subseteq \{x,y\}^\complement,$ so let $\myaa = \rzApp{\ac_2}{\rzCons{\eren{\ab}{x}{y}}{\rzSnd{\rzApp{\ac_1}{\ab}}}}$ then $(\myaa,\om) \in \fintR{\psi}$ as desired.

\mycase $\infer{\proves{\G}{\efp{A}{B}{C}}{\psi}}
        {\proves{\G}{A}{\ddiamond{\prepeat{\alpha}}{\phi}} 
        &\proves{\pvx:\phi}{B}{\psi} & \proves{\pvy:\ddiamond{\alpha}{\psi}}{C}{\psi}}$

Assume (1) some $\ac_1$ such that for $(\ab,\om) \in \cintR{\G}$ have $(\rzApp{\ac}{\ab},\om) \in \fintR{\ddiamond{\prepeat{\alpha}}{\phi}}$ by $\drv_1$.
Assume (2) some $\ac_2$ such that for $(\ab_2,\nu) \in \cintR{\pvx:\phi}$ have $(\rzApp{\ac_2}{\ab_2},\nu) \in \fintR{\psi}$ by $\drv_2$.
Assume (3) some $\ac_3$ such that for $(\ab_3,\nu) \in \cintR{\pvy:\ddiamond{\alpha}{\psi}}$ have $(\rzApp{\ac_3}{\ab_3},\nu) \in \fintR{\psi}$ by $\drv_3$.
Assume (4) some $(\ab,\om) \in \cintR{\G}$, so by (1) have $(\rzApp{\ac}{\ab},\om) \in \fintR{\ddiamond{\prepeat{\alpha}}{\phi}}$ 
iff $\apL{\istrat[\alpha]{\{(\rzApp{\ac}{\ab},\om)\}}{k}} \subseteq \fintR{\phi}$ for all $k \in \mathbb{N}$.
Fix any $k$ and any such $(\ad,\nu) \in \apL{\istrat[\alpha]{\{(\rzApp{\ac}{\ab},\om)\}}{k}}$
We define $\myaa$ by induction on $k$.
This definition can be considered constructive in the sense that we could instrument the realizer $\rzApp{\ac}{\ab}$ to compute the results of $\myaa$ ``as it goes'':

Case $(\ad,\nu) \in \apL{\{(\rzApp{\ac}{\ab},\om)\}},$ then $(\ad,\nu) = (\rzSnd{\rzApp{\ac}{\ab}},\om)$ and $\rzFst{\rzApp{\ac}{\ab}} = 0$.
Then let $\myaa(\om) = (\rzApp{\ac_2}{\ab_2},\om)$ where $\ab_2 = \ad$, then by (2) $(\myaa,\om) \in \fintR{\psi}$.

Case $(\ad,\nu) \in \apL{\dstrategyforR[\alpha]{\idstrat[\alpha]{\{(\rzApp{\ac}{\ab},\om)\}}{k}}}$.
By (1) had $(\ad,\nu) \in \fintR{\phi}$, so by (2) have $(\rzApp{\ac_2}{\ad},\nu) \in \fintR{\psi}$.
Then by applying (3) $k$ times, have $(\ac_3^k(\rzApp{\ac_2}{\ad}),\om) \in \fintR{\psi}$
So it suffices to let $\myaa = \ac_3^k(\rzApp{\ac_2}{\ad})$.

\mycase $\linferenceRule[formula]
{\proves{\G}{A}{\conv}
& \proves{\pvx:\conv,\pvy:(\met_0 = \met \metgr \metz)}{B}{\ddiamond{\alpha}{(\conv\land \met_0 \metgr \met)}}
& \proves{\pvx:\conv,\pvy:(\met = \metz)}{C}{\phi}}
{\proves{\G}{\efor{A}{B}{C}}{\ddiamond{\prepeat{\alpha}}{\phi}}}$

Assume (1) some $\ac_1$ such that for $(\ab,\om) \in \cintR{\G}$ have $(\rzApp{\ac}{\ab},\om) \in \fintR{\conv}$ by $\drv_1$.
Assume (2) some $\ac_2$ such that for $(\ab_2,\nu) \in \cintR{\pvx:\conv,\pvy:(\met_0 = \met \metgr \metz)}$ have $(\rzApp{\ac_2}{\ab_2},\nu) \in \fintR{\ddiamond{\alpha}{(\conv\land \met_0 \metgr \met)}}$ by $\drv_2$.
Assume (3) some $\ac_3$ such that for $(\ab_3,\nu) \in \cintR{\pvx:\conv,\pvy:(\met=\metz)}$ have $(\rzApp{\ac_3}{\ab_3},\nu) \in \fintR{\phi}$ by $\drv_3$.
Assume (4) some $(\ab,\om) \in \cintR{\G}$, so by (1) have $(\rzApp{\ac}{\ab},\om) \in \fintR{\conv}$.

Begin by showing $(\myaa,\om) \in \fintR{\ddiamond{\prepeat{\alpha}}{(\met = \metz \land \conv)}},$ that is
$\strategyforR[\prepeat{\alpha}]{(\myaa,\om)} \subseteq \fintR{(\met = \metz \land \conv)} \cup \{\top\},$ that is
$\bigcup_{k\in\mathbb{N}} \apL{\istrat[\alpha]{(\myaa,\om)}{k}} \subseteq \fintR{(\met = \metz \land \conv)} \cup \{\top\}$
We define $\myaa$ recursively by $\myaa(\om) = \textsf{if}(\met \metgr \metz)\{(\myaa \circ \rzApp{\ac_2}{\ab_2})(\om)\}\textsf{else}\{\rzCons{\rzNil}{\rzApp{\ac}{\ab}}\}$ where $\ab_2 = \rzCons{\rzApp{\ac}{\ab}}{\rzNil}$.
The recursion is well-founded by the fact that every application of $\rzApp{\ac_2}{\ab_2}$ reduces the metric $\met,$ which is N\"otherian.

Now fix arbitrary $k\in\mathbb{N}$ and $(\ad,\nu) \in \apL{\istrat[\alpha]{(\myaa,\om)}{k}},$ then suffices to show $(\ad,\nu) \in \fintR{(\met = \metz \land \conv)} \cup \{\top\}$.
Case 0, then $(\ad,\nu) = (\rzCons{\rzNil}{\rzApp{\ac}{\ab}},\om)$ and since the test $\met \metgr \metz$ was negative, thus $(\rzNil,\om) \in \fintR{\met = \metz}$ and by (1)
$(\rzCons{\rzNil}{\rzApp{\ac}{\ab}},\om) \in \fintR{\met = \metz \land \conv} \cup \{\top\}$ as desired.

Case $1+k,$ then $(\ad,\nu) \in \apL{\istrat[\alpha]{\strategyforR[\alpha]{\{(\myaa,\om)\}}}{k}}$.
Have $\myaa(\om) = (\myaa \circ \rzApp{\ac_2}{\ab_2})(\om)$ and since the test $\met \metgr \metz$ was positive, thus $(\rzNil,\om) \in \fintR{\met \metgr \metz}$ and by (1) have
$(\ab_2,\om) \in \fintR{\met \metgr \metz \land \conv}$ so by
(2) have  $(\rzApp{\ac_2}{\ab_2},\om) \in \fintR{\ddiamond{\alpha}{(\conv\land \met_0 \metgr \met)}} \cup \{\top\}$
For all $(\ad_1,\nu_1) \in \strategyforR[\alpha]{\{(\myaa,\om)\}}$ have $(\ad_1,\nu_1) \in \fintR{(\conv\land \met_0 \metgr \met)} \cup \{\top\}$.
In case $\top,$ we are already finished, else $\ad_1=\myaa$ recursively, and applying the IH on smaller $\met$ and $k$ have $\apL{\istrat[\alpha]{(\myaa,\nu_1)}{k}} \subseteq \fintR{\met = \metz \land \conv} \cup \{\top\}$ as desired.

Now (by commuting $\met = \metz$ and $\conv$) $\drv_3$ yields $\fintR{\met = \metz \land \conv} \subseteq \fintR{\phi}$ so that finally
$(\myaa',\om) \in \fintR{\ddiamond{\prepeat{\alpha}}{\phi}}$ for some $\myaa'$.
\end{proof}

\subsection{Proof Theory}
The following theorems concern proof terms, normal forms, progress, and preservation.

\begin{lemma}[Normal forms]
\label{lem:app-normal}
If $\proves{\G}{M}{\rho}$ and $\isnorm{M},$ then either $M$ is $\ecase{A}{B}{C}$ for simple $A$ and normal $B$ and $C,$ else $M$ starts with the canonical introduction rule(s) for the top-level connective of $\rho$ and all subterms are well-typed and simple, except those under binders.

For example, $M$ has form:
\begin{itemize}
\item {$\eplam{\phi}{N}$ if $\rho \equiv \phi \limply \psi$}
\item {$\econs{A}{B}$ if $\rho \equiv \ddiamond{\ptest{\phi}}{\psi}$}
\item {$\einjL{N}$ or $\einjR{N}$ if $\rho \equiv \ddiamond{\alpha\cup\beta}{\phi}$}
\item {$\ego{N}$ or $\estop{N}$ if $\rho \equiv \ddiamond{\prepeat{\alpha}}{\phi}$}
\item {$\ebroll{N}$ if $\rho \equiv \dbox{\prepeat{\alpha}}{\phi}$}
\end{itemize}
\end{lemma}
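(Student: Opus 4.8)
The plan is to prove the normal forms lemma by straightforward structural induction on the derivation $\proves{\G}{M}{\rho}$, with the hypothesis $\isnorm{M}$ used to rule out most cases immediately. By \rref{def:normal}, a normal term is either simple ($\issimp{M}$) or has the shape $\edcase{A}{B}{C}$ or $\ercase{A}{B}{C}$ with $A$ state-inspecting. The second disjunct gives the $\ecase{A}{B}{C}$ conclusion of the lemma directly, so the real work is the case $\issimp{M}$, where eliminators occur only under binders. The key observation is: if $M$ is simple and its outermost proof term constructor is an \emph{elimination} rule (e.g.\ $\eapp{A}{B}$, $\eProjL{A}$, $\eunpack{A}{B}$, $\ebunroll{A}$, $\efp{A}{B}{C}$, an application $\eapp{M}{f}$, etc.), then that eliminator is \emph{not} under a binder, contradicting simplicity. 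Hence a simple $M$ must be headed by an introduction (or a ``structural'' form like $\eSeq{\cdot}$, $\eSwap{\cdot}$, $\emon{\cdot}{\cdot}{\cdot}$, $\eAsgneq{\cdot}{\cdot}{\cdot}{\cdot}$, $\ietcons{\cdot}{\cdot}{\cdot}$, a variable, or a non-effective $\eQE{\phi}{\cdot}$ / $\esplit{f}{g}$).

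First I would enumerate the typing rules from \rref{app:pc-full} and classify each proof term constructor as introduction, elimination, or structural, then dispatch the elimination cases by the simplicity argument above. For the introduction cases, I would read off from the conclusion of each introduction rule which connective of $\rho$ it produces: \irref{btestI} gives $\eplam{\phi}{N}$ for $\rho \equiv \dbox{\ptest{\phi}}{\psi}$ (constructive implication), \irref{dtestI} gives $\edcons{A}{B}$ for $\rho \equiv \ddiamond{\ptest{\phi}}{\psi}$, \irref{dchoiceIL}/\irref{dchoiceIR} give $\edinjL{N}$/$\edinjR{N}$ for $\rho \equiv \ddiamond{\alpha\cup\beta}{\phi}$, \irref{dstop}/\irref{dgo} give $\estop{N}$/$\ego{N}$ for $\rho \equiv \ddiamond{\prepeat{\alpha}}{\phi}$, \irref{broll} gives $\ebroll{N}$ for $\rho \equiv \dbox{\prepeat{\alpha}}{\phi}$, \irref{bchoiceI} gives $\ebcons{M}{N}$ for $\rho \equiv \dbox{\alpha\cup\beta}{\phi}$, and so on for the first-order rules \irref{drandomI}, \irref{brandomI}, \irref{asgnI}, and the composite ones \irref{dualI}, \irref{seqI}, \irref{bloopI}, \irref{dloopI}. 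In each such case, well-typedness of all immediate subterms follows from the inductive hypotheses applied to the subderivations, and simplicity of those subterms not under a binder follows from $\issimp{M}$ by the definition of simple (eliminators only under binders is an inductively defined predicate, so it descends to non-binding subterms); subterms under binders (e.g.\ $N$ in $\eplam{\phi}{N}$, or the loop body in \irref{bloopI}) are exempted exactly as the statement allows.

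The subtle points, which I expect to be the main obstacle, are the constructors that are neither clean introductions nor eliminations: namely \irref{mon} ($\emon{A}{B}{\pvx}$), the non-syntactic rule \irref{QE} ($\eQE{\phi}{M}$), and \irref{metsplit} ($\esplit{f}{g}$). For \irref{mon}: a simple $\emon{A}{B}{\pvx}$ has $\issimp{A}$, so by the inductive hypothesis $A$ is headed by an introduction; but then by the monotonicity conversion rules of \rref{fig:opmon} (e.g.\ \irref{injLMon}) the term $\emon{A}{B}{\pvx}$ is \emph{reducible} — so I need to check whether the intended reading of ``normal'' excludes such terms. Examining \rref{def:normal} and \rref{lem:progress}, a simple term headed by $\emon{\cdot}{\cdot}{\cdot}$ with simple argument is never a final normal form; so either the lemma implicitly restricts to genuinely normal (non-reducible) $M$, or I must argue such $M$ cannot be simple-and-irreducible. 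I would resolve this by strengthening the reading: a simple \emph{and irreducible} term is never headed by $\emon{\cdot}{\cdot}{\cdot}$ nor by $\eQE{\cdot}{\cdot}$ on a compound formula (the \irref{QEAllBeta}, \irref{QEAndBeta}, etc.\ rules would fire), nor by $\esplit{f}{g}$ at a top level where it would be the case scrutinee. The cleanest route is to phrase the induction as: for simple irreducible $M$, head-match on the typing rule, observe that \irref{mon}, \irref{QE}-on-compound, and \irref{metsplit}-at-top are all reducible hence excluded, and the remaining head rules are precisely the canonical introductions for the top connective of $\rho$ (plus $\eQE{f\sim g}{\cdot}$ and variable $\pvx$ in the context case, which are genuinely atomic). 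I would handle the elimination-rule cases, the contradiction-by-simplicity steps, and the introduction-rule case analysis in that order, flagging the \irref{mon}/\irref{QE} interplay as the place where care with the precise definition of ``normal'' is essential.
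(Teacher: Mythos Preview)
The paper's proof of this lemma is empty: the \texttt{proof} environment following the statement of \rref{lem:app-normal} contains no text at all. So there is nothing to compare your proposal against; you have supplied considerably more argument than the paper does.

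Your overall plan is the natural one and matches how the lemma is \emph{used} in the paper's Progress proof: case-split on the outermost constructor of $M$, rule out eliminators at top level by the definition of $\issimp{M}$, and read off the head connective from the remaining introduction rules. The concern you flag about $\emon{M}{N}{\pvx}$, $\eQE{\phi}{M}$, and $\esplit{f}{g}$ is legitimate and is exactly the soft spot in the lemma as stated. The paper's implicit resolution appears to be that these constructors are counted among the ``eliminators'' for purposes of \rref{def:normal} (note that $\beta$-rules and monotonicity-conversion rules exist precisely to reduce them, and the Progress proof treats \irref{mon} as a separate case that always steps rather than ever invoking \rref{lem:app-normal} on a $\emon{\cdot}{\cdot}{\cdot}$-headed term). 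Under that reading your worry dissolves: a simple term cannot be headed by $\emon{\cdot}{\cdot}{\cdot}$ or top-level $\eQE{\cdot}{\cdot}$ in the first place. Your alternative reading---strengthening ``normal'' to ``irreducible''---also works and is arguably cleaner, but it is not what the paper writes. Either way, your identification of this as the only nontrivial point is correct.
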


\begin{proof}
\end{proof}

\begin{lemma}[Progress]
If $\proves{\Gemp}{M}{\phi}$ then either $\isnorm{M}$ or exists $M'$ such that $M \stepsto M'$.
\end{lemma}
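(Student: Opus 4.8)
The plan is to prove Progress by structural induction on the proof term $M$, using the Normal Forms lemma (\rref{lem:app-normal}) to characterize what a normal subterm of a given type must look like. The proof proceeds by case analysis on the outermost syntactic form of $M$, grouped into three families: introduction rules, elimination rules, and the monotonicity term $\emon{A}{B}{\pvx}$.

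First I would handle the introduction rules (e.g.\ $\eplam{\phi}{N}$, $\econs{A}{B}$, $\edinjL{N}$, $\estop{N}$, $\ebroll{N}$, $\etlam{\allrat}{N}$, $\etcons{f}{N}$, $\eAsgneq{y}{x}{\pvx}{N}$, $\erep{M}{N}{\pvx:J}{O}$, $\efor{A}{B}{C}$, and the pair-like $\eSeq{N}$, $\eSwap{N}$). For each, inspect the immediate subterms that are \emph{not} under a binder. By the inductive hypothesis each such subterm is either normal or steps. If one steps, then $M$ steps by the corresponding structural (S-) rule (e.g.\ \irref{injLS}, \irref{dconsSL}). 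If all non-binder subterms are normal, split further: if they are all \emph{simple}, then $M$ is normal by \rref{def:normal} (for the binding rules like \irref{brandomI}, \irref{asgnI}, \irref{btestI}, \irref{broll} this is immediate since the binder's body need not be normal). If some non-binder subterm is a normal-but-not-simple term, i.e.\ a top-level $\kwcase$, then $M$ steps by the relevant commuting-conversion (C-) rule (e.g.\ \irref{injLC}, \irref{dconsCL}), lifting the case to the top. A subtlety worth flagging: terms like $\efor{A}{B}{C}$ and $\efp{A}{B}{C}$ and $\erep{M}{N}{\pvx:J}{O}$ have their "driving" subterm ($A$, resp.\ $M$) not under a binder and the rest effectively under binders, and when the driving subterm is a canonical introduction form these terms reduce by the $\beta$-rules \irref{forBeta}, \irref{fpBeta}, \irref{repBeta} rather than becoming normal; one must check that $A$ being simple-and-canonical triggers the $\beta$-rule. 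For \irref{dloopI}, \rref{lem:app-normal} tells us the only normal form for $\ddiamond{\prepeat{\alpha}}{\phi}$ is via \irref{dstop}/\irref{dgo}, never via \irref{dloopI}, which is why $\efor{A}{B}{C}$ always steps.

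Next I would handle the elimination rules ($\eapp{M}{N}$, $\eapp{M}{f}$, $\eProjL{M}$, $\eProjR{M}$, $\edprojL{M}$, $\edprojR{M}$, $\ebprojL{M}$, $\ebprojR{M}$, $\eCase{A}{B}{C}$, $\ercase{A}{B}{C}$, $\eunpack{M}{N}$, $\ebunroll{M}$, $\efp{A}{B}{C}$, $\eQE{\phi}{M}$). For each elimination with principal argument $A$ (the one being eliminated): by IH $A$ is normal or steps; if it steps, $M$ steps by the S-rule; if $A$ is a top-level case, $M$ steps by the C-rule (e.g.\ \irref{appCL}, \irref{caseC}, \irref{projLC}, \irref{fpC}). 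Otherwise $A$ is simple and normal, so by \rref{lem:app-normal} $A$ begins with the canonical introduction rule for the type $A$ eliminates — e.g.\ for $\eapp{M}{N}$ with $M : \dbox{\ptest{\phi}}{\psi}$, $M$ must be $\eplam{\phi}{M'}$, so $M$ steps by \irref{appBeta}; similarly $\eProjL{\eCons{\cdot}{\cdot}}$ steps by \irref{projLBeta}, $\ebunroll{\ebroll{\cdot}}$ by \irref{unrollBeta}, $\eCase{\eInjL{\cdot}}{\cdot}{\cdot}$ by \irref{caseBetaL}, $\eunpack{\etconsgen{\cdots}}{\cdot}$ by \irref{unpackBeta}, and $\efp{A}{B}{C}$ with $A$ canonical by \irref{fpBeta}. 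The variable case $M = \pvx$ cannot arise under the empty context $\Gemp$. The $\eQE{\phi}{M}$ case is handled by the FO-simplification $\beta$-rules (\irref{QEAllBeta} etc.), reducing on the shape of $\phi$.

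Finally the monotonicity case $M = \emon{A}{B}{\pvx}$: by IH on $A$, if $A$ steps then $M$ steps by \irref{monS}; if $A$ is a top-level case, $M$ steps by the monotonicity commuting-conversion \irref{monC} (i.e.\ \irref{caseMon}); otherwise $A$ is simple and normal, hence by \rref{lem:app-normal} $A$ is a canonical introduction form, and $M$ reduces by the corresponding monotonicity-conversion rule (one of \irref{injLMon}, \irref{rlamMon}, \irref{lamMon}, \irref{dconsMon}, \irref{bconsMon}, \irref{tconsMon}, \irref{bseqMon}, \irref{dseqMon}, \irref{bswapMon}, \irref{dswapMon}, \irref{dasgnMon}, \irref{basgnMon}, \irref{brollMon}, \irref{stopMon}, \irref{goMon}). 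The main obstacle, and the only genuinely non-bookkeeping part, is arguing that the monotonicity-conversion rules cover \emph{every} canonical introduction form exhaustively, so that a simple canonical $A$ always gives a redex; this requires walking the full grammar of introduction terms against the list of $\circ$-rules and confirming no canonical shape is left stuck. The rest is a long but routine enumeration, and invoking \rref{lem:app-normal} is what keeps each elimination/monotonicity case short by pinning down the shape of the normalized principal argument.
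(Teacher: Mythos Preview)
Your proposal is correct and follows essentially the same approach as the paper: structural induction on $M$, with S-rules when a non-binder subterm steps, C-rules when it is a top-level case, $\beta$-rules for eliminators whose principal argument is (by \rref{lem:app-normal}) a canonical introduction form, and the $\circ$-rules for $\emon{A}{B}{\pvx}$ once $A$ is simple. Your grouping into intro/elim/monotonicity families is more abstract than the paper's case-by-case enumeration, and your observation that $\efor{}{}{}$ and $\erep{}{}{}{}$ always step (never sit as normal forms) is exactly what the paper uses, though it does not call it out explicitly; the paper also evaluates the second argument of $\eapp{M}{N}$ before firing \irref{appBeta}, but this is unnecessary for mere progress since \irref{appBeta} has no side condition on $N$.
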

\begin{proof}

\mycase \irref{brandomI}, \irref{drandomI}, \irref{btestI}, or \irref{asgnI}:
In these cases, $M$ is already normal by definition.


\mycase \irref{dtestI}, term $\edcons{M}{N}:$
By first IH1, $\isnorm{M}$ or $M \stepsto M'$.
If $M \stepsto M'$ then $\edcons{M}{N} \stepsto \edcons{M'}{N}$ by \irref{dconsSL}.
Else, by IH2, $\isnorm{N}$ or $N \stepsto N'$.
If $N \stepsto N'$ then $\edcons{M}{N} \stepsto \edcons{M}{N'}$ by \irref{dconsSR}.
Else, $\isnorm{M}$ and $\isnorm{N}$.
If $M \equiv \ecase{A}{B}{C}$ then $\edcons{\ecase{A}{B}{C}}{N} \stepsto \ecase{A}{\edcons{B}{N}}{\edcons{C}{N}}$ by \irref{dconsCL}.
Else if $N \equiv \ecase{A}{B}{C}$ then $\edcons{M}{\ecase{A}{B}{C}} \stepsto \ecase{A}{\edcons{M}{B}}{\edcons{M}{C}}$ by \irref{dconsCR}.
Else $\issimp{M}$ and $\issimp{N}$ so $\issimp{\econs{M}{N}}$ and $\isnorm{\econs{M}{N}}$.

\mycase \irref{bchoiceI} is symmetric.

\mycase \irref{dchoiceIL}, term $\einjL{M}$:
By IH, either $M \stepsto M'$ or $\isnorm{M}$.
If $M \stepsto M',$ then $\einjL{M} \stepsto \einjL{M'}$ by \irref{injLS}.
Else, either $\issimp{M}$ or $M \equiv \ecase{A}{B}{C}$.
If $M \equiv \ecase{A}{B}{C}$ then $\einjL{\ecase{A}{B}{C}} \stepsto \ecase{A}{\einjL{B}}{\einjL{C}}$ by \irref{injLC}.
Else $\issimp{M}$ so $\issimp{\einjL{M}}$.
Diamond case $\estop{M}$ is symmetric.

\mycase \irref{dchoiceIR}, term $\einjR{M}$:
By IH, either $M \stepsto M'$ or $\isnorm{M}$.
If $M \stepsto M',$ then $\einjR{M} \stepsto \einjR{M'}$ by \irref{injRS}.
Else, either $\issimp{M}$ or $M \equiv \ecase{A}{B}{C}$.
If $M \equiv \ecase{A}{B}{C}$ then $\einjR{\ecase{A}{B}{C}} \stepsto \ecase{A}{\einjR{B}}{\einjR{C}}$ by \irref{injRC}.
Else $\issimp{M}$ so $\issimp{\einjR{M}}$.
Diamond case $\ego{M}$ is symmetric.

\mycase \irref{seqI} (both box and diamond): 
By IH, either $M \stepsto M'$ or $\isnorm{M}$.
If $M \stepsto M',$ then $\eSeq{M} \stepsto \eSeq{M'}$ by \irref{bseqS} or \irref{dseqS}.
Else, either $\issimp{M}$ or $M \equiv \ecase{A}{B}{C}$.
If $M \equiv \ecase{A}{B}{C}$ then $\eSeq{\ecase{A}{B}{C}} \stepsto \ecase{A}{\eSeq{B}}{\eSeq{C}}$ by \irref{bseqC} or \irref{dseqC}.
Else $\issimp{M}$ so $\eSeq{M}$ is normal.

\mycase \irref{dualI} (both box and diamond):
By IH, either $M \stepsto M'$ or $\isnorm{M}$.
If $M \stepsto M',$ then $\eSwap{M} \stepsto \eSwap{M'}$ by  \irref{bswapS} or \irref{dswapS}.
Else, either $\issimp{M}$ or $M \equiv \ecase{A}{B}{C}$.
If $M \equiv \ecase{A}{B}{C}$ then $\eSeq{\ecase{A}{B}{C}} \stepsto \ecase{A}{\eSwap{B}}{\eSwap{C}}$ by \irref{bswapC} or \irref{dswapC}.
Else $\issimp{M}$ so $\eSwap{M}$ is normal.


\mycase \irref{dloopI}, term $\efor{A}{B}{C}$.
Abbreviate term $M = \emon{\esub{B}{\pvx,\pvy}{A,\edcons{\pvr}{\textit{\pvrr}}}}
                      {\efor{\eprojL{\pvz}}{B}{C}}{\pvz}$.
By IH, either $A \stepsto A'$ or $\isnorm{A}$.
If $A \stepsto A'$ $\efor{A}{B}{C} \stepsto \efor{A'}{B}{C}$ by \irref{forS}.
else $\isnorm{A}$ so $\issimp{A}$ or $A \equiv \ecase{D}{E}{F}$.
If $\issimp{A}$ then
$\efor{A}{B}{C}$ $\stepsto$ $\ecaseHead{\esplit{\met}{0}}$ $\ecaseLeft{\pvl}{
\estop{\esub{C}{(\pvx,\pvy)}{(A,\pvl)}}
}$
$\ecaseRight{\pvr}
   {\eghost{\met_0}{\met}{\textit{\pvrr}}
           {\ego{M}}}
\ecaseEnd$ by \irref{forBeta}.
Otherwise we have $\efor{\ecase{D}{E}{F}}{B}{C} \stepsto \ecase{D}{\efor{E}{B}{C}}{\efor{F}{B}{C}}$ by \irref{forC}.

\mycase \irref{drcase}, term $\efp{A}{B}{C}$.
By IH, either $A \stepsto A'$ or $\isnorm{A}$.
If $A \stepsto A'$ then $\efp{A}{B}{C} \stepsto \efp{A'}{B}{C}$ by \irref{fpS}.
Else, since $\isnorm{A}$ then $A \equiv \ecase{D}{E}{F}$ or $\issimp{A}$.
If $A \equiv \ecase{D}{E}{F}$ then $\efp{\ecase{D}{E}{F}}{B}{C} \stepsto \ecase{D}{\efp{E}{B}{C}}{\efp{F}{B}{C}}$ by \irref{fpC}.
If $\issimp{A}$ then 
$\efp{A}{B}{C} \stepsto (\ercase{A}{B}{\esub{C}{\pvy}{\emon{\pvr}{\efp{\pvz}{B}{C}}{\pvz}}})$ by \irref{fpBeta}.

\mycase \irref{btestE}, term $\eapp{M}{N}$.
By IH, either $M \stepsto M'$ or $\isnorm{M}$.
If $M \stepsto M'$ then $\eapp{M}{N} \stepsto \eapp{M'}{N}$ by \irref{appSL}.
Else if $M \equiv \ecase{A}{B}{C}$ then $\eapp{\ecase{A}{B}{C}}{N} \stepsto \ecase{A}{\eapp{B}{N}}{\eapp{C}{N}}$ by \irref{appCL}.
Else $\issimp{M}$. By IH, either $N \stepsto N'$ or $\isnorm{N}$.
If $N \stepsto N'$ then $\eapp{M}{N} \stepsto \eapp{M}{N'}$ by \irref{appSR}.
Else if $N \equiv \ecase{A}{B}{C}$ then $\eapp{M}{\ecase{A}{B}{C}} \stepsto \ecase{A}{\eapp{M}{B}}{\eapp{M}{C}}$ by \irref{appCR}.
Else $\issimp{N}$.
By \rref{lem:app-normal} $M = (\eplam{\phi}{D})$ then $\eapp{(\eplam{\phi}{D})}{N} \stepsto \esub{D}{\pvx}{N}$ by \irref{appBeta}.

\mycase \irref{brandomE}, term $\eapp{M}{f}$.
By IH, $M \stepsto M'$ or $\isnorm{M}$.
If $M \stepsto M'$ then $\eapp{M}{f} \stepsto \eapp{M'}{f}$ by \irref{bunrollS}.
Else if $M \equiv \ecase{A}{B}{C}$ then $\eapp{\ecase{A}{B}{C}}{f} \stepsto \ecase{A}{\eapp{B}{f}}{\eapp{C}{f}}$ by \irref{brandomC}.
Else $\issimp{M}$ and by \rref{lem:app-normal} then $M \equiv (\etlam{\allrat}{D})$ so $\eapp{(\etlam{\allrat}{D})}{f} \stepsto \tsub{D}{x}{f}$ by \irref{brandomBeta}.

\mycase \irref{dtestEL}, term $\eprojL{M}$.
By IH, $M \stepsto M'$ or $\isnorm{M}$.
If $M \stepsto M'$ then $\eprojL{M} \stepsto \eprojL{M'}$ by \irref{projLS}.
Else if $M \equiv \ecase{A}{B}{C}$ then $\eprojL{\ecase{A}{B}{C}} \stepsto \ecase{A}{\eprojL{B}}{\eprojL{C}}$ by \irref{projLC}.
Else $\issimp{M}$ so by \rref{lem:app-normal} have $M \equiv \econs{A}{B}$ so $\eprojL{\econs{A}{B}} \stepsto A$ by \irref{projLBeta}.

\mycase \irref{dtestER}, term $\eprojR{M}$.
By IH, $M \stepsto M'$ or $\isnorm{M}$.
If $M \stepsto M'$ then $\eprojR{M} \stepsto \eprojR{M'}$ by \irref{projRS}.
Else if $M \equiv \ecase{A}{B}{C}$ then $\eprojR{\ecase{A}{B}{C}} \stepsto \ecase{A}{\eprojR{B}}{\eprojR{C}}$ by \irref{projRC}.
Else $\issimp{M}$ so by \rref{lem:app-normal} have $M \equiv \econs{A}{B}$ so $\eprojR{\econs{A}{B}} \stepsto B$ by \irref{projRBeta}.

\mycase \irref{bunroll}, term $\ebunroll{M}$.
By IH, $M \stepsto M'$ or $\isnorm{M}$.
If $M \stepsto M'$ then $\ebunroll{M} \stepsto \ebunroll{M'}$ by \irref{bunrollS}.
Else if $M \equiv \ecase{A}{B}{C}$ then $\ebunroll{\ecase{A}{B}{C}} \stepsto \ecase{A}{\ebunroll{B}}{\ebunroll{C}}$ by \irref{bunrollC}.
Else $\issimp{M}$ so by \rref{lem:app-normal} have $M \equiv \ebroll{A}$ so $\ebunroll{\ebroll{A}} \stepsto A$ by \irref{bunrollBeta}.

\mycase \irref{drcase}, term $\ercase{A}{B}{C}$.
By IH, $A \stepsto A'$ or $\isnorm{A}$.
If $A \stepsto A'$ then $\ercase{A}{B}{C} \stepsto \ercase{A'}{B}{C}$ by \irref{caseS}.
Else if $M \equiv \ecase{A}{B}{C}$ then $\ercase{\ecase{A}{B}{C}}{D}{E} \stepsto \ecase{A}{\ercase{B}{D}{E}}{\ercase{C}{D}{E}}$ by \irref{caseC}.
Else $\issimp{M}$ so by \rref{lem:app-normal} have $M \equiv \estop{A}$ or $M \equiv \ego{A}$ so $\ercase{\estop{A}}{B}{C} \stepsto \esub{B}{\ell}{A}$ and
$\ercase{\ego{A}}{B}{C} \stepsto \esub{C}{r}{A}$ by \irref{caseBetaL} and \irref{caseBetaR}.

\mycase \irref{bchoiceI}, term $\ebcons{M}{N}$.
By IH, $M \stepsto M'$ or $\isnorm{M}$.
If $M \stepsto M'$ then $\ebcons{M}{N} \stepsto \ebcons{M'}{N}$ by \irref{bconsSL}.
Else if $M \equiv \ecase{A}{B}{C}$ then $\ebcons{\ecase{A}{B}{C}}{N} \stepsto \ecase{A}{\ebcons{B}{N}}{\ebcons{C}{N}}$ by \irref{bconsCL}.
Else $\issimp{M}$.
By IH, $N \stepsto N'$ or $\isnorm{N}$.
If $N \stepsto N'$ then $\ebcons{M}{N} \stepsto \ebcons{M}{N'}$ by \irref{bconsSR}.
Else if $N \equiv \ecase{A}{B}{C}$ then $\ebcons{M}{\ecase{A}{B}{C}} \stepsto \ecase{A}{\ebcons{M}{B}}{\ebcons{M}{C}}$ by \irref{bconsCR}.
Else $\issimp{N}$, so $\isnorm{\ebcons{M}{N}}$.

\mycase \irref{dchoiceE}, term $\ecase{A}{B}{C}$.
By IH, $A \stepsto A'$ or $\isnorm{A}$.
If $A \stepsto A'$ then $\ecase{A}{B}{C} \stepsto \ecase{A'}{B}{C}$ by \irref{caseS}.
Else $\isnorm{A}$.
If $A \equiv \ecase{D}{E}{F}$ then $\ecase{\ecase{D}{E}{F}}{B}{C} \stepsto \ecase{D}{\ecase{E}{B}{C}}{\ecase{F}{B}{C}}$ by \irref{caseC}.
Else $\issimp{A}$ so by \rref{lem:app-normal}, $A \equiv \einjL{M}$ or $A \equiv \einjR{M}$.
In the first case, $\ecase{\einjL{M}}{B}{C} \stepsto \esub{B}{\pvl}{M}$ by \irref{caseBetaL} else  $\ecase{\einjR{M}}{B}{C} \stepsto \esub{C}{\pvr}{M}$ by \irref{caseBetaR}.

\mycase \irref{bloopI}, term $(\erep{M}{N}{\pvx:J}{O})$
By IH, $M \stepsto M'$ or $\isnorm{M}$.
If $M \stepsto M'$ then $(\erep{M}{N}{\pvx:J}{O}) \stepsto (\erep{M'}{N}{\pvx:J}{O})$ by \irref{repS}.
Else if $M \equiv \ecase{A}{B}{C}$ then $\erep{\ecase{A}{B}{C}}{N}{\pvx:J}{O} \stepsto \ecase{A}{(\erep{B}{N}{\pvx:J}{O})}{(\erep{C}{N}{\pvx:J}{O})}$ by \irref{repC}.
Else $\issimp{M}$ so $(\erep{M}{N}{\pvx:J}{O}) \stepsto \ebroll{\edcons{M}{\emon{(\esub{N}{\pvx}{M})}{(\erep{\pvy}{N}{\pvx:J}{O})}{\pvy}}}$ by \irref{repBeta}.

 \mycase \irref{drandomE}, term $\eunpack{M}{N}$.
 By IH, $M \stepsto M'$ or $\isnorm{M}$.
 If $M \stepsto M'$ then $\eunpack{M}{N} \stepsto \eunpack{M}{N}$ by \irref{unpackS}.
 Else if $M \equiv \ecase{A}{B}{C}$ then $\eunpack{\ecase{A}{B}{C}}{N} \stepsto \ecase{A}{\eunpack{B}{N}}{\eunpack{C}{N}}$ by \irref{unpackC}.
 Else $\issimp{M}$ and by \rref{lem:app-normal} then $M \equiv \etcons{f}{M'}$ so
$\eunpack{\etconsgen{x}{y}{\pvy}{f}{M}}{N} \stepsto \eren{(\eghost{x}{\eren{f}{x}{y}}{\pvy}{\esub{N}{\pvx}{M}})}{y}{x}$ by \irref{unpackBeta}.

\mycase \irref{hyp}, term $x$.
By inversion on $\proves{\Gemp}{x}{\phi}$ then $x \in \Gemp,$ so the case holds by absurdity.

\mycase \irref{mon}, term $\emon{M}{N}{\pvx}$.
By IH, $M \stepsto M'$ or $\isnorm{M}$.
If $M \stepsto M'$ then $\emon{M}{N}{\pvx} \stepsto \emon{M'}{N}{\pvx}$ by \irref{monS}.
else if $M \equiv \ecase{A}{B}{C}$ then $\emon{\ecase{A}{B}{C}}{N}{\pvx} \stepsto \ecase{A}{\emon{B}{N}{\pvx}}{\emon{C}{N}{\pvx}}$ by \irref{monC}.
Else $\isnorm{M}$.
We proceed by cases on $M$. By \rref{lem:app-normal}, it suffices to show the cases for introduction forms.
The termination argument is by lexicographic induction on the postcondition formula  $\phi$ and on the derivation $M$.

\mycase \irref{dtestI}, term $(\eplam{\phi}{M})$: Then by \irref{lamMon} have $\emon{(\eplam{\phi}{M})}{N}{\pvx} \stepsto (\eplam{\phi}{\emon{M}{N}{\pvx}})$

\mycase \irref{brandomI}, term $(\etlam{\allrat}{M})$: Then by \irref{rlamMon} have $\emon{(\etlam{\allrat}{M})}{N}{\pvx} \stepsto (\etlam{\allrat}{(\emon{M}{N}{\pvx})})$

\mycase \irref{bchoiceI}, term $\ebcons{A}{B}$: Then by \irref{bconsMon} have $\emon{\ebcons{A}{B}}{N}{\pvx}\stepsto\ebcons{\emon{A}{N}{\pvx}}{\emon{B}{N}{\pvx}}$.

\mycase \irref{dtestI}, term $\edcons{A}{B}$: Then by \irref{dconsMon} have $\emon{\edcons{A}{B}}{N}{\pvx}\stepsto\edcons{A}{\esub{N}{\pvx}{B}}$

\mycase \irref{drandomI}, term $\etcons{f}{M}$: Then by \irref{tconsMon} $\emon{\etcons{f}{M}}{N}{\pvx}\stepsto \etcons{f}{\emon{M}{\esub{N}{f}{\pvy}}{\pvx}}$

\mycase \irref{dualI} (diamond), term $\edswap{M}$: Then by \irref{dswapMon} have $\emon{\edswap{M}}{N}{\pvx} \stepsto \edswap{\emon{M}{N}{\pvx}}$

\mycase \irref{dualI} (box), term $\ebswap{M}$: Then by \irref{bswapMon} have $\emon{\ebswap{M}}{N}{\pvx} \stepsto \ebswap{\emon{M}{N}{\pvx}}$

\mycase \irref{dchoiceIL}, term $\einjL{M}$: Then by \irref{injLMon} have $\emon{(\einjL{M})}{N}{\pvx} \stepsto \einjL{(\emon{M}{N}{\pvx})}$

\mycase \irref{dchoiceIR}, term $\einjR{M}$: Then by \irref{injRMon} have $\emon{(\einjR{M})}{N}{\pvx} \stepsto \einjR{(\emon{M}{N}{\pvx})}$

\mycase \irref{broll}, term $\ebroll{M}$: Then by \irref{brollMon} have
$\emon{\ebroll{M}}{N}{\pvx}  \stepsto \ebroll{\edcons{\emon{\eprojL{M}}{(\eren{N}{\vec{y}}{\vec{x}})}{\pvx}}{\emon{\eprojR{M}}{\emon{\pvz}{N}{\pvx}}{\pvz}}}$

\mycase \irref{dstop}, term $\estop{M}$: 
Then by \irref{injLMon} have $\emon{\estop{M}}{N}{\pvx} \stepsto \estop{\emon{M}{N}{\pvx}}$

\mycase \irref{dgo}, term $\ego{M}$:
Then by \irref{injRMon} have $\emon{\ego{M}}{N}{\pvx} \stepsto \ego{\emon{M}{N}{\pvx}}$

\mycase \irref{asgnI} (diamond), term $\edasgneq{y}{x}{\pvx}{M}$: 
Then by \irref{dasgnMon} have $\emon{\edasgneq{x}{y}{\pvx}{M}}{N}{\pvx} \stepsto \edasgneq{x}{y}{\pvx}{\esub{N}{\pvx}{M}}$

\mycase \irref{asgnI} (box), term $\ebasgneq{y}{x}{\pvx}{M}$: 
Then by \irref{basgnMon} have $\emon{\ebasgneq{x}{y}{\pvx}{M}}{N}{\pvx} \stepsto \ebasgneq{x}{y}{\pvx}{\esub{N}{\pvx}{M}}$

\mycase \irref{seqI} (box), term $\ebseq{M}$: Then by \irref{bseqMon} have $\emon{(\ebseq{M})}{N}{\pvx} \stepsto \ebseq{(\emon{M}{\emon{\pvy}{\esub{N}{\pvx}{\pvz}}{\pvz}}{\pvy})}$

\mycase \irref{seqI} (diamond), term $\edseq{M}$: Then by \irref{dseqMon} have $\emon{(\edseq{M})}{N}{\pvx} \stepsto \edseq{(\emon{M}{\emon{\pvy}{\esub{N}{\pvx}{\pvz}}{\pvz}}{\pvy})}$


\end{proof}

\begin{lemma}[Structurality]
Rules \irref{weak}, \irref{exchange}, and \irref{contract} are admissible.

\begin{calculuscollections}{\textwidth}
\begin{calculus}
\cinferenceRule[weak|W]{$x$ fresh}
{\linferenceRule[formula]{\proves{\G}{M}{\phi}}{\proves{\G,\pvx:\psi}{M}{\phi}}}
{}
\cinferenceRule[exchange|X]{}
{\linferenceRule[formula]{\proves{\G,\pvx:\phi,\pvy:\psi}{M}{\rho}}{\proves{\G,\pvy:\psi,\pvx:\phi}{M}{\rho}}}
{}
\cinferenceRule[contract|C]{}
{\linferenceRule[formula]{\proves{\G,\pvx:\phi,\pvy:\phi}{M}{\rho}}{\proves{\G,\pvx:\phi}{\esub{M}{\pvy}{\pvx}}{\rho}}}
{}
\end{calculus}
\end{calculuscollections}
\end{lemma}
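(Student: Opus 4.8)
The plan is to prove each of the three structural rules admissible by a single induction on the proof term $M$, with the three proofs bundled together since they share the same structural skeleton. For each rule, the statement has the shape ``if $M$ is a well-typed derivation in one context, then a suitably transformed $M$ (possibly $M$ itself) is well-typed in the transformed context,'' so the induction is on the structure of $M$ and at each node we invoke the inductive hypothesis on the immediate subterms with appropriately transformed contexts. The base case is \irref{hyp}, term $\pvx$: here the only hypothesis about $\Gamma$ is the membership fact $\Gamma(\pvx) = \phi$, and we observe that weakening preserves all such memberships (adding a fresh binding cannot remove any), exchange preserves them trivially because we treat contexts as sets up to reordering, and contraction preserves them modulo the renaming $\esub{\cdot}{\pvy}{\pvx}$, since after identifying $\pvy$ with $\pvx$ any lookup that previously resolved to $\pvy$ now resolves to $\pvx$ with the same formula.

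For the inductive step I would organize the proof-term constructors into three groups. First, the purely propositional and modal constructors that bind no program or proof variables in a subterm whose context still contains all of $\Gamma$ (e.g.\ $\edcons{M}{N}$, $\einjL{M}$, $\eSeq{M}$, $\eSwap{M}$, the projections, $\eapp{M}{N}$): for these the conclusion follows immediately by applying the IH to each premiss and reassembling with the same rule, since the context transformation commutes with whatever context extension the rule performs. Second, constructors that extend $\Gamma$ with a \emph{proof} variable in a premiss (e.g.\ $\eplam{\phi}{M}$ via \irref{btestI}, $\edcase{A}{B}{C}$, $\ercase{A}{B}{C}$, $\erep{M}{N}{\pvx:J}{O}$, $\efor{A}{B}{C}$, $\emon{M}{N}{\pvx}$): here we must take care that the freshly introduced proof variable does not clash with the variable being added or contracted, which is handled by $\alpha$-renaming the bound proof variable if necessary, after which the IH applies to the extended context. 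Some of these rules, such as \irref{bloopI} and \irref{dloopI}, discard $\Gamma$ entirely in some premisses (the premiss is checked in a context like $\pvx:J$ alone), in which case the IH need not be applied to that premiss at all and the subderivation is carried over unchanged. Third, constructors that rename a \emph{program} variable in $\Gamma$ in a premiss (e.g.\ $\eAsgneq{y}{x}{\pvx}{M}$ via \irref{asgnI}, $\etlam{\allrat}{M}$ via \irref{brandomI}, $\etcons{f}{M}$ via \irref{drandomI}, $\eunpack{M}{N}$): here the premiss context is $\eren{\Gamma}{x}{y}$, so when we add a new hypothesis $\pvz:\psi$ to $\Gamma$ we must add $\pvz:\eren{\psi}{x}{y}$ to $\eren{\Gamma}{x}{y}$; since the renaming is just a transposition of program variables it commutes with context extension, and the IH on the premiss (instantiated with the renamed extra hypothesis, using that the fresh $y$ avoids capture) gives exactly what is needed. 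In all of these, we are free to vary $\Gamma$ in applications of the IH precisely because the induction is on $M$ and not on a fixed context.

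The main obstacle I expect is bookkeeping around variable capture in the third group and in contraction: for contraction we substitute $\esub{M}{\pvy}{\pvx}$, and when $M$ binds a proof variable or a program variable under which $\pvy$ or $\pvx$ might occur, we must ensure the substitution is capture-avoiding, which interacts with the already-delicate program-variable renamings performed by rules like \irref{asgnI} and \irref{brandomI}. This is where \rref{lem:renaming} (uniform renaming of program variables) does the real work: it lets us $\alpha$-convert bound program variables freely while preserving provability, so that the transposition and the contraction substitution can be made to commute. None of this is conceptually hard, but it is the part where one must be careful, so I would state explicitly at the start of the proof that contexts are identified up to reordering (trivializing exchange), that bound proof variables are identified up to $\alpha$-equivalence, and that \rref{lem:renaming} is invoked silently whenever a bound program variable needs to be refreshed; the remaining cases are then mechanical.
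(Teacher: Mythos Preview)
Your proposal is correct and follows essentially the same approach as the paper: induction on $M$, with the key observation that the only context-dependent premiss is the lookup $\Gamma(\pvx)=\phi$ in \irref{hyp}, which is preserved by all three transformations (contexts-as-sets trivializes exchange), and with the context universally quantified so the IH can be applied at varied contexts, including the renamed contexts arising from \irref{asgnI}, \irref{brandomI}, etc. Your explicit three-group taxonomy and your remark that premisses which discard $\Gamma$ entirely need no IH at all are exactly what the paper does, just spelled out more fully.

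One minor point: you lean on \rref{lem:renaming} to handle the interaction between contraction and program-variable transpositions, but the paper does not invoke that lemma here (and in the paper's ordering Structurality precedes Uniform Renaming). It is not actually needed: the contraction substitution $\esub{\cdot}{\pvy}{\pvx}$ replaces a \emph{proof} variable by a proof variable, and proof variables are untouched by the program-variable transposition $\eren{\cdot}{x}{y}$, so the two operations commute on the nose. Thus applying the IH at the already-renamed premiss context $\eren{\Gamma}{x}{y}$ (with the extra hypothesis appropriately renamed) suffices without any appeal to renaming-preserves-provability.
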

\begin{proof}
To show each rule admissible, prove that when the premisses are derivable, the conclusion is derivable.
Each proof is by induction on the proof term $M$.

Observe that the only premisses regarding $\G$ are of the form $\G(x) = \phi$.
Such premisses are preserved when extending $\G$ with a fresh proof variable $\pvx:\phi$, as needed by weakening.
Premisses are also preserved under exchange, because contexts are understood as sets to begin with, i.e., the premiss $\G(\pvx)=\phi$ is ignorant of the position of $x$.
Premisses are preserved modulo renaming under contraction because the assumption $\G(\pvx)=\phi$ is as good as the assumption $\G(\pvy)=\phi$.

We note briefly that the inductive hypothesis can always be applied when needed.
The context $\G$ is kept universally quantified in the induction over $M$.
When proving $\proves{\G}{M}{\phi},$ we sometimes have assumptions in an empty or singleton context; 
for weakening we need not apply the IH at all in these cases, and for exchange and contraction we simply vary the context as we apply the inductive hypothesis.
We also vary the context in the inductive hypothesis of the assignment rule.
\end{proof}

\begin{lemma}[Proof term renaming]
\label{lem:app-ptren}
  If $\proves{\G}{M}{\phi}$ then $\proves{\eren{\G}{x}{y}}{\eren{M}{x}{y}}{\eren{\phi}{x}{y}}$.
  If $x$ or $y$ is fresh in both $M$ and $\phi$ and $\proves{\eren{\G}{x}{y}}{M}{\phi}$ then
$\proves{\G}{M}{\phi}$.
\end{lemma}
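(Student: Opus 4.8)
The plan is to prove both claims of \rref{lem:app-ptren} by structural induction on the proof term $M$, mirroring the strategy already used for \rref{lem:renaming} and \rref{lem:structurality}. First I would fix the renaming convention for proof terms: renaming $x$ for $y$ in $M$ is defined homomorphically on all subterms, and on the program-variable-binding constructs (those introduced by \irref{asgnI}, \irref{drandomI}, \irref{brandomI}, \irref{drandomE}, and the ghost construct \irref{ghost}) it acts on the named variable and $\alpha$-varies the fresh auxiliary variable as necessary to avoid collision with $x$ or $y$, exactly as in the proof summary of \rref{lem:renaming}. With this in hand, the first claim follows by checking that each typing rule is stable under the simultaneous renaming of $\G$, $M$, and $\phi$: for the propositional and purely-modal rules (\irref{dchoiceE}, \irref{seqI}, \irref{dualI}, \irref{mon}, the loop rules, etc.) renaming commutes with all the metaoperations appearing in the premisses — substitution of terms ($\tsub{\cdot}{x}{f}$), substitution of proof terms ($\esub{\cdot}{\pvx}{\cdot}$), and the static-semantics side conditions (freshness, admissibility) — so the inductive hypotheses assemble directly into the renamed conclusion. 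The hypothesis rule \irref{hyp} is immediate since $(\eren{\G}{x}{y})(\pvx) = \eren{\G(\pvx)}{x}{y}$.

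The interesting cases are the program-variable-binding rules. Take \irref{asgnI} with term $\eAsgneq{z}{w}{\pvx}{N}$, concluding $\proves{\G}{\eAsgneq{z}{w}{\pvx}{N}}{\dmodality{\humod{w}{f}}{\psi}}$ from $\proves{\eren{\G}{w}{z},\pvx:(w=\eren{f}{w}{z})}{N}{\psi}$ with $z$ fresh. Applying renaming of $x$ for $y$, the bound variable $w$ becomes $\eren{w}{x}{y}$ and the auxiliary $z$ is $\alpha$-varied to a variable $z'$ still fresh for the renamed data; one then needs the identity that double renaming commutes appropriately, i.e. $\eren{\eren{\G}{w}{z}}{x}{y}$ equals $\eren{\eren{\G}{x}{y}}{\eren{w}{x}{y}}{z'}$ up to the choice of fresh $z'$, which is a routine composition-of-transpositions fact (cf. \rref{lem:app-ren-selfdual}). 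Then the inductive hypothesis on $N$ delivers the renamed premiss, and repackaging under \irref{asgnI} with bound variable $\eren{w}{x}{y}$ gives the renamed conclusion. The rules \irref{drandomI}, \irref{brandomI}, \irref{drandomE}, and \irref{ghost} are handled identically, with the added observation that the side condition $x \notin \freevars{\psi}$ in \irref{drandomE} transports to $\eren{x}{\cdot}{\cdot} \notin \freevars{\eren{\psi}{\cdot}{\cdot}}$ because free variables commute with renaming.

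For the second claim, I would simply invoke the first claim together with self-duality of renaming (\rref{lem:app-ren-selfdual}): renaming is an involution, so from $\proves{\eren{\G}{x}{y}}{M}{\phi}$ the first part yields $\proves{\eren{\eren{\G}{x}{y}}{x}{y}}{\eren{M}{x}{y}}{\eren{\phi}{x}{y}} = \proves{\G}{\eren{M}{x}{y}}{\eren{\phi}{x}{y}}$; and since $x$ or $y$ is fresh in both $M$ and $\phi$, renaming leaves them unchanged, $\eren{M}{x}{y} = M$ and $\eren{\phi}{x}{y} = \phi$, giving $\proves{\G}{M}{\phi}$. The main obstacle is the bookkeeping in the binding cases: one must pin down precisely how the homomorphic renaming interacts with the implicit $\alpha$-renaming of the auxiliary ghost variables so that the premiss produced by the IH has exactly the shape demanded by re-applying the rule, and confirm that the freshness side conditions are preserved under that interaction. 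This is tedious but entirely analogous to the corresponding step in \rref{lem:renaming}, so I would state the needed commutation lemma for renaming composition explicitly and then let the remaining cases go through mechanically.
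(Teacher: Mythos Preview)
Your induction on $M$ for the first claim, with $\alpha$-variation of auxiliary ghost variables in the binding cases, is exactly the paper's approach (its proof is the single line ``Induction on $M$ analogous to the proof of \rref{lem:app-ptsub}'').

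For the second claim your involution shortcut has a gap. Renaming here is a \emph{transposition}---it swaps $x$ and $y$ simultaneously (cf.\ \rref{lem:app-ren-selfdual})---so from ``$x$ or $y$ is fresh in $M$ and $\phi$'' you cannot conclude $\eren{M}{x}{y}=M$ and $\eren{\phi}{x}{y}=\phi$: if only $x$ is absent but $y$ occurs in $\phi$, then $\eren{\phi}{x}{y}$ sends every $y$ to $x$ and differs from $\phi$ (e.g.\ $\phi\equiv y>0$ becomes $x>0$). Your argument needs \emph{both} variables absent. In fact the second claim as literally stated already fails under the one-variable hypothesis: with $\Gamma=(\pvz:x>0)$, $M=\pvz$, $\phi=y>0$, one has $\proves{\eren{\Gamma}{x}{y}}{\pvz}{y>0}$ by \irref{hyp} and $x$ is fresh in $M$ and $\phi$, yet $\proves{\Gamma}{\pvz}{y>0}$ does not hold. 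The paper gives no detailed argument for this direction either and only ever invokes it with $\Gamma=\Gemp$, where it is trivial; under the stronger hypothesis that both $x$ and $y$ are fresh in $M$ and $\phi$, your involution argument is correct and is the cleanest route.
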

\begin{proof}
Induction on $M$ analogous to the proof of \rref{lem:app-ptsub}.
\end{proof}

\begin{lemma}[Proof term-term substitution]
\label{lem:app-pttsub}
  If $\proves{\G}{M}{\phi}$ and $\tsub{M}{x}{f}$ is admissible then $\proves{\tsub{\G}{x}{f}}{\tsub{M}{x}{f}}{\tsub{\phi}{x}{f}}$.
\end{lemma}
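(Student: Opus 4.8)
The plan is to prove Lemma~\ref{lem:app-pttsub} by structural induction on the proof term $M$, in close parallel to the proof of Lemma~\ref{lem:app-ptsub} (arithmetic-term substitution) already stated in the excerpt, since that lemma has exactly the same shape. First I would set up the induction: fix $M$ and assume $\proves{\G}{M}{\phi}$ together with admissibility of $\tsub{M}{x}{f}$; the goal is $\proves{\tsub{\G}{x}{f}}{\tsub{M}{x}{f}}{\tsub{\phi}{x}{f}}$. I would note up front, as in the corresponding proofs, that admissibility propagates to subterms, so at every step of the induction the relevant substitutions are again admissible and the inductive hypotheses apply.

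The bulk of the argument is a case analysis on the last rule used to derive $\proves{\G}{M}{\phi}$. For the purely propositional rules (\irref{hyp}, \irref{dtestI}, \irref{bchoiceI}, \irref{btestI}, \irref{btestE}, the injection and projection rules, \irref{dchoiceE}, \irref{seqI}, \irref{dualI}, \irref{mon}, etc.) substitution is a homomorphism that commutes with the term constructors: $\tsub{(\econs{M_1}{M_2})}{x}{f} = \econs{\tsub{M_1}{x}{f}}{\tsub{M_2}{x}{f}}$ and similarly for the others, so the case follows by applying the same rule to the results of the inductive hypotheses, using the fact that $\tsub{\cdot}{x}{f}$ distributes over $\G$ and over the subformulas appearing in the premisses. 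The interesting cases are the ones that bind \emph{program} variables, namely \irref{asgnI}, \irref{drandomI}, \irref{brandomI}, \irref{drandomE}, \irref{dloopI}, and \irref{ghost}: here the rule renames the bound variable $z$ to a fresh $y$, and admissibility of $\tsub{M}{x}{f}$ guarantees $\phi$ (hence $M$) does not bind $x$ nor any free variable of $f$, so $x \ne z$ and substitution commutes past the renaming. In these cases I would appeal to Coincidence (Lemma~\ref{lem:coincidental}/\ref{lem:app-coincide}) and Bound Effect (Lemma~\ref{lem:effects-bound}) exactly as Lemma~\ref{lem:atsub} does, and also to proof-term renaming (Lemma~\ref{lem:app-ptren}) to handle the interaction of the freshly introduced ghost variable with the substitution, renaming the ghost if necessary to keep it distinct from the free variables of $f$. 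The \irref{QE} and \irref{metsplit} cases are immediate since first-order validity is closed under substitution of terms for variables.

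The main obstacle I expect is bookkeeping around the binding rules: one must be careful that the fresh variable $y$ introduced by \irref{asgnI} or \irref{drandomI} can always be chosen to avoid $\freevars{f}$ (so that $\tsub{\cdot}{x}{f}$ genuinely commutes with the renaming $\eren{\cdot}{z}{y}$), and that the equality hypothesis $\pvx : (z = \eren{f'}{z}{y})$ in the context transforms correctly under substitution. This is handled by $\alpha$-varying the ghost variable before substituting, which is sound by Lemma~\ref{lem:app-ptren}, exactly the move used in the proof of Lemma~\ref{lem:pt-substitute}. Beyond that, every case is routine. Since the statement and proof mirror Lemma~\ref{lem:atsub} essentially verbatim, I would keep the write-up terse:

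\begin{proof}
By induction on $M,$ in analogy to \rref{lem:atsub}.
Admissibility of $\tsub{M}{x}{f}$ propagates to all subterms, so the inductive hypotheses apply at each step.
For proof-term constructors that bind no program variable, substitution is a homomorphism and the case follows immediately by applying the same rule to the inductive hypotheses, noting that $\tsub{\cdot}{x}{f}$ distributes over $\G$ and over the subformulas in the premisses.
For the rules binding a program variable $z$ (\irref{asgnI}, \irref{drandomI}, \irref{brandomI}, \irref{drandomE}, \irref{dloopI}, \irref{ghost}), admissibility ensures $\phi$ binds neither $x$ nor any free variable of $f,$ hence $x \neq z$ and $\tsub{\cdot}{x}{f}$ commutes with the renaming $\eren{\cdot}{z}{y}$ performed by the rule; we $\alpha$-vary the introduced ghost variable to avoid $\freevars{f}$ if needed, which preserves provability by \rref{lem:app-ptren}, and then appeal to \rref{lem:coincidental} and \rref{lem:effects-bound} as in \rref{lem:atsub}.
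The \irref{QE} and \irref{metsplit} cases hold because first-order validity is preserved under substitution of a term for a variable.
\end{proof}
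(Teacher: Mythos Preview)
Your proposal is correct and takes essentially the same approach as the paper: the paper's proof is a single line, ``Induction on $M$ analogous to the proof of \rref{lem:app-ptsub},'' so your write-up is if anything more detailed than what the authors give. The only minor slip is that in your opening sentence you call \rref{lem:app-ptsub} ``arithmetic-term substitution'' when that label actually refers to proof-term substitution (the arithmetic-term substitution lemma is \rref{lem:atsub}); the paper points to \rref{lem:app-ptsub} as the template because that is where the case analysis over proof terms is spelled out in full, but your reference to \rref{lem:atsub} in the formal proof is also apt since the substance of the argument (handling binders via admissibility, coincidence, and bound effect) is the same.
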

\begin{proof}
Induction on $M$ analogous to the proof of \rref{lem:app-ptsub}.
\end{proof}

\begin{lemma}[Proof term substitution]
\label{lem:app-ptsub}
  If $\proves{\G,\pvx:\psi}{M}{\phi}$ and $\proves{\G}{N}{\psi}$ then $\proves{\G}{\esub{M}{\pvx}{N}}{\phi}$.
\end{lemma}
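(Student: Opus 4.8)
The plan is to prove Lemma~\ref{lem:app-ptsub} by structural induction on the proof term $M$, exactly in the style of the structurality and renaming lemmas already proved. The statement to establish is: if $\proves{\G,\pvx:\psi}{M}{\phi}$ and $\proves{\G}{N}{\psi}$, then $\proves{\G}{\esub{M}{\pvx}{N}}{\phi}$. First I would set up the induction with $\G$, $N$, and $\psi$ all universally quantified, so that the inductive hypotheses can be applied under the modified contexts that arise in rules that bind program or proof variables; I would also freely invoke \rref{lem:structurality} (weakening, exchange, contraction) to reconcile context shapes, as is already done elsewhere in the development.

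The base case is the hypothesis rule \irref{hyp}: if $M = \pvy$ and $\pvy \neq \pvx$, then $\esub{\pvy}{\pvx}{N} = \pvy$ and $\proves{\G,\pvx:\psi}{\pvy}{\phi}$ forces $\G(\pvy) = \phi$, so $\proves{\G}{\pvy}{\phi}$ directly; if $\pvy = \pvx$, then $\phi = \psi$ and $\esub{\pvx}{\pvx}{N} = N$, so the conclusion is exactly the second premiss $\proves{\G}{N}{\psi}$. For the inductive cases I would split into (i) purely propositional/structural term constructors that bind no variables — pairing $\edcons{M}{N}$, $\ebcons{M}{N}$, injections, projections, sequencing $\eSeq{M}$, duality $\eSwap{M}$, rolling/unrolling, application $\eapp{M}{N}$, monotonicity $\emon{M}{N}{\pvx}$, the loop constructors $\efor{A}{B}{C}$, $\efp{A}{B}{C}$, $\erep{M}{N}{\pvx:J}{O}$, and the case constructs $\edcase{A}{B}{C}$, $\ercase{A}{B}{C}$ — where the substitution simply commutes with the term constructor (pushing into each immediate subterm, and into the bound-variable branches after suitable $\alpha$-renaming of the newly bound \emph{proof} variables so they do not capture free proof variables of $N$), and the inductive hypotheses close the case; and (ii) the constructors that bind \emph{program} variables, namely \irref{asgnI} with $\edasgn{y}{z}{\pvy}{M}$ / $\ebasgn{y}{z}{\pvy}{M}$, \irref{brandomI} with $(\etlam{\allrat}{M})$, and \irref{drandomI} with $\etcons{f}{M}$, where the context in the premiss is $\eren{\G}{z}{y}$ and the bound program variable $y$ is chosen fresh. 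In these cases the definition of capture-avoiding substitution renames within occurrences of $N$ so that $y$ is not captured, i.e.\ $\esub{\edasgn{y}{z}{\pvy}{M}}{\pvx}{N} = \edasgn{y}{z}{\pvy}{\esub{M}{\pvx}{\eren{N}{y}{z}}}$, and the inductive hypothesis is applied to $M$ against the renamed witness $\eren{N}{y}{z}$, whose typing follows from $\proves{\G}{N}{\psi}$ by \rref{lem:renaming} (uniform renaming). Weakening is then used to pad $\eren{\G}{z}{y}$ with the assumption introduced by the assignment. The eliminator \irref{drandomE} with $\eunpack{M}{N}$ similarly requires renaming $N$ before recursing, again soundly by \rref{lem:renaming}.

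The main obstacle is the bookkeeping in the variable-binding cases: one must be careful that substituting $N$ under a binder of a program variable $y$ requires $y$ to be fresh for $N$ (achieved by $\alpha$-renaming $y$ together with its corresponding ghost variable, or equivalently renaming within $N$ via \rref{lem:renaming}), and that substituting under a binder of a \emph{proof} variable $\pvy$ requires $\pvy \notin \freevars{N}$ among proof variables (again handled by renaming the bound proof variable), while simultaneously keeping the context in the shape demanded by each rule — in particular reconciling $\eren{\G}{z}{y}$ with the universally quantified $\G$ of the induction and re-adding discarded or split assumptions via weakening and exchange. All of these moves are exactly the ones already rehearsed in the proofs of \rref{lem:structurality}, \rref{lem:renaming}, and \rref{lem:atsub}, so while the case analysis is long, each individual case is routine and I would present only the hypothesis case and one representative binding case (e.g.\ the $\edasgn{y}{z}{\pvy}{M}$ case) in full, noting that the remaining cases follow the same pattern.
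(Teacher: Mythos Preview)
Your proposal is correct and follows essentially the same approach as the paper: structural induction on $M$ with the context universally quantified, handling program-variable binders via \rref{lem:renaming} on the substituted term $N$ and patching contexts with weakening. One small correction: you place $\emon{M}{N}{\pvx}$ in category (i), but the second premiss of \irref{mon} is checked in the renamed context $\earen{\G}{\alpha}$, so this case also requires renaming $N$ by the bound variables of $\alpha$ before applying the inductive hypothesis, exactly as in your assignment case; similarly, for the loop constructors $\efor{A}{B}{C}$, $\efp{A}{B}{C}$, and $\erep{M}{N}{\pvx:J}{O}$, the inductive-step subterms are checked in contexts that drop $\G$ entirely, so substitution is a no-op there rather than ``pushing into each immediate subterm.''
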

By induction on the derivation $\proves{\G,\pvx:\psi}{M}{\phi}$, generalizing the context.

\mycase \irref{asgnI} (diamond):
In this case to avoid clash let's say the substituted proof variable was $\pvz$ not $\pvx$.
By premiss, have $\proves{\eren{\G}{x}{y},\pvz:\eren{\psi}{x}{y},\pvx:(x=\eren{f}{x}{y})}{M}{\phi}$.
Now regroup the context as $(\esub{\G}{x}{y},\pvx:(x=\eren{f}{x}{y}), \pvz:\eren{\psi}{x}{y})$.
By \rref{lem:app-ptren} on $N$ and weakening have $\proves{\eren{\G}{x}{y},\pvx:(x=\eren{f}{x}{y})}{\eren{N}{x}{y}}{\eren{\psi}{x}{y}}$.
Then we can apply the IH on $M$ but varying the context to $\eren{\G}{x}{y},\pvx:(x=\eren{f}{x}{y})$ and varying $N$ to $\eren{N}{x}{y},$ which yields
$\proves{\eren{\G}{x}{y},\pvx:(x=\eren{f}{x}{y})}{\esub{M}{\pvz}{\eren{N}{x}{y}}}{\phi}$, then by \irref{asgnI} have
$\proves{\G}{\edasgn{y}{x}{\pvx}{\esub{M}{\pvz}{\eren{N}{x}{y}}}}{\dbox{\humod{x}{f}}{\phi}}$.
Note to avoid capture, the substitution case for assignment does a renaming, i.e., $\esub{\edasgn{y}{x}{\pvx}{M}}{\pvz}{N} = \edasgn{y}{x}{\pvx}{\esub{M}{\pvz}{\eren{N}{x}{y}}}$

\mycase \irref{drandomI}:
By premiss have $\proves{\eren{\G}{x}{y},\pvz:\eren{\psi}{x}{y},\pvx:(x=\eren{f}{x}{y})}{M}{\phi}$ for some state-computable computable function $f$.
Now regoup the context as $(\esub{\G}{x}{y},\pvx:(x=\eren{f}{x}{y}),\pvz:\eren{\psi}{x}{y})$.
By \rref{lem:app-ptren} on $N$ and weakening have $\proves{\eren{\G}{x}{y},\pvx:(x=\eren{f}{x}{y})}{\eren{N}{x}{y}}{\eren{\psi}{x}{y}}$.
Then apply IH on $M,$ varying context to $\eren{\G}{x}{y},\pvx:(x=\eren{f}{x}{y})$  and varying $N$ to $\eren{N}{x}{y},$ which yields
$\proves{\eren{\G}{x}{y},\pvx:(x=\eren{f}{x}{y})}{\esub{M}{\pvz}{\eren{N}{x}{y}}}{\phi}$, then by \irref{drandomI} have
$\proves{\G}{\etcons{f}{\esub{M}{\pvz}{N}}}{\ddiamond{\prandom{x}}{\phi}}$.
The case holds since $\esub{\etcons{f}{M}}{\pvz}{N} = \etcons{f}{\esub{M}{\pvz}{N}}$.
Note the formula $(x = \eren{f}{x}{y})$ is more subtle than it first appears, because $f$ is an arbitrary computable function, not strictly an element of the term language, however renaming and formulas generalize just fine from terms to computable functions.

\mycase \irref{drandomE}, postcondition $\rho,$ term $\eunpack{A}{B}$, call the substituted variable $\pvz$.
Assume the side condition that $x \notin \freevars{\rho}$.
By premiss have $\drv_1$ is $\proves{\G,\pvz:\psi}{A}{\ddiamond{\prandom{x}}{\phi}}$ and
$\drv_2$ is $\proves{\eren{(\G,\pvz:\psi)}{x}{z},\pvx:\phi}{B}{\rho}$.
By IH on $\drv_1$ have $\proves{\G}{\esub{A}{\pvz}{\psi}}{\ddiamond{\prandom{x}}{\phi}}$.
In $\drv_2$ reorder the context as $(\eren{\G}{x}{z},\pvx:\phi),\pvz:\eren{\psi}{x}{z},$ then observe by \rref{lem:app-ptren} have $\proves{\eren{\G}{x}{z}}{\eren{N}{x}{z}}{\eren{\psi}{x}{z}}$ and by weakening, $\proves{\eren{\G}{x}{z},\pvx:\phi}{\eren{N}{x}{z}}{\eren{\psi}{x}{z}}$ so by IH on $\drv_2$ have
$\proves{\eren{\G}{x}{z},\pvx:\phi}{\esub{B}{\pvz}{\eren{N}{x}{z}}}{\rho},$ so by \irref{drandomE} have
$\proves{\G}{\eunpack{\esub{A}{\pvz}{N}}{\esub{B}{\pvz}{\eren{N}{x}{z}}}}{\rho}$.
Note then that this case of substitution renames for $N$'s within $B,$ i.e., $\esub{\eunpack{A}{B}}{\pvz}{N} =\eunpack{\esub{A}{\pvz}{N}}{\esub{B}{\pvz}{\eren{N}{x}{z}}}$ as desired.

\mycase \irref{dloopI}, variable $\pvz$:
By the premisses, 
have $\drv_1$ is $\proves{\G,\pvz:\psi}{A}{\conv}$ and 
have $\drv_2$ is $\proves{\pvx:\conv,\pvy:(\met_0 = \met \metgr \metz)}{B}{\ddiamond{\alpha}{(\conv \land \met_0 \metgr )}}$ for fresh $\met_0$ and 
have $\drv_3$ is $\proves{\pvx:\conv,\pvy:(\met=\metz)}{C}{\phi}$.
By IH on $\drv_1$ have $\proves{\G}{\esub{A}{\pvz}{N}}{\conv},$ then reusing $\drv_2$ as is, apply \irref{dloopI} giving
$\proves{\G}{\efor{\esub{A}{\pvz}{N}}{B}{C}}{\ddiamond{\prepeat{\alpha}}{\phi}}$.
This completes the case since $\esub{(\efor{A}{B}{C})}{\pvz}{N} = \efor{\esub{A}{\pvz}{N}}{B}{C}$.
Note substitution does not substitute in the $B$ or $C$ branches, which cannot access $\pvz$ to begin with.

\mycase \irref{dloopE}, postcondition $\rho$:
By the premisses, have $\drv_1$ is $\proves{\G,\pvz:\psi}{A}{\ddiamond{\prepeat{\alpha}}{\phi}}$,
$\drv_2$ is $\proves{\pvx:\phi}{B}{\rho}$, and
$\drv_3$ is $\proves{\pvy:\ddiamond{\alpha}{\rho}}{C}{\rho}$.
By IH on $\drv_1$ have $\proves{\G}{\esub{A}{\pvz}{N}}{\ddiamond{\prepeat{\alpha}}{\phi}}$, then apply \irref{dloopE} reusing $\drv_2$ and $\drv_3$ as-is: yielding
$\proves{\G}{\efp{\esub{A}{\pvz}{N}}{B}{C}}{\rho}$, which completes the case since $\esub{\efp{A}{B}{C}}{\pvz}{N} = \efp{\esub{A}{\pvz}{N}}{B}{C}$.
Note that it does not substitute in the $B$ and $C$ branches because they cannot access $\pvz$ in the first place.

\mycase \irref{dstop}, postcondition $\ddiamond{\prepeat{\alpha}}{\phi}$:
By premiss, $\proves{\G,\pvx:\psi}{M}{\phi}$.
By IH, $\proves{\G}{\esub{M}{\pvx}{N}}{\phi}$ and by \irref{dstop},
$\proves{\G}{\estop{\esub{M}{\pvx}{N}}}{\ddiamond{\prepeat{\alpha}}{\phi}}$, 
then the case holds because $\esub{\estop{M}}{\pvx}{N}= \estop{\esub{M}{\pvx}{N}}$.

\mycase \irref{dgo}, postcondition $\ddiamond{\prepeat{\alpha}}{\phi}$:
By premiss, $\proves{\G,\pvx:\psi}{M}{\phi}$.
By IH, $\proves{\G}{\esub{M}{\pvx}{N}}{\ddiamond{\alpha}{\ddiamond{\prepeat{\alpha}}{\phi}}}$ and by \irref{dgo},
$\proves{\G}{\ego{\esub{M}{\pvx}{N}}}{\ddiamond{\prepeat{\alpha}}{\phi}}$, 
then the case holds because $\esub{\ego{M}}{\pvx}{N}= \ego{\esub{M}{\pvx}{N}}$.

\mycase \irref{drcase}, postcondition $\rho,$ assume without loss of generality $\pvx$ has been uniformly renamed so $\pvx \notin \{\pvs,\pvg\}$.
By $\drv_1$, $\proves{\G,\pvx:\psi}{A}{\ddiamond{\prepeat{\alpha}}{\phi}}$.
By $\drv_2$, $\proves{\G,\pvx:\psi,\pvs:\phi}{B}{\rho}$.
By $\drv_3$, $\proves{\G,\pvx:\psi,\pvg:\ddiamond{\alpha}{\ddiamond{\prepeat{\alpha}}{\phi}}}{C}{\rho}$.
By weakening on $N$ and IH1, $\proves{\G}{\esub{A}{\pvx}{N}}{\ddiamond{\prepeat{\alpha}}{\phi}}$,
by weakening on $N$ and IH2, $\proves{\G,\pvs:\psi}{\esub{B}{\pvx}{N}}{\rho}$,
and by IH3 $\proves{\G,\pvg:\ddiamond{\alpha}{\ddiamond{\prepeat{\alpha}}{\psi}}}{\esub{C}{\pvx}{N}}{\rho}$,
then by \irref{drcase},
$\proves{\G}{\ercase{\esub{A}{\pvx}{N}}{\esub{B}{\pvx}{N}}{\esub{C}{\pvx}{N}}}{\rho}$.

\mycase \irref{dchoiceIL}, postcondition $\ddiamond{\alpha\cup\beta}{\phi}$:
By premiss, $\proves{\G,\pvx:\psi}{M}{\ddiamond{\alpha}{\phi}}$.
By IH, $\proves{\G}{\esub{M}{\pvx}{N}}{\ddiamond{\alpha}{\phi}}$, and by \irref{dchoiceIL}
$\proves{\G}{\edinjL{\esub{M}{\pvx}{N}}}{\ddiamond{\alpha\cup\beta}{\phi}}$,
then the case holds because $\esub{\edinjL{M}}{\pvx}{N} = \edinjL{\esub{M}{\pvx}{N}}$.

\mycase \irref{dchoiceIR}, postcondition $\ddiamond{\alpha\cup\beta}{\phi}$:
By premiss, $\proves{\G,\pvx:\psi}{M}{\ddiamond{\beta}{\phi}}$.
By IH, $\proves{\G}{\esub{M}{\pvx}{N}}{\ddiamond{\beta}{\phi}}$, and by \irref{dchoiceIR}
$\proves{\G}{\edinjR{\esub{M}{\pvx}{N}}}{\ddiamond{\alpha\cup\beta}{\phi}}$,
then the case holds because $\esub{\edinjR{M}}{\pvx}{N} = \edinjR{\esub{M}{\pvx}{N}}$.

\mycase \irref{dchoiceE}, postcondition $\phi,$ term $\ecase{A}{B}{C}$:
By premisses, $\drv_1$ is $\G,\pvx:\psi \vdash A :\ddiamond{\alpha\cup\beta}{\rho}$, 
$\drv_2$ is $\G,\pvx:\psi,\ell:\ddiamond{\alpha}{\rho} \vdash B:\phi$, and
$\drv_3$ is $\G,\pvx:\psi,r:\ddiamond{\beta}{\rho}\vdash C:\phi$.
By weakening, $\G,\pvl:\ddiamond{\alpha}{\rho}\vdash N:\psi$ and $\G,\pvr:\ddiamond{\beta}{\rho}\vdash N:\psi$.
By IH then $\G\vdash \esub{A}{\pvx}{N}:\ddiamond{\alpha\cup\beta}{\rho}$
and $\G,\pvl:\ddiamond{\alpha}{\rho} \vdash \esub{B}{\pvx}{N}:\phi$
and $\G,\pvr:\ddiamond{\beta}{\rho} \vdash \esub{C}{\pvx}{N}: \phi$.
Then by \irref{dchoiceE}, have $\G \vdash \ecase{\esub{A}{\pvx}{N}}{\esub{B}{\pvx}{N}}{\esub{C}{\pvx}{N}} : \psi$,
then since $\pvx \notin \{\pvl,\pvr\}$ (by renaming if necessary) have $\esub{\ecase{A}{B}{C}}{\pvx}{N} = \ecase{\esub{A}{\pvx}{N}}{\esub{B}{\pvx}{N}}{\esub{C}{\pvx}{N}}$ which completes the case.

\mycase \irref{dtestI}, postcondition $\ddiamond{\ptest{\rho}}{\phi},$ term $\edcons{A}{B}$:
By premisses have $\drv_1$ that $\proves{\G,\pvx:\psi}{A}{\rho}$ and $\drv_2$ that $\proves{\G,\pvx:\psi}{B}{\phi}$.
By IH have $\G \vdash \esub{A}{\pvx}{N}:\rho$ and $\G \vdash\esub{B}{\pvx}{N}:\phi$. 
By  \irref{dtestI}  have $\G \vdash \edcons{\esub{A}{\pvx}{N}}{\esub{B}{\pvx}{N}} : \ddiamond{\ptest{\rho}}{\phi}$.
Then since $\esub{\edcons{A}{B}}{\pvx}{N}$ equals $\edcons{\esub{A}{\pvx}{N}}{\esub{B}{\pvx}{N}}$ this completes the case.

\mycase \irref{dtestEL}, postcondition $\rho,$ term $\eprojL{M}$:
By premiss, have $\proves{\G,\pvx:\psi}{M}{\ddiamond{\ptest{\rho}}{\phi}}$.
By IH, have $\proves{\G}{\esub{M}{\pvx}{N}}{\ddiamond{\ptest{\rho}}{\phi}}$, then by \irref{dtestEL} have
$\proves{\G}{\eprojL{\esub{M}{\pvx}{N}}}{\rho}$, then $\esub{\eprojL{M}}{\pvx}{N} = \eprojL{\esub{M}{\pvx}{N}}$ which completes the case.

\mycase \irref{dtestER}, postcondition $\phi,$ term $\eprojR{M}$:
By premiss, have $\proves{\G,\pvx:\psi}{M}{\ddiamond{\ptest{\rho}}{\phi}}$.
By IH, have $\proves{\G}{\esub{M}{\pvx}{N}}{\ddiamond{\ptest{\rho}}{\phi}}$, then by \irref{dtestER} have
$\proves{\G}{\eprojR{\esub{M}{\pvx}{N}}}{\phi}$, then $\esub{\eprojR{M}}{\pvx}{N} = \eprojR{\esub{M}{\pvx}{N}}$ which completes the case.

\mycase \irref{seqI}:
By premiss, have $\proves{\G,\pvx:\psi}{M}{\dmodality{\alpha}{\dmodality{\beta}{\phi}}}$.
By IH, have $\G \vdash \esub{M}{\pvx}{N} : \dmodality{\alpha}{\dmodality{\beta}{\phi}}$, then by \irref{seqI} have
$\G \vdash \eSeq{\esub{M}{\pvx}{N}}:\dmodality{\alpha;\beta}{\phi}$, 
then $\esub{\eSeq{M}}{\pvx}{N} = \eSeq{(\esub{M}{\pvx}{N})}$ which completes the case.

\mycase \irref{dualI}:
By premiss, have $\proves{\G,\pvx:\psi}{M}{\dmodality{\alpha}{\phi}}$.
By IH, have $\G \vdash \esub{M}{\pvx}{N} : \dmodality{\alpha}{\phi}$, then by \irref{dualI} have
$\G \vdash \eSwap{\esub{M}{\pvx}{N}}:\pmodality{\pdual{\alpha}}{\phi}$, 
then $\esub{\eSwap{M}}{\pvx}{N} = \eSwap{(\esub{M}{\pvx}{N})}$ which completes the case.

\mycase \irref{asgnI} (box):
In this case to avoid confusion let's say the substituted proof variable was $\pvz$ not $\pvx$.
By premiss, have $\proves{\eren{\G}{x}{y},\pvz:\eren{\psi}{x}{y},\pvx:(x=\eren{f}{x}{y})}{M}{\phi}$.
Now regroup the context as $(\eren{\G}{x}{y},\pvx:(x=\eren{f}{x}{y}), \pvz:\eren{\psi}{x}{y})$.
By \rref{lem:app-ptren} and weakening have $\proves{\eren{\G}{x}{y},\pvx:(x=\subst[f]{x}{y})}{\eren{N}{x}{y}}{\subst[\psi]{x}{y}}$.
Then we can apply the IH on $M$ but varying the context to $\subst[\G]{x}{y},\pvx:(x=\eren{f}{x}{y})$ and varying $N$ to $\eren{N}{x}{y},$ which yields
$\proves{\eren{\G}{x}{y},\pvx:(x=\eren{f}{x}{y})}{(\esub{M}{\pvz}{(\eren{N}{x}{y})})}{\phi}$, then by \irref{asgnI} have
$\proves{\G}{\ebasgn{y}{x}{\pvx}{(\esub{M}{\pvz}{(\eren{N}{x}{y})})}}{\dbox{\humod{x}{f}}{\phi}}$.
Note that to avoid capture, this case of substitution renames, i.e., 
$\esub{\ebasgn{y}{x}{\pvx}{M}}{\pvz}{N} 
= \ebasgn{y}{x}{\pvx}{\esub{M}{\pvz}{\eren{N}{x}{y}}}$

\mycase \irref{brandomI}, here call the substituted variable $\pvz$:
By premiss, have $\proves{\sren{(\G,\pvz:\psi)}{x}{y}}{M}{\phi}$.
Regroup the context as $(\sren{\G}{x}{y}),\pvz:\sren{\psi}{x}{y}$.
By \rref{lem:app-ptren} have $\proves{\sren{\G}{x}{y}}{\eren{N}{x}{y}}{\sren{\phi}{x}{y}}$.
Then apply IH for $M$ and varying to $\sren{\G}{x}{y}$ and $\eren{N}{x}{y}$, yielding $\proves{\sren{\G}{x}{y}}{\esub{M}{\pvz}{\eren{N}{x}{y}}}{\phi}$, then
by \irref{brandomI} have $\proves{\G}{(\elam{x}{\allrat}{\esub{M}{\pvz}{\eren{N}{x}{y}}})}{\dbox{\prandom{x}}{\phi}}$.
Note that to avoid capture, this case of substitution renames, i.e.,  $\esub{(\etlam{\allrat}{M})}{\pvz}{N} = \elam{x}{\allrat}{(\esub{M}{\pvz}{(\eren{N}{x}{y})})}$.

\mycase \irref{brandomE}:
By premiss, have $\proves{\G,\pvz:\psi}{M}{\dbox{\prandom{x}}{\phi}}$.
By IH,  have $\proves{\G}{\esub{M}{\pvz}{N}}{\dbox{\prandom{x}}{\phi}}$.
By \irref{brandomE} have $\proves{\G}{\eapp{(\esub{M}{\pvz}{N})}{f}}{\tsub{\phi}{x}{f}}$, then the case holds since $\esub{(\eapp{M}{f})}{\pvz}{N} = \eapp{(\esub{M}{\pvz}{N})}{f}$

\mycase \irref{bloopI}:
By premisses, have $\drv_1$ is $\proves{\G,\pvz:\psi}{A}{J}$ and $\drv_2$ is $\proves{\pvy:J}{B}{\dbox{\alpha}{J}}$ and $\drv_3$ is $\proves{\pvy:J}{C}{\phi}$.
By IH on $\drv_1$ have $\proves{\G}{\esub{A}{\pvz}{N}}{J},$ then applying \irref{bloopI} without changing $\drv_2$ or $\drv_3$, have $\proves{\G}{\erep{(\esub{A}{\pvz}{N})}{B}{\pvy:J}{C}}{\dbox{\prepeat{\alpha}}{\phi}}$.
This concludes the case since $\esub{(\erep{A}{B}{\pvy})}{\pvz}{N} = \erep{(\esub{A}{\pvz}{N})}{B}{\pvy},$ i.e., no substitution in $B$ or $C$ since they cannot access $\pvz$ anyway.

\mycase \irref{broll}, postcondition $\dbox{\prepeat{\alpha}}{\phi}$:
By premiss, $\proves{\G,\pvx:\psi}{M}{\phi \land \dbox{\alpha}{\dbox{\prepeat{\alpha}}{\phi}}}$.
By IH, $\proves{\G}{\esub{M}{\pvx}{N}}{\phi \land \dbox{\alpha}{\dbox{\prepeat{\alpha}}{\phi}}}$ and by \irref{broll},
$\proves{\G}{\ebroll{\esub{M}{\pvx}{N}}}{\dbox{\prepeat{\alpha}}{\phi}}$, then the case holds because $\esub{\ebroll{M}}{\pvx}{N} = \ebroll{\esub{M}{\pvx}{N}}$.

\mycase \irref{bunroll}, postcondition $\phi \land \dbox{\alpha}{\dbox{\prepeat{\alpha}}{\phi}}$:
By premiss, $\proves{\G,\pvx:\psi}{M}{\dbox{\prepeat{\alpha}}{\phi}}$.
By IH, $\proves{\G}{\esub{M}{\pvx}{N}}{\dbox{\prepeat{\alpha}}{\phi}}$ and by \irref{bunroll},
$\proves{\G}{\ebunroll{\esub{M}{\pvx}{N}}}{\phi \land \dbox{\alpha}{\dbox{\prepeat{\alpha}}{\phi}}}$, 
then the case holds because $\esub{\ebunroll{M}}{\pvx}{N}= \ebunroll{\esub{M}{\pvx}{N}}$.

\mycase \irref{bchoiceI}, postcondition $\dbox{\alpha\cup\beta}{\phi},$ term $\ebcons{A}{B}$:
By premisses have $\drv_1$ that $\proves{\G,\pvx:\psi}{A}{\dbox{\alpha}{\phi}}$ and $\drv_2$ that $\proves{\G,\pvx:\psi}{B}{\dbox{\beta}{\phi}}$.
By IH have $\proves{\G}{\esub{A}{\pvx}{N}}{\dbox{\alpha}{\phi}}$ and $\proves{\G}{\esub{B}{\pvx}{N}}{\dbox{\beta}{\phi}}$. 
By \irref{bchoiceI} have $\proves{\G}{\ebcons{\esub{A}{\pvx}{N}}{\esub{B}{\pvx}{N}}}{\dbox{\alpha\cup\beta}{\phi}}$.
Then since $\esub{\ebcons{A}{B}}{\pvx}{N} = \ebcons{\esub{A}{\pvx}{N}}{\esub{B}{\pvx}{N}}$ this completes the case.

\mycase \irref{bchoiceEL}, postcondition $\dbox{\alpha}{\phi},$ term $\ebprojL{M}$:
By premiss, have $\proves{\G,\pvx:\psi}{M}{\dbox{\alpha\cup\beta}{\phi}}$.
By IH, have $\proves{\G}{\esub{M}{\pvx}{N}}{\dbox{\alpha\cup\beta}{\phi}}$, then by \irref{bchoiceEL} have
$\proves{\G}{\ebprojL{\esub{M}{\pvx}{N}}}{\dbox{\alpha}{\phi}}$, then $\esub{\ebprojL{M}}{\pvx}{N} = \ebprojL{\esub{M}{\pvx}{N}}$ which completes the case.

\mycase \irref{bchoiceER}, postcondition $\dbox{\beta}{\phi},$ term $\ebprojR{M}$:
By premiss, have $\proves{\G,\pvx:\psi}{M}{\dbox{\alpha\cup\beta}{\phi}}$.
By IH, have $\proves{\G}{\esub{M}{\pvx}{N}}{\dbox{\alpha\cup\beta}{\phi}}$, then by \irref{bchoiceER} have
$\proves{\G}{\ebprojR{\esub{M}{\pvx}{N}}}{\dbox{\beta}{\phi}}$, then $\esub{\ebprojR{M}}{\pvx}{N} = \ebprojR{\esub{M}{\pvx}{N}}$ which completes the case.

\mycase \irref{btestI}, postcondition $\dbox{\ptest{\rho}}{\phi},$ term $\elam{\pvy}{\rho}{M}$:
By premiss, have $\proves{\G,\pvx:\psi,\pvy:\rho}{M}{\phi}$.
By weakening, have $\proves{\G,\pvy:\rho}{N}{\psi}$.
By IH, have $\proves{\G,\pvy:\rho}{M}{\phi}$ so by \irref{btestI} have $\elam{\pvy}{\rho}{(\esub{M}{\pvx}{N})}$.
Then (by renaming if necessary) $\esub{(\elam{\pvy}{\rho}{M})}{\pvx}{N} = \elam{\pvy}{\rho}{(\esub{M}{\pvx}{N})}$ as desired.

\mycase \irref{btestE}, postcondition $\phi,$ term $\eapp{A}{B}$:
By premiss, have $\proves{\G,\pvx:\psi}{A}{\dbox{\ptest{\rho}}{\phi}}$ and $\proves{\G,\pvx:\psi}{B}{\rho}$.
By IH, have $\proves{\G}{\esub{A}{\pvx}{N}}{\dbox{\ptest{\rho}}{\phi}}$ and $\proves{\G}{\esub{B}{\pvx}{N}}{\rho},$ then
$\esub{\eapp{A}{B}}{\pvx}{N} = \eapp{\esub{A}{\pvx}{N}}{\esub{B}{\pvx}{N}}$ as desired.

\mycase \irref{seqI} (box):
By premiss, have $\proves{\G,\pvx:\psi}{M}{\dbox{\alpha}{\dbox{\beta}{\phi}}}$.
By IH, have $\proves{\G}{\esub{M}{\pvx}{N}}{\dbox{\alpha}{\dbox{\beta}{\phi}}}$, then by \irref{seqI} have
$\proves{\G}{\ebseq{\esub{M}{\pvx}{N}}}{\dbox{\alpha;\beta}{\phi}}$, 
then $\esub{\ebseq{M}}{\pvx}{N} = \ebseq{\esub{M}{\pvx}{N}}$ which completes the case.

\mycase \irref{dualI} (box):
By premiss, have $\proves{\G,\pvx:\psi}{M}{\ddiamond{\alpha}{\phi}}$.
By IH, have $\proves{\G}{\esub{M}{\pvx}{N}}{\ddiamond{\alpha}{\phi}}$, then by \irref{dualI} have
$\proves{\G}{\edswap{\esub{M}{\pvx}{N}}}{\dbox{\pdual{\alpha}}{\phi}}$, 
then $\esub{\edswap{M}}{\pvx}{N} = \edswap{\esub{M}{\pvx}{N}}$ which completes the case.

\mycase \irref{mon}, postcondition $\ddiamond{\alpha}{\rho}$, term $\emon{A}{B}{\pvy}$:
Let $\vec{x} = \boundvars{\alpha}$ and $\vec{y}$ be a vector of fresh variables of the same length.
By $\drv_1$ have $\proves{\G,\pvx:\psi}{A}{\ddiamond{\alpha}{\phi}}$ 
and by $\drv_2$ have $\proves{\eren{(\G,\pvx:\psi)}{\vec{x}}{\vec{y}},\pvy:\phi}{B}{\rho}$.
Then by IH on $\drv_1$, have (1) $\proves{\G}{\esub{A}{\pvx}{N}}{\ddiamond{\alpha}{\phi}}$.
By \rref{lem:app-rename} ($|\vec{x}|$ times) on $N$ and weakening, have $\proves{\eren{\G}{\vec{x}}{\vec{y}},\pvy:\phi}{\eren{N}{\vec{x}}{\vec{y}}}{\eren{\phi}{\vec{x}}{\vec{y}}},$ then because ($(\eren{(\G,\pvx:\psi)}{\vec{x}}{\vec{y}}),\pvy:\phi  = (\eren{(\G)}{\vec{x}}{\vec{y}},\pvy:\phi),\pvx:\eren{\psi}{\vec{x}}{\vec{y}}$), we can apply the IH on $\drv_2,$ giving (2) $\proves{\eren{(\G)}{\vec{x}}{\vec{y}},\pvy:\phi}{\esub{B}{\pvx}{\eren{N}{\vec{x}}{\vec{y}}}}{\rho}$.
Then apply \irref{mon} to (1) and (2), giving $\proves{\G}{\emon{(\esub{A}{\pvx}{N})}{\esub{B}{\pvx}{\eren{N}{\vec{x}}{\vec{y}}}}{\pvy}}{\ddiamond{\alpha}{\rho}}$.
Note that the substutition avoids capture in the $B$ branch, that is $\esub{(\emon{A}{B}{\pvy})}{\pvx}{N} = \emon{(\esub{A}{\pvx}{N})}{\esub{B}{\pvx}{\eren{N}{\vec{x}}{\vec{y}}}}{\pvy}$, so that the case holds.

\mycase \irref{hyp}, postcondition $\phi$:
That is, the proof term is some variable $\pvy$.
If $\pvy = \pvx$, then $\esub{\pvy}{\pvx}{N} = N$ and $\proves{\G}{N}{\psi}$ by assumption. 
Moreover $\phi = \psi$ since $\G,\pvx:\phi = \G,\pvx:\psi$ assigns only one type to $\pvx$.
Thus trivially $\proves{\G}{\esub{\pvy}{\pvx}{N}}{\phi}$ as desired.
Else $\pvy \neq \pvx,$ so $\esub{\pvy}{\pvx}{N} = \pvy$ and $\pvy \in \G,$ so that $\proves{\G}{\pvy}{\phi}$ by \irref{hyp} and thus $\proves{\G}{\esub{\pvy}{\pvx}{N}}{\phi}$ as well.

\begin{lemma}[Preservation]
If $\proves{\Gemp}{M}{\phi}$ and $M \stepsto M'$ then $\proves{\Gemp}{M'}{\phi}$
\end{lemma}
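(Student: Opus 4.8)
The plan is to prove preservation by first inducting on the derivation of $M \stepsto^* M'$, which reduces everything to the single-step statement: if $\proves{\Gemp}{M}{\phi}$ and $M \stepsto M'$, then $\proves{\Gemp}{M'}{\phi}$. For the single step I would induct on the derivation of $M \stepsto M'$ (equivalently, case-analyze which operational rule fired, with the structural rules handled by an inner structural induction on $M$). In every case the recipe is the same: invert the typing derivation $\proves{\Gemp}{M}{\phi}$ — which, by the shape of the proof-calculus rules, uniquely determines the types of the immediate subterms of $M$ — and then reassemble a typing derivation for $M'$ from those pieces.

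The bookkeeping splits along the four families of rules. For the structural S-rules (\irref{injLS}, \irref{appSL}, \irref{monS}, and the rest of Figure~\ref{fig:op-structural}) the reduct differs only in a single subterm $M'$ obtained from a premise $M \stepsto M'$; the inductive hypothesis retypes that subterm at the same type, and we reapply the very rule used to type $M$. For the commuting-conversion C-rules (\irref{injLC}, \irref{caseC}, and friends) an eliminator or introducer is pushed into the two branches of a $\kwcase$: here we retype each branch and reapply \irref{dchoiceE} or \irref{drcase}, using \rref{lem:structurality} (admissibility of \irref{weak}) to move the branch hypotheses $\pvl,\pvr$ (resp.\ $\pvs,\pvg$) past the context introduced by the pushed-in operator, and observing that the common consequence type is unchanged. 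The monotonicity-conversion $\circ$-rules (\irref{injLMon}, \irref{lamMon}, \irref{brollMon}, \ldots) are analogous: we split the derived type according to the shape of the left subterm of $\emon{\cdot}{\cdot}{\cdot}$, rebuild the introducer around a smaller \irref{mon}-term, and invoke \irref{weak} and, in the binding cases, \rref{lem:app-ptren} to account for the bound-variable renamings prescribed by the rule.

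The substantive cases are the $\beta$-rules. The propositional ones — \irref{appBeta}, \irref{projLBeta}, \irref{projRBeta}, \irref{caseBetaL}, \irref{caseBetaR}, \irref{unrollBeta} — are immediate from inversion plus \rref{lem:pt-substitute} (proof-term substitution), with \irref{brandomBeta} additionally using \rref{lem:atsub} (arithmetic-term substitution). The remaining $\beta$-rules touch state and loops and are where the work concentrates: \irref{unpackBeta} must thread \rref{lem:renaming} and the derived \irref{ghost} rule together with \rref{lem:pt-substitute} to retype $\eren{(\eghost{x}{\cdots}{\pvy}{\esub{N}{\pvx}{M}})}{y}{x}$; \irref{fpBeta} and \irref{repBeta} unfold a loop while wrapping every recursive occurrence inside a nested \irref{mon}-term, so one checks that nesting retypes by combining \rref{lem:pt-substitute} with the typing rules for \irref{mon} and \irref{broll}; and \irref{forBeta} is the most delicate, since its reduct simultaneously inspects the metric via \irref{metsplit}, introduces the ghost $\met_0$, instantiates the convergence hypotheses $\pvx,\pvy$ of $B$ and $C$ with $A$ and a freshly built pair, and recurses through \irref{mon} on $\efor{\eprojL{\pvz}}{B}{C}$. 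For that case I would reconstruct the typing derivation literally in the shape dictated by the $\kwcase$/\irref{dstop}/\irref{dgo} skeleton of the reduct, discharging the substitutions $\esub{C}{\pvx,\pvy}{\cdots}$ and $\esub{B}{\pvx,\pvy}{\cdots}$ by \rref{lem:pt-substitute}, the new $\met_0$ by \irref{ghost}, the branch on $\met$ by \irref{metsplit}, and the trailing recursive call by the \irref{mon} typing rule, matching each side condition (freshness of $\met_0$, decrease of $\met$, the form of $\conv$) against what \irref{dloopI} requires.

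I expect the main obstacle to be precisely \irref{forBeta}, with \irref{fpBeta} and \irref{repBeta} a notch below it: verifying that the heavily instrumented reduct — an interleaving of ghost introduction, pair construction, proof-term substitution into the loop-invariant/convergence hypotheses, and a residual monotonicity call — really does admit a typing derivation at $\ddiamond{\prepeat{\alpha}}{\phi}$ (resp.\ $\psi$, $\dbox{\prepeat{\alpha}}{\phi}$) forces us to line up the side conditions of \irref{dloopI}, \irref{ghost}, \irref{metsplit}, \irref{mon}, and \irref{broll} exactly, and is the only place where the argument is not purely mechanical. Everywhere else the proof is routine inversion-and-reassembly, supported by the substitution, renaming, and structurality lemmas already established.
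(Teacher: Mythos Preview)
Your proposal is correct and follows essentially the same approach as the paper: induction on the stepping derivation, case-split into the four rule families (S, C, $\circ$, $\beta$), inversion-and-reassembly using \rref{lem:pt-substitute}, \rref{lem:atsub}, \rref{lem:renaming}, and \irref{weak}, with the loop $\beta$-rules (\irref{forBeta}, \irref{fpBeta}, \irref{repBeta}) as the substantive cases. One small wording nit: the structural S-rules do not need a separate ``inner structural induction on $M$''---the induction on the stepping derivation already supplies the IH for the stepped subterm, and you just reapply the typing rule.
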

\begin{proof}
By induction on the evaluation trace $M \stepsto M'$.

We split the cases of the proof according to the four main kinds of rules: Beta rules, structural rules, monotonicity-conversion rules, and commuting-conversion rules.
While the cases within each class are not necessarily symmetric, they usually share a similar proof approach.

We give the beta rules first.

\mycase \irref{appBeta}, rule $\eapp{(\eplam{\psi}{M})}{N} \stepsto \esub{M}{\pvx}{N}$, postcondition $\phi$:
By inversion, $\proves{\Gemp}{M}{\psi\limply\phi}$ and $\proves{\Gemp}{N}{\psi}$.
Then by \rref{lem:app-ptsub}, have $\proves{\Gemp}{\esub{M}{\pvx}{N}}{\phi}$ as desired.

\mycase \irref{brandomBeta}, rule $\eapp{(\etlam{\allrat}{M})}{f} \stepsto \tsub{M}{x}{f}$, postcondition $\tsub{\phi}{x}{f}$:
By inversion $\proves{\eren{\Gemp}{x}{y}}{M}{\phi}$,
Then by \rref{lem:app-pttsub} have $\proves{\tsub{\eren{\Gemp}{x}{y}}{x}{f}}{\tsub{M}{x}{f}}{\tsub{\phi}{x}{f}}$.
Then note $\tsub{\eren{\Gemp}{x}{y}}{x}{f}=\eren{\Gemp}{x}{y}$ for $y$ fresh in $\tsub{M}{x}{f}$ and in $\tsub{\phi}{x}{f}$, so by \rref{lem:app-ptren}, $\proves{\Gemp}{\tsub{M}{x}{f}}{\tsub{\phi}{x}{f}}$.

\mycase \irref{projLBeta}, rule $\eprojL{\econs{A}{B}} \stepsto A$, postcondition $\phi$:
By inversion twice, $\proves{\Gemp}{A}{\phi}$ as desired.

\mycase \irref{projRBeta}:
By inversion twice, $\proves{\Gemp}{B}{\phi}$ as desired.

\mycase \irref{caseBetaL}, rule $\ecase{\einjL{A}}{B}{C} \stepsto \esub{B}{\pvl}{A},$ postcondition $\phi$:
By inversion $\proves{\Gemp}{A}{\ddiamond{\alpha}{\psi}}$, for some $\alpha,\psi$.
By inversion, $\proves{\pvl:\ddiamond{\alpha}{\psi}}{B}{\phi},$ then by \rref{lem:app-ptsub} $\proves{\Gemp}{\esub{B}{\pvl}{A}}{\phi}$ as desired.

\mycase \irref{caseBetaR}, rule $\ecase{\einjR{A}}{B}{C} \stepsto \esub{C}{\pvr}{A},$ postcondition $\phi$:
By inversion $\proves{\Gemp}{A}{\ddiamond{\beta}{\psi}}$, for some $\beta,\psi$.
By inversion, $\proves{\pvr:\ddiamond{\beta}{\psi}}{C}{\phi},$ then by \rref{lem:app-ptsub} $\proves{\Gemp}{\esub{C}{\pvr}{A}}{\phi}$ as desired.

\mycase \irref{unpackBeta}, rule $\eunpack{\etcons{f}{M}}{N} \stepsto \esub{N}{x,y}{f,M}$, postcondition $\phi$:
By inversion, $\proves{\Gemp}{M}{\tsub{\psi}{x}{f}}$ and 
$\proves{\eren{\Gemp}{x}{z},y:\psi}{N}{\phi}$.
Then note $x$ is free in $\eren{\Gemp}{x}{z}$.
By \rref{lem:app-pttsub} have $\proves{\tsub{\eren{\Gemp}{x}{z}}{x}{f},y:\tsub{\psi}{x}{f}}{\tsub{N}{x}{f}}{\tsub{\phi}{x}{f}}$
which simplifies to 
$\proves{\eren{\Gemp}{x}{z},y:\tsub{\psi}{x}{f}}{\tsub{N}{x}{f}}{\phi}$ 
(by trivial context identities and because $x$ was fresh in $\phi$ by side condition) and then simplifies again by \rref{lem:app-ptren}
(using side condition $z \notin \freevars{\tsub{N}{x}{f}}$) to 
$\proves{y:\tsub{\psi}{x}{f}}{\tsub{N}{x}{f}}{\phi}$
Then by \rref{lem:app-ptsub} have 
$\proves{\Gemp}{\esub{\tsub{N}{x}{f}}{y}{M}}{\phi}$ 
where $\esub{\tsub{N}{x}{f}}{y}{M}$ is a funny way of writing $\esub{N}{x,y}{f,M}$, as desired.

\mycase \irref{fpBeta}, rule $\efp{A}{B}{C} \stepsto \ercase{A}{B}{\esub{C}{\pvg}{\emon{\pvg}{\efp{\pvz}{B}{C}}{\pvz}}}$, postcondition $\psi$:
By inversion, $\drv_1$ is $\proves{\Gemp}{A}{\ddiamond{\prepeat{\alpha}}{\phi}}$,
$\drv_2$ is (1)$\proves{\pvs:\phi}{B}{\psi}$, and
$\drv_3$ is $\proves{\pvg:\ddiamond{\alpha}{\psi}}{C}{\psi}$.
On the other branch, note
$\proves{\pvz:\ddiamond{\prepeat{\alpha}}{\phi}}{\efp{\pvz}{B}{C}}{\psi}$, by \irref{dloopE} then by \irref{hyp} and by $\drv_2$ and $\drv_3$.
Then by \irref{mon}, have $\proves{\pvg:\ddiamond{\alpha}{\ddiamond{\prepeat{\alpha}}{\phi}}}{\emon{\pvg}{\efp{\pvz}{B}{C}}{\pvz}}{\ddiamond{\alpha}{\psi}}$ by the previous line and by weakening and \irref{hyp}.
Next, apply \rref{lem:app-ptsub} to $\drv_3$, (first weakening it with $\G$), getting (2) $\proves{\pvg:\ddiamond{\alpha}{\ddiamond{\prepeat{\alpha}}{\phi}}}{\esub{C}{\pvg}{\emon{\pvg}{\efp{\pvz}{B}{C}}{\pvz}}}{\psi}$.
Now apply rule \irref{dchoiceE} (which is the eliminator for $\ddiamond{\prepeat{\alpha}}{\phi}$ as well) to facts (1) and (2), which yields
$\proves{\Gemp}{\ercase{A}{B}{\esub{C}{\pvg}{\emon{\pvg}{\efp{\pvz}{B}{C}}{\pvz}}}}{\psi}$ as desired.

\mycase \irref{repBeta}, rule $(\erep{M}{N}{\pvx:J}{O}) \stepsto \ebroll{\edcons{M}{\emon{(\esub{N}{\pvx}{M})}{(\erep{\pvy}{N}{\pvx:J}{O})}{\pvy}}}$:
By inversion have $\drv_1$ is $\proves{\Gemp}{M}{J}$ and
$\drv_2$ is $\proves{\pvx:J}{N}{\dbox{\alpha}{J}}$ and
$\drv_3$ is $\proves{\pvx:J}{O}{\phi}$.

Note (1) $\proves{\pvy:J}{(\erep{\pvy}{N}{\pvx:J}{O})}{\dbox{\prepeat{\alpha}}{\phi}}$ by \irref{bloopI} since 
$\proves{\pvy:J}{\pvy}{J}$ by \irref{hyp} and $\proves{\pvx:J}{N}{\dbox{\alpha}{J}}$ by $\drv_2$.
Then (2) $\proves{\Gemp}{\esub{N}{\pvx}{M}}{\dbox{\alpha}{J}}$ by \rref{lem:app-ptsub} and $\drv_1$ and $\drv_2$ and weakening.
Then from (1) and (2) by \irref{mon} have $\proves{\Gemp}{\emon{(\esub{N}{\pvx}{M})}{(\erep{\pvy}{N}{\pvx:J}{O})}{\pvy}}{\dbox{\alpha}{\dbox{\prepeat{\alpha}}{\phi}}}$
Then by \irref{dtestI} have $\proves{\Gemp}{\edcons{M}{\emon{(\esub{N}{\pvx}{M})}{(\erep{\pvy}{N}{\pvx:J}{O})}{\pvy}}}{\phi \land \dbox{\alpha}{\dbox{\prepeat{\alpha}}{\phi}}}$
so by \irref{broll} have $\proves{\Gemp}{\ebroll{\edcons{M}{\emon{(\esub{N}{\pvx}{M})}{(\erep{\pvy}{N}{\pvx:J}{O})}{\pvy}}}}{\dbox{\prepeat{\alpha}}{\phi}}$ as desired.

\mycase \irref{bunrollBeta}, rule $\ebunroll{\ebroll{M}} \stepsto M$:
By inversion twice, $\proves{\Gemp}{M}{\phi\land\dbox{\alpha}{\dbox{\prepeat{\alpha}}{\phi}}}$ as desired.

\mycase \irref{caseBetaL}, rule $\edcase{\einjL{A}}{B}{C} \stepsto \esub{B}{\ell}{A}$:
By inversion, $\proves{\Gemp}{A}{\phi}$ and by substitution $\proves{\Gemp}{\esub{B}{\pvl}{A}}{\psi}$ as desired.
Case for diamond loops is symmetric.

\mycase \irref{caseBetaR}, rule $\edcase{\einjR{A}}{B}{C} \stepsto \esub{C}{r}{A}$:
By inversion, $\proves{\Gemp}{A}{\ddiamond{\alpha}{\ddiamond{\prepeat{\alpha}}{\phi}}}$ and by substitution $\proves{\Gemp}{\esub{C}{\pvr}{A}}{\psi}$ as desired.
Case for diamond loops is symmetric.

\mycase \irref{forBeta}, rule
\[\begin{aligned}
&{\efor{A}{B}{C} \stepsto}\\
&\ecaseHead{\esplit{\met}{0}{}}\\
&\ \ \ecaseLeft{\pvl}{\estop{\esub{C}{(\pvx,\pvy)}{(\pvl,A)}}}\\
&\ecaseRight{\pvr}{\eghost{\met_0}{\met}{\textit{\pvrr}}{\ego{(\emon{(\esub{B}{\pvx,\pvy}{A,\edcons{\pvr}{\textit{\pvrr}}})}{(\efor{\eprojL{\pvz}}{B}{C})}{\pvz})}}}\ecaseEnd
  \end{aligned}\]

 By inversion, $\drv_1$ is $\proves{\Gemp}{A}{\conv}$ and
 $\drv_2$ is $\proves{\pvx:\conv,\pvy:(\met_0 = \met \metgr \metz)}{B}{\ddiamond{\alpha}{(\conv\land \met_0 \metgr \met)}}$ and 
 $\drv_3$ is $\proves{\pvx:\conv,\pvy:(\met=\metz)}{C}{\phi}$.
 By \irref{metsplit}, (1) $\proves{\Gemp}{\esplit{\met}{\metz}}{(\met = \metz \lor \met \metgr \metz)}$.
 Consider the first branch:
 $\proves{\ell:\met \leq 0}{\edcons{\ell}{A}}{\met = \metz \land \conv}$ by \irref{dtestI} and \irref{hyp} and weakening.
 Then by \irref{dstop} have (left) $\proves{\ell:\met = \metz}{\estop{\esub{C}{(\pvx,\pvy)}{(A,\pvl)}}}{\ddiamond{\prepeat{\alpha}}{\phi}}$
Consider the second branch:
By \irref{dtestI} have (x) $\proves{pvr:\met_0=\met, \pvrr: \met \metgr \metz}{\edcons{\pvr}{\textit{\pvrr}}}{\met_0 = \met \metgr \metz}$.
then by (x) and  $\drv_2$ and weakening and \rref{lem:app-ptsub} have
(3a) $\proves{\pvr: \met \geq 0, \pvr:\met_0=\met}{\esub{B}{x,y}{A,\edcons{\pvr}{\pvrr}}}{\ddiamond{\alpha}{(\met = \metz \lor \met \metgr \metz)}}$.
Separately, (3b) $\proves{\pvz:\conv\land \met_0 \metgr \met}{\efor{\eprojL{\pvz}}{B}{C}}{\ddiamond{\prepeat{\alpha}}{(\met = \metz \land \conv)}}$.
Then combining (3a) and (3b) with \irref{mon}, have 
$\proves{\pvrr:\met_0=\met,\pvr: \met \geq 0}
{\emon{\esub{B}{x,y}{A,\edcons{\pvr}{\pvrr}}}{\efor{\eprojL{\pvz}}{B}{C}}{\pvz}}
{\ddiamond{\alpha}{\ddiamond{\prepeat{\alpha}}{(\phi)}}}$.
By \irref{dgo} have $\proves{\pvrr:\met_0=\met,\pvr \met \geq 0}{\ego{\emon{\esub{B}{x,y}{A,\edcons{\pvr}{\pvrr}}}{\efor{\eprojL{\pvz}}{B}{C}}{\pvz}}}{\ddiamond{\prepeat{\alpha}}{\phi}}$.
By \irref{ghost} have (3)
\begin{align}
 & \pvr: \met \geq 0\\
&\vdash \eghost{\met_0}{\met}{\textit{\pvrr}}{\ego{\emon{\esub{B}{x,y}{A,\edcons{\pvr}{\pvrr}}}{\efor{\eprojL{\pvz}}{B}{C}}{\pvz}}}\\
&: \ddiamond{\prepeat{\alpha}}{\phi}
\end{align}

Lastly, combining (1), (2), and (3), then by \irref{dchoiceE} have
\begin{align}
  \Gemp\\
&\vdash \ecaseHead{\esplit{\met}{0}}\\
&\ecaseLeft{\ell}{\estop{\esub{C}{(\pvx,\pvy)}{(A,\pvl)}}}\\
&\ecaseRight{r}{\eghost{\met_0}{\met}{\textit{\pvrr}}{\ego{\emon{\esub{B}{x,y}{A,\edcons{\pvr}{\pvrr}}}{\efor{\eprojL{\pvz}}{B}{C}}{\pvz}}}} \\
&: \ddiamond{\prepeat{\alpha}}{\phi}
\end{align}
 as desired.

Structural rules:
We do not write out the premisses of the stepping rules explicitly in each case, because they are very similar.
For unary operations applied to a term $M,$ as well as the left structural rule of any binary operator, we assume $M \stepsto M',$ or for the right structural rule of any binary operator, we assume $\issimp{M}$ for the left operand and $N \stepsto N'$ for the right operand.


\mycase \irref{bconsSL}:
By the IH, $\proves{\Gemp}{M'}{\dbox{\alpha}{\phi}},$ so by case assumption and by \irref{bchoiceI} then $\proves{\Gemp}{\ebcons{M'}{N}}{\dbox{\alpha\cup\beta}{\phi}}$.

\mycase \irref{bconsSR}, $\ebcons{M}{N} \stepsto \ebcons{M}{N'}$, postcondition $\dbox{\alpha\cup\beta}{\phi}$:
By the IH, $\proves{\Gemp}{N'}{\dbox{\beta}{\phi}},$ so by case assumption and by \irref{bchoiceI} then $\proves{\Gemp}{\ebcons{M}{N'}}{\dbox{\alpha\cup\beta}{\phi}}$.

\mycase \irref{dconsSL} (diamond):
By the IH, $\proves{\Gemp}{M'}{\phi},$ so by case assumption and by \irref{dtestI} then
$\proves{\Gemp}{\edcons{M'}{N}}{\ddiamond{\ptest{\phi}}{\psi}}$.

\mycase \irref{dconsSR} (diamond):
By the IH, $\proves{\Gemp}{N'}{\psi},$ so by case assumption and by \irref{dtestI} then 
$\proves{\Gemp}{\edcons{M'}{N}}{\ddiamond{\ptest{\phi}}{\psi}}$.

\mycase \irref{projLS}:
By the IH, $\proves{\Gemp}{M'}{\ddiamond{\ptest{\phi}}{\psi}}$ so by \irref{dtestEL} then
$\proves{\Gemp}{\edprojL{M'}}{\phi}$.
The case for $\ebprojL{M'}$ is symmetric.

\mycase \irref{projRS}:
By the IH, $\proves{\Gemp}{M'}{\ddiamond{\ptest{\phi}}{\psi}}$ so by \irref{dtestER} then
$\proves{\Gemp}{\edprojR{M'}}{\psi}$.
The case for $\ebprojR{M'}$ is symmetric.

\mycase \irref{injLS}:
By the IH, $\proves{\Gemp}{M'}{\ddiamond{\alpha}{\phi}}$ so by \irref{dchoiceIL} then
$\proves{\Gemp}{\einjL{M'}}{\ddiamond{\alpha\cup\beta}{\phi}}$.

\mycase \irref{injRS}:
By the IH, $\proves{\Gemp}{M'}{\ddiamond{\beta}{\phi}}$ so by \irref{dchoiceIR} then
$\proves{\Gemp}{\einjR{M'}}{\ddiamond{\alpha\cup\beta}{\phi}}$.

\mycase \irref{repS}:
By the IH, $\proves{\Gemp}{M'}{J}$ so by \irref{bloopI} then
$\proves{\Gemp}{\erep{M'}{N}{\pvx:J}{O}}{\dbox{\prepeat{\alpha}}{\phi}}$.

\mycase \irref{appSL}:
By the IH, $\proves{\Gemp}{M'}{\phi \limply \psi}$ so by \irref{btestE} then $\proves{\Gemp}{\eapp{M'}{N}}{\psi}$.

\mycase \irref{appSR};
By the IH, $\proves{\Gemp}{N'}{\phi}$ so by \irref{btestE} then $\proves{\Gemp}{\eapp{M}{N'}}{\psi}$

\mycase \irref{bseqS}:
By the IH, $\proves{\Gemp}{M'}{\dbox{\alpha}{\dbox{\beta}{\phi}}}$ so by \irref{seqI} then $\proves{\Gemp}{\ebseq{M'}}{\dbox{\alpha;\beta}{\phi}}$

\mycase \irref{dseqS}:
By the IH, $\proves{\Gemp}{M'}{\ddiamond{\alpha}{\ddiamond{\beta}{\phi}}}$ so by \irref{seqI} then $\proves{\Gemp}{\edseq{M'}}{\ddiamond{\alpha;\beta}{\phi}}$

\mycase \irref{bswapS}:
By the IH, $\proves{\Gemp}{M'}{\ddiamond{\alpha}{\phi}}$ so by \irref{dualI} then $\proves{\Gemp}{\ebswap{M'}}{\dbox{\pdual{\alpha}}{\phi}}$.

\mycase \irref{dswapS}:
By the IH, $\proves{\Gemp}{M'}{\dbox{\alpha}{\phi}}$ so by \irref{dualI} then $\proves{\Gemp}{\edswap{M'}}{\ddiamond{\pdual{\alpha}}{\phi}}$

\mycase \irref{monS}:
By the IH, $\proves{\Gemp}{M'}{\ddiamond{\alpha}{\phi}}$ so by \irref{mon} then $\proves{\Gemp}{\emon{M'}{N}{\pvx}}{\ddiamond{\alpha}{\psi}}$

\mycase \irref{fpS}:
By the IH, $\proves{\Gemp}{A'}{\ddiamond{\prepeat{\alpha}}{\phi}}$ so by \irref{dloopE} then $\proves{\Gemp}{\efp{A'}{B}{C}}{\psi}$.

\mycase \irref{caseS}:
By the IH, $\proves{\Gemp}{A'}{\ddiamond{\alpha\cup\beta}{\phi}}$ so by \irref{dchoiceE} then $\proves{\Gemp}{\ecase{A'}{B}{C}}{\psi}$.

\mycase \irref{drcase} (diamond loops):
By the IH, $\proves{\Gemp}{A'}{\ddiamond{\prepeat{\alpha}}{\phi}}$ so by \irref{drcase} then $\proves{\Gemp}{\ercase{A'}{B}{C}}{\psi}$.



\mycase \irref{unpackS}:
By the IH, $\proves{\Gemp}{M'}{\ddiamond{\prandom{x}}{\phi}}$ so by \irref{drandomE} then $\proves{\Gemp}{\eunpack{M'}{N}}{\psi}$.

\mycase \irref{forS}:
By the IH, $\proves{\Gemp}{M'}{\conv}$ so by \irref{dloopI} then $\proves{\Gemp}{\efor{M'}{N}{O}}{\ddiamond{\prepeat{\alpha}}{\phi}}$.

\mycase \irref{bunrollS}:
By the IH, $\proves{\Gemp}{M'}{\dbox{\prepeat{\alpha}}{\phi}}$ so by \irref{bunroll} then $\proves{\Gemp}{\ebunroll{M'}}{\phi \land \dbox{\alpha}{\dbox{\prepeat{\alpha}}{\phi}}}$.


We give the monotonicity conversion rules next.
In each case, let $\vec{x}$ denote the bound variables of the full program and $\vec{y}$ be a fresh variable vector of the same size.
For subprograms $\alpha,\beta$ we write $\va, \vb$ for the vectors of bound variables of each subprogram, $\vca,\vcb$ for their relative complements $\boundvars{\beta} - \boundvars{\alpha}$ and vice-versa, and we affix primes to indicate a vector of fresh variables.

\mycase \irref{lamMon}, rule $\emon{(\eplam{\phi}{M})}{N}{\pvy} \stepsto (\eplam{\phi}{(\esub{N}{\pvy}{M})})$:
Note $\eren{\Gemp}{\vec{x}}{\vec{y}} = \Gemp$.
By premisses, have $\drv_1$ is $\proves{\Gemp}{\eplam{\phi}{M}}{\dbox{\ptest{\phi}}{\psi}}$ and
$\drv_2$ is $\proves{\eren{\Gemp}{\vec{x}}{\vec{y}},\pvy:\psi}{N}{\rho}$, i.e., $\proves{\pvy:\psi}{N}{\rho}$.
By inversion on $\drv_1$ have $\proves{\pvx:\phi}{M}{\psi}$
then $\proves{\pvx:\phi}{\esub{N}{\pvy}{M}}{\rho}$ by \rref{lem:app-ptsub} and weakening, then by \irref{dtestI} have 
$\proves{\Gemp}{\eplam{\phi}{(\esub{N}{\pvy}{M})}}{\dbox{\ptest{\phi}}{\rho}}$.

\mycase \irref{rlamMon}, rule $\emon{(\etlam{\allrat}{M})}{N}{\pvy} \stepsto (\etlam{\allrat}{\esub{M}{\pvy}{N}})$:
By premisses, have $\drv_1$ is $\proves{\Gemp}{(\etlam{\allrat}{M})}{\dbox{\prandom{x}}{\phi}}$ and
$\drv_2$ is $\proves{\eren{\Gemp}{\vec{x}}{\vec{y}},\pvy:\phi}{N}{\psi}$.
By inversion on $\drv_1$ have $\proves{\eren{\Gemp}{\vec{x}}{\vec{y}}}{M}{\phi}$ since $\vec{x} = x$.
then $\proves{\eren{\Gemp}{\vec{x}}{\vec{y}}}{\esub{M}{\pvy}{N}}{\rho}$ by \rref{lem:app-ptsub} and weakening, then by \irref{drandomI} have 
$\proves{\Gemp}{(\etlam{\allrat}{(\esub{M}{\pvy}{N})})}{\dbox{\prandom{x}}{\psi}}$.

\mycase \irref{bconsMon}, rule $\emon{\ebcons{A}{B}}{N}{\pvx} \stepsto \ebcons{\emon{A}{N}{\pvx}}{\emon{B}{N}{\pvx}}$:
By premisses, have $\drv_1$ is $\proves{\Gemp}{\ebcons{A}{B}}{\dbox{\alpha\cup\beta}{\phi}}$
and $\drv_2$ is $\proves{\eren{\Gemp}{\vec{x}}{\vec{y}},\pvx:\phi}{N}{\psi}$.
By inversion on $\drv_1$ have (1) $\proves{\Gemp}{A}{\dbox{\alpha}{\phi}}$
and (2) $\proves{\Gemp}{B}{\dbox{\beta}{\phi}}$.
By \rref{lem:app-ptren} on $\drv_2$ have (3a) $\proves{\eren{\Gemp}{\va}{\va'},\pvx:\phi}{\eren{N}{\va'}{\va}}{\psi}$ since $\vca'$ are fresh in $\psi$ and $\phi$,
and  (3b) $\proves{\eren{\Gemp}{\vb}{\vb'},\pvx:\phi}{\eren{N}{\vb'}{\vb}}{\psi}$ since $\vb'$ are fresh in $\psi$ and $\phi$.
By \irref{mon} on (1) and (3a) have $\proves{\Gemp}{\emon{A}{\eren{N}{\va'}{\va}}{\pvx}}{\dbox{\alpha}{\psi}}$
and by \irref{mon} on (2) and (3b) $\drv_2$ have $\proves{\Gemp}{\emon{B}{\eren{N}{\vb'}{\vb}}{\pvx}}{\dbox{\beta}{\psi}}$,
lastly by \irref{bchoiceI} have $\proves{\Gemp}{\ebcons{\emon{A}{\eren{N}{\va'}{\va}}{\pvx}}{\emon{B}{\eren{N}{\vb'}{\vb}}{\pvx}}}{\dbox{\alpha\cup\beta}{\psi}}$.

\mycase \irref{dconsMon}, rule $\emon{\edcons{A}{B}}{N}{\pvx} \stepsto \edcons{A}{\esub{N}{\pvx}{B}}$:
Note $\eren{\Gemp}{\vec{x}}{\vec{y}} = \G$.
By premisses, have $\drv_1$ is $\proves{\Gemp}{\edcons{A}{B}}{\ddiamond{\ptest{\phi}}{\psi}}$
and $\drv_2$ is $\proves{\pvx:\psi}{\rho}$.
By inversion on $\drv_1$ have (1) $\proves{\Gemp}{A}{\phi}$ and (2) $\proves{\Gemp}{B}{\psi}$.
By \rref{lem:app-ptsub}, have $\proves{\Gemp}{\esub{B}{\pvx}{N}}{\rho}$.
By \irref{dtestI} have $\proves{\Gemp}{\edcons{A}{\esub{B}{\pvx}{N}}}{\ddiamond{\ptest{\phi}}{\rho}}$.

\mycase \irref{tconsMon}:
By premisses, have $\drv_1$ is $\proves{\Gemp}{\etcons{f}{M}}{\ddiamond{\prandom{x}}{\phi}}$, 
and $\drv_2$ is $\proves{\eren{\Gemp}{\vec{x}}{\vec{y}},\pvy:\phi}{N}{\psi}$.
then by inversion on $\drv_1$ have 
(1) $\proves{\eren{\Gemp}{\vec{x}}{\vec{y}},\pvx:(x=\eren{f}{x}{y})}{M}{\phi}$.
By weakening on $\drv_2$ have  $\proves{\eren{\Gemp}{\vec{x}}{\vec{y}},\pvy:\phi,\pvx:(x=\eren{f}{x}{y})}{N}{\psi}$, then by \rref{lem:app-pttsub} 
have $\proves{\eren{\Gemp}{\vec{x}}{\vec{y}},\pvx:(x=\eren{f}{x}{y})}{\esub{N}{\pvx}{M}}{\psi},$ and lastly by \irref{drandomI} have
$\proves{\Gemp}{\etcons{f}{\esub{N}{\pvx}{M}}}{\ddiamond{\prandom{x}}{\psi}}$.

\mycase \irref{dswapMon}:
By premisses, have $\drv_1$ is $\proves{\Gemp}{\edswap{M}}{\ddiamond{\pdual{\alpha}}{\phi}}$, 
and $\drv_2$ is $\proves{\eren{\Gemp}{\vec{x}}{\vec{y}},\pvy:\phi}{N}{\psi}$.
By inversion on $\drv_1$ have (1) $\proves{\Gemp}{M}{\dbox{\alpha}{\phi}}$.
By \irref{mon} have $\proves{\Gemp}{\emon{M}{N}{\pvx}}{\dbox{\alpha}{\psi}}$ so
by \irref{dualI} have $\proves{\Gemp}{\edswap{(\emon{M}{N}{\pvx})}}{\ddiamond{\pdual{\alpha}}{\psi}}$ as desired.

\mycase \irref{bswapMon}:
By premisses, have $\drv_1$ is $\proves{\Gemp}{\ebswap{M}}{\dbox{\pdual{\alpha}}{\phi}}$, 
and $\drv_2$ is $\proves{\eren{\Gemp}{\vec{x}}{\vec{y}},\pvy:\phi}{N}{\psi}$.
By inversion on $\drv_1$ have (1) $\proves{\Gemp}{M}{\ddiamond{\alpha}{\phi}}$.
By \irref{mon} have $\proves{\Gemp}{\emon{M}{N}{\pvx}}{\ddiamond{\alpha}{\psi}}$ so
by \irref{dualI} have $\proves{\Gemp}{\ebswap{(\emon{M}{N}{\pvx})}}{\dbox{\pdual{\alpha}}{\psi}}$ as desired.

\mycase \irref{injLMon}:
By premisses, have $\drv_1$ is $\proves{\Gemp}{\edinjL{M}}{\ddiamond{\alpha\cup\beta}{\phi}}$, 
then by inversion have (1) $\proves{\Gemp}{M}{\ddiamond{\alpha}{\phi}}$.
By \irref{mon} have $\proves{\Gemp}{\emon{M}{N}{\pvx}}{\ddiamond{\alpha}{\psi}}$ so
by \irref{dchoiceIL} have $\proves{\Gemp}{\einjL{(\emon{M}{N}{\pvx})}}{\dbox{\alpha\cup\beta}{\psi}}$ as desired.

\mycase \irref{injRMon}:
By premisses, have $\drv_1$ is $\proves{\Gemp}{\edinjR{M}}{\ddiamond{\alpha\cup\beta}{\phi}}$, 
then by inversion have (1) $\proves{\Gemp}{M}{\ddiamond{\beta}{\phi}}$.
By \irref{mon} have $\proves{\Gemp}{\emon{M}{N}{\pvx}}{\ddiamond{\beta}{\psi}}$ so
by \irref{dchoiceIR} have $\proves{\Gemp}{\einjR{(\emon{M}{N}{\pvx})}}{\dbox{\alpha\cup\beta}{\psi}}$ as desired.

\mycase \irref{dasgnMon} (diamond):
By premisses, have $\drv_1$ is $\proves{\Gemp}{\edasgn{y}{x}{\pvx}{M}}{\ddiamond{\humod{x}{f}}{\phi}}$, 
and $\drv_2$ is $\proves{\eren{\Gemp}{\vec{x}}{\vec{y}},\pvy:\phi}{N}{\psi}$.
then by inversion on $\drv_1$ have
(1) $\proves{\eren{\Gemp}{\vec{x}}{\vec{y}},\pvx:(x=\eren{f}{x}{y})}{M}{\phi}$.
By weakening on $\drv_2$ have  $\proves{\eren{\Gemp}{\vec{x}}{\vec{y}},\pvy:\phi,\pvx:(x=\eren{f}{x}{y})}{N}{\psi}$, then by \rref{lem:app-pttsub} 
have $\proves{\eren{\Gemp}{\vec{x}}{\vec{y}},\pvx:(x=\eren{f}{x}{y})}{\esub{N}{\pvx}{M}}{\psi}$, and lastly by \irref{drandomI} have
$\proves{\Gemp}{\edasgn{y}{x}{\pvx}{\esub{N}{\pvx}{M}}}{\ddiamond{\humod{x}{f}}{\psi}}$.

\mycase \irref{basgnMon} (box):
By premisses, have $\drv_1$ is $\proves{\Gemp}{\ebasgn{y}{x}{\pvx}{M}}{\dbox{\humod{x}{f}}{\phi}}$,
and $\drv_2$ is $\proves{\eren{\Gemp}{\vec{x}}{\vec{y}},\pvy:\phi}{N}{\psi}$.
then by inversion on $\drv_1$ have
(1) $\proves{\eren{\Gemp}{\vec{x}}{\vec{y}},\pvx:(x=\eren{f}{x}{y})}{M}{\phi}$.
By weakening on $\drv_2$ have  $\proves{\eren{\Gemp}{\vec{x}}{\vec{y}},\pvy:\phi,\pvx:(x=\eren{f}{x}{y})}{N}{\psi}$, then by \rref{lem:app-pttsub}
have $\proves{\eren{\Gemp}{\vec{x}}{\vec{y}},\pvx:(x=\eren{f}{x}{y})}{\esub{N}{\pvx}{M}}{\psi}$, and lastly by \irref{drandomI} have
$\proves{\Gemp}{\ebasgn{y}{x}{\pvx}{\esub{N}{\pvx}{M}}}{\dbox{\humod{x}{f}}{\psi}}$.

\mycase \irref{bseqMon}:
By premisses have $\drv_1$ is $\proves{\Gemp}{\ebseq{M}}{\dbox{\alpha;\beta}{\phi}}$, and
$\drv_2$ is $\proves{\eren{\Gemp}{\vec{x}}{\vec{y}},\pvx:\phi}{N}{\psi}$.
By inversion on $\drv_1$ have (1) $\proves{\Gemp}{M}{\dbox{\alpha}{\dbox{\beta}{\phi}}}$.
From $\drv_2$ have(2) $\proves{\eren{\Gemp}{\va}{\va'},\pvx:\phi}{\eren{N}{\vca'}{\vca}}{\psi}$ by \rref{lem:app-ptren} since $\vca'$ are fresh in $\phi$ and $\psi$.
Then by \irref{mon} on (2) have $\proves{\pvy:\dbox{\beta}{\phi}}{\emon{\pvy}{\eren{N}{\vca'}{\vca}}{\pvx}}{\dbox{\beta}{\psi}}$, then by
\irref{mon} again have $\proves{\Gemp}{\emon{M}{\emon{\pvy}{(\eren{N}{\vca'}{\vca})}{\pvx}}{\pvy}}{\dbox{\alpha}{\dbox{\beta}{\psi}}}$ and finally
by \irref{seqI} have $\proves{\Gemp}{\ebseq{\emon{M}{(\emon{\pvy}{(\eren{N}{\vca'}{\vca})}{\pvx})}{\pvy}}}{\dbox{\alpha;\beta}{\psi}}$.

\mycase \irref{dseqMon}:
By premisses have $\drv_1$ is $\proves{\Gemp}{\edseq{M}}{\ddiamond{\alpha;\beta}{\phi}}$, and
$\drv_2$ is $\proves{\pvx:\phi}{N}{\psi}$.
By inversion on $\drv_1$ have (1) $\proves{\Gemp}{M}{\ddiamond{\alpha}{\ddiamond{\beta}{\phi}}}$.
By \rref{lem:app-ptsub} on $\drv_2$ have $\proves{\pvz:\phi}{\esub{N}{\pvx}{\pvz}}{\psi}$, then by
\irref{mon} have $\proves{\pvy:\ddiamond{\beta}{\phi}}{\emon{\pvy}{(\esub{N}{\pvx}{\pvz})}{\pvz}}{\ddiamond{\beta}{\psi}}$, then by
\irref{mon} again have $\proves{\Gemp}{\emon{M}{\emon{\pvy}{(\esub{N}{\pvx}{\pvz})}{\pvz}}{\pvy}}{\ddiamond{\alpha}{\ddiamond{\beta}{\psi}}}$ and finally
by \irref{seqI} have $\proves{\Gemp}{\edseq{\emon{M}{(\emon{\pvy}{(\esub{N}{\pvx}{\pvz})}{\pvz})}{\pvy}}}{\ddiamond{\alpha;\beta}{\psi}}$.
 
\mycase \irref{brollMon}:
By premisses have $\drv_1$ is $\proves{\Gemp}{\ebroll{M}}{\dbox{\prepeat{\alpha}}{\phi}}$, and
$\drv_2$ is $\proves{\eren{\Gemp}{\vec{x}}{\vec{y}},\pvx:\phi}{N}{\psi}$.
Then have (2a) $\proves{\pvx:\phi}{\eren{N}{\vec{y}}{\vec{x}}}{\psi}$ by \rref{lem:app-ptren} since $\vec{y}$ are fresh.
By inversion on $\drv_1$ have (1) $\proves{\Gemp}{M}{\phi \land \dbox{\alpha}{\dbox{\prepeat{\alpha}}{\phi}}}$.
Then by \irref{dtestEL} and \irref{dtestEL} have
(1a) $\proves{\Gemp}{\eprojL{M}}{\phi}$ and
(1b) $\proves{\Gemp}{\eprojR{M}}{\dbox{\alpha}{\dbox{\prepeat{\alpha}}{\phi}}}$
By \irref{mon} on (1a) and (2a) have (L) $\proves{\Gemp}{\emon{\eprojL{M}}{\eren{N}{\vec{y}}{\vec{x}}}{\pvx}}{\psi}$.
By \irref{mon} on $\drv_2$ and weakening then have $\proves{\eren{\Gemp}{\vec{x}}{\vec{y}},\pvz:\dbox{\prepeat{\alpha}}{\phi}}{\emon{\pvz}{N}{\pvx}}{\dbox{\prepeat{\alpha}}{\psi}}$ , then by 
\irref{mon} again and (1b) have (R) $\proves{\Gemp}{\emon{\eprojR{M}}{\emon{\pvz}{N}{\pvx}}{\pvx}}{\dbox{\alpha}{\dbox{\prepeat{\alpha}}{\psi}}}$.
Finally, from (L) and (R) have by \irref{dtestI}, $\proves{\Gemp}{\edcons{\emon{\eprojL{M}}{(\eren{N}{\vec{y}}{\vec{x}})}{\pvx}}{\emon{\eprojR{M}}{\emon{\pvz}{N}{\pvx}}{\pvz}}}{\psi \land \dbox{\alpha}{\dbox{\prepeat{\alpha}}{\psi}}}$ and immediately by \irref{broll}
$\proves{\Gemp}{\ebroll{\edcons{\emon{\eprojL{M}}{(\eren{N}{\vec{y}}{\vec{x}})}{\pvx}}{\emon{\eprojR{M}}{\emon{\pvz}{N}{\pvx}}{\pvz}}}}{\dbox{\prepeat{\alpha}}{\psi}}$.

\mycase \irref{injLMon} (diamond loops):
By premisses have $\drv_1$ is $\proves{\Gemp}{\estop{M}}{\ddiamond{\prepeat{\alpha}}{\phi}}$, and
$\drv_2$ is $\proves{\eren{\Gemp}{\vec{x}}{\vec{y}},\pvx:\phi}{N}{\psi}$.
By inversion on $\drv_1$ have (1) $\proves{\Gemp}{M}{\phi}$.
By \rref{lem:app-ptren} on $\drv_2$ have $\proves{\pvx:\phi}{\eren{N}{\va'}{\va}}{\psi}$ since $\va'$ are fresh in $\phi$ and $\psi$.
Then by \rref{lem:app-ptsub} then have $\proves{\Gemp}{\esub{\eren{N}{\va'}{\va}}{\pvx}{M}}{\psi}$ so by \irref{dstop} have
$\proves{\Gemp}{\estop{\esub{\eren{N}{\va'}{\va}}{\pvx}{M}}}{\ddiamond{\prepeat{\alpha}}{\psi}}$ as desired.

\mycase  \irref{injRMon} (diamond loop):
By premisses have $\drv_1$ is $\proves{\Gemp}{\ego{M}}{\ddiamond{\prepeat{\alpha}}{\phi}}$, and
$\drv_2$ is $\proves{\eren{\Gemp}{\vec{x}}{\vec{y}},\pvx:\phi}{N}{\psi}$.
By inversion on $\drv_1$ have (1) $\proves{\Gemp}{M}{\ddiamond{\alpha}{\ddiamond{\prepeat{\alpha}}{\phi}}}$.
By \irref{mon} have $\proves{\eren{\Gemp}{\vec{x}}{\vec{y}},\pvz:\ddiamond{\prepeat{\alpha}}{\phi}}{\emon{\pvz}{N}{\pvy}}{\ddiamond{\alpha}{\psi}}$.
Note this application checks even with $\eren{\Gemp}{\vec{x}}{\vec{y}}$ because in checking $N$ we can exploit $\eren{\eren{\Gemp}{\vec{x}}{\vec{y}}}{\vec{x}}{\vec{z}} = \eren{\Gemp}{\vec{x}}{\vec{y}}$, where $\vec{z}$ is the additional vector of fresh variables used in the second application.
Then by \irref{mon} again have
$\proves{\pvy:\ddiamond{\alpha}{\ddiamond{\prepeat{\alpha}}{\phi}}}{\emon{\pvy}{\emon{\pvz}{N}{\pvx}}{\pvz}}{\ddiamond{\alpha}{\ddiamond{\prepeat{\alpha}}{\psi}}}$.
Then by \rref{lem:app-ptsub} and (1)have $\proves{\Gemp}{\esub{\emon{\pvy}{\emon{\pvz}{N}{\pvx}}{\pvz}}{\pvy}{M}}{\ddiamond{\alpha}{\ddiamond{\prepeat{\alpha}}{\psi}}}$ so by \irref{dgo} have
$\proves{\Gemp}{\ego{\esub{\emon{\pvy}{\emon{\pvz}{N}{\pvx}}{\pvz}}{\pvy}{M}}}{\ddiamond{\prepeat{\alpha}}{\psi}}$

We give the commuting conversion rules next.


\mycase \irref{dseqC}:
From premisses, $\drv_1$ is $\proves{\Gemp}{\edseq{\ecase{A}{B}{C}}}{\ddiamond{\alpha;\beta}{\phi}}$ and by inversion
(1) $\proves{\Gemp}{\ecase{A}{B}{C}}{\ddiamond{\alpha}{\ddiamond{\beta}{\phi}}}$.
(A) $\proves{\Gemp}{A}{\psi\lor\rho}$.
(B) $\proves{\pvl:\psi}{B}{\ddiamond{\alpha}{\ddiamond{\beta}{\phi}}}$.
(C) $\proves{\pvr:\rho}{C}{\ddiamond{\alpha}{\ddiamond{\beta}{\phi}}}$.
Then by \irref{seqI} have 
(B1) $\proves{\pvl:\psi}{\edseq{B}}{\ddiamond{\alpha;\beta}{\phi}}$.
(C1) $\proves{\pvr:\rho}{\edseq{C}}{\ddiamond{\alpha;\beta}{\phi}}$,
then by \irref{dchoiceE} on (A) have $\proves{\Gemp}{\ecase{A}{\edseq{B}}{\edseq{C}}}{\ddiamond{\alpha;\beta}{\phi}}$

\mycase \irref{bseqC}:
From premisses, $\drv_1$ is $\proves{\Gemp}{\ebseq{\ecase{A}{B}{C}}}{\dbox{\alpha;\beta}{\phi}}$ and by inversion
(1) $\proves{\Gemp}{\ecase{A}{B}{C}}{\dbox{\alpha}{\dbox{\beta}{\phi}}}$.
(A) $\proves{\Gemp}{A}{\psi\lor\rho}$.
(B) $\proves{\pvl:\psi}{B}{\dbox{\alpha}{\dbox{\beta}{\phi}}}$.
(C) $\proves{\pvr:\rho}{C}{\dbox{\alpha}{\dbox{\beta}{\phi}}}$.
Then by \irref{seqI} have 	
(B1) $\proves{\pvl:\psi}{\edseq{B}}{\dbox{\alpha;\beta}{\phi}}$.
(C1) $\proves{\pvr:\rho}{\edseq{C}}{\dbox{\alpha;\beta}{\phi}}$,
then by \irref{dchoiceE} on (A) have $\proves{\Gemp}{\ecase{A}{\ebseq{B}}{\ebseq{C}}}{\dbox{\alpha;\beta}{\phi}}$

\mycase \irref{bconsCL}:
From premisses, $\drv_1$ is $\proves{\Gemp}{\ebcons{\ecase{A}{B}{C}}{N}}{\dbox{\alpha\cup\beta}{\phi}}$ and by inversion
(1) $\proves{\Gemp}{\ecase{A}{B}{C}}{\dbox{\alpha}{\phi}}$.
(A) $\proves{\Gemp}{A}{\psi\lor\rho}$.
(B) $\proves{\pvl:\psi}{B}{\dbox{\alpha}{\phi}}$.
(C) $\proves{\pvr:\rho}{C}{\dbox{\alpha}{\phi}}$.
Then by \irref{bchoiceI} and weakening on $N$ have 
(B1) $\proves{\pvl:\psi}{\ebcons{B}{N}}{\dbox{\alpha\cup\beta}{\phi}}$.
(C1) $\proves{\pvr:\rho}{\ebcons{C}{N}}{\dbox{\alpha\cup\beta}{\phi}}$,
then by \irref{dchoiceE} on (A) have $\proves{\Gemp}{\ecase{A}{\ebcons{B}{N}}{\ebcons{C}{N}}}{\dbox{\alpha\cup\beta}{\phi}}$

\mycase \irref{bconsCR}:
From premisses, $\drv_1$ is $\proves{\Gemp}{\ebcons{M}{\ecase{A}{B}{C}}}{\dbox{\alpha\cup\beta}{\phi}}$ and by inversion
(1) $\proves{\Gemp}{\ecase{A}{B}{C}}{\dbox{\beta}{\phi}}$.
(A) $\proves{\Gemp}{A}{\psi\lor\rho}$.
(B) $\proves{\pvl:\psi}{B}{\dbox{\beta}{\phi}}$.
(C) $\proves{\pvr:\rho}{C}{\dbox{\beta}{\phi}}$.
Then by \irref{bchoiceI} and weakening on $M$ have 
(B1) $\proves{\pvl:\psi}{\ebcons{M}{B}}{\dbox{\alpha\cup\beta}{\phi}}$.
(C1) $\proves{\pvr:\rho}{\ebcons{M}{C}}{\dbox{\alpha\cup\beta}{\phi}}$,
then by \irref{dchoiceE} on (A) have $\proves{\Gemp}{\ecase{A}{\ebcons{M}{B}}{\ebcons{M}{C}}}{\dbox{\alpha\cup\beta}{\phi}}$

\mycase \irref{dconsCL}:
From premisses, $\drv_1$ is $\proves{\Gemp}{\edcons{\ecase{A}{B}{C}}{N}}{\ddiamond{\ptest{\psi}}{\phi}}$ and by inversion
(1) $\proves{\Gemp}{\ecase{A}{B}{C}}{\psi}$.
(A) $\proves{\Gemp}{A}{\zeta\lor\rho}$.
(B) $\proves{\pvl:\zeta}{B}{\psi}$.
(C) $\proves{\pvr:\rho}{C}{\psi}$.
Then by \irref{dtestI} and weakening on $N$ have 
(B1) $\proves{\pvl:\zeta}{\edcons{B}{N}}{\ddiamond{\ptest{\psi}}{\phi}}$.
(C1) $\proves{\pvr:\rho}{\edcons{C}{N}}{\ddiamond{\ptest{\psi}}{\phi}}$,
then by \irref{dchoiceE} on (A) have $\proves{\Gemp}{\ecase{A}{\edcons{B}{N}}{\edcons{C}{N}}}{\dbox{\ptest{\psi}}{\phi}}$

\mycase \irref{dconsCR}:
From premisses, $\drv_1$ is $\proves{\Gemp}{\edcons{M}{\ecase{A}{B}{C}}}{\ddiamond{\ptest{\psi}}{\phi}}$ and by inversion
(1) $\proves{\Gemp}{\ecase{A}{B}{C}}{\psi}$.
(A) $\proves{\Gemp}{A}{\zeta\lor\rho}$.
(B) $\proves{\pvl:\zeta}{B}{\psi}$.
(C) $\proves{\pvr:\rho}{C}{\psi}$.
Then by \irref{dtestI} and weakening on $M$ have 
(B1) $\proves{\pvl:\zeta}{\edcons{M}{B}}{\ddiamond{\ptest{\psi}}{\phi}}$.
(C1) $\proves{\pvr:\rho}{\edcons{M}{C}}{\ddiamond{\ptest{\psi}}{\phi}}$,
then by \irref{dchoiceE} on (A) have $\proves{\Gemp}{\ecase{A}{\edcons{M}{B}}{\edcons{M}{C}}}{\dbox{\ptest{\psi}}{\phi}}$

\mycase \irref{projLC}:
From premisses, $\drv_1$ is $\proves{\Gemp}{\eprojL{\ecase{A}{B}{C}}}{\psi}$ and by inversion
(1) $\proves{\Gemp}{\ecase{A}{B}{C}}{\ddiamond{\ptest{\psi}}{\phi}}$.
(A) $\proves{\Gemp}{A}{\zeta\lor\rho}$.
(B) $\proves{\pvl:\zeta}{B}{\ddiamond{\ptest{\psi}}{\phi}}$.
(C) $\proves{\pvr:\rho}{C}{\ddiamond{\ptest{\psi}}{\phi}}$.
Then by \irref{dtestEL} have 
(B1) $\proves{\pvl:\zeta}{\eprojL{B}}{\psi}$.
(C1) $\proves{\pvr:\rho}{\eprojL{C}}{\psi}$,
then by \irref{dchoiceE} on (A) have $\proves{\Gemp}{\ecase{A}{\eprojL{B}}{\eprojL{C}}}{\psi}$

\mycase \irref{projRC}:
From premisses, $\drv_1$ is $\proves{\Gemp}{\eprojR{\ecase{A}{B}{C}}}{\phi}$ and by inversion
(1) $\proves{\Gemp}{\ecase{A}{B}{C}}{\ddiamond{\ptest{\psi}}{\phi}}$.
(A) $\proves{\Gemp}{A}{\zeta\lor\rho}$.
(B) $\proves{\pvl:\zeta}{B}{\ddiamond{\ptest{\psi}}{\phi}}$.
(C) $\proves{\pvr:\rho}{C}{\ddiamond{\ptest{\psi}}{\phi}}$.
Then by \irref{dtestER} have 
(B1) $\proves{\pvl:\zeta}{\eprojR{B}}{\phi}$.
(C1) $\proves{\pvr:\rho}{\eprojR{C}}{\phi}$,
then by \irref{dchoiceE} on (A) have $\proves{\Gemp}{\ecase{A}{\eprojR{B}}{\eprojR{C}}}{\phi}$

\mycase \irref{injLC}:
From premisses, $\drv_1$ is $\proves{\Gemp}{\einjL{\ecase{A}{B}{C}}}{\ddiamond{\alpha\cup\beta}{\phi}}$ and by inversion
(1) $\proves{\Gemp}{\ecase{A}{B}{C}}{\ddiamond{\alpha}{\phi}}$.
(A) $\proves{\Gemp}{A}{\zeta\lor\rho}$.
(B) $\proves{\pvl:\zeta}{B}{\ddiamond{\alpha}{\phi}}$.
(C) $\proves{\pvr:\rho}{C}{\ddiamond{\alpha}{\phi}}$.
Then by \irref{dchoiceIL} have 
(B1) $\proves{\pvl:\zeta}{\einjL{B}}{\ddiamond{\alpha\cup\beta}{\phi}}$.
(C1) $\proves{\pvr:\rho}{\einjL{C}}{\ddiamond{\alpha\cup\beta}{\phi}}$,
then by \irref{dchoiceE} on (A) have $\proves{\Gemp}{\ecase{A}{\einjL{B}}{\einjL{C}}}{\ddiamond{\alpha\cup\beta}{\phi}}$

\mycase \irref{injRC}:
From premisses, $\drv_1$ is $\proves{\Gemp}{\einjR{\ecase{A}{B}{C}}}{\ddiamond{\alpha\cup\beta}{\phi}}$ and by inversion
(1) $\proves{\Gemp}{\ecase{A}{B}{C}}{\ddiamond{\beta}{\phi}}$.
(A) $\proves{\Gemp}{A}{\zeta\lor\rho}$.
(B) $\proves{\pvl:\zeta}{B}{\ddiamond{\beta}{\phi}}$.
(C) $\proves{\pvr:\rho}{C}{\ddiamond{\beta}{\phi}}$.
Then by \irref{dchoiceIR} have 
(B1) $\proves{\pvl:\zeta}{\einjR{B}}{\ddiamond{\alpha\cup\beta}{\phi}}$.
(C1) $\proves{\pvr:\rho}{\einjR{C}}{\ddiamond{\alpha\cup\beta}{\phi}}$,
then by \irref{dchoiceE} on (A) have $\proves{\Gemp}{\ecase{A}{\einjR{B}}{\einjR{C}}}{\ddiamond{\alpha\cup\beta}{\phi}}$

\mycase \irref{repC}:
From premisses, $\drv_1$ is $\proves{\Gemp}{\erep{\ecase{A}{B}{C}}{N}{\pvx:J}{O}}{\dbox{\prepeat{\alpha}}{\phi}}$ and by inversion
(1) $\proves{\Gemp}{\ecase{A}{B}{C}}{J}$.
(A) $\proves{\Gemp}{A}{\zeta\lor\rho}$.
(B) $\proves{\pvl:\zeta}{B}{J}$.
(C) $\proves{\pvr:\rho}{C}{J}$.
Then by \irref{bloopI} and because $N$ is being applied in the same context as before have 
(B1) $\proves{\pvl:\zeta}{\erep{B}{N}{\pvx:J}{O}}{\dbox{\prepeat{\alpha}}{\phi}}$.
(C1) $\proves{\pvr:\rho}{\erep{C}{N}{\pvx:J}{O}}{\dbox{\prepeat{\alpha}}{\phi}}$,
then by \irref{dchoiceE} on (A) have $\proves{\Gemp}{\ecase{A}{\erep{B}{N}{\pvx:J}{O}}{\erep{C}{N}{\pvx:J}{O}}}{\dbox{\prepeat{\alpha}}{\phi}}$

\mycase \irref{appCL}
From premisses, $\drv_1$ is $\proves{\Gemp}{\eapp{\ecase{A}{B}{C}}{N}}{\phi}$ and by inversion
(1) $\proves{\Gemp}{\ecase{A}{B}{C}}{\psi \limply \phi}$.
(A) $\proves{\Gemp}{A}{\zeta\lor\rho}$.
(B) $\proves{\pvl:\zeta}{B}{\psi \limply \phi}$.
(C) $\proves{\pvr:\rho}{C}{\psi \limply \phi}$.
Then by \irref{btestE} and weakening on $N$ have 
(B1) $\proves{\pvl:\zeta}{\eapp{B}{N}}{\phi}$.
(C1) $\proves{\pvr:\rho}{\eapp{C}{N}}{\phi}$,
then by \irref{dchoiceE} on (A) have $\proves{\Gemp}{\ecase{A}{\eapp{B}{N}}{\eapp{C}{N}}}{\phi}$

\mycase \irref{appCR}:
From premisses, $\drv_1$ is $\proves{\Gemp}{\eapp{M}{\ecase{A}{B}{C}}}{\phi}$ and by inversion
(1) $\proves{\Gemp}{\ecase{A}{B}{C}}{\psi}$.
(A) $\proves{\Gemp}{A}{\zeta\lor\rho}$.
(B) $\proves{\pvl:\zeta}{B}{\psi}$.
(C) $\proves{\pvr:\rho}{C}{\psi}$.
Then by \irref{btestE} and weakening on $M$ have 
(B1) $\proves{\pvl:\zeta}{\eapp{M}{B}}{\phi}$.
(C1) $\proves{\pvr:\rho}{\eapp{M}{C}}{\phi}$,
then by \irref{dchoiceE} on (A) have $\proves{\Gemp}{\ecase{A}{\eapp{M}{B}}{\eapp{M}{C}}}{\phi}$

\mycase \irref{dswapC}:
From premisses, $\drv_1$ is $\proves{\Gemp}{\edswap{\ecase{A}{B}{C}}}{\ddiamond{\pdual{\alpha}}{\phi}}$ and by inversion
(1) $\proves{\Gemp}{\ecase{A}{B}{C}}{\dbox{\alpha}{\phi}}$.
(A) $\proves{\Gemp}{A}{\zeta\lor\rho}$.
(B) $\proves{\pvl:\zeta}{B}{\dbox{{\beta}}{\phi}}$.
(C) $\proves{\pvr:\rho}{C}{\dbox{{\beta}}{\phi}}$.
Then by \irref{dualI} have 
(B1) $\proves{\pvl:\zeta}{\edswap{B}}{\ddiamond{\pdual{\alpha}}{\phi}}$.
(C1) $\proves{\pvr:\rho}{\edswap{C}}{\ddiamond{\pdual{\alpha}}{\phi}}$,
then by \irref{dchoiceE} on (A) have $\proves{\Gemp}{\ecase{A}{\edswap{B}}{\edswap{C}}}{\ddiamond{\pdual{\alpha}}{\phi}}$

\mycase \irref{bswapC}:
From premisses, $\drv_1$ is $\proves{\Gemp}{\ebswap{\ecase{A}{B}{C}}}{\dbox{\pdual{\alpha}}{\phi}}$ and by inversion
(1) $\proves{\Gemp}{\ecase{A}{B}{C}}{\ddiamond{\alpha}{\phi}}$.
(A) $\proves{\Gemp}{A}{\zeta\lor\rho}$.
(B) $\proves{\pvl:\zeta}{B}{\ddiamond{\beta}{\phi}}$.
(C) $\proves{\pvr:\rho}{C}{\ddiamond{\beta}{\phi}}$.
Then by \irref{dualI} have 
(B1) $\proves{\pvl:\zeta}{\ebswap{B}}{\dbox{\pdual{\alpha}}{\phi}}$.
(C1) $\proves{\pvr:\rho}{\ebswap{C}}{\dbox{\pdual{\alpha}}{\phi}}$,
then by \irref{dchoiceE} on (A) have $\proves{\Gemp}{\ecase{A}{\ebswap{B}}{\ebswap{C}}}{\dbox{\pdual{\alpha}}{\phi}}$

\mycase \irref{brandomC}:
From premisses, $\drv_1$ is $\proves{\Gemp}{\eapp{\ecase{A}{B}{C}}{f}}{\tsub{\phi}{x}{f}}$ and by inversion
(1) $\proves{\Gemp}{\ecase{A}{B}{C}}{\dbox{\prandom{x}}{\phi}}$.
(A) $\proves{\Gemp}{A}{\zeta\lor\rho}$.
(B) $\proves{\ell:\zeta}{B}{\dbox{\prandom{x}}{\phi}}$.
(C) $\proves{r:\rho}{C}{\dbox{\prandom{x}}{\phi}}$.
Then by \irref{brandomE} have
(B1) $\proves{\ell:\zeta}{\eapp{B}{f}}{\tsub{\phi}{x}{f}}$.
(C1) $\proves{r:\rho}{\eapp{B}{f}}{\tsub{\phi}{x}{f}}$,
then by \irref{dchoiceE} on (A) have $\proves{\Gemp}{\ecase{A}{\eapp{B}{f}}{\eapp{C}{f}}}{\tsub{\phi}{x}{f}}$

\mycase \irref{monC}:
From premisses, $\drv_1$ is $\proves{\Gemp}{\emon{\ecase{A}{B}{C}}{N}{\pvx}}{\ddiamond{\alpha}{\psi}}$ and by inversion
(1) $\proves{\Gemp}{\ecase{A}{B}{C}}{\ddiamond{\alpha}{\phi}}$.
(A) $\proves{\Gemp}{A}{\zeta\lor\rho}$.
(B) $\proves{\pvl:\zeta}{B}{\ddiamond{\alpha}{\phi}}$.
(C) $\proves{\pvr:\rho}{C}{\ddiamond{\alpha}{\phi}}$.
Then by \irref{mon} and by weakening $N$ with $\eren{\zeta}{x}{z}$ and $\eren{\rho}{x}{z}$, then
(B1) $\proves{\pvl:\zeta}{\emon{B}{N}{\pvx}}{\ddiamond{\alpha}{\psi}}$.
(C1) $\proves{\pvr:\rho}{\emon{C}{N}{\pvx}}{\ddiamond{\alpha}{\psi}}$,
then by \irref{dchoiceE} on (A) have $\proves{\Gemp}{\ecase{A}{\emon{B}{N}{\pvx}}{\emon{C}{N}{\pvx}}}{\ddiamond{\alpha}{\psi}}$.

\mycase \irref{fpC}:
From premisses, $\drv_1$ is $\proves{\Gemp}{\efp{\ecase{A}{B}{C}}{D}{E}}{\psi}$ and by inversion
(1) $\proves{\Gemp}{\ecase{A}{B}{C}}{\ddiamond{\prepeat{\alpha}}{\phi}}$.
(A) $\proves{\Gemp}{A}{\zeta\lor\rho}$.
(B) $\proves{\pvl:\zeta}{B}{\ddiamond{\prepeat{\alpha}}{\phi}}$.
(C) $\proves{\pvr:\rho}{C}{\ddiamond{\prepeat{\alpha}}{\phi}}$.
Then by \irref{dloopE} and since $D$ and $E$ are applied with the same context as before, then
(B1) $\proves{\pvl:\zeta}{\efp{B}{D}{E}}{{\psi}}$.
(C1) $\proves{\pvr:\rho}{\efp{C}{D}{E}}{{\psi}}$,
then by \irref{dchoiceE} on (A) have $\proves{\Gemp}{\ecase{A}{\efp{B}{D}{E}}{\efp{C}{D}{E}}}{\psi}$.

\mycase \irref{caseC}:
From premisses, $\drv_1$ is $\proves{\Gemp}{\ecase{\ecase{A}{B}{C}}{D}{E}}{\rho}$ and by inversion
(1) $\proves{\Gemp}{\ecase{A}{B}{C}}{\phi\lor\psi}$.
(A) $\proves{\Gemp}{A}{\zeta\lor\rho}$.
(B) $\proves{\pvl:\zeta}{B}{\phi\lor\psi}$.
(C) $\proves{\pvr:\rho}{C}{\phi\lor\psi}$.
Then by \irref{dchoiceE} and since $D$ and $E$ are applied with the same context as before (that is, we allow $\pvl$ and $\pvr$ to shadow here as the outer binding is never used), then
(B1) $\proves{\pvl:\zeta}{\ecase{B}{D}{E}}{\rho}$. 
(C1) $\proves{\pvr:\rho}{\ecase{C}{D}{E}}{\rho}$,
then by \irref{dchoiceE} on (A) have $\proves{\Gemp}{\ecase{A}{\ecase{B}{D}{E}}{\ecase{C}{D}{E}}}{\rho}$.

\mycase \irref{unpackC}:
From premisses, $\drv_1$ is $\proves{\Gemp}{\eunpack{\ecase{A}{B}{C}}{N}}{\psi}$ and by inversion
(1) $\proves{\Gemp}{\ecase{A}{B}{C}}{\ddiamond{\prandom{x}}{\phi}}$.
(A) $\proves{\Gemp}{A}{\zeta\lor\rho}$.
(B) $\proves{\pvl:\zeta}{B}{\ddiamond{\prandom{x}}{\phi}}$.
(C) $\proves{\pvr:\rho}{C}{\ddiamond{\prandom{x}}{\phi}}$.
Then by \irref{drandomE} and by weakening $N$ with $\eren{\zeta}{x}{y}$ and $\eren{\rho}{x}{y}$
(B1) $\proves{\pvl:\zeta}{\eunpack{B}{N}}{\psi}$.
(C1) $\proves{\pvr:\rho}{\eunpack{C}{N}}{\psi}$,
then by \irref{dchoiceE} on (A) have $\proves{\Gemp}{\ecase{A}{\eunpack{B}{N}}{\eunpack{C}{N}}}{\psi}$.

\mycase \irref{forC}:
From premisses, $\drv_1$ is $\proves{\Gemp}{\efor{\ecase{A}{B}{C}}{N}{O}}{\ddiamond{\prepeat{\alpha}}{(\met =  \metz \land \conv)}}$ and by inversion
(1) $\proves{\Gemp}{\ecase{A}{B}{C}}{\conv}$.
(A) $\proves{\Gemp}{A}{\zeta\lor\rho}$.
(B) $\proves{\Gemp,\ell:\zeta}{B}{\conv}$.
(C) $\proves{r:\rho}{C}{\conv}$.
Then by \irref{dloopI} and because $N$ is applied only in its original context,
(B1) $\proves{\ell:\zeta}{\efor{B}{N}{O}}{\ddiamond{\prepeat{\alpha}}{\phi}}$.
(C1) $\proves{r:\rho}{\efor{C}{N}{O}}{\ddiamond{\prepeat{\alpha}}{\phi}}$,
then by \irref{dchoiceE} on (A) have $\proves{\Gemp}{\ecase{A}{\efor{B}{N}{O}}{\efor{C}{N}{O}}}{\ddiamond{\prepeat{\alpha}}{\phi}}$.

\mycase \irref{bunrollC}:
From premisses, $\drv_1$ is $\proves{\Gemp}{\ebunroll{\ecase{A}{B}{C}}}{\phi\land\dbox{\alpha}{\dbox{\prepeat{\alpha}}{\phi}}}$ and by inversion
(1) $\proves{\Gemp}{\ecase{A}{B}{C}}{\dbox{\prepeat{\alpha}}{\phi}}$.
(A) $\proves{\Gemp}{A}{\zeta\lor\rho}$.
(B) $\proves{\ell:\zeta}{B}{\dbox{\prepeat{\alpha}}{\phi}}$.
(C) $\proves{r:\rho}{C}{\dbox{\prepeat{\alpha}}{\phi}}$.
Then by \irref{bunroll},
(B1) $\proves{\ell:\zeta}{\ebunroll{B}}{\phi\land\dbox{\alpha}{\dbox{\prepeat{\alpha}}{\phi}}}$.
(C1) $\proves{r:\rho}{\ebunroll{C}}{\phi\land\dbox{\alpha}{\dbox{\prepeat{\alpha}}{\phi}}}$,
then by \irref{dchoiceE} on (A) have $\proves{\Gemp}{\ecase{A}{\ebunroll{B}}{\ebunroll{C}}}{\phi\land\dbox{\alpha}{\dbox{\prepeat{\alpha}}{\phi}}}$.

\mycase \irref{caseC}:
From premisses, $\drv_1$ is $\proves{\Gemp}{\ercase{\ecase{A}{B}{C}}{D}{E}}{\psi}$ and by inversion
(1) $\proves{\Gemp}{\ecase{A}{B}{C}}{\ddiamond{\prepeat{\alpha}}{\phi}}$.
(A) $\proves{\Gemp}{A}{\zeta\lor\rho}$.
(B) $\proves{\Gemp,\ell:\zeta}{B}{\ddiamond{\prepeat{\alpha}}{\phi}}$.
(C) $\proves{r:\rho}{C}{\ddiamond{\prepeat{\alpha}}{\phi}}$.
Then by \irref{drcase},
(B1) $\proves{\ell:\zeta}{\ercase{B}{D}{E}}{\psi}$.
(C1) $\proves{r:\rho}{\ercase{C}{D}{E}}{\psi}$,
then by \irref{dchoiceE} on (A) have $\proves{\Gemp}{\ecase{A}{\ercase{B}{D}{E}}{\ercase{C}{D}{E}}}{\psi}$.
\end{proof}

\end{document}